\newcommand{\noun}[1]{\textsc{#1}}
\numberwithin{equation}{section}
\numberwithin{figure}{section}
\theoremstyle{plain}
\newtheorem{thm}{\protect\theoremname}[section]
  \theoremstyle{plain}
  \newtheorem{cor}[thm]{\protect\corollaryname}
  \theoremstyle{remark}
  \newtheorem{rem}[thm]{\protect\remarkname}
  \theoremstyle{definition}
  \newtheorem{defn}[thm]{\protect\definitionname}
  \theoremstyle{plain}
  \newtheorem{lem}[thm]{\protect\lemmaname}
  \theoremstyle{plain}
  \newtheorem{prop}[thm]{\protect\propositionname}
  \theoremstyle{definition}
  \newtheorem{example}[thm]{\protect\examplename}
  \theoremstyle{plain}
  \newtheorem{conjecture}[thm]{\protect\conjecturename}
\def\makebbb#1{
    \expandafter\gdef\csname#1\endcsname{
        \ensuremath{\Bbb{#1}}}
}\makebbb{R}\makebbb{N}\makebbb{Z}\makebbb{C}\makebbb{H}\makebbb{E}\makebbb{H}\makebbb{P}\makebbb{B}\makebbb{Q}\makebbb{E}
  \providecommand{\conjecturename}{Conjecture}
  \providecommand{\corollaryname}{Corollary}
  \providecommand{\definitionname}{Definition}
  \providecommand{\examplename}{Example}
  \providecommand{\lemmaname}{Lemma}
  \providecommand{\propositionname}{Proposition}
  \providecommand{\remarkname}{Remark}
\providecommand{\theoremname}{Theorem}
\begin{document}

\title{Sharp deviation inequalities for the 2D Coulomb gas and Quantum hall
states, I}

\author{Robert J. Berman}
\begin{abstract}
We establish sharp deviation inequalities for the linear statistics
of the 2D Coulomb gas at inverse temperature $\beta\leq1.$ These
imply sub-Gaussian inequalities, where the variance is given by the
Dirichlet $H^{1}-$norm. The proofs use complex geometry and potential
theory on Riemann surfaces and apply more generally to $\beta-$ensembles,
which also include integer Quantum Hall states on Riemann surfaces.
In a sequel of the paper we give applications to large and moderate
deviation principles, local laws at mesoscopic scales, quantitative
Bergman kernel asymptotics. In a series of companion papers applications
to concentration of measure, Monte-Carlo methods for numerical integration
and random matrices are given and relations to Kähler geometry are
explored.
\end{abstract}

\address{Chalmers University of Technology and University of Gothenburg}

\email{robertb@chalmers.se}
\maketitle

\section{Introduction}

We start by recalling the general setup in statistical mechanics describing
the equilibrium state, at inverse temperature $\beta,$ of $N$ particles
interacting by a pair interaction and subject to an exterior potential.
Let $X$ be a Riemannian manifold and denote by $dV$ the corresponding
volume form. Given a symmetric lower semi-continuous (lsc) function
$W(x,y)$ on $X^{2}$ (the \emph{pair interaction potential}) and
a lower semi-continuous $V(x)$ (the \emph{exterior potential}) the
corresponding $N-$particle Hamiltonian is the function on $X^{N}$
defined by 
\begin{equation}
H^{(N)}(x_{1},...,x_{N}):=\frac{1}{2}\sum_{i\neq j\leq N}W(x_{i},x_{j})+\sum_{i\leq1}V(x_{i})\label{eq:Hamilt for W and V intro}
\end{equation}
 and the corresponding Gibbs measure at inverse temperature $\beta\in]0,\infty[$
is the following symmetric probability measure 
\[
d\mathbb{P}_{N,\beta}:=e^{-\beta H^{(N)}}dV^{\otimes N}/Z_{N,\beta}
\]
on $X^{N},$ where the normalizing constant 
\[
Z_{N,\beta}:=\int_{X^{N}}e^{-\beta H^{(N)}}dV^{\otimes N}
\]
is called the\emph{ ($N-$particle) partition function }and is assumed
to be finite.\emph{ }

The ensemble $(X^{N},d\mathbb{P}_{N,\beta})$ defines a random point
process with $N$ particles on $X.$ The corresponding \emph{empirical
measure} is the random measure 
\begin{equation}
\delta_{N}:\,\,X^{N}\rightarrow\mathcal{P}(X),\,\,\,(x_{1},\ldots,x_{N})\mapsto\delta_{N}(x_{1},\ldots,x_{N}):=\frac{1}{N}\sum_{i=1}^{N}\delta_{x_{i}}\label{eq:emp measure intro}
\end{equation}
 taking values in the space $\mathcal{P}(X)$ of all probability measures
on $X$. Given a bounded continuous function $u\in C_{b}(X)$ the
associated (normalized) \emph{linear statistic }is the real-valued
random variable 
\begin{equation}
U_{N}(x_{1},...,x_{N}):=\left\langle \delta_{N},u\right\rangle =\frac{1}{N}\sum_{i=1}^{N}u(x_{i}).\label{eq:def of linear stat}
\end{equation}
Assuming that the random measure $\delta_{N}$ converges in probability
to a deterministic measure $\mu_{eq}$, in the many particle limit
$N\rightarrow\infty$ (the ``equilibrium measure'') a classical
problem is to study the deviations around $\mu_{eq}.$ More precisely,
assuming that 
\[
U_{N}\rightarrow\bar{u}:=\int_{X}u\mu_{eq},\,\,\,N\rightarrow\infty
\]
in probability, the problem is to quantify - on a logarithmic scale
- the probability $\mathbb{P}\left(\left|U_{N}-\bar{u}\right|>\delta\right)$
that $U_{N}$ deviates by more than $\delta\in\R_{+}$ from its deterministic
limit $\bar{u}.$ More precisely, a \emph{large deviation upper bound
}at a speed $R_{N}\rightarrow\infty$ and with a rate functional $I_{u}(\delta)$
holds if 
\begin{equation}
\mathbb{P}_{N,\beta}\left(\left|U_{N}-\bar{u}\right|>\delta\right)\leq e^{-R_{N}(I_{u}(\delta)+o(1))}\label{eq:LDU intro}
\end{equation}
as $N\rightarrow\infty.$ We will be particularly interested in the
case when $I_{u}(\delta)$ is quadratic in $\delta,$ i.e. $I_{u}(\delta)=\delta^{2}/2\sigma_{u}^{2}$
for a positive constant $\sigma_{u}^{2}$ called the \emph{variance
proxy.} Then it follows from Chebishev's inequality that \ref{eq:LDU intro}
is implied by the following stronger \emph{sub-Gaussian bound }for
the moment generating function of $U_{N}-\bar{u}$
\begin{equation}
\mathbb{E}_{N,\beta}\left(e^{t(U_{N}-u)}\right)\leq e^{R_{N}(\frac{1}{2}\sigma_{u}^{2}t^{2}+o(1))}\label{eq:sub-gaussian bound general notation}
\end{equation}

In what follows we will often simplify the notation by leaving out
the subscript $(N,\beta)$ in the notation for the probability measure
$\mathbb{P}_{N,\beta}$ and the corresponding expectation $\mathbb{E}_{N,\beta}(\cdot):=\int(\cdot)d\mathbb{P}_{N,\beta}.$

\subsection{\label{subsec:The-2D-Coulomb}The 2D Coulomb gas and Quantum Hall
states on Riemann surfaces}

The main aim of the present work is to establish sharp quantitative
non-asymptotic versions of the large deviation bound \ref{eq:LDU intro}
and the sub-Gaussian bound \ref{eq:sub-gaussian bound general notation}
for the \emph{2D Coulomb gas }of unit charge particles (aka the One
Component Plasma or Jellium in the physics literature) and Quantum
Hall states. An important feature of the sub-Gaussian deviation inequalities
is their universality; they are not sensitive to the precise form
of the exterior potential, up to a lower-order error term. But it
should be stressed that the inequalities in question are also new
in the case of the complex Ginibre ensemble, where very precise error
estimates can be obtained (as will be detailed in a separate publication
\cite{berm16}). Applications of the inequalities shown in the present
paper are given in the sequel \cite{berm15}, including large deviation
principles for singular data, moderate deviation principles, local
laws at mesoscopic scales and quantitative Bergman kernel asymptotics.
Various elaborations are also given in the companion papers \cite{berm15b,berm16,berm17}
(see Section \ref{sec:Main-results-in} for precise statements). 

\subsubsection{The 2D Coulomb gas }

We recall that the 2D Coulomb gas is the $N-$particle random point
process on Euclidean $\R^{2},$ identified with $\C,$ is defined
by a repulsive logarithmic pair interaction
\[
W(z,w):=-\log|z-w|^{2}
\]
In other words, fixing an exterior potential $V,$ the corresponding
$N-$particle Hamiltonian is given by 
\begin{equation}
H^{(N)}(z_{1},...,z_{N}):=-\frac{1}{2}\sum_{i\neq j\leq n}\log|z_{i}-z_{j}|^{2}+\sum_{i\leq N}V(z_{i})\label{eq:Hamiltonian in plane intro}
\end{equation}
In order to get a well-defined many particle limit (of mean field
type) it is usually assumed that the potential $V$ depends on $N$
in the following way: 
\[
V=N\phi,
\]
 where the function $\phi$ has sufficient growth at infinity to ensure
that the corresponding partition functions 
\[
Z_{N,\beta}[N\phi]:=\int_{\C^{N}}\prod_{i<j\leq N}|z_{i}-z_{j}|^{2\beta}\prod_{i\leq N}e^{-N\beta\phi(z_{i})}d\lambda^{\otimes N}
\]
 are non-zero (cf. \cite{l-s,l-s-2,b-b-n-y2}). More precisely, it
is usually assumed that $\phi$ has \emph{strictly super logarithmic
growth, }in the following sense: there exist strictly positive constants
$\epsilon$ and $C$ such that 
\begin{equation}
\phi(z)\geq(1+\epsilon)\log((1+|z|^{2})-C,\label{eq:str super log growth in intro}
\end{equation}
Then $Z_{N,\beta}$ is finite for any $N$ and $\beta>0.$ The corresponding
Gibbs measure may be expressed as

\begin{equation}
d\mathbb{P}_{N_{k},\beta}=\frac{1}{Z_{N,\beta}}|D^{(N)}|^{2\beta}\left(e^{-N\beta\phi}d\lambda\right)^{\otimes N},\,\,\,D^{(N)}:=\det(\Psi_{i}(x_{j}))_{i,j\leq N},\label{eq:Gibbs measure as det intro}
\end{equation}
 where $\Psi_{1},...,\Psi_{N}$ is the standard monomial base in the
space $H_{N}$ of all holomorphic polynomials of degree at most $N-1$
on $\R^{2},$ identified with $\C.$ The particular case $\beta=1$
is singled out by the fact that the corresponding random point process
is \emph{determinantal }\cite{h-k-p}. If moreover $\phi(z)=|z|^{2}$
then the corresponding Coulomb gas ensemble is the \emph{complex Ginibre
ensemble} appearing as the law on the spaces of eigenvalues of random
complex matrices of rank $N$ with centered normally distributed entries
of variance $1/N$ \cite{fo}. The case $\beta=1$ with a general
confining potential $\phi$ also admits a random matrix interpretation
in terms of normal rank $N$ matrices \cite{c-z}. \footnote{A rescaling by two of $H^{(N)}$ is often used in the literature (for
example \cite{l-s-2}) and then $\beta$ coincides with the Dyson
index in the random matrix litterature so that $\beta=2$ in the determinantal
setting.} 

In the present paper it will also be important to allow $\epsilon=0$
in the growth condition \ref{eq:str super log growth in intro}, i.e.
to consider potentials $\phi$ with\emph{ super logarithmic growth
\begin{equation}
\phi(z)\geq\psi_{0}(z)-C,\,\,\,\psi_{0}(z):=\log(1+|z|^{2})\label{eq:super log intro}
\end{equation}
}Then we will, for a given inverse temperature $\beta,$ take
\[
V=(N+p)\phi,\,\,\,p:=2/\beta-1
\]
which ensures that the corresponding partition function $Z_{N,\beta}[V]$
is finite for any $N$ and $\beta>0.$ 

The present results will mainly be shown to hold under the assumption
that $\beta\leq1.$ In fact, the proofs reduce to the determinantal
case $\beta=1.$ Moreover, by a scaling argument the proofs reduce
to the setting where $V=(N+1)\phi$ and $\phi$ has logarithmic growth,
which turns out to be the most natural one from the complex geometric
point of view that we shall adopt. In fact, the proofs of the main
results in the present paper are inspired by a circle of ideas originating
in the study of Kähler metrics with constant scalar curvature on complex
manifolds (in particular, \cite{do,bbgz,bern}; see also the survey
\cite{De}). Accordingly, a privileged role will be played by the
potential $\psi_{0}$ in formula \ref{eq:super log intro}, which,
from the complex-geometric point of view, appears as the Kähler potential
of the constant curvature metric on the Riemann sphere. The corresponding
determinantal point process is usually called the \emph{spherical
ensemble} in the probability literature. \footnote{Interestingly, this ensemble can be realized as the eigenvalues of
$AB^{-1}$ where $A$ and $B$ are random Ginibre matrices \cite{kr}. }

\subsubsection{Quantum Hall states on a compact Riemann surface $X$}

The results for the Coulomb gas in the plane will appear as special
cases of the corresponding results for random point processes defined
by Quantum Hall states on a compact Riemann surface $X,$ which is
polarized, i.e. $X$ is endowed with a positive holomorphic line bundle
$L.$ Then the role of the functions $\Psi_{i}$ appearing in the
determinantal expression is played by a basis in the space of all
global holomorphic sections of an appropriate tensor power of $L$
(see Section \ref{subsec:Determinantal-point-process} for the precise
setup). We recall that from a physical point of view $\Psi_{i}$ represent
single particle states on $X$ and the corresponding Gibbs measure
represents the corresponding collective Quantum Hall state. When $\beta=1$
this is the ground state of a gas of free fermions (electrons) confined
to $X$ and subject to a magnetic field whose strength is proportional
to $|\Delta\phi|$. This is the multi-particle state which accounts
for the integer Quantum Hall effect (i.e. the quantization of the
Hall conductance observed in a two-dimensional electron gas a strong
magnetic field; see section Section \ref{subsec:Comp QH}).

In the case of a general polarized Riemann surface $(X,L)$ the role
of the spherical ensemble will be played by the canonical determinantal
point processes introduced in \cite{berm3}, which are induced from
the constant scalar curvature metric on $X.$ These ensembles turn
out to minimize the error terms which appear in the estimates in the
main results (when $\beta=1).$

\subsection{\label{subsec:Statement-of-the}Main results for the Coulomb gas
in the plane }

In this section we will, for the sake of concreteness, state the main
results in the case of the Coulomb gas in the plane (see Section \ref{subsec:Main-results-for beta ens on X}
for the corresponding results for general $\beta-$ensembles on compact
a Riemann surface $X).$ Given $\beta>0$ we set 
\[
p:=2/\beta-1
\]

\subsubsection{Assumptions on the exterior potential $V$ and the test function
$u$}

\emph{\noun{Main Assumption}}\emph{ on $V:$ we always assume (unless
otherwise specified})\emph{ that the exterior potentials $V$ is of
the form $V=N\phi$ for a function $\phi$ with strictly super logarithmic
growth (formula  \ref{eq:str super log growth in intro}) or $V=(N+p)\phi$
for a function $\phi$ with super logarithmic growth (formula \ref{eq:super log intro}).
Moreover, $\phi$ is assumed lsc} and $\{\phi<\infty\}$ is assumed
to not be polar. 

In the case when $\phi$ has strictly super-logarithmic growth it
then follows from classical potential theory \cite{s-t} that the
equilibrium measure $\mu_{\phi}$ associated to $\phi$ is well-defined
and has compact support: $\mu_{\phi}$ is defined as the unique minimizer
of the weighted logarithmic energy $E_{\phi},$ viewed as a functional
on the space $\mathcal{P}(\C)$ of probability measures on $\C,$ 

\[
E_{\phi}(\mu):=-\frac{1}{2}\int\log|z-w|^{2}\mu\otimes\mu+\int\phi\mu
\]

The corresponding\emph{ free energy} $\mathcal{F}(\phi)$ is defined
by 
\begin{equation}
\mathcal{F}(\phi):=\inf_{\mathcal{P}(\C)}E_{\phi}(\mu)\label{eq:def of free energ intro}
\end{equation}

In the more general case when $\phi$ merely has super logarithmic
growth $\mu_{\phi}$ is still well-defined (as shown in \cite{h-k};
a more general complex-geometric setting is considered in Section
\ref{subsec:Potential-theoretic-setup}, specialized to $\C$ in Section
\ref{subsec:Energies-in-}). However, in general, the support of $\mu_{\phi},$
that we shall denote by $S,$ is not compact. 

From place to place we will use the following additional regularity
assumptions.
\begin{itemize}
\item \textbf{(A0)} \emph{$\phi$ is continuous on the complement of a closed
polar subset. }
\item \textbf{(A1)}\emph{ A0 holds and there exists a constant $\lambda$
such that $\Delta\phi\leq\lambda e^{-\psi_{0}}$ on a neighborhood
of $S$ }
\end{itemize}
The assumption A1 implies that
\begin{equation}
\mu_{\phi}=\frac{1}{4\pi}1_{S}\Delta\phi d\lambda\label{eq:mu phi in terms of S intro}
\end{equation}
(but is should be stressed that we do not make any regularity assumptions
on the set $S,$ which can be extremely irregular, even if $\phi$
is smooth; see the discussion in \cite{berm15}).

\emph{\noun{Main assumption on $u:$ }}\emph{We always assume (unless
otherwise specified) that the test function $u$ is in the space $H^{1}(\C)$
of all $u\in L_{loc}^{2}(\C)$ such that $\nabla u\in L^{2}(\C,d\lambda)$
holds in the sense of distributions. The corresponding (Dirichlet)
$H^{1}-$norm will be normalized as follows: }
\[
\left\Vert u\right\Vert _{H^{1}}^{2}:=\frac{1}{4\pi}\int_{\C}|\nabla u|^{2}d\lambda
\]
Note that since $u\in L_{loc}^{2}(\C)$ and the Gibbs measure $d\mathbb{P}_{N,\beta}$
(formula \ref{eq:Gibbs measure as det intro}) is absolutely continuous
wrt $d\lambda^{\otimes N}$ the random variable $U_{N}$ (formula
\ref{eq:def of linear stat}) is well-defined. Moreover, as explained
in Section \ref{subsec:Energies-in-} $\mathcal{F}(\phi+u)$ is well-defined
if $\phi$ and $u$ satisfy the main assumptions above. In fact, by
the approximation result in Theorem \ref{thm:conv of free energy in C},
there is no loss of generality in assuming that $\phi$ is smooth
and $u\in C_{c}^{\infty}$ in the results stated below. 

\subsubsection{Motivation}

In order to motivate the main result first assume that $\phi$ is
a continuous function on $\C$ with strictly super logarithmic growth
and consider the corresponding $N-$particle Coulomb gas ensemble
with exterior potential $V=N\phi.$ It is then well-known that the
corresponding partition function $Z_{N}[N\phi]$ has the following
asymptotics as $N\rightarrow\infty:$
\[
\beta^{-1}N^{-2}\log Z_{N}[N\phi]=-\mathcal{F}(\phi)+o(1)
\]
(which holds, more generally, if $V=(N+o(N)\phi.$ As a consequence,
the moment generating function of a given linear statistic $U_{N}$
corresponding to $u\in C_{b}(\C)$ has the following asymptotics:
\[
N^{-2}\log\E(e^{-N^{2}U_{N}})=N^{-2}\log\frac{Z_{N}[N(\phi+u)]}{Z_{N}[N\phi]}=-\mathcal{F}(\phi+u)+\mathcal{F}(\phi)+o(1)
\]
 where $\E$ denotes expectations wrt the Gibbs measure \ref{eq:Gibbs measure as det intro}
on $\C^{N}$ induced by $\phi$ at inverse temperature $\beta.$ The
error term $o(1)$ above tends to zero as $N\rightarrow\infty,$ but
depends on $u.$ The core of the present work is an inequality which
yields a quantitative refinement of the upper bound in the previous
asymptotics. 

\subsubsection{The case when $V=(N+p)\phi$}

The inequality in question turns out to be cleanest in the case when
the exterior potential $\phi$ is of the form $V=(N+p)\phi,$ for
$\phi$ of super logarithmic growth:
\begin{thm}
\emph{\label{thm:sharp non Gaussian bd intro} Assume that $\beta\leq1.$
For any $N\geq1$ and $u\in H^{1}(\C)$
\[
\frac{1}{\beta}\frac{1}{N(N+p)}\log\E(e^{-\beta(N+p)NU_{N}})\leq-\mathcal{F}(\phi+u)+\mathcal{F}(\phi)+\epsilon_{N,\beta}[\phi]
\]
where }$\epsilon_{N,\beta}[\phi]$ is a sequence, which only depends
on $\phi$ (and $\beta).$
\end{thm}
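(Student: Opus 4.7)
The plan is to reformulate the claim as a uniform upper bound on the partition function, reduce to the determinantal case $\beta=1$, and then exploit a plurisubharmonic envelope together with quantitative Bergman-kernel asymptotics. The starting observation is that by definition of the Gibbs measure,
\[
\E_{N,\beta}\!\left(e^{-\beta(N+p)NU_{N}}\right)=\frac{Z_{N,\beta}[(N+p)(\phi+u)]}{Z_{N,\beta}[(N+p)\phi]},
\]
so the theorem is equivalent to a uniform in $u$ upper bound $\frac{1}{\beta N(N+p)}\log Z_{N,\beta}[(N+p)\psi]\le-\mathcal{F}(\psi)+c_{N,\beta}[\phi]$ for $\psi=\phi+u$, combined with the classical matching lower bound at $u=0$, obtained by restricting the integral to a neighbourhood of an $N$-point Fekete configuration for $\phi$.

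The next step is to reduce to $\beta=1$. Since
\[
|D^{(N)}|^{2\beta}\prod_{i}e^{-\beta(N+p)\psi(z_{i})}=\Bigl(|D^{(N)}|^{2}\prod_{i}e^{-(N+p)\psi(z_{i})}\Bigr)^{\beta}
\]
and $\beta\le 1$, Jensen's inequality applied against the $u$-independent product probability measure $(\int e^{-q_{0}}d\lambda)^{-N}\prod_{i}e^{-q_{0}(z_{i})}d\lambda^{N}$, with $q_{0}(z):=(2+\eta)\log(1+|z|^{2})$, yields
\[
Z_{N,\beta}[(N+p)\psi]\le C_{N}^{1-\beta}\,Z_{N,1}\!\bigl[(N+p)\psi-\tfrac{1-\beta}{\beta}q_{0}\bigr]^{\beta}.
\]
The shift $-(1-\beta)q_{0}/\beta$ is $u$-independent, and its effect on $\mathcal{F}$ can be controlled by $\phi$ alone (using concavity of $\mathcal{F}$ and integrability of $q_{0}$ against equilibrium measures in the admissible class), so it is absorbed in $c_{N,\beta}[\phi]$ and matters reduce to the determinantal case.

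For $\beta=1$, $Z_{N,1}[(N+p)\psi]=N!\det G(\psi)$ with $G(\psi)_{ij}=\int\Psi_{i}\overline{\Psi_{j}}e^{-(N+p)\psi}d\lambda$. I would replace $\psi$ by its plurisubharmonic envelope
\[
P\psi(z):=\sup\{\varphi(z):\varphi\text{ subharmonic on }\C,\ \varphi\le\psi,\ \varphi\le\psi_{0}+O(1)\},
\]
using the monotonicity $\det G(\psi)\le\det G(P\psi)$ together with the classical fact $\mathcal{F}(P\psi)=\mathcal{F}(\psi)$ (since $\mu_{\psi}$ is concentrated on $\{P\psi=\psi\}$). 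The envelope $P\psi$ is Lipschitz, and for such weights the quantitative Bergman-kernel expansion in the complex-geometric framework yields $\frac{1}{N(N+p)}\log\det G(P\psi)\le-\mathcal{F}(P\psi)+\epsilon_{N}[\phi]$ with $\epsilon_{N}[\phi]$ depending only on $\phi$.

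The main obstacle is the uniformity in $u$ of this Bergman-kernel error: standard asymptotics depend on the weight's regularity, which varies with $u\in H^{1}$. The resolution, central to Berman's complex-geometric method, is to combine the envelope comparison with the variational identity
\[
\tfrac{d}{dt}\log\det G(P\psi_{t})=-(N+p)\int(\partial_{t}P\psi_{t})\,K_{N,P\psi_{t}}(z,z)\,e^{-(N+p)P\psi_{t}}d\lambda
\]
along the interpolation $\psi_{t}=\phi+tu$, $t\in[0,1]$, and to bound the discrepancy between $N^{-1}K_{N,P\psi_{t}}(z,z)e^{-(N+p)P\psi_{t}}d\lambda$ and $d\mu_{\psi_{t}}$, tested against $\partial_{t}P\psi_{t}$, by a reproducing-kernel quantity built from the reference weight $\phi$ alone. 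Integrating in $t$ and using the first variation $\tfrac{d}{dt}\mathcal{F}(\phi+tu)=\int u\,d\mu_{\phi+tu}$ of the weighted logarithmic energy then produces the sharp right-hand side $-\mathcal{F}(\phi+u)+\mathcal{F}(\phi)$.
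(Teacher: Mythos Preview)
Your overall architecture (rewrite as a partition-function inequality, reduce to $\beta=1$, pass to the envelope $P\psi$) matches the paper. The reduction to $\beta=1$ via Jensen against an auxiliary product measure is a legitimate variant of the paper's H\"older argument, though the paper's version is cleaner: it writes $e^{-(N+p)\beta\Phi}d\lambda=(e^{-k\Phi})^{\beta}\,e^{-2\Phi}d\lambda$ with $k=N-1$, applies H\"older against $dV=e^{-2\Phi}d\lambda$, and then handles the extra factor $(\int e^{-2\Phi})^{N(1/\beta-1)}$ by recognising it as the $N=1$ instance of the very inequality being proved.

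The real gap is in your $\beta=1$ step. You propose to control $\frac{1}{N(N+p)}\log\det G(P\psi)+\mathcal{F}(\psi)$ uniformly in $u$ by interpolating along $\psi_t=\phi+tu$, differentiating $\log\det G(P\psi_t)$, and bounding the discrepancy between the Bergman measure of $P\psi_t$ and $\mu_{\psi_t}$ ``by a reproducing-kernel quantity built from the reference weight $\phi$ alone.'' But this is exactly the obstruction you flagged and did not resolve: Bergman-kernel asymptotics produce errors that depend on the regularity of the weight $P\psi_t$, which varies with $u\in H^{1}$, and there is no mechanism in your argument that forces these errors to depend only on $\phi$. Nothing short of the sharp inequality itself will do here.

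The paper avoids this entirely. Its key observation is that the ``error functional'' $\epsilon_N[\Phi]:=-\frac{1}{N(N+1)}\log Z_{N,1}[(N+1)\Phi]-\mathcal{F}(\Phi)$ satisfies $\epsilon_N[\Phi]\ge\epsilon_N[P\Phi]$ (your envelope step) and, crucially, that on the space of psh metrics the functional $\psi\mapsto\epsilon_N[\psi]$ is \emph{globally minimised} at the Fubini--Study metric $\psi_0$. This minimisation is an exact, non-asymptotic inequality for every $N$ (Proposition~\ref{prop:lower bound on L funct in adj}), and it is proved by convexity of $\epsilon_N$ along weak K\"ahler geodesics on $\overline{\mathcal{H}}(X,L)$, which in turn rests on Berndtsson's positivity of direct images. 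The upshot is $\epsilon_N[\phi+u]\ge\epsilon_N[\psi_0]$ for \emph{every} admissible $\phi+u$, and hence the error in the theorem is $\epsilon_N[\phi]-\epsilon_N[\psi_0]$, manifestly independent of $u$. Your interpolation/Bergman-kernel strategy does not reach this; you need the geodesic-convexity input.
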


When $\beta=1$ the error sequence $\epsilon_{N,\beta}[\phi]$ is
minimized and vanishes for the spherical ensemble, i.e. for $\phi=\psi_{0}$
(formula \ref{eq:super log intro}). It is explicitly given by

\begin{equation}
\epsilon_{N,1}[\phi]:=-\frac{1}{\beta N(N+1)}\log Z_{N,\beta}[(N+1)\phi]-\mathcal{F}(\phi)+\frac{1}{N(N+1)}\log Z_{N,1}[(N+1)\psi_{0}]+\mathcal{F}(\psi_{0}),\label{eq:def of error in intro}
\end{equation}
(see formula \ref{eq:def of error sequence for V N plus p text} for
the expression for general $\beta$). An important feature of the
error term $\epsilon_{N,\beta}[\phi]$ is that it only depends on
a lower bound on $Z_{N,\beta}[(N+p)\phi],$ which can be estimated
explicitly. In particular, we have the following bounds, depending
on the regularity assumptions on $\phi$ (see Section \ref{subsec:Estimates-of-the}):
\begin{itemize}
\item If A0 holds then $\epsilon_{N,\beta}[\phi]\rightarrow0$ as $N\rightarrow\infty$
\item If\emph{ }A1 holds then $\epsilon_{N,\beta}[\phi]\leq\frac{\log N}{2}N^{-1}+\beta^{-1}C_{\phi}N^{-1}$
for an explicit constant $C_{\phi}.$ 
\item If $S$ is a domain with piece-wise $C^{1}-$boundary, $\phi\in C^{5}(S)$
and $\Delta\phi>0$ on $C,$ then $\epsilon_{N,\beta}[\phi]\leq AN^{-1}$
for a (non-explicit) constant $A$ (by \cite{l-s,b-b-n-y2}; see Section
\ref{sec:Sharpness}) 
\end{itemize}
The previous theorem  implies the following sharp sub-Gaussian bound 
\begin{thm}
\emph{\label{thm:sharp Gauss bound on E intro}(sub-Gaussian inequality)
Assume that $\beta\leq1.$ For any $N\geq1$ and $u\in H^{1}(\C)$
}
\[
\E(e^{(N+p)N\beta t\left(U_{N}-\bar{u}\right)})\leq e^{N(N+p)\beta\left(\frac{t^{2}}{2}\left\Vert u\right\Vert _{H^{1}}^{2}+\epsilon_{N,\beta}[\phi]\right)}
\]
 
\end{thm}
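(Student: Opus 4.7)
The plan is to deduce the sub-Gaussian bound from Theorem \ref{thm:sharp non Gaussian bd intro} via a two-step reduction: a substitution in the test function, followed by a semi-concavity estimate for the free energy functional $\mathcal{F}$.

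First, apply Theorem \ref{thm:sharp non Gaussian bd intro} with the test function $u$ replaced by $-tu$, which still lies in $H^{1}(\C)$ for any $t\in\R$. The associated linear statistic becomes $-tU_{N}$, so the theorem yields
\[
\log \E\bigl(e^{\beta(N+p)N t U_{N}}\bigr) \le \beta N(N+p)\bigl(\mathcal{F}(\phi) - \mathcal{F}(\phi - tu) + \epsilon_{N,\beta}[\phi]\bigr).
\]
Multiplying through by the deterministic factor $e^{-\beta(N+p)N t\bar{u}}$ and exponentiating reduces the claimed bound to the purely variational inequality
\[
\mathcal{F}(\phi) - \mathcal{F}(\phi - tu) - t\bar{u} \le \tfrac{1}{2}t^{2}\|u\|_{H^{1}}^{2}.
\]
Equivalently, setting $s = -t$, the theorem is reduced to the semi-concavity estimate
\[
\mathcal{F}(\phi + su) \ge \mathcal{F}(\phi) + s\bar{u} - \tfrac{1}{2}s^{2}\|u\|_{H^{1}}^{2}. \qquad (\star)
\]

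To establish $(\star)$, observe that $\mathcal{F}(\phi) = \inf_{\mu}E_{\phi}(\mu)$ is concave in $\phi$ as an infimum of affine functionals, with first variation at $\phi$ in the direction $u$ equal to $\int u\, d\mu_{\phi} = \bar{u}$ (envelope theorem). Thus $(\star)$ quantifies the concavity defect by a Dirichlet-type quadratic form. By the approximation result in Theorem \ref{thm:conv of free energy in C} cited in the excerpt, it suffices to verify $(\star)$ for smooth $\phi$ and $u\in C_{c}^{\infty}(\C)$, both sides being continuous in the relevant lsc/$H^{1}$ topologies. For smooth data, setting $f(s) := \mathcal{F}(\phi + su)$, one has $f'(s) = \int u\, d\mu_{\phi + su}$, and differentiating the Euler--Lagrange equation for the obstacle problem satisfied by $\mu_{\phi+su}$ a second time in $s$ identifies
\[
- f''(s) = \frac{1}{4\pi}\int_{\C} |\nabla v_{s}|^{2}\, d\lambda,
\]
where $v_{s}$ agrees with $u$ outside the droplet $S_{s} := \operatorname{supp}\mu_{\phi + su}$ and is harmonic in the interior of $S_{s}$ -- the balayage of $u$ through $S_{s}$. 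Since harmonic replacement minimizes the Dirichlet energy among functions with prescribed values off $S_{s}$, one has $\|v_{s}\|_{H^{1}}^{2} \le \|u\|_{H^{1}}^{2}$, hence $-f''(s) \le \|u\|_{H^{1}}^{2}$ uniformly in $s$. Taylor's theorem then gives $(\star)$ by integration.

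The main obstacle is precisely the rigorous identification of $f''$ with the balayage Dirichlet energy: differentiating $\mu_{\phi+su}$ in $s$ requires controlling how the (possibly highly irregular) droplet $S_{s}$ moves under perturbation of the potential. This is where the complex-geometric reformulation on a Riemann surface advertised in the introduction -- recasting the equilibrium problem as an obstacle problem for a K\"ahler potential -- is expected to do the actual work. Once the one-sided bound $-f'' \le \|u\|_{H^{1}}^{2}$ is established, the sub-Gaussian inequality reduces to elementary integration and the substitution above.
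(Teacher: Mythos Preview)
Your reduction to the inequality $(\star)$ is exactly what the paper does: apply Theorem~\ref{thm:sharp non Gaussian bd intro} with $u$ replaced by $-tu$ (equivalently, apply it to $u-\bar u$ after absorbing the constant), then bound the free-energy difference. Where you diverge is in the proof of $(\star)$. The paper does not compute or bound $f''$; it invokes Lemma~\ref{lem:smaller than dirichlet}, which (for $\deg L=1$) reads
\[
\mathcal{E}(P\phi)-\mathcal{E}(P(\phi+u))\;\le\;J(u)-\int_X u\,dd^c(P\phi),
\]
equivalently $J\bigl(P(\phi+u)-P\phi\bigr)\le J(u)$. This is a one-shot inequality, proved by a short algebraic identity using the orthogonality relation $\int(\phi-P\phi)\,dd^c(P\phi)=0$ (formula~\eqref{eq:og relation}); no regularity of the droplet and no differentiation of $s\mapsto\mu_{\phi+su}$ are needed. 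Applied with $u$ replaced by $su$ it yields $(\star)$ immediately, for \emph{every} admissible $\phi$.

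Your differential route, by contrast, runs into precisely the obstacle you flag: identifying $f''(s)$ with a Dirichlet energy requires differentiating the equilibrium measure in $s$, which demands regularity of the free boundary $\partial S_s$ that is simply unavailable under the hypotheses of the theorem (cf.\ the discussion around Prop.~\ref{prop:regularity} and Section~\ref{sec:Sharpness}). Note also that your description of $v_s$ is inverted: the second variation of $\mathcal{F}$ is governed by the function that equals $u$ \emph{on} $S_s$ and is harmonic \emph{off} $S_s$ --- this is the $u^{S}$ appearing as the CLT/MDP variance in Theorem~\ref{thm:MDP intro} --- not the other way around. The Dirichlet-minimisation step would still hold with the correct $v_s$, but the existence and identification of $f''$ remains a real gap. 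Lemma~\ref{lem:smaller than dirichlet} is exactly the device that sidesteps it; the equivalent form $J(P(\phi+u)-P\phi)\le J(u)$ is, morally, the ``integrated'' version of your harmonic-replacement inequality, proved without ever passing through the second derivative.
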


This can be viewed as a Coulomb gas generalization of the sharp Moser-Trudinger
inequality on the two-sphere. Indeed, when $\phi=\psi_{0}$ and $\beta=1$
the inequality is equivalent, under stereographic projection, to the
$N-$particular generalization of the Moser-Trudinger inequality for
the Coulomb gas on the two-sphere established in \cite{berm3}. In
this case the corresponding equilibrium measure $\mu_{\phi}$ identifies
with the uniform measure on the two-sphere and the error $\epsilon_{N,1}[\phi]$
vanishes, as discussed above. In general, as explained in \cite{berm15},
the previous theorem may be interpreted as a sub-Gaussian property
wrt the Laplacian of the Gaussian free field. 
\begin{cor}
\label{cor:Gaussian dev ineq}\emph{Assume that $\beta\leq1.$ For
any $N\geq1$ and $u\in H^{1}(\C)$ }
\[
\P\left(\left|U_{N}-\bar{u}\right|>\delta\right)\leq2e^{-\beta N(N+1)\left(\frac{\delta^{2}}{2}\left\Vert u\right\Vert _{H^{1}}^{-2}+\epsilon_{N,\beta}\right)}
\]
\end{cor}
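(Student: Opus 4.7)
The plan is to derive this corollary as a textbook Chernoff-type consequence of the sub-Gaussian moment generating function bound already established in Theorem \ref{thm:sharp Gauss bound on E intro}. No new analytic input is needed: the argument is the standard exponential Markov inequality followed by optimization in the dual parameter.

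Concretely, for $t>0$ I would apply Markov's inequality to the positive random variable $\exp\bigl((N+p)N\beta t (U_N - \bar u)\bigr)$ to obtain
\[
\mathbb{P}(U_N - \bar u > \delta) \;\leq\; e^{-(N+p)N\beta t \delta}\, \mathbb{E}\bigl(e^{(N+p)N\beta t (U_N - \bar u)}\bigr).
\]
Inserting the sub-Gaussian bound of Theorem \ref{thm:sharp Gauss bound on E intro} and then minimizing the resulting quadratic $\tfrac{1}{2}t^{2}\|u\|_{H^1}^{2} - t\delta$ at the optimum $t_\ast = \delta\|u\|_{H^1}^{-2}$ produces the one-sided tail estimate
\[
\mathbb{P}(U_N - \bar u > \delta) \;\leq\; \exp\!\Bigl[-N(N+p)\beta\bigl(\tfrac{\delta^{2}}{2}\|u\|_{H^1}^{-2} - \epsilon_{N,\beta}[\phi]\bigr)\Bigr].
\]

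The lower tail $\mathbb{P}(\bar u - U_N > \delta)$ is then controlled identically by applying Theorem \ref{thm:sharp Gauss bound on E intro} with $u$ replaced by $-u$: the $H^1$-norm is invariant under $u \mapsto -u$, the deterministic shift flips to $-\bar u$, and the error $\epsilon_{N,\beta}[\phi]$ depends only on $\phi$, so exactly the same expression emerges. A union bound supplies the overall factor $2$. Finally, since $\beta \leq 1$ forces $p = 2/\beta - 1 \geq 1$, we have $N+p \geq N+1$, and so the prefactor in the exponent may be relaxed from $(N+p)$ down to $(N+1)$ to match the statement, after reconciling the sign convention used for $\epsilon_{N,\beta}$.

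There is no substantive obstacle to surmount: the entire analytic content of the corollary is packaged in Theorem \ref{thm:sharp Gauss bound on E intro}, and the Chernoff optimization together with the symmetry $u \mapsto -u$ is routine. The only care needed is in tracking the precise normalization $(N+p)N\beta$ inside the moment generating function and ensuring that the variance proxy $\|u\|_{H^1}^{2}$ is combined correctly with the optimal $t_\ast$.
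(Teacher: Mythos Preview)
Your proposal is correct and matches the paper's approach exactly: the paper also derives the corollary by the standard Chernoff/exponential Markov argument applied to the sub-Gaussian moment bound, optimizing the quadratic in $t$ and using the symmetry $u\mapsto -u$ for the two-sided tail. Your remark about the sign of $\epsilon_{N,\beta}$ and the relaxation $N+p\geq N+1$ correctly flags the only bookkeeping points in an otherwise routine deduction.
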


\subsubsection{The case when $V=N\phi$}

We next turn to the case when the exterior potential is given by $V=N\phi$
for $\phi$ with strictly super logarithmic growth. This amounts to
considering the previous setting with $\phi$ replaced by

\[
\phi_{N}=\frac{N\phi}{N+p}.
\]
Applying Theorem \ref{thm:sharp non Gaussian bd intro} to $\phi_{N}$
we then arrive at the following inequality formulated in terms of
the rescaled error sequence 
\begin{equation}
\tilde{\epsilon}_{N,\beta}[\phi]:=(1+N^{-1}p)\epsilon_{N,\beta}[\phi_{N}]\label{eq:def of error tilde in intro}
\end{equation}
(which satisfies estimates analogous to the estimates on $\epsilon_{N,\beta}[\phi];$
see Section \ref{subsec:Estimates-of-the}).
\begin{thm}
\emph{\label{thm:sharp non-Gauss ine for strong}Assume that $\beta\leq1.$
For any $N\geq1$ and $u\in H^{1}(\C)$ satisfying the following integrability
property $\int\psi_{0}\mu_{\phi_{N}+u_{N}}<\infty$ we have}
\begin{equation}
\frac{1}{\beta N^{2}}\log\E(e^{-\beta N^{2}u})\leq-\mathcal{F}(\phi+u)+\mathcal{F}(\phi)+N^{-1}a_{N}(\phi,u)+\tilde{\epsilon}_{N,\beta},\label{eq:ineq in thm upper bound part for strictly}
\end{equation}
 where 
\[
a_{N}(\phi,u):=p\left(-E_{0}(\mu_{\phi_{N}+u_{N}})+E_{0}(\mu_{\phi})\right),
\]
 which finite
\end{thm}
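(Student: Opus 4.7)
\emph{Proof plan.} The plan is to deduce Theorem \ref{thm:sharp non-Gauss ine for strong} from Theorem \ref{thm:sharp non Gaussian bd intro} by a simple rescaling. Substituting $\phi\mapsto\phi_{N}:=\tfrac{N}{N+p}\phi$ in the setting of Theorem \ref{thm:sharp non Gaussian bd intro}, the exterior potential $(N+p)\phi_{N}$ becomes $N\phi$, matching the potential considered here; correspondingly, the natural replacement of the test function is $u_{N}:=\tfrac{N}{N+p}u$, so that the shifted Gibbs weight acquires exponent $-\beta(N+p)\sum_{i}u_{N}(z_{i})=-\beta N^{2}U_{N}$. First one verifies that $\phi_{N}$ satisfies the hypotheses of Theorem \ref{thm:sharp non Gaussian bd intro} (lower semi-continuity, super-logarithmic growth, non-polarity of $\{\phi_{N}<\infty\}$), which follows from the strictly super-logarithmic growth of $\phi$ as soon as $N$ is large enough in terms of $\epsilon$; the finitely many remaining values of $N$ can be absorbed into the error term.

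Applying Theorem \ref{thm:sharp non Gaussian bd intro} to $(\phi_{N},u_{N})$ and multiplying through by $(N+p)/N$ gives, by the definition \ref{eq:def of error tilde in intro} of $\tilde{\epsilon}_{N,\beta}[\phi]$,
\[
\frac{1}{\beta N^{2}}\log\E\bigl(e^{-\beta N^{2}U_{N}}\bigr)\leq(1+p/N)\bigl[-\mathcal{F}(\phi_{N}+u_{N})+\mathcal{F}(\phi_{N})\bigr]+\tilde{\epsilon}_{N,\beta}[\phi].
\]
Since $\phi_{N}+u_{N}=\tfrac{N}{N+p}(\phi+u)$, for any probability measure $\mu$ of finite logarithmic energy one has the scaling identity
\[
(1+p/N)\,E_{\phi_{N}+u_{N}}(\mu)=E_{\phi+u}(\mu)+(p/N)E_{0}(\mu),
\]
and similarly with $u$ set to zero.

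To conclude, apply the variational characterizations of the equilibrium measures. Since $\mu_{\phi_{N}+u_{N}}$ minimizes $E_{\phi_{N}+u_{N}}$, the scaling identity together with the minimization property of $\mu_{\phi+u}$ yields
\[
(1+p/N)\mathcal{F}(\phi_{N}+u_{N})=E_{\phi+u}(\mu_{\phi_{N}+u_{N}})+(p/N)E_{0}(\mu_{\phi_{N}+u_{N}})\geq\mathcal{F}(\phi+u)+(p/N)E_{0}(\mu_{\phi_{N}+u_{N}}).
\]
For the other term, an upper bound on $(1+p/N)\mathcal{F}(\phi_{N})$ is obtained by evaluating the scaled functional at the (non-optimal) candidate $\mu_{\phi}$, producing $\mathcal{F}(\phi)+(p/N)E_{0}(\mu_{\phi})$. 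Subtracting the two estimates yields exactly the claimed inequality, with error $N^{-1}a_{N}(\phi,u)=(p/N)\bigl(E_{0}(\mu_{\phi})-E_{0}(\mu_{\phi_{N}+u_{N}})\bigr)$. Finiteness of $a_{N}(\phi,u)$ follows from compactness of the support of $\mu_{\phi}$ (in the strictly super-logarithmic case) together with the integrability hypothesis $\int\psi_{0}\,d\mu_{\phi_{N}+u_{N}}<\infty$, which guarantees $|E_{0}(\mu_{\phi_{N}+u_{N}})|<\infty$ via a standard Frostman-type argument. The algebraic core of the proof is thus the scaling identity combined with the two variational inequalities; the only delicate point is the bookkeeping required to apply Theorem \ref{thm:sharp non Gaussian bd intro} to $\phi_{N}$ uniformly in $N$ and to verify the finiteness of all the energies involved.
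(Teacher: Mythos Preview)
Your proposal is correct and follows essentially the same route as the paper: apply Theorem \ref{thm:sharp non Gaussian bd intro} to $(\phi_{N},u_{N})$, then convert $-\mathcal{F}(\phi_{N}+u_{N})+\mathcal{F}(\phi_{N})$ into $-\mathcal{F}(\phi+u)+\mathcal{F}(\phi)$ plus the $E_{0}$-terms using the scaling identity $(1+p/N)E_{\phi_{N}+u_{N}}(\mu)=E_{\phi+u}(\mu)+(p/N)E_{0}(\mu)$ together with the two variational inequalities. The paper isolates these two inequalities in Lemma \ref{lem:estimates on energy for phi stongly ad} and obtains the second one via concavity of $t\mapsto\mathcal{F}(t\phi)$, but this is equivalent to your direct bound $\mathcal{F}(\phi_{N})\leq E_{\phi_{N}}(\mu_{\phi})$; the paper also handles the small-$N$ issue via the remark that the adjoint framework tolerates metrics of the form $\phi+c_{k}\log|z|^{2}$ with $c_{k}k<1$, rather than absorbing into the error.
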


Theorem \ref{thm:sharp non Gaussian bd intro} also implies the following 
\begin{thm}
\emph{\label{thm:gaussian ineq for E for strong}(Sub-Gaussian inequality).
Assume that $\beta\leq1.$ Then for any $N\geq1$ and $u\in H^{1}(\C)$
}
\[
\E(e^{\beta N^{2}t\left(U_{N}-\bar{u}\right)})\leq e^{\beta N^{2}\left(\frac{t^{2}}{2}\left\Vert u\right\Vert _{H^{1}}^{2}+\frac{p}{N}B+\tilde{\epsilon}_{N,\beta}\right)},
\]
where the constant $B$ only depends on $\phi.$ 
\end{thm}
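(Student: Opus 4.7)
The plan is to deduce the inequality directly from Theorem \ref{thm:sharp non-Gauss ine for strong} applied to the shifted test function $-tu$, combined with a quantitative quadratic lower bound on the concavity defect of the free-energy functional $\mathcal{F}$.

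Applying Theorem \ref{thm:sharp non-Gauss ine for strong} with $u$ replaced by $-tu$ yields
\[
\frac{1}{\beta N^{2}}\log\E\!\bigl(e^{\beta N^{2}tU_{N}}\bigr) \;\leq\; -\mathcal{F}(\phi-tu)+\mathcal{F}(\phi)+N^{-1}a_{N}(\phi,-tu)+\tilde{\epsilon}_{N,\beta}.
\]
Subtracting $t\bar{u}$ from both sides, the announced bound reduces to the two estimates
\[
(\star)\;\mathcal{F}(\phi)-\mathcal{F}(\phi-tu)-t\bar{u} \;\leq\; \tfrac{t^{2}}{2}\|u\|_{H^{1}}^{2}, \qquad (\star\star)\; a_{N}(\phi,-tu)\;\leq\; pB,
\]
with $B$ to depend only on $\phi$.

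For $(\star)$, which is a quantitative second-order concavity estimate for $\mathcal{F}$, I would denote $\mu_{\ast}=\mu_{\phi-tu}$ and use the minimizing property of $\mu_{\phi}$ for $E_{\phi}$ to split
\[
\mathcal{F}(\phi-tu)-\mathcal{F}(\phi)+t\bar{u} \;=\; \bigl(E_{\phi}(\mu_{\ast})-E_{\phi}(\mu_{\phi})\bigr)-t\!\int u\,(\mu_{\ast}-\mu_{\phi}).
\]
The first parenthesis expands, via the Euler--Lagrange equation for $\mu_{\phi}$, into the squared $H^{-1}$-norm $\tfrac{1}{2}\|\mu_{\ast}-\mu_{\phi}\|_{H^{-1}}^{2}$ attached to the logarithmic energy; the cross term is at most $|t|\,\|u\|_{H^{1}}\|\mu_{\ast}-\mu_{\phi}\|_{H^{-1}}$ by $H^{1}$--$H^{-1}$ duality (the logarithmic kernel being, up to a factor of $4\pi$, the Green's function for $\Delta$ on $\R^{2}$). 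Minimizing the lower bound $\tfrac{1}{2}X^{2}-|t|\|u\|_{H^{1}}X$ over $X=\|\mu_{\ast}-\mu_{\phi}\|_{H^{-1}}$ produces $-\tfrac{t^{2}}{2}\|u\|_{H^{1}}^{2}$, which is $(\star)$. This is the geometric heart of the argument: the Dirichlet $H^{1}$ norm appears precisely as the quadratic form dual to the logarithmic energy.

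For $(\star\star)$, I would unfold $a_{N}(\phi,-tu)=p\bigl(E_{0}(\mu_{\phi})-E_{0}(\mu_{\phi_{N}-tu_{N}})\bigr)$ and appeal to the universal lower bound $E_{0}(\mu_{\phi_{N}-tu_{N}})\geq\inf_{\mathcal{P}(\C)}E_{0}$ coming from the variational characterization of the reference energy, which lets one take $B=E_{0}(\mu_{\phi})-\inf E_{0}$. Combining $(\star)$, $(\star\star)$, and the output of Theorem \ref{thm:sharp non-Gauss ine for strong} then yields the claimed sub-Gaussian bound. The main obstacle is $(\star)$: although the manipulation is a clean Cauchy--Schwarz/AM--GM step once the identities are in place, carrying out the expansion of $E_{\phi}(\mu_{\ast})-E_{\phi}(\mu_{\phi})$ into a squared Dirichlet seminorm, and checking that the $H^{1}$--$H^{-1}$ duality holds at the required level of generality for $u\in H^{1}(\C)$ and for the possibly irregular supports involved, relies on the fine potential-theoretic setup of Section \ref{subsec:Potential-theoretic-setup} and on the Euler--Lagrange characterization of $\mu_{\phi-tu}$.
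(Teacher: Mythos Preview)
Your step $(\star)$ is fine: once you note that the Euler--Lagrange inequality gives
\[
E_{\phi}(\mu_{\ast})-E_{\phi}(\mu_{\phi})\;\geq\;\tfrac{1}{2}\|\mu_{\ast}-\mu_{\phi}\|_{H^{-1}}^{2}
\]
(rather than an exact equality), the Cauchy--Schwarz/AM--GM computation you sketch goes through. In fact this is exactly the content of Lemma~\ref{lem:smaller than dirichlet}, so you are re-deriving that lemma by duality; the paper uses it too, but packaged inside Theorem~\ref{thm:sharp Gauss bound on E intro}.

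The genuine gap is in $(\star\star)$. The quantity $\inf_{\mathcal P(\C)} E_{0}$ is $-\infty$: the unweighted logarithmic energy $E_{0}(\mu)=-\tfrac{1}{2}\!\int\log|z-w|^{2}\mu\otimes\mu$ tends to $-\infty$ already along uniform measures on discs of radius $R\to\infty$. So ``$B=E_{0}(\mu_{\phi})-\inf E_{0}$'' is not a finite constant, and there is no way to bound $a_{N}(\phi,-tu)=p\bigl(E_{0}(\mu_{\phi})-E_{0}(\mu_{\phi_{N}-tu_{N}})\bigr)$ uniformly in $u\in H^{1}(\C)$ and $t\in\R$ by this route. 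Indeed the paper never claims such a bound on $a_{N}$; the only control given (Proposition preceding Section~\ref{subsec:Estimates-of-the}) is $a_{N}\leq 2p\|u\|_{L^{\infty}}$, which depends on $u$.

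The paper therefore does \emph{not} deduce Theorem~\ref{thm:gaussian ineq for E for strong} from Theorem~\ref{thm:sharp non-Gauss ine for strong}. It goes back one step and applies the already sub-Gaussian Theorem~\ref{thm:sharp Gauss bound on E intro} with $\phi$ replaced by $\phi_{N}=\tfrac{N}{N+p}\phi$ and $u$ replaced by $u_{N}=\tfrac{N}{N+p}u$. This produces the right variance term up to the harmless factor $\tfrac{N}{N+p}$; the only $u$-dependent discrepancy is the shift of the mean,
\[
\Bigl|t\!\int u\,(\mu_{\phi_{N}}-\mu_{\phi})\Bigr|\;\leq\;|t|\,\|u\|_{H^{1}}\,\|P\phi_{N}-P\phi\|_{H^{1}}\;\leq\;|t|\,\|u\|_{H^{1}}\cdot\tfrac{p}{N+p}\,b_{\phi},
\]
where the last bound is Proposition~\ref{prop:H one conv for H-S derivative} and $b_{\phi}$ depends only on $\phi$. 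One then absorbs $|t|\,\|u\|_{H^{1}}\cdot b_{\phi}$ via $ab\leq a^{2}/2+b^{2}/2$ into $\tfrac{t^{2}}{2}\|u\|_{H^{1}}^{2}$ plus a constant $B_{\phi}=b_{\phi}^{2}/2$, yielding exactly the stated inequality. The key point you are missing is thus a $\phi$-only bound on $\|P\phi_{N}-P\phi\|_{H^{1}}$, not a universal lower bound on $E_{0}$.
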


The analog of the deviation inequality in Corollary \ref{cor:Gaussian dev ineq}
then follows as before.  

\subsubsection{Sharpness}

The leading term (of the order $O(N^{2})$) of the inequalities in
Theorem \ref{thm:sharp non Gaussian bd intro} and Theorem \ref{thm:sharp Gauss bound on E intro}
is, in general, sharp. Indeed, as discussed above the inequality in
Theorem \ref{thm:sharp Gauss bound on E intro} is an asymptotic equality,
as $N\rightarrow\infty$ and so is the inequality in Theorem \ref{thm:sharp non Gaussian bd intro}
for any sufficiently small $u$ supported in the interior of $S.$
Moreover, for $\phi$ sufficiently regular we show in Section \ref{sec:Sharpness}
that the order of magnitude $O(N^{-1})$ of the error terms is also
sharp. This is shown by making contact with the asymptotic expansions
of the corresponding Coulomb gas partition functions in \cite{l-s,b-b-n-y2}.
In the general setting of a compact Riemann surface $X$ we also show
that the corresponding error term $O(N^{-1})$ is of the lower order
$o(N^{-1})$ iff $\phi$ is the Kähler potential of a metric on $X$
with constant scalar curvature iff the error term is, in fact, exponentially
small.

In another direction is is shown Section \ref{subsec:The-inequality-in fails drast}
that the inequalities in Theorem \ref{thm:sharp non Gaussian bd intro}
and Theorem \ref{thm:sharp Gauss bound on E intro} fail rather drastically
when $\beta>2.$ The study of the range $\beta\in]1,2[$ is left as
an intriguing open problem for the future (on the other hand, as shown
in \cite{berm15}, the inequalities can be extended to all $\beta>0$
at the price of adding an error term involving $\left\Vert \nabla u\right\Vert _{L^{\infty}(\C)}).$
In the case of the 2D Coulomb gas restricted to the real line it is
shown in \cite{berm16} that the inequalities corresponding to the
ones in Theorem \ref{thm:sharp non Gaussian bd intro} and Theorem
\ref{thm:sharp Gauss bound on E intro} hold precisely for $\beta\leq1.$ 

It should be stressed that in case of the Coulomb gas in $\R^{d},$
for $d>2,$ the analog of the inequalities in Theorem \ref{thm:sharp Gauss bound on E intro}
and Corollary \ref{cor:Gaussian dev ineq} fail drastically (even
though the inequalities are still asymptotic equalities when $u$
is smooth \cite{c-g-z,l-s}). There are, in fact,  basic examples
of functions $u$ such that the right hand side in the inequality
in \ref{thm:sharp Gauss bound on E intro} is finite, while the left
hand side is infinite (for any $N\geq1).$ Indeed, by Trudinger's
embedding \cite{tr} the critical exponent in the implication $|\nabla u|^{p}\in L^{1}(\R^{d})\implies e^{u}\in L_{loc}^{1}(\R^{d})$
is $p=d$ and hence any $u$ which violates the previous implication
for $p=d>2$ also violates the corresponding inequality in Theorem
\ref{eq:def of error in intro} when $d>2$ (and by a standard approximation
argument this also implies that the inequalities in question cannot
hold for all smooth test functions $u$ either when $d>2).$ 

\subsection{\label{subsec:Comparison-with-previous}Comparison with previous
results}

Finally, we make some comparisons with previous result, starting with
the Coulomb gas in the plane (there is also an extensive literature
concerning the restriction of the Coulomb gas to the real line that
we only briefly touch on).

\subsubsection{\label{subsec:Comp Deviation-inequalities}Deviation inequalities
for the 2D Coulomb gas}

Recently, the following concentration of measure inequality was obtained
in \cite[Thm 1.5, Thm 1.9]{c-h-m} for any $\beta>0,$ assuming that
$\phi$ has strictly super logarithmic growth and under a growth assumption
on $\Delta\phi$ there exists positive constants $a$ and $c,$ depending
on $\phi,$ such that 

\begin{equation}
\P\left(d_{BL}(\delta_{N},\mu_{\phi})\geq\delta\right))\leq e^{-a\beta\delta^{2}N^{2}+\frac{N}{2}\log N+c(\beta)N}\label{eq:conc ineg for balls}
\end{equation}
where $d_{BL}$ is the bounded Lipschitz distance, i.e. $d_{BL}(\mu,\nu)$
is the sup of $\left\langle u,\mu-\nu\right\rangle $ over all Lipschitz
continuous functions on $\C$such that $u$ such that $\left\Vert u\right\Vert _{\text{Lip}}^{2}\leq1$
and $\left\Vert u\right\Vert _{L^{\infty}}\leq1.$ A similar inequality
was also established for higher dimensional Coulomb gases. The inequalities
were generalized to Coulomb gases on compact Riemannian manifolds
in \cite{ga-z} (see also \cite{d-n} for a related inequality for
$\beta-$ensembles on polarized compact complex manifolds). The proof
in \cite{c-h-m} is based on a Coulomb analog of the usual transport
inequality between the Wasserstein distance and the relative entropy.
As discussed in the introduction of \cite{c-h-m} the inequality can
be viewed as a global analog of a previous inequality in \cite{R-S},
proved using the notion of renormalized energy. The concentration
inequality \ref{eq:conc ineg for balls} implies, in particular, the
following deviation inequality for linear statistics defined by a
Lipschitz continuous function $u$ such that $\left\Vert u\right\Vert _{L^{\infty}(\C)}\leq1:$
\begin{equation}
\P\left(\left|U_{N}-\bar{u}\right|>\delta\right))\leq e^{-\left\Vert u\right\Vert _{\text{Lip}}^{-2}\delta^{2}a\beta N^{2}+\frac{N}{2}\log N+c(\beta)N}\label{eq:conc ineq for lip}
\end{equation}

This inequality is analogous to the inequality in Corollary \ref{cor:Gaussian dev ineq},
but with the $H^{1}-$norm replaced by the Lipschitz norm. Accordingly,
the inequality\ref{eq:conc ineq for lip} is not sharp in the limit
$N\rightarrow\infty.$ On the other hand the analog of the concentration
inequality \ref{eq:conc ineg for balls} drastically fails for the
$H^{1}-$norm, since the $H^{1}-$distance between and $\mu_{\phi}$
and any discrete measure is infinite (see however \cite{berm15b}
for concentration inequalities valid for dual $H^{(1+\epsilon)}-$Sobolev
norms and applications to Monte-Carlo integration). 

We also recall that deviation inequalities for general determinantal
point-processes have been established in \cite{p-p} and for projectional
determinantal point processes in \cite{b-d}. However, the inequalities
in \cite{p-p,b-d} concern deviations from the \emph{mean} $\E(U_{N})$
rather than its limit $\bar{u}.$ Moreover, the speed in the general
setting of \cite{p-p} is of the order $N$ (defined as expected number
of particles) in contrast to the optimal speed $N^{2},$ arises in
the present setting. Also note that the inequalities in \cite{p-p}
involve the Lipschitz norm of $u,$ while those in \cite{b-d} involve
the $L^{\infty}-$norm of $u.$ 

\subsubsection{\label{subsec:Comp QH}Quantum Hall states on Riemann surfaces}

We recall that the collective description of the fractional Quantum
Hall effect (QHE) in terms of powers of a Slater determinant - aka
a Quantum Hall state - was originally introduced by Laughlin in the
case of the plane \cite{Lau} and has given rise to an extensive physics
literature. The corresponding set $S$ is then usually referred to
as ``the droplet'' in the plasma analogy of the Quantum Hall state.
The power $\beta=m$ for an integer $m$ correspond to the fractional
Hall conductance $1/m$ (in fundamental units $e^{2}/h$) observed
in experiments. Considering Riemann surfaces $X$ with non-trivial
topology is crucial in the standard explanation of the quantization
of the Hall conductance, introduced in \cite{t-k-n-n} in the case
when $X$ is a torus. There is a recent and rapidly expanding literature
concerning Quantum Hall states on general compact Riemann surfaces
$X$ (see the survey \cite{kl}). In the case when the corresponding
metric $\phi$ on the line bundle $L\rightarrow X$ has globally positive
curvature and $\beta=1$ asymptotic expansions of the corresponding
partition functions $Z_{N}$ is given in \cite[Thm 1]{k-m-m-w} and
related to global anomalies and adiabatic transport on the moduli
space of $(X,L)$ (expressed in terms of universal transport coefficients
in the QHE). The strict positivity of the curvature of $\phi$ corresponds,
in physical terms, to a magnetic field with a definite orientation.
Here we follow the general setup of determinantal point process on
a compact complex manifold $X$ endowed with a positive line bundle
$L$ introduced in the series of papers \cite{berm1,berm2,berm3}.
The main new feature, compared to \cite{k-m-m-w} and \cite{berm1,berm2,berm3},
respectively, is that we allow metrics $\phi$ with non-positive curvature
and singularities, respectively. As a consequence, in our situation
the ``droplet'' $S$ does not cover all of $X$ (unless $\phi$
has strictly positive curvature). In particular, our results apply
to the setup of quasi-holes located at points $p_{1},...,p_{m}$ (or
magnetic impurities) in Quantum Hall states (see Section \ref{subsec:Quasi-holes-and-magnetic}),
where ``edge effects'', i.e. contributions from the boundary of
$S^{c}$ are important.

\subsection{Acknowledgments}

This work was supported by grants from the KAW foundation, the Göran
Gustafsson foundation and the Swedish Research Council. 

\subsection{Organization }

To fix ideas we start in Section \ref{sec:Outline-of-the} by outlining
the proof of Theorem \ref{thm:sharp non Gaussian bd intro} in the
particular case of the Coulomb gas in $\C.$ Then in Section \ref{sec:Complex-geometry-and}
we setup the complex geometric and potential theoretic framework in
the general setting of a polarized compact Riemann surface $(X,L).$
In Section \ref{sec:beta ensembles-on-Riemann} the main results for
general $\beta-$ensembles on $X$ are stated and proved. The main
results about Coulomb gases in $\C,$ stated in the introduction above,
are proved in the following section by compactifying $\C$ with the
Riemann sphere $X$ and using the results proved in Section \ref{sec:beta ensembles-on-Riemann}.
We also provide explicit estimate of the error terms appearing in
the main results in $\C.$ Then in section \ref{sec:Sharpness} we
make contact with the asymptotic expansions for Coulomb gases in \cite{l-s,b-b-n-y2}
in order to show that the main results in $\C$ are essentially sharp.
In Section \ref{sec:Outlook-on-relations} an outlook on relations
to Kähler geometry is provided. In the last section we state the main
results proved in the sequel \cite{berm15} of the present paper and
in the companion papers \cite{berm15b,berm16,berm17}.

The length of the paper is, at least partly, a result of an effort
to make the paper readable both to readers with background in mathematical
physics as well as in complex geometry. 

\section{\label{sec:Outline-of-the}Outline of the proof of Theorem \ref{thm:sharp non Gaussian bd intro}
in the special case of $\C$}

Theorem \ref{thm:sharp non Gaussian bd intro} will be deduced from
a more general result on a polarized compact Riemann surface $X$
applied to the Riemann sphere, viewed as the one-point compactification
of $\C.$ But it may be illuminating to outline the main ingredients
of the proof in the setting of $\C$ considered above. First, by an
approximation argument, it is enough to consider the case when $\phi$
is continuous (and has super logarithmic growth) and $u\in C_{b}(\C).$
Moreover, using Jensen's inequality, it turns out to be enough to
consider the case when $\beta=1.$ Now, introducing the ``error functional''
$\epsilon_{N}[\Phi]$ 
\begin{equation}
\epsilon_{N}[\Phi]:=-\frac{1}{N(N+1)}\log Z_{N,1}[(N+1)\Phi]-\mathcal{F}(\Phi)\label{eq:error functional in pf}
\end{equation}
on the space of all functions $\Phi$ in $\C$ with super logarithmic
growth, we can rewrite 

\emph{
\[
\frac{1}{N(N+1)}\log\E(e^{-(N+1)NU_{N}})=-\mathcal{F}(\phi+u)+\mathcal{F}(\phi)-\epsilon_{N}[\phi+u]+\epsilon_{N}[\phi]
\]
 }(in the notation of formula \ref{eq:def of error in intro}, $\epsilon_{N}[\Phi]:=\epsilon_{N,1}[\Phi]+\epsilon_{N,1}[\psi_{0}]$).
It is essentially well-known that 
\[
\epsilon_{N}[\Phi]=o(1)
\]
as $N\rightarrow\infty$ for $\Phi$ fixed, but the crux of the matter
is to obtained a uniform lower bound on $\epsilon_{N}[\Phi]$ as $N\rightarrow\infty.$
More precisely, it is enough to show that\emph{ for any $N$ the functional
$\epsilon_{N}[\Phi]$ is minimized for $\Phi=\psi_{0}$ }i.e. 
\begin{equation}
\epsilon_{N}[\Phi]\geq\epsilon_{N}[\psi_{0}],\,\,\,\,\psi_{0}(z):=\log(1+|z|^{2})\label{eq:minimiz wrr Phi in pf}
\end{equation}
Indeed, taking $\Phi:=\phi+u$ for a given $u\in C_{b}(\C),$ then
proves Theorem \ref{thm:sharp non Gaussian bd intro}. To prove the
minimization property \ref{eq:minimiz wrr Phi in pf} one first observes
that
\[
\mathcal{F}(\Phi)=\mathcal{E}(P\Phi),\,\,\,\,4\pi\mathcal{E}(\psi)=-\frac{1}{2}\int_{\C}|\nabla(\psi-\psi_{0})|^{2}d\lambda+\int_{\C}(\psi-\psi_{0})\Delta\psi_{0}d\lambda+C_{0}
\]
 for a constant $C_{0}\in\R,$ where $P$ is the operator defined
by the following Perron type envelope:
\[
P\Phi(z):=\sup_{\psi\in\mathcal{H}(\C)}\{\psi(z):\,\,\psi\leq\Phi\,\,\text{on\,\ensuremath{\C}}\},
\]
 where $\mathcal{\overline{\mathcal{H}}}(\C)$ denotes the space of
all subharmonic functions $\psi$ in $\C$ such that $\psi-\psi_{0}$
extends to a continuous function on the Riemann sphere $X.$ \footnote{In fact, up to this point the function $\psi_{0}$ could be taken
as any function in $\overline{\mathcal{H}}(\C)$ (for example, $\psi_{0}(z):=\log\max\{1,|z|\}$
gives $C_{0}=0).$ } The function $P\Phi$ is in $\mathcal{\overline{\mathcal{H}}}(\C)$
and hence $P$ is a projection operator: $P(P\Phi)=P\Phi.$ Now, using
also that, by construction, $P\Phi\leq\Phi,$ gives 
\[
\epsilon_{N}[\Phi]\geq\epsilon_{N}[P\Phi]
\]
This means that it is enough to show that, for any $N,$ the function
$\psi_{0}(z):=\log(1+|z|^{2})$ minimizes the functional $\epsilon_{N}[\psi]$
on the space $\mathcal{\overline{\mathcal{H}}}(\C):$ 
\begin{equation}
\inf_{\psi\in\overline{\mathcal{H}}(\C)}\epsilon_{N}[\psi]=\epsilon_{N}[\psi_{0}].\label{eq:minimiz proper of space H in pf}
\end{equation}
But this follows from results in \cite{berm3} applied to the Riemann
sphere $X.$ The result that we shall need is stated in Prop \ref{prop:lower bound on L funct in adj}
in the general setting of a compact Riemann surface $(X,L)$ endowed
with a positive line bundle $L.$ In the present setting $L$ is given
by the the $(N+1)$th tensor power of the hyperplane line bundle $\mathcal{O}(1)$
on $X,$ i.e. the unique holomorphic line bundle on $X$ of unit degree.
Moreover, $\mathcal{\overline{\mathcal{H}}}(\C)$ may be identified
with the space $\overline{\mathcal{H}}(X,L)$ of continuous metrics
$\left\Vert \cdot\right\Vert $ on $\mathcal{O}(1)\rightarrow X$
with positive curvature current $\omega.$ The correspondence is made
so that 
\[
\omega_{|\C}:=\frac{i}{2}\partial\bar{\partial}\psi=\frac{1}{4\pi}\Delta\psi_{0}d\lambda
\]
for $\mathcal{\psi\in\overline{\mathcal{H}}}(\C).$ A simplifying
feature in the case when $X$ is the Riemann sphere $X$ is that $X$
is homogeneous under the action of the group of biholomorphisms of
$X$ that lifts to $L.$ As a consequence, if the metric $\left\Vert \cdot\right\Vert $
on $L$ is homogeneous - which equivalently means that $\omega$ defines
a metric on $X$ with constant scalar curvature - then the corresponding
function $\psi$ is a critical point of the functional $\epsilon_{N}[\psi]$
on $\mathcal{\overline{\mathcal{H}}}(\C).$ As shown in \cite{berm3}
, the corresponding minimizing property \ref{eq:minimiz proper of space H in pf}
then follows from the convexity of the functional $\epsilon_{N}$
along, so called, weak Kähler geodesics in the space $\mathcal{\overline{\mathcal{H}}}(X,\mathcal{O}(1)).$
\begin{rem}
\label{rem:lsc in outline}Even if $\Phi(z)$ is assumed smooth and
subharmonic in $\C$ (for example, $\Phi(z)=|z|^{2})$ the corresponding
metric on $\mathcal{O}(1)\rightarrow X$ is not, in general, locally
bounded over $X$ and its curvature current is not positive (see Example
\ref{exa:phi as sing metric on X}). This is one of the reasons that
we will work with a general setup of metrics on a line bundle $L\rightarrow X$
which (in our additive notation) are merely lower semi-continuous
and may take the value $+\infty.$ Another good reason to work with
lsc metrics is that they also naturally appear in geometric, as well-as
physical situations (see Section \ref{subsec:Quasi-holes-and-magnetic}).
\end{rem}

\subsection{The convexity of the functional $\epsilon_{N}$ along weak Kähler
geodesics}

For the convenience of the reader we outline the proof of the convexity
in question from \cite{berm3} in the present setting of $\C.$ The
weak Kähler geodesic $\psi_{t}\in\mathcal{H}(\C)$ connecting given
$\psi_{0}$ and $\psi_{1}$ in $\mathcal{H}(\C)$ may be expressed
as $\psi_{t}(z)=\Psi(t,z)$ where $\Psi(\tau,z)$ is the continuous
plurisubharmonic function on the strip $([0,1]+i\R)\times\C$ defined
as the following envelope: 

\[
\Psi(\tau,z)=\sup\left\{ \widetilde{\Psi}(\tau,z):\,\,\widetilde{\Psi}(0,z)=\psi_{0}(z)\,\,\widetilde{\Psi}(1,z)=\psi_{1}(z)\right\} ,
\]
 where the sup runs over all continuous plurisubharmonic functions
$\widetilde{\Psi}(\tau,z)$ on $[0,1]+i\R)\times\C$ such that $\widetilde{\Psi}$
is independent of the imaginary part of $\tau.$ It follows from well-known
results in Kähler geometry that $\epsilon_{N}[\psi_{t}]$ is affine
and all that remains is to verify that 
\[
\mathcal{L}_{N}(\psi_{t}):=-\frac{1}{N(N+1)}\log Z_{N,1}[(N+1)\psi_{t}]
\]
 is convex in $t.$ To this end one first uses $\beta=1$ to rewrite
$Z_{N,1}[(N+1)\psi]$ as the determinant of Gram matrix:
\[
Z_{N,1}[(N+1)\psi]=N!\det_{1,i,j\leq N}\left(\int_{\C}\Psi_{i}(z)\overline{\Psi_{j}(z)}e^{-(N+1)\psi}d\lambda\right),
\]
 where $\Psi_{i}(z)=z^{i-1}$ is the standard base of monomials in
the space $\mathcal{P}_{N-1}(\C)$ of all polynomials of degree at
most $N-1.$ The convexity of $\mathcal{L}_{N}(\psi_{t})$ then follows
from general positivity results in \cite{bern} for direct images
of adjoint line bundles, using that $\mathcal{P}_{N-1}(\C)$ may be
identified with the space $H^{0}(X,(N+1)\mathcal{O}(1)+K_{X})$ of
all global holomorphic one-forms on $X$ with values in the $N+1$th
tensor power of $\mathcal{O}(1).$ 

\section{\label{sec:Complex-geometry-and}Complex geometry and potential theory
on compact Riemann surfaces }

In this section we provide preliminary material on the complex geometry
and potential theory on a polarized Riemann surface $(X,L).$ We will
mainly follow the notation in \cite{b-b} (which concerns the considerably
more involved case of an $n-$dimensional complex manifold). The main
new feature compared to \cite{b-b} (and \cite{berm1}) is that we
allow non-continuous metrics $\phi$ on $L:$ first lower semi-continuous
(lsc) metrics and then $H^{1}-$metrics i.e. metrics whose local gradient
is in $L^{2}.$ For the latter we exploit the link to the intrinsic
Hilbert space $H^{1}(X)/\R$ and its dual, which is specific for the
case $n=1.$ This link also underlies the inequality \ref{lem:smaller than dirichlet}
that will be used to obtain sub-Gaussian estimates.

\subsection{The complex geometric setup}

Let $(X,L)$ be a\emph{ polarized compact Riemann surface}, i.e. a
compact complex manifold $X$ of complex dimension $n=1$ endowed
with a holomorphic line bundle of positive degree: 
\[
\deg L:=\int_{X}c_{1}(L)>0,
\]
 where $L$ is the first Chern class of $L$ in the integral part
of the second de Rham cohomology group $H^{2}(X,\R).$  We will use
additive notation for tensor powers and Hermitian metrics on line
bundles. We recall that a holomorphic line bundle $L$ over $X$ is
determined by fixing a covering of $X$ by a finite number of open
sets $U_{i}$ and trivializing (i.e. non-vanishing) holomorphic sections
$e_{U_{i}}$ of $L$ over $U_{i}.$ We denote by $t_{ij}$ the corresponding
transition functions, i.e. the holomorphic invertible functions on
$U_{i}\cap U_{j}$ defined by $t_{ij}=e_{U_{i}}/e_{U_{j}}.$

\subsubsection{\label{subsec:Metrics-on}Metrics on $L$}

In our additive notation a metric $\left\Vert \cdot\right\Vert $
on $L$ (that will be assumed to be smooth to start with) will be
denoted by the symbol $\phi,$ where $\phi$ may be locally represented
by a smooth function. More precisely, given a covering of $X$ and
local trivializations of $L$ as above, a metric $\left\Vert \cdot\right\Vert $
on $L$ is represented by the collection $\phi:=\{\phi_{U_{i}}$\}
of functions $\phi_{U_{i}}$ on $U_{i}$ defined by 
\[
\left\Vert e_{U_{i}}\right\Vert ^{2}:=e^{-\phi_{U_{i}}}
\]
Hence, $\phi_{U_{i}}=\phi_{U_{j}}-\log\left|t_{ij}\right|^{2}.$ The
object $\phi$ is often called a \emph{weight} (cf. \cite{b-b}),
but will here, abusing notation slightly, be identified with the corresponding
metric on $L.$ The (normalized) curvature form of a metric $\phi$
on $L$ is the globally well-defined two-form on $X$ defined on $U$
by 
\[
\omega^{\phi}:=\frac{i}{2\pi}\partial\bar{\partial\phi_{U_{i}}}=:dd^{c}\phi_{U_{i}},
\]
 where $d$ is the exterior derivative and $d^{c}$ is the real operator
defined by $d^{c}:=-\frac{1}{4\pi}J^{*}d$ in terms of the complex
structure $J$ on the real tangent bundle $TX.$ Accordingly, the
curvature form of a metric $\phi$ is usually symbolically written
as $dd^{c}\phi$ (abusing notation slightly). 
\begin{rem}
The normalizations have been chosen so that in $\C$ $dd^{c}\phi=\frac{1}{4\pi}\Delta\phi d\lambda$
and hence $dd^{c}\log|z|^{2}=\delta_{0}$ where $\delta_{0}$ denotes
the Dirac mass at $0$ in $\C.$ Also note that compared with standard
notation in gauge theory $dd^{c}\phi=\frac{i}{2\pi}F_{A}$ where $F_{A}$
denotes the curvature two-form of the Chern connection $A$ on the
principle $U(1)-$bundle corresponding to the Hermitian holomorphic
line bundle $(L,\phi).$ 
\end{rem}

The induced metric on the $k$ th tensor powers of $L,$ written as
$kL,$ in our additive notation, is represented by $k\phi.$ Taking
the difference between two metrics $\phi_{1}$ and $\phi_{2}$ on
$L$ yields a globally well-defined function on $X$ (i.e. the space
of metrics on $L$ is an affine space modeled on $C^{\infty}(X)).$
We will denote by $\mathcal{H}(L)$ the space of all smooth metrics
$\psi$ with strictly positive curvature: 
\[
\mathcal{H}(L):=\left\{ \psi:\,dd^{c}\psi>0\right\} 
\]
We denote by $\psi_{0}$ a fixed reference element $\mathcal{H}(L),$
which in the present Riemann surface case can (and will) be taken
so that the Riemann metric defined by the curvature form 
\begin{equation}
\omega_{0}:=dd^{c}\psi_{0}\label{eq:def of mu noll}
\end{equation}
 has constant scalar curvature, when identified with an Hermitian
metric on $X.$ The probability measure 
\[
\mu_{0}:=\frac{\omega_{0}}{\int\omega_{0}}(=\frac{\omega_{0}}{\deg L})
\]
 will be referred to as the\emph{ canonical volume form} on $X$ and
the corresponding metric $\psi_{0}$ on $L$ will be referred to as
the\emph{ canonical reference metric} on $L$ (abusing notation slightly,
since $\psi_{0}$ is only determined up to an additive constant when
$X$ has genus at least one and up to automorphisms when $X$ is the
Riemann sphere). Anyway, in the case when $X$ is the Riemann sphere
we will make a particular choice of constant.

\subsubsection{Holomorphic sections of $L$}

The complex vector space of all global holomorphic sections of $L$
is denoted by $H^{0}(X,L)$ and its dimension by $N_{L}:$ 

\[
N_{L}:=\dim H^{0}(X,L)
\]
Given an element $\Psi$ and a metric $\phi$ on $L$ the squared
point-wise norm of $\left\Vert \Psi\right\Vert _{\phi}^{2}$ is a
global function on $X,$ symbolically expressed as $|\Psi|^{2}e^{-\phi}.$
More precisely, fixing a covering $U_{i}$ and local trivializations
$e_{i}$ of $L,$ as above, an element $\Psi\in H^{0}(X,L)$ may be
locally written as 
\[
\Psi=\Psi_{U_{i}}e_{U_{i}}
\]
for a holomorphic function $\Psi_{U_{i}}$ on $U_{i}$ and 
\[
\left\Vert \Psi\right\Vert _{\phi}^{2}:=|\Psi|^{2}e^{-\phi}:=|\Psi_{U_{i}}|^{2}e^{-\phi_{U_{i}}}
\]
on $U_{i}.$ Fixing a continuous volume form $dV$ on $X$ we will
denote by $\left\langle \cdot,\cdot\right\rangle _{(dV,\phi)}$ the
corresponding scalar product on $H^{0}(X,L),$ i.e. 

\[
\left\langle \Psi,\Psi\right\rangle _{(dV,\phi)}:=\int_{X}\Psi\bar{\Psi}e^{-\phi}dV.
\]

\subsubsection{\label{subsec:The-asymptotic-setting and the fixed references}The
asymptotic setting and the canonical reference basis}

We will be interested in the limit where $L$ is replaced by 
\[
L_{k}:=kL+F
\]
 and $k\rightarrow\infty$ for fixed line bundles $L$ and $F.$ By
the Riemann-Roch theorem on a Riemann surface this corresponds to
the limit where the dimension $N_{L_{k}}$ tends to infinity. Indeed,
for $k$ sufficently large (so that $\dim H^{1}(X,kL+F)=0),$ 
\[
N_{k}:=N_{kL+F}=\mbox{\ensuremath{\deg}(}L)k-(g(X)-1)+\mbox{\ensuremath{\deg}\ensuremath{(F)}},
\]
 where $g(X)$ is the genus of $X.$ In general, a subindex $k$ will
indicate that the object in question is defined with respect to the
line bundle $kL+F$ and a metric on $kL+F$ of the form $k\phi+\phi_{F}.$ 

We fix, once and for all, a ``reference basis'' $\Psi_{1}^{(k)},...\Psi_{N_{k}}^{(k)}$
in $H^{0}(X,kL+F)$ which is orthonormal wrt the scalar product determined
by the reference volume form $\mu_{0}$ on $X,$ the reference metric
$\psi_{0}$ on $L$ and a fixed continuous reference metric $\phi_{F}$
on $F.$ In the particular case when $F=K_{X},$ the canonical line
bundle, there is a canonical choice of metric $\phi_{F},$ as explained
in the next section. 

\subsubsection{\label{subsec:The-adjoint-setting}The adjoint setting}

In the special case when $F=K_{X}$ there is a natural $L^{2}-$norm
on $H^{0}(X,kL+K_{X})$ induced by a metric $\phi$ on $L:$

\begin{equation}
\left\langle \Psi,\Psi\right\rangle _{k\phi}:=\frac{i}{2}\int_{X}\Psi\wedge\bar{\Psi}e^{-k\phi}\label{eq:herm norm in adjoint setting}
\end{equation}
identifying the elements of $H^{0}(X,kL+K_{X})$ with holomorphic
$1-$forms with values in $kL.$ Equivalently, in terms of the general
setting above, this corresponds to fixing any volume form $dV$ on
$X$ and taking $\phi_{F}$ as the metric on $F=K_{X}$ induced by
$dV.$ Indeed, fixing local coordinates $z=(z_{1},...,z_{n})$ induces
a trivialization $dz:=dz_{1}\wedge\cdots\wedge dz_{n}.$ By definition
$\phi_{F}:=-\log(i^{n^{2}}dz\wedge d\bar{z}/dV)$ and hence

\[
\frac{i}{2}\Psi\wedge\bar{\Psi}e^{-k\phi}=|\Psi|^{2}e^{-k\phi}e^{-\phi_{F}}dV
\]
This will be called the ``adjoint setting'' (in the algebraic geometry
literature the line bundle $kL+K_{X}$ is usually said to be the line
bundle which is adjoint to the line bundle $kL$). We will 

\subsection{\label{subsec:Potential-theoretic-setup}Potential-theoretic setup}

\subsubsection{Admissible singular metrics on $L$ and the spaces $PSH(L)$ and
$PSH(X,\omega_{0})$}

It will be important to allow metrics $\phi$ on $L$ which are singular,
i.e. the local weights of $\phi$ take values in $[-\infty,\infty].$
The corresponding point-wise norm of a given section $\Psi$ in $H^{0}(X,L)$
thus defines a function $X\rightarrow[0,\infty].$ To simplify the
notation we will drop the adjective singular and simply call $\phi$
a\emph{ metric. }
\begin{defn}
A metric $\phi$ on $L$ will be said to be\emph{ admissible }if $\phi$
is lsc, taking values in $]-\infty,\infty]$ and $\{\phi<\infty\}$
is not polar. Equivalently, this means that we can write $\phi=\psi_{0}+v$
where $v$ is a globally defined function on $X$ taking values in
$]-\infty,\infty]$ such that $\{v<\infty\}$ is not polar.
\end{defn}

We will denote by $PSH(L)$ the space of metrics $\psi$ on $L\rightarrow X$
such that $\psi$ is locally \emph{plurisubharmonic (psh)}. In other
worst, $\psi$ is locally represented by a plurisubharmonic function,
i.e. a function in $L_{loc}^{1}$which is strongly upper semi-continuous
function and whose curvature defines a positive current on $X.$ In
the present Riemann surface setting this simply means that $\psi$
is locally \emph{subharmonic.} The subspace of all locally bounded
metrics in $PSH(L)$ will be denoted by $PSH(L)_{b}.$
\begin{rem}
$\psi\in PSH(L)$ is not admissible, unless $\psi$ is continuous. 

In general, the map 
\[
\psi\mapsto\varphi:=\psi-\psi_{0},\,\,\,PSH(L)\rightarrow PSH(X,\omega_{0})
\]
 yields a bijection between the space $PSH(L)$ and the space $PSH(X,\omega_{0})$
of all $\omega_{0}-$psh functions, i.e. all strongly usc functions
$\varphi$ such that $dd^{c}\varphi+\omega_{0}\geq0$ holds in the
sense of currents. In the present Riemann surface setting the latter
condition simply means that $-\frac{1}{4\pi}\Delta_{\omega_{0}}\varphi\geq-1,$
in the sense of distributions, where $\Delta_{\omega_{0}}$ denotes
the ``positive Laplacian'' determined by $\omega_{0},$ i.e. $\Delta_{\omega_{0}}v:=-dd^{c}v/\omega_{0}.$ 
\end{rem}

\begin{lem}
\label{lem:sup estimate}Suppose that $\deg L=1$ and let $\mu$ be
a probability measure on $X$ of the form $\mu:=dd^{c}\varphi+\omega_{0}$
for $\varphi\in L^{\infty}(X).$ Then, for any $v\in PSH(X,\omega_{0})$
we have
\[
\sup_{X}v\leq\int_{X}v\mu+\sup_{X}\varphi-\inf_{X}\varphi+C_{0},\,\,\,C_{0}:=\sup_{X\times X}(-G_{0}),
\]
 where $G_{0}$ denotes the Green functions of $\Delta_{\omega_{0}}$
(see formula \ref{eq:def of Green f}).
\end{lem}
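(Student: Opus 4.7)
The plan is to use the Green's function representation to rewrite $\sup_X v - \int v\,\omega_0$ as an integral against the measure $\mu_v := dd^c v + \omega_0 \geq 0$, and then pass from $\omega_0$ to $\mu$ by integration by parts against $\varphi$. Since $\deg L = \int \omega_0 = 1$, both $\mu_v$ and $\omega_0$ are probability measures, and since $v$ is upper semi-continuous on the compact manifold $X$, the supremum is attained at some point $x_0 \in X$.

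First, recall that $G_0$ is normalized so that $-dd^c_y G_0(x,y) = \delta_x - \omega_0$ together with $\int G_0(x,y)\,\omega_0(y) = 0$. For smooth $u$, integration by parts then yields
\[
u(x) - \int_X u\,\omega_0 \;=\; -\int_X G_0(x,y)\,dd^c u(y).
\]
Applying this identity at $x = x_0$ to $v$ and substituting $dd^c v = \mu_v - \omega_0$ gives, using the vanishing of $\int G_0(x_0,\cdot)\omega_0$, the representation
\[
\sup_X v - \int_X v\,\omega_0 \;=\; -\int_X G_0(x_0,y)\,\mu_v(y).
\]

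Next, using $\mu - \omega_0 = dd^c\varphi$ and integrating by parts against the bounded function $\varphi$:
\[
\int_X v\,\mu - \int_X v\,\omega_0 \;=\; \int_X v\,dd^c\varphi \;=\; \int_X \varphi\,dd^c v \;=\; \int_X \varphi\,\mu_v - \int_X \varphi\,\omega_0.
\]
Subtracting this identity from the previous one yields
\[
\sup_X v - \int_X v\,\mu \;=\; -\int_X \bigl(G_0(x_0,y)+\varphi(y)\bigr)\mu_v(y) + \int_X \varphi\,\omega_0.
\]
Since $\mu_v$ is a probability measure, the three terms on the right are bounded respectively by $\sup_{X\times X}(-G_0) = C_0$, by $-\inf_X \varphi$, and by $\sup_X\varphi$; summing these bounds gives the claim.

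The main technical obstacle is justifying the Green representation and integration by parts for $v$ that is only $\omega_0$-psh (and may equal $-\infty$ on a polar set) paired with $\varphi$ that is merely in $L^\infty$. This is handled by the standard regularization: approximate $v$ from above by a decreasing sequence of smooth $\omega_0$-psh functions $v_j$ (Demailly regularization, which on a Riemann surface is elementary), apply the smooth identities to each $v_j$, and pass to the limit using the weak convergence $\mu_{v_j}\rightharpoonup \mu_v$, the local $L^1$-integrability of $G_0(x_0,\cdot)$, and monotone/dominated convergence; the case $\int v\,\omega_0 = -\infty$ renders the conclusion vacuous, so one may assume all integrals involved are finite.
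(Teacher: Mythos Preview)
Your proof is correct and follows essentially the same route as the paper: both use the Green representation $v(x_0)-\int v\,\omega_0=-\int G_0(x_0,\cdot)\mu_v$ to control $\sup_X v$ against $\int v\,\omega_0$, and integration by parts $\int v\,dd^c\varphi=\int\varphi\,dd^c v$ together with the positivity of $\mu_v=dd^c v+\omega_0$ to pass from $\omega_0$ to $\mu$, then reduce to the smooth case by approximation. The only difference is organizational: the paper bounds the two pieces sequentially, whereas you subtract the two identities first and bound the resulting single expression; the content is the same.
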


\begin{proof}
By approximation we may as well assume that $v$ is smooth. Integrating
by parts then gives
\[
\int_{X}v\mu=\int_{X}v\omega_{0}+\int\varphi(dd^{c}v+\omega_{0}-\omega_{0})\geq\int_{X}v\omega_{0}+(\inf_{X}\varphi-\sup_{X}\varphi)
\]
Finally, we have $\sup v\leq\int_{X}v\omega_{0}\leq C_{0}.$ Indeed,
we may as well assume that $\int v\omega_{0}=0$ and then $v(x)=-\int G_{0}(x,\cdot)(dd^{c}v+\omega_{0})\leq C_{0}.$ 
\end{proof}

\subsubsection{The projection operator $P$ and the equilibrium measure $\mu_{\phi}$}
\begin{defn}
Let $\phi$ be a metric on $L.$ Then $P\phi$ is the metric on $L$
defined as the upper semi-continuous regularization of a Perron type
envelope:
\begin{equation}
P\phi:=\underline{\left\{ \sup\psi\in PSH(L):\,\psi\leq\phi\right\} },\label{eq:def of P}
\end{equation}
where we use (the non-standard) notation $\underline{f}$ for the
upper semi-continuous regularization of a function $f$ on $X$ and
similarly for a metric on $L.$ \footnote{The notation $f^{*}$ is usually used in the complex analysis literature,
but we will reserves the upper star for the Legendre-Fenchel transform,
introduced below{]}}
\end{defn}

\begin{lem}
\label{lem:prop of p phi}Assume that $\{\phi<\infty\}$ is not polar. 
\begin{itemize}
\item $P\phi$ is in $PSH(X,L)$ and $P\phi\leq\phi$ quasi-everywhere,
i.e. on the complement of a polar subset.
\item If $\phi$ is admissible, then $P\phi\in PSH(X,L)_{b}.$
\item If $\phi_{j}$ is a sequence of metrics decreasing to $\phi$ then
$P(\phi_{j})$ decreases to $P(\phi).$
\item If $\phi_{j}$ is a sequence of admissible metrics increasing to the
admissible metric $\phi$ then $P(\phi_{j})$ increases quasi-everywhere
to $P(\phi)$ in $PSH(X,L)_{b}$
\end{itemize}
\end{lem}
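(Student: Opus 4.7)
The plan is to prove the four assertions in sequence, using classical Brelot--Cartan potential theory for the first three and a capacity-based monotone-convergence argument for the last.

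\textbf{Bullets 1 and 2.} I would first check that the family $\mathcal{F}_\phi := \{\psi \in PSH(L) : \psi \le \phi\}$ is non-empty and locally bounded from above. Non-emptiness is clear because $\phi$, being lsc on the compact $X$, is bounded below, so $\psi_0 - c \in \mathcal{F}_\phi$ for $c$ large. Local boundedness from above uses the hypothesis that $\{\phi < \infty\}$ is non-polar: some sublevel set $\{\phi \le M\}$ is non-polar, and the standard sup-estimate for subharmonic functions bounded above on a non-polar set---which can be extracted from Lemma \ref{lem:sup estimate} together with the Green function $G_0$---gives a uniform upper bound $\sup_X \psi \le C(\phi)$ for all $\psi \in \mathcal{F}_\phi$. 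With local upper boundedness in hand, Brelot's theorem yields that the usc regularization of $\sup\mathcal{F}_\phi$ lies in $PSH(L)$, i.e.\ $P\phi \in PSH(L)$, and the classical Brelot--Cartan negligibility result says that this regularization differs from $\sup\mathcal{F}_\phi$ only on a polar set, which gives $P\phi \le \phi$ q.e. For the second bullet, admissibility of $\phi$ yields the global lower bound $P\phi \ge \psi_0 - c$, which combined with the uniform upper bound places $P\phi$ in $PSH(L)_b$.

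\textbf{Bullet 3 (decreasing case).} Since $\phi_j \downarrow \phi$ implies $\mathcal{F}_{\phi_j} \supset \mathcal{F}_\phi$, the sequence $P\phi_j$ is decreasing and bounded below by $P\phi$, and its pointwise limit $\chi$ is psh as a decreasing limit of psh functions. From $P\phi_j \le \phi_j$ q.e.\ one gets $\chi \le \phi$ q.e.; the key step is upgrading this to a pointwise inequality. If one had $\chi(x_0) > \phi(x_0)$, then picking $c$ with $\chi(x_0) > c > \phi(x_0)$, on the open set $\{\phi < c\}$ one would have $\chi \le c$ a.e.\ (polar sets have Lebesgue measure zero), and the sub-mean inequality at $x_0$ would force $\chi(x_0) \le c$, a contradiction. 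Hence $\chi \in \mathcal{F}_\phi$, so $\chi \le P\phi$, and the reverse inequality is automatic.

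\textbf{Bullet 4 (increasing case).} This is where the main obstacle lies. Writing $\psi_\infty := \lim_j P\phi_j$ and letting $\tilde\psi_\infty$ denote its usc regularization, one has $\tilde\psi_\infty \in PSH(L)_b$ and the inequality $\tilde\psi_\infty \le P\phi$ is immediate from $\phi_j \le \phi$. The hard direction is $P\phi \le \tilde\psi_\infty$ q.e. The naive attempt of showing that a generic candidate $\psi \le \phi$ satisfies $\psi \le P\phi_j$ breaks down, since $\psi$ need not lie below $\phi_j$. I would instead proceed by a standard capacity argument: for fixed $\epsilon > 0$ form the strict sub-candidate $u_\epsilon := (1-\epsilon) P\phi + \epsilon \psi_0 - C_\epsilon \in PSH(L)_b$, with $C_\epsilon$ chosen so that $u_\epsilon$ lies strictly below $\phi$ on $\{\phi < \infty\}$, and consider the nested sets $K_j := \{u_\epsilon > \phi_j\}$, whose intersection is contained in $\{u_\epsilon \ge \phi\}$ and is therefore polar. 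Outer regularity of logarithmic capacity then yields $\mathrm{cap}(K_j) \to 0$; on $K_j^c$ the function $u_\epsilon$ is a candidate for the envelope defining $P\phi_j$, and the standard gluing/approximation of $u_\epsilon$ near $K_j$ gives $u_\epsilon \le P\phi_j$ off a set of small capacity. Passing to the limit in $j$ and then letting $\epsilon \to 0$ gives $P\phi \le \tilde\psi_\infty$ q.e. This is the one-dimensional specialization of the Bedford--Taylor/Demailly monotone-convergence theorem for envelopes, and in the final write-up I would most likely cite that result directly.
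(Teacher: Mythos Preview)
The paper itself does not give a proof but simply refers to \cite[Prop~2.2]{glz}, so your write-up is considerably more detailed than what the paper offers. Your treatment of bullets 1 and 2 is sound, and the capacity outline for bullet 4 is in the right spirit (though to make the step $\mathrm{cap}(K_j)\to 0$ rigorous you should work with the closed sets $\{u_\epsilon\ge\phi_j+\delta\}$ rather than the strict-inequality sets, so that continuity of capacity along decreasing compacts applies).

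There is, however, a genuine error in bullet 3. You attempt to upgrade $\chi\le\phi$ q.e.\ to a pointwise inequality by invoking the sub-mean-value property on ``the open set $\{\phi<c\}$''. For lower semicontinuous $\phi$ that set is \emph{not} open (it is $\{\phi>c\}$ that is open), and in fact the pointwise conclusion you are aiming for is false in general. Take $\phi=\psi_0+v$ where $v\equiv 1$ except $v(x_0)=0$ at a single point; this is lsc, and one checks (using candidates of the form $(1-t)(\psi_0+1)+t(\psi_0+\log\|s\|^2)$ with $s\in H^0(X,L)$ vanishing at $x_0$) that $P\phi=\psi_0+1$ everywhere. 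Hence for any sequence $\phi_j\downarrow\phi$ one has $\chi(x_0)=P\phi(x_0)=\psi_0(x_0)+1>\phi(x_0)$, so $\chi\le\phi$ fails at $x_0$.

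The correct route is to bypass the pointwise inequality and prove $\chi\le P\phi$ directly. Since the exceptional set $E=\{\chi>\phi\}$ is polar, choose an $\omega_0$-psh function $u_E\le 0$ with $\{u_E=-\infty\}\supset E$ (for instance $u_E=\log\|s\|^2$ for a suitable section $s$, or a countable sum of such). The convex combination $(1-\epsilon)\chi+\epsilon(\psi_0+u_E-C)$ then lies in $PSH(L)$ and is $\le\phi$ \emph{everywhere} for an appropriate constant $C$, hence is $\le P\phi$. Letting $\epsilon\to 0$ gives $\chi\le P\phi$ off $E$, and then everywhere because both sides are psh and polar sets are Lebesgue-null (so $\chi(x)\le\lim_{r\to 0}\fint_{B_r(x)}\chi\le\lim_{r\to 0}\fint_{B_r(x)}P\phi=P\phi(x)$). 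This ``the envelope is insensitive to polar modifications of the obstacle'' device is exactly what underlies \cite[Prop~2.2]{glz}, and is also the engine behind your bullet~4 argument.
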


\begin{proof}
This is proved in essentially the same was as in \cite[Prop 2.2]{glz}.
\end{proof}
\begin{defn}
\label{def: eq measure as P}Let $\phi$ be an admissible metric on
$L.$ The corresponding \emph{equilibrium measure }$\mu_{\phi}$ is
the probability measure on $X$ defined by 
\begin{equation}
\mu_{\phi}:=dd^{c}(P\phi)/\text{deg \ensuremath{L}}\label{eq:def of muphi}
\end{equation}
Note that the measure $\mu_{\phi}$ does not charge polar subsets
(since its potential $P\phi$ is bounded). We will denote by $S$
the support of $\mu_{\phi.}$
\end{defn}

\begin{prop}
\label{prop:og relation etc}Let $\phi$ be an admissible metric.
Then the support $S$ of the corresponding equilibrium measure $\mu_{\phi}$
is contained in the coincidence set $D,$ i.e. the closed subset of
$X$ defined by 
\begin{equation}
D=\left\{ P\phi=\phi\right\} \label{eq:def of coinc set}
\end{equation}
In particular, the Laplacian $\Delta$ vanishes on the open set $\left\{ P\phi<\phi\right\} $
and hence the following ``orthogonality relation'' holds 
\begin{equation}
\int_{X}(\phi-P\phi)dd^{c}(P\phi)=0\label{eq:og relation}
\end{equation}
\end{prop}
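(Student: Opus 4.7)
The plan is to show that $dd^c(P\phi)=0$ on the open set $\Omega:=X\setminus D=\{P\phi<\phi\}$; this immediately gives $S\subset D$, and the orthogonality relation (\ref{eq:og relation}) then follows because the non-negative lsc function $\phi-P\phi$ vanishes on $D$, which contains $\operatorname{supp}(dd^c P\phi)$. Openness of $\Omega$ is clear since $P\phi$ is usc (being locally subharmonic, by Lemma \ref{lem:prop of p phi}) and $\phi$ is lsc, so $\phi-P\phi$ is lsc.

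To prove harmonicity of $P\phi$ on $\Omega$, I would carry out a pointwise Poisson-modification (\emph{balayage}) argument. Fix $x_0\in\Omega$ and set $a:=\phi(x_0)-P\phi(x_0)>0$. Exploiting lsc of $\phi$ and usc of $P\phi$ at $x_0$, choose a coordinate disk $B\ni x_0$, $\overline{B}\subset\Omega$, small enough that
\[
\sup_{\overline{B}}P\phi<P\phi(x_0)+a/3,\qquad \inf_{\overline{B}}\phi>\phi(x_0)-a/3,
\]
so in particular $\sup_{\overline{B}}P\phi<\inf_{\overline{B}}\phi$. Since $P\phi\in PSH(L)_{b}$ it is bounded, hence integrable on $\partial B$; let $\tilde v$ be the metric on $L$ agreeing with $P\phi$ off $B$ and equal to the Poisson integral (defined classically through the Perron--Wiener--Brelot procedure) of the trace of $P\phi$ on $\partial B$. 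Standard potential theory then yields that $\tilde v$ is harmonic on $B$, lies globally in $PSH(L)$, and satisfies $\tilde v\geq P\phi$ on $B$ by the sub-mean value property (equivalently, by the Riesz decomposition of $P\phi$ on $B$).

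The key point is that $\tilde v\leq\phi$ quasi-everywhere on $X$: on $B$ we have $\tilde v\leq\sup_{\overline{B}}P\phi<\inf_{\overline{B}}\phi\leq\phi$ by the choice of $B$, while off $B$, $\tilde v=P\phi\leq\phi$ quasi-everywhere by Lemma \ref{lem:prop of p phi}. By the defining maximality of $P\phi$, this forces $\tilde v\leq P\phi$, and combined with $\tilde v\geq P\phi$ it gives $\tilde v=P\phi$ on $B$. Hence $P\phi$ is harmonic on $B$, and since $x_0\in\Omega$ was arbitrary, $dd^c(P\phi)=0$ on $\Omega$, as desired.

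The main obstacle is technical rather than conceptual: because $P\phi$ may be genuinely discontinuous on $\partial B$, the Poisson modification must be set up through the PWB formalism, and one must verify that the resulting $\tilde v$ lies in $PSH(L)$ globally (the classical subharmonic-gluing check across $\partial B$, using that the Poisson integral of a bounded Borel function dominates the subharmonic $P\phi$ on $B$ and matches it in the non-tangential sense on $\partial B$). Everything else is forced by the envelope definition of $P\phi$ together with elementary semicontinuity considerations.
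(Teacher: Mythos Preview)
Your argument is essentially correct but takes a different route from the paper. The paper's proof is by reduction and approximation: for continuous $\phi$ it simply invokes \cite[Prop 2.10]{b-b}, and for general admissible $\phi$ it takes continuous $\phi_j\nearrow\phi$, uses Lemma \ref{lem:prop of p phi} to get $P\phi_j\nearrow P\phi$ q.e., and passes to the limit in the orthogonality integral via the continuity in Prop \ref{prop:cont of beatiful E under monotone}. Your direct Poisson-modification argument is more self-contained and more classical: it proves harmonicity of $P\phi$ on $\{P\phi<\phi\}$ in one stroke, without separating the continuous case or appealing to the results of \cite{b-b}. The paper's approach is shorter on the page (it outsources the continuous case) but relies on more machinery; yours makes the mechanism transparent.

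One genuine technical point deserves a sentence of justification. You write that $\tilde v\leq\phi$ quasi-everywhere and then invoke ``the defining maximality of $P\phi$'' to conclude $\tilde v\leq P\phi$. But the envelope in \eqref{eq:def of P} is taken over $\psi\in PSH(L)$ with $\psi\leq\phi$ \emph{everywhere}, and $\tilde v=P\phi$ on $X\setminus B$ is only known to satisfy this q.e. (Lemma \ref{lem:prop of p phi}). The fix is standard: either (i) pick, via Choquet's lemma, an increasing sequence of genuine competitors $\psi_j\leq\phi$ with $(\sup_j\psi_j)^*=P\phi$, Poisson-modify each $\psi_j$ on $B$ (these \emph{are} competitors, by your choice of $B$), and pass to the limit using Harnack; or (ii) choose $w\in PSH(L)$ with $w\leq 0$ and $w=-\infty$ on the polar exceptional set, note that $\tilde v+\epsilon w\leq\phi$ everywhere, hence $\tilde v+\epsilon w\leq P\phi$, and let $\epsilon\downarrow 0$. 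Either device closes the gap; you should say which one you are using rather than letting ``defining maximality'' carry the load.
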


\begin{proof}
First note that $D$ is indeed closed as $P\phi-\phi$ is usc. In
the case when $\phi$ is continuous the orthogonality relation \ref{eq:og relation}
is a special case of \cite[Prop 2.10]{b-b}. In the general case we
take a sequence of continuous $\phi_{j}$ increasing to $\phi.$ By
Lemma \ref{lem:prop of p phi} $\psi_{j}:=P(\phi_{j})$ increases
a.e. to $\psi:=P\phi$ and hence $\int_{X}(\phi_{j}-\psi_{j})dd^{c}\psi_{j}(=0)$
converges towards $\int_{X}(\phi-\psi)dd^{c}\psi$ (as follows, for
example, from Prop \ref{prop:cont of beatiful E under monotone} below).
\end{proof}
Even if $\phi$ is assumed smooth, $P\phi$ is not smooth, in general,
but only only $C^{1,1}-$smooth. In general, we have the following
\begin{prop}
\label{prop:regularity}Let $\phi$ be a metric on $L$ satisfying
the assumptions A0 and A1 appearing in the beginning of Section \ref{sec:beta ensembles-on-Riemann}
below. Then $\Delta(P\phi)\in L_{loc}^{\infty}.$ As a consequence,
\begin{equation}
\Delta(P\phi)=1_{\left\{ P\phi=\phi\right\} }\Delta\phi\label{eq:Laplac of P phi}
\end{equation}
 locally, and
\begin{equation}
\mu_{\phi}=1_{S}dd^{c}\phi\label{eq:eq measure for phi smooth}
\end{equation}
where $S$ denotes the support of $\mu_{\phi.}$ 
\end{prop}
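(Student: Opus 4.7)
The main task is to establish the distributional $L^{\infty}_{loc}$ bound on $\Delta(P\phi)$; the two pointwise identities then follow from standard distribution theory combined with the orthogonality relation in Proposition \ref{prop:og relation etc}. The plan is to first reduce to the case of smooth $\phi$, then apply classical envelope/obstacle-problem regularity.

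\emph{Reduction and regularity.} By A0, $\phi$ is continuous off a polar set, and since $P\phi\in PSH(X,L)_{b}$ (Lemma \ref{lem:prop of p phi}) agrees with $\phi$ on $D=\{P\phi=\phi\}\supseteq S$, the function $\phi$ is bounded and continuous on a neighborhood $U$ of $S$ in which A1 gives $dd^{c}\phi\leq\lambda\omega_{0}$. Using local charts and mollification, build a decreasing sequence of smooth metrics $\phi_{j}\searrow\phi$ with $dd^{c}\phi_{j}\leq(\lambda+o(1))\omega_{0}$ on a slightly shrunken neighborhood of $S$. By Lemma \ref{lem:prop of p phi}, $P\phi_{j}\searrow P\phi$, and a uniform $L^{\infty}_{loc}$ bound on $\Delta(P\phi_{j})$ passes to $\Delta(P\phi)$ in the distributional limit; so it suffices to treat the case of smooth $\phi$. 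For smooth $\phi$, the envelope $P\phi$ solves the psh obstacle problem with obstacle $\phi$, which on a Riemann surface reduces to the classical obstacle problem for the Laplacian; standard regularity then yields $P\phi\in C^{1,1}_{loc}$ with $0\leq\Delta(P\phi)\leq(\Delta\phi)^{+}$ a.e., hence $\Delta(P\phi)\leq\lambda$ a.e. on $U$. Since $\Delta(P\phi)=0$ outside $D$ by Proposition \ref{prop:og relation etc}, and $dd^{c}(P\phi)=(\deg L)\mu_{\phi}$ is supported in $S\subseteq U$ by definition of $S$, we conclude $\Delta(P\phi)\in L^{\infty}_{loc}(X)$.

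\emph{Pointwise identities.} With $\Delta(P\phi)\in L^{\infty}_{loc}$, the nonnegative function $g:=\phi-P\phi$ lies in $W^{2,\infty}_{loc}$ on $U$ and vanishes on $D$. Stampacchia's lemma then gives $\Delta g=0$ a.e. on $D$, i.e.\ $\Delta(P\phi)=\Delta\phi$ a.e. on $D\cap U$; combined with $\Delta(P\phi)=0$ on $D^{c}$ and on $D\setminus U$ (where $\mu_{\phi}$ vanishes, forcing $\Delta\phi=0$ a.e. on the measure-theoretic interior via $P\phi=\phi$ on a neighborhood), this yields $\Delta(P\phi)=1_{D}\Delta\phi$ locally. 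Finally, since $dd^{c}(P\phi)=(\deg L)\mu_{\phi}$ has support exactly $S\subseteq D$, the measure $1_{D}dd^{c}\phi$ equals $1_{S}dd^{c}\phi$, giving the asserted formula $\mu_{\phi}=(\deg L)^{-1}1_{S}dd^{c}\phi$.

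\emph{Main obstacle.} The hard step is the $C^{1,1}$-regularity of $P\phi$ in the smooth case, together with the pointwise inequality $\Delta(P\phi)\leq(\Delta\phi)^{+}$. In higher dimensions this is a nontrivial theorem of Berman (building on Bedford--Taylor and others), but on a one-dimensional complex manifold it reduces to classical obstacle-problem regularity for the Laplacian with a smooth obstacle, which is well known. The approximation step also requires care: the mollification must preserve the A1 bound on a neighborhood of $S$ with only an $o(1)$ loss of the constant $\lambda$, which can be arranged since $\phi$ is continuous there.
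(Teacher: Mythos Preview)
Your strategy is reasonable and matches in spirit what the paper's cited references do---the paper itself gives no argument, deferring to \cite{berm 1 komma 1} for the plane and to \cite[Thm~1.1]{berm4} for Riemann surfaces, with the claim that a ``straightforward modification'' handles A0/A1. Your route through classical one-dimensional obstacle-problem regularity and Stampacchia's lemma is a legitimate alternative to the zero-temperature regularization used in \cite{berm4}, and on a Riemann surface it is arguably more elementary.

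That said, your approximation step has a real gap. Mollification of an lsc function does not yield a \emph{decreasing} sequence $\phi_j\searrow\phi$; lsc functions admit monotone approximation only from below, and even for continuous $\phi$ mollification is not monotone. You also do not argue that the supports $S_j=\operatorname{supp}\mu_{\phi_j}$ of the approximating equilibrium measures remain in the fixed neighborhood $U$ where the curvature bound holds, so the estimate $\Delta(P\phi_j)\le(\Delta\phi_j)^+\le\lambda+o(1)$ is only established on $U$, not where $\Delta(P\phi_j)$ might actually live. Finally, the assertion that $\phi$ is continuous on a neighborhood of $S$ is too quick: A1 makes $C\psi_0-\phi$ subharmonic on $U$, which furnishes a canonical usc representative for $-\phi$, but matching this with the given lsc $\phi$ only determines $\phi$ off a polar set; A0 alone does not exclude that polar set from meeting $S$.

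These issues are repairable---for instance, localize to $U$ and work with the canonical representative of $\phi$ (which suffices since $\mu_\phi$ does not charge polar sets and the conclusion is an a.e.\ statement), or follow \cite{berm4} and regularize via the smooth Monge--Amp\`ere flow $dd^c\psi_\beta=e^{\beta(\psi_\beta-\phi)}\omega_0$, whose solutions carry uniform second-derivative bounds depending only on $\lambda$ and converge to $P\phi$ as $\beta\to\infty$. But as written, the reduction to the smooth case is incomplete.
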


\begin{proof}
In the setting where $\phi$ corresponds to a function $\phi$ in
$\C$ with strictly super logarithmic growth (which, as explained
in Section \ref{sec:The-Coulomb-gas}, is a special case of the general
complex geometric setup) a simple proof was given in \cite{berm 1 komma 1}.
In the case when $\phi$ is in $C_{loc}^{2}$ the case of a Riemann
surface $X$ is the one-dimensional case of \cite[Thm 1.1]{berm4}
and a straight-forward modification of the proof of \cite[Thm 1.1]{berm4}
also applies to $\phi$ satisfying A0 and A1.
\end{proof}

\subsubsection{The functional $\mathcal{E}$ on $PSH(L)$ and the free energy functional
$\mathcal{F}$ on $C(X)$}

The functional $\mathcal{E}$ 
\[
\mathcal{E}:\,PSH(L)_{b}\rightarrow\R
\]
is defined as the primitive of the operator $\psi\mapsto dd^{c}\psi/\text{deg\ensuremath{\,L}}$
taking values in the space of probability measures on $X,$ i.e. the
differential of $\mbox{\emph{\ensuremath{\mathcal{E}} }}$on the convex
space $PSH(L)_{b}$ is given by 
\begin{equation}
d\mathcal{E}_{|\psi}:=dd^{c}\psi/\text{deg}(L)\label{eq:diff of beat E is ddc}
\end{equation}

Accordingly, the functional $\mathcal{E}$ is only defined up to an
additive constant. The constant will be fixed by the normalization
condition 
\[
\mathcal{E}(\psi_{0})=0
\]
 on the reference element $\psi_{0}$ in $PSH(L)_{b}.$ Integrating
along the affine line in $PSH(L)_{b}$ connecting $\psi$ and $\psi_{0}$
reveals that we may equivalently make the following
\begin{defn}
The functional $\mathcal{E}$ on $PSH(L)_{b}$ is defined by 
\begin{equation}
\mathcal{E}(\psi)=\frac{1}{2\text{deg}(L)}\int_{X}(\psi-\psi_{0})(dd^{c}\psi+dd^{c}\psi_{0}),\label{eq:explic formula for beaut E}
\end{equation}
(formula \ref{eq:diff of beat E is ddc} then follows from integrating
by parts). More generally, the functional $\mathcal{E}$ may be extended
to all of $PSH(L)$ as the smallest upper semi-continuous extension
from $PSH(L)_{b}.$ 
\end{defn}

\begin{rem}
Occasionally, when emphasizing the dependence of $\mathcal{E}$ on
a fixed reference element $\psi_{0}$ we will write $\mathcal{E}_{\psi_{0}}.$
By the property \ref{eq:diff of beat E is ddc} the following cocycle
relation holds
\begin{equation}
\mathcal{E}_{\psi_{1}}=\mathcal{E}_{\psi_{0}}+\mathcal{E}_{\psi_{1}}(\psi_{0})\label{eq:cocycl relation for primitive}
\end{equation}
and differences $\mathcal{E}(\phi_{1})-\mathcal{E}(\phi_{0})$ are
independent of the choice of reference. 
\end{rem}

It follows immediately from the defining relation \ref{eq:diff of beat E is ddc}
that $\mathcal{E}$ is increasing on $PSH(L)_{b}$ and satisfies the
scaling relation 
\begin{equation}
\mathcal{E}(\psi+C)=\mathcal{E}(\psi)+C\label{eq:scaling of energy func}
\end{equation}
Moreover, the functional $\mathcal{E}$ has the following basic continuity
properties \cite[Prop 4.3]{b-b}:
\begin{prop}
\label{prop:cont of beatiful E under monotone}Let $\psi_{j}$ be
a sequence in $PSH(X,L)$ decreasing (increasing almost everywhere)
to $\psi\in PSH(X,L).$ Then $\mathcal{E}(\psi_{j})$ decreases (increases)
to $\mathcal{E}(\psi).$ 
\end{prop}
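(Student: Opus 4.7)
The plan is to reduce to the case of locally bounded metrics $\psi_j, \psi \in PSH(L)_b$, where the explicit formula \ref{eq:explic formula for beaut E} can be used directly together with Bedford--Taylor continuity of the Monge--Amp\`ere (here simply the curvature) current, and then to extend to the unbounded case via the definition of $\mathcal{E}$ on $PSH(L)$ as the smallest upper semi-continuous extension. A preliminary observation is that, since the local transition adjustments for $\psi \in PSH(L)$ are pluriharmonic, $dd^c\psi$ is globally a positive measure on $X$. Writing $v := \psi - \psi_0 \in PSH(X, \omega_0)$ and integrating by parts on the closed manifold (so that $\int \psi_0\,dd^c \psi = \int \psi\,dd^c \psi_0$), the formula \ref{eq:explic formula for beaut E} rearranges to
\[
2\deg(L)\,\mathcal{E}(\psi) = \int_X v\,(dd^c v + 2\omega_0),
\]
and the fundamental theorem of calculus along affine segments in $PSH(L)_b$ shows that $\mathcal{E}$ is monotone: if $\psi \leq \psi'$ then $\mathcal{E}(\psi) \leq \mathcal{E}(\psi')$.

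Next I handle the bounded case. Suppose $\psi_j, \psi \in PSH(L)_b$ with $\psi_j \downarrow \psi$. Monotonicity gives $\mathcal{E}(\psi_j) \downarrow L \geq \mathcal{E}(\psi)$, and for the reverse inequality I split $2\deg(L)\,\mathcal{E}(\psi_j)$ into the linear part $2\int v_j\,\omega_0$ and the quadratic part $\int v_j\,dd^c v_j$. The linear part converges to $2\int v\,\omega_0$ by dominated convergence, since the $v_j$ are uniformly bounded (sandwiched between $v_1$ above and $v$ below). The quadratic part converges by the Bedford--Taylor continuity theorem: in dimension one this is the classical statement that for a uniformly bounded decreasing sequence of subharmonic functions, not only do the Laplacians converge weakly as measures, but the coupled pairing $\int v_j\,dd^c v_j$ also converges to $\int v\,dd^c v$. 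The a.e.\ increasing case is symmetric, using the corresponding Bedford--Taylor increasing continuity.

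For the general case $\psi \in PSH(L)$, I introduce the truncations $\psi^{(k)} := \max(\psi, \psi_0 - k) \in PSH(L)_b$, which decrease to $\psi$ as $k \to \infty$. The definition of $\mathcal{E}$ as the smallest usc extension from $PSH(L)_b$, combined with monotonicity, gives $\mathcal{E}(\psi) = \lim_k \mathcal{E}(\psi^{(k)})$. Given $\psi_j \downarrow \psi$ in $PSH(L)$, one has the two-parameter monotone family $\mathcal{E}(\psi_j^{(k)})$ with $\psi_j^{(k)} \downarrow \psi^{(k)}$ as $j\to \infty$ (falling under the bounded case just treated) and $\psi_j^{(k)} \downarrow \psi_j$ as $k \to \infty$; a diagonal argument exploiting monotonicity in both indices then yields $\mathcal{E}(\psi_j) \to \mathcal{E}(\psi)$. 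For an a.e.\ increasing sequence one replaces $\sup_j \psi_j$ by its usc regularization, which agrees with $\psi$ outside a polar set and hence, since $\mathcal{E}$ is expressed through integrals against measures with bounded potentials (which do not charge polar sets), does not affect $\mathcal{E}$.

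The main technical obstacle is the Bedford--Taylor-type continuity of the nonlinear pairing $\int v_j\,dd^c v_j$ under monotone convergence in the bounded setting; both factors vary simultaneously, so this is not immediate from weak convergence of $dd^c v_j$ alone. Once this is in hand, the remaining steps --- monotonicity of $\mathcal{E}$, truncation, and diagonal extraction --- are essentially bookkeeping.
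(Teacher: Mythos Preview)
The paper does not actually supply its own proof of this proposition; it simply cites \cite[Prop~4.3]{b-b} (Berman--Boucksom). Your sketch follows precisely the standard route used there and in the subsequent literature: establish monotonicity from the differential characterisation, handle the bounded case via the explicit formula together with Bedford--Taylor continuity of the pairing $\int v_j\,dd^c v_j$ under monotone convergence, and then extend to general $\psi\in PSH(L)$ by the canonical truncations $\psi^{(k)}=\max(\psi,\psi_0-k)$ and a diagonal argument. So your proposal is correct and matches the cited approach.

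One minor point worth tightening in the general increasing case: the two-parameter family $\mathcal{E}(\psi_j^{(k)})$ is increasing in $j$ and decreasing in $k$, so interchanging the limits requires the observation that for such doubly monotone arrays the iterated limits coincide (which is elementary). Your remark that the usc regularisation of $\sup_j\psi_j$ agrees with $\psi$ off a polar set, and that the relevant measures $dd^c\psi^{(k)}$ do not charge polar sets, is exactly what is needed to close the argument at the truncated level.
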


Integrating by parts reveals that, when $\psi$ is smooth, 
\begin{equation}
-\mathcal{E}(\psi)=J(\psi-\psi_{0})-\int(\psi-\psi_{0})dd^{c}\psi_{0}/\deg L,\label{eq:energy functional in terms of Dirichlet}
\end{equation}
 where $J$ denotes the half of the normalized squared Dirichlet norm:
\[
J(u):=\frac{1}{2\deg L}\int du\wedge d^{c}u
\]
The following proposition shows that the previous formula still holds
when $\psi$ is singular or more precisely when $u:=\psi-\psi_{0}$
has a gradient in $L^{2}(X),$ i.e. when $u$ is an element of the
intrinsic Hilbert space $H_{0}^{1}(X),$ whose definition is recalled
in Section \ref{subsec:The-functionals- in terms of intr}. 
\begin{prop}
Let $X$ be a compact Riemann surface. Then $\psi\in PSH(X,L)$ satisfies
$-\mathcal{E}(\psi)<\infty$ iff the $\omega_{0}-$psh function $u:=\psi-\psi_{0}$
has a gradient in $L^{2},$ i.e. 
\[
H^{1}(X)\cap PSH(X,\omega_{0})=\{u\in PSH(X,\omega):\,\mathcal{E}(\psi_{0}+u)>-\infty\}.
\]
 Moreover, formula \ref{eq:energy functional in terms of Dirichlet}
holds for any $\psi\in PSH(X,L)$ such that $\mathcal{E}(\psi)>-\infty,$
when $J$ is interpreted as the (normalized) $L^{2}-$norm of $du.$
\end{prop}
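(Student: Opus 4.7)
\emph{Proof plan.} The plan is to reduce to bounded $u$ via the truncation $u_N := \max(u,-N)$ and then pass to the limit, establishing both the characterization and the formula simultaneously.

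First I would verify formula \ref{eq:energy functional in terms of Dirichlet} for every bounded $u\in PSH(X,\omega_0)$. Setting $\psi := \psi_0+u\in PSH(L)_b$ and expanding the explicit formula \ref{eq:explic formula for beaut E} gives
\[
2\deg L\cdot\mathcal{E}(\psi_0+u) = \int_X u\,dd^c u + 2\int_X u\,dd^c\psi_0,
\]
with all three integrals defined via Bedford--Taylor theory for bounded quasi-psh functions. Integration by parts on the closed Riemann surface $X$ (classical for bounded psh potentials) yields $\int_X u\,dd^c u = -\int_X du\wedge d^c u$, and hence
\[
-\mathcal{E}(\psi_0+u) = J(u) - \int_X u\,dd^c\psi_0/\deg L \qquad (\star),
\]
with $J(u)<\infty$; in particular every bounded $\omega_0$-psh function lies in $H^1(X)$.

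Next, for a general $u\in PSH(X,\omega_0)$ I would consider the truncations $u_N:=\max(u,-N)$, which lie in $PSH(X,\omega_0)\cap L^\infty(X)$ and decrease pointwise to $u$ as $N\to\infty$. By Prop \ref{prop:cont of beatiful E under monotone}, $\mathcal{E}(\psi_0+u_N)\searrow \mathcal{E}(\psi)$, while monotone convergence gives $\int u_N\,dd^c\psi_0\to \int u\,dd^c\psi_0$, which is finite since every $\omega_0$-psh function lies in $L^1(X,\omega_0)$. Applying $(\star)$ to each $u_N$ and rearranging, $J(u_N)$ increases monotonically and
\[
\lim_{N\to\infty} J(u_N) = -\mathcal{E}(\psi) + \int_X u\,dd^c\psi_0/\deg L \in [0,\infty],
\]
a quantity that is finite precisely when $-\mathcal{E}(\psi)<\infty$. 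The Stampacchia truncation lemma applied to $u_{N+1}\in H^1(X)$ gives $du_N = 1_{\{u>-N\}}\,du_{N+1}$ a.e., so $|du_N|^2$ increases pointwise almost everywhere to $|du|^2$, where $du$ is interpreted as the distributional gradient of $u$ (existing in $L^2$ iff this pointwise limit is integrable). Monotone convergence then forces $\lim_N J(u_N) = \tfrac{1}{2\deg L}\int_X du\wedge d^c u$, finite iff $u\in H^1(X)$. Combined with the previous display this delivers both the characterization $\mathcal{E}(\psi)>-\infty \Longleftrightarrow u\in H^1(X)\cap PSH(X,\omega_0)$ and, in that regime, the extended formula \ref{eq:energy functional in terms of Dirichlet}.

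The main technical step is the bounded identity $(\star)$: for smooth $u$ it reduces to the formula stated just before the proposition, but its extension to general bounded quasi-psh $u$ requires the Bedford--Taylor calculus. A clean route is to approximate $u$ by a decreasing sequence of smooth $(1+\epsilon_j)\omega_0$-psh functions $u_j\searrow u$ (Demailly regularization, particularly transparent in complex dimension one), establish $(\star)$ for each $u_j$ by the classical smooth integration by parts, and then pass to the limit using Prop \ref{prop:cont of beatiful E under monotone} together with lower semicontinuity of the Dirichlet energy under $L^1$-convergence.
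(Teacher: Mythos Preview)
Your proposal is correct. The paper's own proof is only a one-line sketch (``use Prop~\ref{prop:cont of beatiful E under monotone} together with the fact that $H^{1}(X)/\R$ is the completion of $C^{\infty}(X)/\R$ under the Dirichlet norm''), so your argument is considerably more fleshed out. The route hinted at in the paper is to approximate $\psi$ directly by a decreasing sequence of \emph{smooth} metrics and pass to the limit in one stroke, whereas you split the reduction into two stages: first the bounded case via Bedford--Taylor integration by parts (or Demailly regularization), and then the unbounded case via the truncations $u_N=\max(u,-N)$ together with the Stampacchia identity $du_N=1_{\{u>-N\}}\,du$. Both routes hinge on the same continuity proposition for $\mathcal{E}$; your truncation step has the mild advantage that the monotone pointwise increase of $|du_N|^2$ is completely explicit and sidesteps the question of whether smooth $\omega_0$-psh functions are $H^1$-dense in the singular ones.

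One small expositional point: the assertion that ``$J(u_N)$ increases monotonically'' does not follow just from rearranging $(\star)$, since $-\mathcal{E}(\psi_0+u_N)$ increases while $\int u_N\,\omega_0$ decreases. The monotonicity is rather a consequence of your subsequent Stampacchia computation $|du_N|^2=1_{\{u>-N\}}|du_{N+1}|^2$; you may want to swap the order of those two sentences.
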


\begin{proof}
This is well-known and can, for example, be shown using Prop \ref{prop:cont of beatiful E under monotone}
together with the basic fact that $H^{1}(X)/\R$ is the completion
of $C^{\infty}(X)/\R$ wrt the $L^{2}-$norm defined by the functional
$J$ (the proof is similar to the proof of the first point in Prop
\ref{prop:prop of energy when =00005Cphi is cont}). 
\end{proof}
Next, we consider the\emph{ free energy functional} $\mathcal{F}$
on $C(X)$ defined by 
\begin{equation}
\mathcal{F}_{\phi}(u):=\mathcal{E}(P(\phi+u))\label{eq:def of free energy text}
\end{equation}

(that we will occasionally identify as a functional on the space of
all metrics on $L$ and then write $\mathcal{F}_{\phi}(u)=\mathcal{F}(\phi+u))$ 
\begin{prop}
\label{prop:beatif F is diff}Let $\phi$ be an admissible metric
on $L.$ Then the free energy functional $\mathcal{F}$ is Gateaux
differentiable on $C^{0}(X)$ and its differential at $u\in C^{0}(X)$
is represented by the measure $dd^{c}P(\phi+u)(:=\mu_{\phi+u}).$ 
\end{prop}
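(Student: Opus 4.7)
\emph{Plan.} Set $\psi_s := P(\phi + u + sv) \in PSH(L)_b$, which is well-defined by Lemma \ref{lem:prop of p phi} since $\phi+u+sv$ is admissible whenever $\phi$ is and $v \in C^0(X)$; thus $\mathcal{F}_\phi(u+sv) = \mathcal{E}(\psi_s)$. The goal is to prove the two-sided estimate
\[
t \int_X v\, dd^c \psi_t \,\leq\, \deg(L) \cdot \bigl(\mathcal{E}(\psi_t) - \mathcal{E}(\psi_0)\bigr) \,\leq\, t \int_X v\, dd^c \psi_0,
\]
valid for every real $t$, and then to show that both bounds, divided by $t$, have a common limit as $t \to 0$. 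Combined with Definition \ref{def: eq measure as P}, which identifies $dd^c \psi_0 / \deg(L)$ with $\mu_{\phi+u}$, this gives Gateaux differentiability of $\mathcal{F}_\phi$ with differential represented by $\mu_{\phi+u}$.

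\emph{Concavity step.} I would first record that $\mathcal{E}$ is concave on the convex space $PSH(L)_b$. Along a segment $\eta_s = (1-s)\eta_0 + s\eta_1$ with $\eta_0,\eta_1 \in PSH(L)_b$, the difference $\eta_1 - \eta_0$ is bounded with $L^2_{loc}$ gradient (bounded psh functions on a Riemann surface have $L^2_{loc}$ gradient), so integration by parts yields
\[
\frac{d^2}{ds^2}\mathcal{E}(\eta_s) \;=\; \frac{1}{\deg L}\int_X (\eta_1-\eta_0)\, dd^c(\eta_1-\eta_0) \;=\; -\frac{1}{\deg L}\int_X d(\eta_1-\eta_0)\wedge d^c(\eta_1-\eta_0) \,\leq\, 0.
\]
The usual tangent inequalities for concave functions then give the intermediate sandwich
\[
\int_X (\psi_t-\psi_0)\, dd^c\psi_t \,\leq\, \deg(L)\bigl(\mathcal{E}(\psi_t)-\mathcal{E}(\psi_0)\bigr) \,\leq\, \int_X (\psi_t-\psi_0)\, dd^c\psi_0.
\]

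\emph{Orthogonality step.} The next ingredient is the orthogonality relation (Proposition \ref{prop:og relation etc}), which, together with the fact that $dd^c \psi_s$ does not charge polar sets (its potential being bounded), says that $\psi_s = \phi + u + sv$ holds $dd^c\psi_s$-almost everywhere. Therefore on the support of $dd^c\psi_0$ one substitutes $\psi_0 = \phi+u$, and the defining inequality $\psi_t \leq \phi+u+tv$ gives $\psi_t - \psi_0 \leq tv$ there; symmetrically, on the support of $dd^c\psi_t$ one substitutes $\psi_t = \phi+u+tv$ and uses $\psi_0 \leq \phi + u$ to get $\psi_t - \psi_0 \geq tv$. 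Plugging these pointwise bounds into the concavity sandwich produces the two-sided estimate announced in the plan.

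\emph{Passage to the limit.} It remains to verify the continuity $\int v\, dd^c \psi_t \to \int v\, dd^c \psi_0$ as $t \to 0$. The envelope operator is contractive: if $\alpha - \beta$ is a globally bounded function on $X$, then $|P\alpha - P\beta| \leq \sup_X|\alpha-\beta|$, since $PSH(L)$ is invariant under the addition of real constants. Applied to $\alpha = \phi+u+tv$ and $\beta = \phi+u$ this shows that $\psi_t \to \psi_0$ uniformly, so the positive measures $dd^c\psi_t$, of fixed total mass $\deg(L)$, converge weakly-$*$ to $dd^c\psi_0$ by integration by parts against smooth test functions followed by density in $C^0(X)$. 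Dividing the sandwich by $t$ and sending $t \to 0$ (flipping the inequalities for $t<0$) then identifies the Gateaux derivative as $\int_X v\, dd^c\psi_0 / \deg(L) = \int_X v\, \mu_{\phi+u}$. The main delicate point in this program is the use of the orthogonality relation in the merely lsc setting, where the identity $\psi_s = \phi+u+sv$ only holds quasi-everywhere; but this is precisely what Proposition \ref{prop:og relation etc} affords.
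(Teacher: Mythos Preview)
Your argument is correct and is essentially the proof the paper has in mind: the paper's own proof simply cites \cite[Thm B]{b-b}, notes that the key ingredient there is the orthogonality relation \ref{eq:og relation}, and observes that this relation extends to admissible metrics. What you have written is precisely a self-contained version of that argument --- concavity of $\mathcal{E}$ gives the tangent sandwich, the orthogonality relation converts $\psi_t-\psi_0$ into $tv$ on the relevant supports, and the $1$-Lipschitz property of $P$ under bounded perturbations yields the weak continuity $dd^c\psi_t\to dd^c\psi_0$ needed to close the squeeze.
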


\begin{proof}
When $\phi$ is continuous this is a special case of \cite[Thm B]{b-b}.
We recall that the proof is based on the the orthogonality relation
\ref{eq:og relation} and the same proof thus applies when $\phi$
is admissible. 

The following inequality is specific for the case of one complex dimension:
\end{proof}
\begin{lem}
\label{lem:smaller than dirichlet}Let $\phi$ a smooth metric on
$L.$ Then 
\[
\mathcal{E}(P(\phi))-\mathcal{E}(P(\phi+u)\leq J(u)-\int_{X}udd^{c}(P\phi)/\deg L
\]
 and, equivalently, 
\[
J\left(P(\phi+u)-P(\phi)\right)\leq J(u)
\]
\end{lem}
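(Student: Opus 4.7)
The plan is to extract both inequalities from a single integration-by-parts identity. Set $\psi:=P\phi$, $\psi':=P(\phi+u)$, and $f:=\psi'-\psi$. Writing $u=f+(u-f)$, using the symmetry $da\wedge d^{c}b=db\wedge d^{c}a$ valid on a Riemann surface, and integrating by parts (there are no boundary terms as $X$ is compact) yields the identity
$$J(u)-J(f)=J(u-f)-\frac{1}{\deg L}\int_{X}(u-f)\bigl(dd^{c}\psi'-dd^{c}\psi\bigr).$$
Since $J(u-f)\geq 0$, proving the second inequality $J(f)\leq J(u)$ reduces to showing that the last integral is nonpositive.

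This is where the orthogonality content of Proposition \ref{prop:og relation etc} enters. The positive measure $dd^{c}\psi'$ is supported on the coincidence set $\{\psi'=\phi+u\}$, where
$$u-f=u+\psi-\psi'=\psi-\phi\leq 0,$$
while $dd^{c}\psi$ is supported on $\{\psi=\phi\}$, where
$$u-f=u+\phi-\psi'\geq 0$$
because $\psi'\leq\phi+u$. Consequently $\int(u-f)\,dd^{c}\psi'\leq 0$ and $\int(u-f)\,dd^{c}\psi\geq 0$, so their difference is $\leq 0$ and the second inequality follows. In fact the computation produces the sharper bound
$$J(u)\geq J(f)+\frac{1}{\deg L}\int_{X}(\phi+u-\psi')\,dd^{c}\psi+\frac{1}{\deg L}\int_{X}(\phi-\psi)\,dd^{c}\psi'.$$

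To obtain the first inequality from the same analysis, I combine the cocycle relation \ref{eq:cocycl relation for primitive} applied with reference $P\phi$, formula \ref{eq:energy functional in terms of Dirichlet}, and the orthogonality relation \ref{eq:og relation}, i.e.\ $\int(\phi-P\phi)\,dd^{c}(P\phi)=0$, to rewrite
$$\mathcal{E}(P\phi)-\mathcal{E}(P(\phi+u))=J(f)+\frac{1}{\deg L}\int(\phi+u-\psi')\,dd^{c}\psi-\frac{1}{\deg L}\int u\,dd^{c}(P\phi).$$
Feeding in the sharper bound above (after dropping the nonnegative $\int(\phi-\psi)\,dd^{c}\psi'$ contribution) yields exactly the first inequality. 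This explains the \emph{equivalence}: both inequalities fall out of the same identity, the first being the sharper form.

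The main technical obstacle is regularity. The identity and the integrations by parts presuppose that $\psi,\psi'$ have gradients in $L^{2}(X)$ and that their Laplacians are well-defined Radon measures. For $\phi$ smooth one has $P\phi\in PSH(L)_{b}$ and $P\phi-\psi_{0}\in H^{1}(X)$ (by the characterization of finite-energy elements of $PSH(X,L)$ recalled in the Proposition preceding the lemma), and the same holds for $P(\phi+u)$ once $u\in H^{1}(X)$. For non-smooth $u$ one approximates by $C^{\infty}$ functions and passes to the limit via Proposition \ref{prop:cont of beatiful E under monotone} together with the monotone-convergence statements in Lemma \ref{lem:prop of p phi}.
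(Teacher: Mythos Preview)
Your proof is correct. Both arguments rest on the orthogonality relation and an integration-by-parts identity, but the architecture differs. The paper first performs a monotonicity reduction: setting $\psi=P\phi$ and using $P(\psi+u)\leq P(\phi+u)$ together with the fact that $\mathcal{E}$ is increasing, it replaces $\phi$ by the psh metric $\psi$ and then proves the inequality $\mathcal{E}(\psi)-\mathcal{E}(P(\psi+u))\leq J(u)-\int u\,dd^{c}\psi$ for psh $\psi$, invoking only the orthogonality relation for $\psi+u$. Your argument is more direct and more symmetric: you work with $P\phi$ and $P(\phi+u)$ simultaneously, never introducing the auxiliary object $P(P\phi+u)$, and you use \emph{both} orthogonality relations (for $\phi$ and for $\phi+u$) to evaluate $\int(u-f)\,dd^{c}\psi$ and $\int(u-f)\,dd^{c}\psi'$ on the respective coincidence sets. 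This yields the exact identity
\[
J(u)=J(f)+J(u-f)+\tfrac{1}{\deg L}\!\int(\phi-\psi)\,dd^{c}\psi'+\tfrac{1}{\deg L}\!\int(\phi+u-\psi')\,dd^{c}\psi,
\]
from which both stated inequalities drop out by discarding nonnegative terms. The payoff of your route is that it makes the ``equivalence'' claimed in the lemma transparent (both inequalities are shadows of the same identity) and avoids the somewhat opaque reduction step; the paper's route has the advantage of isolating a clean statement (\cite[Prop.~2.3]{berm3}) valid for any psh $\psi$, which may be of independent use.
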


\begin{proof}
First observe that setting $\psi:=P\phi$ we have 
\begin{equation}
\mathcal{E}(P(\phi))-\mathcal{E}(P(\phi+u)\leq\mathcal{E}(\psi)-\mathcal{E}(P(\psi+u)\label{eq:proof of smaller than dirichlet}
\end{equation}
Indeed, since $\psi\leq\phi$ we get
\[
P(\psi+u)\leq P(\phi+u)
\]
Since $\mathcal{E}$ is increasing on the space $PSH(X,L)$ it follows
that 
\[
\mathcal{E}(P(\psi+u)\leq\mathcal{E}(P(\phi+u)
\]
proving the inequality \ref{eq:proof of smaller than dirichlet} Hence,
the lemma follows from the inequality 
\[
\mathcal{E}(\psi)-\mathcal{E}(P(\psi+u)\leq J(u)-\int_{X}udd^{c}\psi/\deg L
\]
for any locally bounded psh-metric $\psi$ on $L.$  The latter inequality
is equivalent to \cite[Prop 2.3]{berm3} applied to $\omega:=dd^{c}\psi.$
For the conveniens of the reader we recall the proof (where, to simplify
the notation, we have assumed that $\deg L=1)$. 
\[
\mathcal{E}(\psi)-\mathcal{E}(P(\psi+u)+\int_{X}udd^{c}\psi=\int(\psi+u-P(\psi+u))(dd^{c}\psi+dd^{c}(P(\psi+u))=
\]
\[
=\int(\psi+u-P(\psi+u))(dd^{c}\psi-dd^{c}(P(\psi+u))
\]
 using the orthogonality relation \ref{eq:og relation} applied to
$\psi+u$ in the last equality. But the latter expression is nothing
but the Dirichlet norm of $\psi+u-P(\psi+u)$ and hence non-negative,
which concludes the proof of the first inequality. The second one
the follows from the cocycle property \ref{eq:cocycl relation for primitive}
together with formula \ref{eq:energy functional in terms of Dirichlet}.
\end{proof}

\subsubsection{\label{subsec:The-energy-functional E phi as Leg tr}The energy functional
$E_{\phi}$ on $\mathcal{P}(X)$ as a Legendre-Fenchel transform }

We first recall that if $f$ is a function on a topological vector
space $V,$ then its Legendre-Fenchel transform is defined as following
convex lower semi-continuous function $f^{*}$ on the topological
dual $V^{*}$ 
\[
f^{*}(w):=\sup_{v\in V}\left\langle v,w\right\rangle -f(v)
\]
in terms of the canonical pairing between $V$ and $V^{*}.$ In the
present setting we will, to start with, take $V=C^{0}(X),$ endowed
with the sup norm. Then $V^{*}=\mathcal{M}(X)$ is the space of all
signed Borel measures on $X.$ 
\begin{defn}
Given a continuous metric $\phi$ on $L\rightarrow X$ the corresponding
energy functional $E_{\phi}$ on the space of signed measures $\mathcal{M}(X)$
is defined as the (convex) Legendre-Fenchel transform of the functional
$u\mapsto-\mathcal{F_{\phi}}(-u):=-\mathcal{E}\circ P(\phi-u)$ on
$C^{0}(X):$
\[
E_{\phi}(\mu):=\sup_{u\in C^{0}(X)}\mathcal{F_{\phi}}(u)-\left\langle u,\mu\right\rangle 
\]
A probability measure $\mu$ on $X$ is said to have\emph{ finite
energy} if $E_{\phi}(\mu)<\infty$ for some (or equivalently any)
continuous metric $\phi$ on $L.$
\end{defn}

\begin{rem}
Occasionally, when emphasizing the dependence of $E_{\phi}$ on the
fixed reference element $\psi_{0}$ we will write $E_{\psi_{0},\phi}.$
By the cocycle property \ref{eq:cocycl relation for primitive} 
\[
E_{\psi_{1},\phi}(\mu)=E_{\psi_{0},\phi}(\mu)+\mathcal{E}(\psi_{1})-\mathcal{E}(\psi_{2})
\]
\end{rem}

It follows directly from the definition that, for any given function
$v$ 
\[
E_{\phi+v}(\mu)=E_{\phi}(\mu)+\int_{X}v\mu
\]

Applying the standard duality theory on locally convex topological
vector spaces gives the following result (see \cite[Prop 2.10]{berm2 komma 5}): 
\begin{prop}
\label{prop:diff of composed energy}Let $\phi$ be a continuous metric
on $L.$ 

\begin{itemize}
\item If $\mu$ is a signed measure on $X,$ i.e. $\mu\in\mathcal{M}(X),$
then $E_{\phi}(\mu)=\infty$ unless $\mu$ is in $\mathcal{P}(X).$
\item The functional $u\mapsto-\mathcal{F}_{\phi}(-u)$ on $C^{0}(X)$ is
the Legendre-Fenchel transform of $E_{\phi}:$ 
\begin{equation}
\mathcal{F}_{\phi}(u)=-E_{\phi}^{*}(-u):=\inf_{\mu\in\mathcal{M}(X)}\left(E_{\phi}(\mu)+\left\langle u,\mu\right\rangle \right)\label{eq:leg transf relations between energy f}
\end{equation}
\item The equilibrium measure $\mu_{\phi}$ (Definition \ref{def: eq measure as P})
is the unique minimizer of $E_{\phi}$ on $\mathcal{P}(X).$
\end{itemize}
\end{prop}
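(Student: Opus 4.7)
The plan is to derive all three assertions from the Fenchel--Moreau biconjugation theorem applied to $g(u):=-\mathcal{F}_\phi(-u)$ on the Banach space $(C^0(X),\|\cdot\|_\infty)$, whose topological dual is $\mathcal{M}(X)$. The preparation is to verify that $\mathcal{F}_\phi$ is concave and continuous on $C^0(X)$. Concavity is obtained by composing two facts. First, the envelope operator satisfies
\[
P\bigl(\phi+tu_1+(1-t)u_2\bigr)\geq tP(\phi+u_1)+(1-t)P(\phi+u_2),\qquad t\in[0,1],
\]
since the right-hand side lies in $PSH(L)_b$ and is pointwise dominated by $\phi+tu_1+(1-t)u_2$, so it is a competitor in the envelope \ref{eq:def of P}. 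Second, $\mathcal{E}$ is monotone and concave on $PSH(L)_b$, the latter being standard from the explicit formula \ref{eq:explic formula for beaut E}. Composition yields concavity of $\mathcal{F}_\phi$. Continuity (in fact, $1$-Lipschitz continuity) for the sup-norm follows from the sandwich $P(\phi+u_2)-\|u_1-u_2\|_\infty\leq P(\phi+u_1)\leq P(\phi+u_2)+\|u_1-u_2\|_\infty$ together with the scaling relation \ref{eq:scaling of energy func}.

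The first assertion is extracted directly from the defining supremum. Taking $u=c$ constant, \ref{eq:scaling of energy func} gives $\mathcal{F}_\phi(c)-c\mu(X)=\mathcal{E}(P\phi)+c(1-\mu(X))$, so unless $\mu(X)=1$, letting $c\to\pm\infty$ forces $E_\phi(\mu)=+\infty$. If $\mu(X)=1$ but $\mu$ is signed, then I can choose a continuous $u\geq 0$ with $\langle u,\mu\rangle<0$; monotonicity of $P$ and of $\mathcal{E}$ yields $\mathcal{F}_\phi(\lambda u)\geq\mathcal{F}_\phi(0)$ for all $\lambda\geq 0$, and sending $\lambda\to+\infty$ again produces $E_\phi(\mu)=+\infty$. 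The second assertion is then a direct application of Fenchel--Moreau: since $g$ is convex and continuous on $C^0(X)$, it agrees with its biconjugate, which unfolds into the infimum formula \ref{eq:leg transf relations between energy f}.

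For the third assertion, specialising \ref{eq:leg transf relations between energy f} at $u=0$ identifies $\inf_{\mathcal{P}(X)}E_\phi=\mathcal{F}_\phi(0)=\mathcal{E}(P\phi)$. That $\mu_\phi$ attains this infimum will follow from Proposition \ref{prop:beatif F is diff}: the concave function $u\mapsto\mathcal{F}_\phi(u)-\langle u,\mu_\phi\rangle$ has Gateaux differential $\mu_\phi-\mu_\phi=0$ at $u=0$, so $u=0$ is a global maximizer and $E_\phi(\mu_\phi)=\mathcal{F}_\phi(0)$. The principal obstacle is \emph{uniqueness}, which requires strict convexity of $E_\phi$ on finite-energy probability measures. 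I expect to deduce this from the classical strict positivity of the logarithmic energy of a nonzero signed measure of total mass zero, which here amounts to strict concavity of $\mathcal{E}$ along affine segments in $PSH(L)_b$ whose endpoints differ by a non-constant function.
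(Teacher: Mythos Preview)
Your approach via Fenchel--Moreau is exactly the ``standard duality theory on locally convex topological vector spaces'' the paper invokes (it gives no details beyond citing \cite[Prop 2.10]{berm2 komma 5}), and your arguments for the first two items and for $\mu_\phi$ being \emph{a} minimizer are correct and complete.

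For uniqueness, however, you are working harder than necessary, and your sketched detour through strict concavity of $\mathcal{E}$ on $PSH(L)_b$ is not self-contained at this point in the paper. Simply run your existence argument backwards: if $\mu$ is \emph{any} minimizer of $E_\phi$, then by the second item $E_\phi(\mu)=\mathcal{F}_\phi(0)$, so $u=0$ attains the supremum $E_\phi(\mu)=\sup_u\bigl(\mathcal{F}_\phi(u)-\langle u,\mu\rangle\bigr)$. A concave function that is Gateaux differentiable at a maximizer has vanishing differential there, and by Proposition~\ref{prop:beatif F is diff} that differential at $u=0$ is $\mu_\phi-\mu$; hence $\mu=\mu_\phi$. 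This is precisely the ``basic convex duality'' the paper itself appeals to later (in the proof of Theorem~\ref{thm:conv of E wrt phi and u}): Gateaux differentiability of a convex function is dual to strict convexity of its Legendre transform on the effective domain. Your proposed route via the logarithmic energy of $\mu_1-\mu_0$ would also succeed, but it implicitly needs the identification of finite-energy probability measures with $dd^c\psi/\deg L$ for $\psi\in PSH(L)_b$ and the resulting explicit quadratic formula for $E_{\psi_0}$; that is the substance of Proposition~\ref{prop:prop of energy when =00005Cphi is cont}, which appears only \emph{after} the present statement.
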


\subsubsection{\label{subsec:The-functionals- in terms of intr}The intrinsic Hilbert
space $H^{1}(X)/\R$ and $H_{0}^{-1}(X)$}

Denote by $H^{1}(X)$ the Sobolev space of all functions $u\in L_{loc}^{2}(X)$
such that $\nabla u\in L_{loc}^{2}$ (in the sense of distributions).
The quadratic form 
\[
\left\Vert u\right\Vert _{H^{1}}^{2}:=\int_{X}du\wedge d^{c}u
\]
 induces an intrinsic Hilbert space structure on the quotient space
$H^{1}(X)/\R$ (only depending on the complex structure on $X).$
By standard Sobolev space theory $C^{\infty}(X)/\R$ is dense in $H^{1}(X)/\R.$
Fixing a Riemannian metric $g$ compatible with the complex structure
on $X$ we have
\[
\left\Vert u\right\Vert _{H^{1}}^{2}:=\frac{1}{4\pi}\int|\nabla_{g}u|^{2}dV_{g},
\]
 We denote by $H^{1}(X,g)$ the vector space $H^{1}(X)$ endowed with
the Hilbert norm 
\[
\left\Vert u\right\Vert _{H^{1}(X,g)}^{2}:=\int_{X}|u|^{2}dV_{g}+\frac{1}{4\pi}\int|\nabla_{g}u|^{2}dV_{g},
\]
(where only the first term depends on $g).$ We endow $H^{1}(X)$
with the corresponding topology (which is independent of the choice
of $g).$ In the following we will usually take $g$ to be the metric
with volume (area) form $dV_{g}:=dd^{c}\psi_{0},$ where $\psi_{0}$
is a fixed reference element in the space $PSH(X,L).$ Then $H^{1}(X)/\R$
may be identified with subspace $H_{0}^{1}(X)$ of $H^{1}(X,g)$ of
all functions $u$ such that $\int_{X}udV_{g}=0.$ We endow $H_{0}^{1}(X)$
with the topology induced by the Hilbert norm. This is also called
the \emph{strong topology} on $H_{0}^{1}(X),$ while the \emph{weak
topology} on $H_{0}^{1}(X)$ is defined as the weak topology in the
sense of of Hilbert spaces, i.e. $u_{j}$ converges to $u_{\infty}$
weakly in $H_{0}^{1}(X)$ iff $\left\langle u_{j},u\right\rangle \rightarrow\left\langle u_{\infty},u\right\rangle $
for any $u\in H_{0}^{1}(X).$ 

We denote by$H^{-1}(X)$ the Sobolev space of distributions on $X$
which is dual to $H^{1}(X)$ and by $H_{0}^{-1}(X)$ the subspace
consisting of all distributions $\nu$ in $H^{-1}(X)$ with zero mean,
i.e. $\left\langle \nu,1\right\rangle =0.$\footnote{We view a distribution $\nu$ on the two-dimensional manifold $X$
as a current of dimension $0,$ i.e. of degree $2,$ so that the pairing
$\left\langle \nu,u\right\rangle $ is intrinsically defined for any
$u\in C^{\infty}(X).$ } We endow $H_{0}^{-1}(X)$ with the intrinsic Hilbert norm which is
dual to $H^{1}(X)/\R:$

\begin{equation}
\left\Vert \nu\right\Vert _{H^{-1}(X)}^{2}:=\sup_{u\in C^{\infty}(X)}\frac{|\int\nu u|^{2}}{\left\Vert u\right\Vert _{H^{1}}^{2}},\label{eq:def of dual norm}
\end{equation}
 which, by definition, is finite precisely on $H^{-1}(X)$ (if $\left\Vert \nu\right\Vert _{H^{-1}(X)}^{2}<\infty$
the sup may as well be taken over $H^{1}(X)).$ It follows from basic
Sobolev space theory that the operator $dd^{c}$ induces a canonical
isometry between $H^{1}(X)/\R$ and $H_{0}^{-1}(X):$ 
\begin{equation}
u\mapsto\nu:=dd^{c}u,\,\,\,H^{1}(X)\rightarrow H_{0}^{-1}(X)\label{eq:ddc is isometry}
\end{equation}
We will write $\nu\mapsto u_{\nu}$ for the inverse map. In particular,
if $\left\Vert \nu\right\Vert _{H^{-1}(X)}^{2}<\infty,$ then 
\[
\left\Vert \nu\right\Vert _{H^{-1}(X)}^{2}=\int du_{\nu}\wedge d^{c}u_{\nu}
\]
\begin{prop}
\label{prop:prop of energy when =00005Cphi is cont}Let $\phi$ be
a continuous metric on $L.$

\begin{itemize}
\item Assume that $\mu\in\mathcal{P}(X).$ Then $E_{\phi}(\mu)<\infty$
iff $\mu-\mu_{0}\in H^{-1}(X)$ and then 
\[
E_{\phi}(\mu)=E_{\psi_{0}}(\mu)+\left\langle \phi-\psi_{0},\mu\right\rangle =\frac{1}{2}\left\Vert \mu-\mu_{0}\right\Vert _{H^{-1}(X)}^{2}+\left\langle \phi-\psi_{0},\mu\right\rangle 
\]
In particular, if $\mu\in\mathcal{P}(X)$ then 
\begin{equation}
E_{\psi_{0}}(\mu)=\frac{1}{2}\left\Vert \mu-\mu_{0}\right\Vert _{H^{-1}(X)}^{2}=\frac{1}{2}\int G(x,y)\mu\otimes\mu,\label{eq:expressions for E psi not}
\end{equation}
 where $G$ is the integral kernel of the ``positive Laplacian''
$-dd^{c}/\mu_{0}$ restricted to the orthogonal complement of the
constants in $L^{2}(X,\mu_{0}),$ i.e. $G$ is the Green function
determined by the property that $G$ is symmetric and 
\begin{equation}
(i)\,-dd_{x}^{c}G(x,y)=\delta_{x}-\mu_{0},\,\,\,(ii)\,\int_{X}G(x,y)\mu_{0}(y)=0\label{eq:def of Green f}
\end{equation}
\item If $u\in H^{1}(X)$ is normalized so that $\int u\mu_{0}=0,$ then
\[
\left|\int_{X}u\mu\right|\leq\left\Vert u\right\Vert _{H^{1}}E_{\psi_{0}}(\mu)
\]
and $u$ acts continuously on the sublevelsets $\{\mu\in\mathcal{P}(X):\,E_{\psi_{0}}(\mu)\leq C\}$
endowed with the weak topology on $\mathcal{P}(X).$
\item The following formula holds for any $u\in C^{0}(X):$
\begin{equation}
\mathcal{F}_{\phi}(u)=\inf_{\mu\in\mathcal{P}(X)\cap H^{-1}(X)}\left(E_{\phi}(\mu)+\left\langle u,\mu\right\rangle \right)\label{eq:expression for leg transform in term sof H^1-1}
\end{equation}
and the right hand side above is finite for any $u\in H^{1}(X).$
\end{itemize}
\end{prop}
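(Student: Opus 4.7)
The plan is to reduce the first bullet to the case $\phi=\psi_0$, where $E_{\psi_0}$ becomes the Legendre transform of a quadratic on a Hilbert space, and then to exploit the duality between $H^1(X)/\R$ and $H_0^{-1}(X)$ implemented by the isomorphism $dd^c$ from \ref{eq:ddc is isometry}.

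For the first bullet, I would begin with the affine identity $E_{\phi+v}(\mu) = E_\phi(\mu) + \int_X v\,d\mu$ for $\mu\in\mathcal{P}(X)$, which follows directly from the change of variables $u\mapsto u+v$ in the Legendre-Fenchel definition together with the obvious identity $\mathcal{F}_{\phi+v}(u)=\mathcal{F}_\phi(u+v)$ coming from \ref{eq:def of free energy text}. This reduces the formula to $\phi=\psi_0$. Next, I would rewrite the sup defining $E_{\psi_0}(\mu)$ as a sup over $\psi\in PSH(L)_b$: for any continuous $u$ the substitution $\psi:=P(\psi_0+u)\leq\psi_0+u$ yields $\langle u,\mu\rangle\geq\langle\psi-\psi_0,\mu\rangle$ (as $\mu\geq 0$), while conversely any $\psi\in PSH(L)_b$ arises with $u=\psi-\psi_0$ (approximated by continuous functions if needed). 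Setting $v:=\psi-\psi_0$ and using \ref{eq:explic formula for beaut E} gives
\[
\mathcal{E}(\psi_0+v)-\langle v,\mu\rangle = -\langle v,\mu-\mu_0\rangle - \tfrac{1}{2}\|v\|_{H^1}^2,
\]
so $E_{\psi_0}(\mu)$ is the sup of this concave quadratic over $v\in PSH(X,\omega_0)\cap L^\infty$. By Hilbert-space duality, the unconstrained sup over $v\in H^1(X)/\R$ equals $\tfrac{1}{2}\|\mu-\mu_0\|_{H^{-1}}^2$, finite iff $\mu-\mu_0\in H^{-1}(X)$ and formally attained by $v_*$ with $dd^c v_*=\mu-\mu_0\geq-\omega_0$; because $\mu\geq 0$ this $v_*$ automatically lies in $PSH(X,\omega_0)$, so after a truncation/regularization argument invoking Proposition \ref{prop:cont of beatiful E under monotone} the constrained sup attains the same value. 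The Green-function representation $\tfrac{1}{2}\|\mu-\mu_0\|_{H^{-1}}^2 = \tfrac{1}{2}\int G(x,y)\,d\mu(x)d\mu(y)$ then follows by a direct integration by parts using \ref{eq:def of Green f}.

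For the second bullet, the inequality is essentially Cauchy--Schwarz for the $H^1$-$H^{-1}$ duality pairing: for $u\in H^1(X)$ with $\int u\,d\mu_0=0$ we have $\langle u,\mu\rangle = \langle u,\mu-\mu_0\rangle$, bounded by $\|u\|_{H^1}\|\mu-\mu_0\|_{H^{-1}}$, which by the first bullet is controlled by $E_{\psi_0}(\mu)^{1/2}$ up to a constant. The continuity of $u$ on the weakly-closed sublevel sets $\{E_{\psi_0}\leq C\}$ then follows by approximating $u\in H^1$ by smooth $u_j$ in the $H^1$-norm: each functional $\mu\mapsto\langle u_j,\mu\rangle$ is weakly continuous, while the Cauchy--Schwarz bound controls $|\langle u-u_j,\mu\rangle|$ uniformly on the sublevel set.

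For the third bullet, the identity \ref{eq:expression for leg transform in term sof H^1-1} is immediate from Proposition \ref{prop:diff of composed energy}: the infimum in \ref{eq:leg transf relations between energy f} may be restricted to $\mathcal{P}(X)\cap H^{-1}(X)$ since the first bullet shows $E_\phi\equiv+\infty$ elsewhere. The finiteness of $\mathcal{F}_\phi(u)$ on $H^1(X)$ is the main technical point. The upper bound comes from testing the infimum against $\mu_0$, giving $\mathcal{F}_\phi(u)\leq E_\phi(\mu_0)+\int u\,d\mu_0<\infty$. For the lower bound I would appeal to Lemma \ref{lem:smaller than dirichlet} (after approximating $\phi$ by smooth metrics and passing to the limit via Proposition \ref{prop:cont of beatiful E under monotone}), which yields $\mathcal{F}_\phi(u)\geq\mathcal{F}_\phi(0)-J(u)+\int u\,d\mu_\phi$; here the last integral is finite by the Cauchy--Schwarz bound of the second bullet, since $\mu_\phi=dd^c(P\phi)/\deg L$ has bounded potential and hence finite $E_{\psi_0}$-energy. \textbf{The main obstacle} will be the justification of the approximation arguments---in particular, extending the variational characterization of $E_{\psi_0}$ from bounded to unbounded PSH potentials and handling singular $\phi$ in the finiteness step---which rests throughout on the monotone continuity of $\mathcal{E}$ of Proposition \ref{prop:cont of beatiful E under monotone}.
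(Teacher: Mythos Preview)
Your proposal is essentially correct and close to the paper's argument, with a few differences worth noting.

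For the first bullet, the paper proceeds slightly more directly: for a volume form $\mu$ it invokes the variational identity $E_{\psi_0}(\mu)=\mathcal{E}(\psi_\mu)-\langle\psi_\mu-\psi_0,\mu\rangle$ (from the concavity of $\mathcal{E}$, citing \cite{bbgz}), then applies \eqref{eq:energy functional in terms of Dirichlet} and an integration by parts to get $E_{\psi_0}(\mu)=J(\psi_\mu-\psi_0)$; the general case follows by taking smooth $\psi_j\searrow\psi_\mu$ and using Proposition~\ref{prop:cont of beatiful E under monotone}. Your Hilbert-space duality argument is the same computation in disguise --- your maximizer $v_*$ is exactly $\psi_\mu-\psi_0$ --- so the two routes coincide once one unpacks the cited formula.

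For the continuity statement in the second bullet, the paper uses a different (and slightly slicker) argument: rather than approximating $u$ by smooth functions, it uses weak compactness of the unit ball in $H^{-1}(X)$. If $\mu_j\to\mu$ weakly in $\mathcal{P}(X)$ with $E_{\psi_0}(\mu_j)\leq C$, then $\mu_j-\mu_0$ is bounded in $H^{-1}$, hence converges weakly in $H^{-1}$ (necessarily to $\mu-\mu_0$), so $\langle u,\mu_j-\mu_0\rangle\to\langle u,\mu-\mu_0\rangle$ for any $u\in H^1$. Your approximation argument is also valid.

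For the finiteness of the right-hand side of \eqref{eq:expression for leg transform in term sof H^1-1} when $u\in H^1$, your detour through Lemma~\ref{lem:smaller than dirichlet} is unnecessary. The lower bound follows immediately from the second bullet: writing $E_\phi(\mu)+\langle u,\mu\rangle\geq E_{\psi_0}(\mu)-C_\phi-\|u\|_{H^1}\bigl(2E_{\psi_0}(\mu)\bigr)^{1/2}+\textrm{const}$, the right-hand side is bounded below as a quadratic in $E_{\psi_0}(\mu)^{1/2}$. This is the route the paper has in mind when it says the last point ``follows immediately from the definitions.'' (Note also that $\phi$ is assumed continuous in this proposition, so the approximation of $\phi$ you mention is not needed here.)
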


\begin{proof}
First assume that $\mu$ is a volume form in $\mathcal{P}(X).$ Then
\[
E_{\psi_{0}}(\mu)=\mathcal{E}(\psi_{\mu})-\left\langle \psi_{\mu}-\psi_{0},\mu\right\rangle ,
\]
 where $\psi_{\mu}\in PSH(L)$ solves $dd^{c}\psi_{\mu}=\mu$ (as
follows, for example, from the concavity of the functional $\mathcal{E}$;
see \cite{bbgz}). Using formula \ref{eq:energy functional in terms of Dirichlet}
this means that 
\[
E_{\psi_{0}}(\mu)=-J(\psi_{\mu}-\psi_{0})+\left\langle \psi_{\mu}-\psi_{0},\mu-\mu_{0}\right\rangle =J(\psi_{\mu}-\psi_{0}),
\]
 using integration by parts in the last equality. This proves formula
\ref{eq:expressions for E psi not} when $\mu$ is a volume form.
In the general case we take a sequence of smooth metrics $\psi_{j}\in PSH(L)$
decreasing to $\psi_{\mu}$ and set $\mu_{j}:=dd^{c}\psi_{j}.$ Letting
$j\rightarrow\infty$ and using the previous case then proves formula
\ref{eq:expressions for E psi not} in the case when $\phi=\psi_{0},$
using that both $E$ and $J$ are continuous along such sequences
(as follows from Prop \ref{prop:cont of beatiful E under monotone}).The
case of a general continuous $\phi$ then follows immediately by linearity,
which concludes the proof of the first point in the proposition. The
inequality in the second point now follows immediately from replacing
$\mu$ with the signed measure $\nu:=\mu-\mu_{0}\in H^{-1}(X)$ and
using that $|\left\langle u,\nu\right\rangle |\leq\left\Vert u\right\Vert _{H^{1}}\left\Vert \nu\right\Vert _{H^{-1}},$
since $H^{-1}(X)$ is the Hilbert space dual of $H^{1}(X).$ Moreover,
if $E_{\psi_{0}}(\mu_{j})\leq C$ and $\mu_{j}\rightarrow\mu$ weakly
in $\mathcal{P}(X).$ Then, by the compactness of the unit-ball in
$H^{-1}(X)$ endowed with the weak (Hilbert) topology, $\mu_{j}-\mu_{0}\rightarrow\mu-\mu_{0}$
in the weak topology on $H^{-1}(X)$ and hence $\left\langle u,\mu_{j}-\mu_{0}\right\rangle \rightarrow\left\langle u,\mu-\mu_{0}\right\rangle ,$
concluding the proof of the second point. The last point now follows
immediately from the definitions.
\end{proof}
Next we turn to the singular setting where $\phi$ is an admissible
metric on $L$ and $u\in H^{1}(X).$ We then define $E_{\phi+u}(\mu)$
by imposing linearity wrt $u:$
\begin{defn}
Let $\phi$ be an admissible metric on $L$ and $u\in H^{1}(X).$
Then 
\begin{equation}
E_{\phi+u}(\mu):=E_{\psi_{0}}(\mu)+\left\langle (\phi-\psi_{0}),\mu\right\rangle +\left\langle u,\mu\right\rangle \label{eq:def of E in sing setting}
\end{equation}
 when $\mu\in\mathcal{P}(X)$ has finite energy and otherwise we set
$E_{\phi+u}(\mu)=\infty.$ We also define $\mathcal{F}_{\phi}(u)$
by formula \ref{eq:expression for leg transform in term sof H^1-1}.

Accordingly, we define the equilibrium measure $\mu_{\phi+u}$ as
the unique minimizer of $E_{\phi+u}$ on $\mathcal{P}(X).$ The existence
and uniqueness is guaranteed by the first point in the next theorem. 
\end{defn}

\begin{thm}
\label{thm:conv of E wrt phi and u}Let $\phi$ be an admissible metric
on $L$ and $u\in H^{1}(X).$ Then

\begin{itemize}
\item The functional $\mu\mapsto E_{\phi+u}(\mu)$ is lsc and convex on
$\mathcal{P}(X)$ and strictly convex on the set where it is finite. 
\item Let $\phi_{N}$ be a sequence of admissible metrics on $L$ increasing
to an admissible metric $\phi$ and $u_{N}$ a sequence in $H^{1}$
converging to $u\in H^{1}$ in the $H^{1}-$topology. Then $\mathcal{F}_{\phi_{N}}(u_{N})\rightarrow\mathcal{F}_{\phi}(u).$ 
\end{itemize}
\end{thm}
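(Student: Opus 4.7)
The plan is to analyze the decomposition
\[
E_{\phi+u}(\mu) = E_{\psi_0}(\mu) + \langle \phi - \psi_0, \mu\rangle + \langle u, \mu\rangle
\]
on the set of finite-energy probability measures, and handle each summand separately. By Proposition \ref{prop:prop of energy when =00005Cphi is cont} the first summand equals $\tfrac{1}{2}\|\mu-\mu_0\|_{H^{-1}(X)}^{2}$, which is the pullback of a squared Hilbert norm by an affine map, hence weakly lsc and strictly convex on its domain of finiteness. The second summand is linear in $\mu$; since $\phi-\psi_0$ is lsc on the compact surface $X$ (in particular bounded below), $\mu\mapsto\langle\phi-\psi_0,\mu\rangle$ is weakly lsc on $\mathcal{P}(X)$. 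The third summand is linear and, by the bound $|\langle u,\mu-\mu_0\rangle|\leq\|u\|_{H^1}\|\mu-\mu_0\|_{H^{-1}}$ from the same proposition, is controlled by the first. Strict convexity of $E_{\phi+u}$ where it is finite is then inherited from the quadratic summand.

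For the lsc claim along a weakly convergent sequence $\mu_n\to\mu$ in $\mathcal{P}(X)$, I would split into two cases. If $\liminf E_{\psi_0}(\mu_n)<\infty$, extract a subsequence realizing this liminf; by weak compactness of bounded sets in $H^{-1}(X)$, $\mu_n-\mu_0$ converges weakly in $H^{-1}(X)$ to $\mu-\mu_0$, so $\langle u,\mu_n\rangle\to\langle u,\mu\rangle$, while weak lsc of the $H^{-1}$-norm squared and of the pairing against the lsc function $\phi-\psi_0$ take care of the first two summands. If instead $E_{\psi_0}(\mu_n)\to\infty$, quadratic growth beats the $H^1$-linear bound on $\langle u,\mu_n\rangle$ and the lower bound on $\langle\phi-\psi_0,\mu_n\rangle$, forcing $E_{\phi+u}(\mu_n)\to+\infty\geq E_{\phi+u}(\mu)$. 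This completes part 1.

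For part 2, the minimizer $\mu_N$ of $E_{\phi_N+u_N}$ provided by part 1 satisfies $\mathcal{F}_{\phi_N}(u_N)=E_{\phi_N}(\mu_N)+\langle u_N,\mu_N\rangle$. I would first establish $\limsup\mathcal{F}_{\phi_N}(u_N)\leq\mathcal{F}_\phi(u)$: for any finite-energy probability measure $\nu$ with $\int\phi\,d\nu<\infty$ one has $\mathcal{F}_{\phi_N}(u_N)\leq E_{\phi_N}(\nu)+\langle u_N,\nu\rangle$, and monotone convergence for $\phi_N\nearrow\phi$ together with $u_N\to u$ in $H^1$ sends the right-hand side to $E_\phi(\nu)+\langle u,\nu\rangle$; taking the infimum over such $\nu$ yields the bound (such $\nu$ exist by non-polarity of $\{\phi<\infty\}$, which is used already to make $\mathcal{F}_\phi(u)$ a finite infimum). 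This upper bound together with the coercive estimate $E_{\phi_N+u_N}(\mu_N)\geq E_{\psi_0}(\mu_N)-c\sqrt{E_{\psi_0}(\mu_N)}-c'$ forces $\sup_N E_{\psi_0}(\mu_N)<\infty$.

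For the matching liminf, extract a subsequence along which $\mu_N\to\mu_\infty$ weakly in $\mathcal{P}(X)$ and $\mu_N-\mu_0$ weakly in $H^{-1}(X)$. Then $\liminf E_{\psi_0}(\mu_N)\geq E_{\psi_0}(\mu_\infty)$ by weak lsc, $\langle u_N,\mu_N\rangle\to\langle u,\mu_\infty\rangle$ by the $H^1/H^{-1}$ duality together with $u_N\to u$ in $H^1$, and the potential term is handled by a two-stage argument: for each fixed $M$, one has $\phi_N\geq\phi_M$ for $N\geq M$, so weak lsc against the lsc function $\phi_M-\psi_0$ yields $\liminf\langle\phi_N-\psi_0,\mu_N\rangle\geq\langle\phi_M-\psi_0,\mu_\infty\rangle$, and monotone convergence as $M\to\infty$ upgrades this to $\langle\phi-\psi_0,\mu_\infty\rangle$. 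Adding gives $\liminf\mathcal{F}_{\phi_N}(u_N)\geq E_{\phi+u}(\mu_\infty)\geq\mathcal{F}_\phi(u)$. The main obstacle is precisely this two-stage move: because the test measures $\mu_N$ themselves depend on $N$, monotone convergence in $\phi$ cannot be applied directly, and one must first freeze $\phi$ at an intermediate stage $\phi_M$ and only afterwards let $M\to\infty$. A secondary technical point, pervading both parts, is that $u\in H^1(X)$ is not continuous on $X$, so the pairing $\langle u,\mu\rangle$ is only controlled via the $H^1/H^{-1}$ duality on energy sublevel sets rather than by straightforward weak continuity against $\mathcal{P}(X)$.
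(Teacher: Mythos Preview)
Your proposal is correct and follows essentially the same route as the paper: both decompose $E_{\phi+u}$ into the quadratic term $E_{\psi_0}$ plus the two linear terms, use coercivity of this sum to obtain $H^{-1}$-boundedness along the relevant subsequences, exploit weak $H^{-1}$-compactness to pass $\langle u,\mu_n\rangle$ to the limit, and carry out the same two-stage freeze-at-$\phi_M$-then-let-$M\to\infty$ argument for the increasing potential term in part~2. The one minor difference is that you deduce strict convexity directly from the Hilbert-space formula $E_{\psi_0}(\mu)=\tfrac12\|\mu-\mu_0\|_{H^{-1}}^2$, whereas the paper appeals to Legendre--Fenchel duality together with the Gateaux differentiability of $\mathcal{E}\circ P$; your route is slightly more elementary. (One small organizational point: in your lsc argument for part~1 the case split should be made after first passing to a subsequence realizing $\liminf E_{\phi+u}(\mu_n)$, so that coercivity forces $E_{\psi_0}$ to be bounded along that subsequence---as written, the subsequence you extract realizes $\liminf E_{\psi_0}$, which need not coincide with the one realizing $\liminf E_{\phi+u}$.)
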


\begin{proof}
Without loss of generality we may as well assume that $u_{N}$ and
$u$ are normalized in the sense that their integral against $\mu_{\psi_{0}}$
vanishes. Set $v_{N}:=\phi_{N}-\psi_{0},$ defining a sequence of
increasing lsc functions on $X.$ Let us first observe that the functionals
\[
I_{N}(\mu):=(E_{\phi_{N}+u_{N}})(\mu):=(E_{\psi_{0}})(\mu)+\int_{X}v_{N}\mu+\int_{X}u_{N}\mu
\]
 on $\mathcal{P}(X)$ are uniformly coercive on $H^{-1}(X)$ in the
sense that there exist positive constant $\epsilon$ and $C$ such
that 
\begin{equation}
I_{N}(\mu)\geq\epsilon E_{\psi_{0}}(\mu)-C.\label{eq:unif coerc}
\end{equation}
 Indeed, since $v_{1}$ is lower-semicontinuous there is a constant
$C_{1}$ such that 
\[
\int_{X}v_{N}\mu\geq\int_{X}v_{1}\mu\geq-C_{1}
\]
 by the weak compactness of $\mathcal{P}(X).$ Moreover, using the
Cauchy-Schwartz inequality on normalized functions in $H^{1}$ gives
(by the second point in the previous proposition) 
\[
\left|\int_{X}u_{N}\mu\right|\leq\left\Vert u_{N}\right\Vert _{H^{1}}E_{\psi_{0}}(\mu)^{1/2}\leq C_{2}E_{\psi_{0}}(\mu)^{1/2}
\]
Combining the previous two inequalities thus proves the uniform coercivity
\ref{eq:unif coerc}. The same argument shows that the functional
$I:=E_{\phi+u}$ is coercive on on $H^{-1}(X):$ 
\[
I(\mu)\geq\epsilon E_{\psi_{0}}(\mu)-C
\]
Now, to prove the first point in the lemma we let $\mu_{j}$ be a
sequence converging to $\mu$ weakly in $\mathcal{P}(X).$ We may
as well assume that $I(\mu_{j})$ is uniformly bounded from above
(otherwise there is nothing to prove). But then the previous coercivity
bound implies that $E_{\psi_{0}}(\mu_{j})$ is uniformly bounded from
above, i.e. $\left\Vert \mu_{j}-\mu_{0}\right\Vert _{H^{-1}(X)}$
is uniformly bounded from above (using the previous proposition).
By the compactness of the unit-ball in the $H^{-1}(X)$ wrt the weak
topology we may thus assume that the convergence of $\mu_{j}$ holds
in the weak topology on $H^{-1}(X).$ In particular, 
\begin{equation}
\lim_{j\rightarrow\infty}\left\langle u,\mu_{j}\right\rangle =\left\langle u,\mu\right\rangle \label{eq:conv of pairings in proof}
\end{equation}
Moreover, since $v_{N}$ is increasing we have that $v_{N}\geq v_{N_{0}}$for
any fixed index $N_{0}$ for $N\geq N_{0}$ and hence $I(\mu)\geq(E_{\phi_{N_{0}}+u})(\mu)$
so that 
\[
(E_{\phi_{N_{0}}+u})(\mu)\leq\liminf_{j\rightarrow\infty}I(\mu_{j})
\]
 using \ref{eq:conv of pairings in proof} and the lower semi-continuity
of $v_{N_{0}}.$ Finally, letting $N_{0}\rightarrow\infty$ and using
the monotone convergence theorem of integration theory concludes the
proof of the lower semi-continuity stated in the first point in the
lemma. Next, in order to establish the convexity of $E_{\phi+u}$
on $\mathcal{P}(X)$ it is, by linearity, enough to consider the case
when $\phi=\psi_{0}$ and $u=0.$ But, by definition, the functional
$E_{\psi_{0}}$ on $\mathcal{P}(X)$ is the restriction to $\mathcal{P}(X)$
of the Legendre-Fenchel transform of the functional $-(\mathcal{E}\circ P)(\phi-\cdot)$
which is Gateaux differentiable on $C^{0}(X)$ (by Prop \ref{prop:diff of composed energy}).
In particular, $E_{\psi_{0}}$is convex and moreover its strict convexity,
on the subset where it is finite, as follows from basic convex duality.

To prove the second point we fix a sequence $\mu_{N}$ realizing the
inf of $I_{N}$ on $\mathcal{P}(X)$ (whose existence follows from
the lsc of $I_{N}$ and the compactness of $\mathcal{P}(X)).$ By
construction 
\[
I_{N}(\mu_{N}):=\inf_{\mu}I_{N}\leq I_{N}(\mu)
\]
 for any fixed measure $\mu$ on $X.$ Hence, using the Cauchy-Schwartz
inequality and the monotone convergence theorem again and then taking
the inf over all $\mu$ in $\mathcal{P}(X)$ we get
\[
\limsup_{N\rightarrow\infty}I_{N}(\mu_{N})\leq\inf_{\mu\in\mathcal{P}(X)}I(\mu)
\]
In particular, by the uniform coercivity \ref{eq:unif coerc} we have
that $E_{\psi_{0}}(\mu_{N})\leq C'.$ Since $u_{N}$ is assumed to
converges to $u$ strongly in $H^{1}$ the Cauchy-Schwartz inequality
thus gives 
\[
\lim_{N\rightarrow\infty}\left\langle u_{N},\mu_{N}\right\rangle =\lim_{N\rightarrow\infty}\left\langle u,\mu_{\infty}\right\rangle .
\]
 Hence we get, using that $v_{N}\geq v_{N_{0}}$ as before, that 
\[
(E_{\phi_{N_{0}}+u})(\mu_{\infty})\leq\lim\inf_{N\rightarrow\infty}I_{N}(\mu_{N})
\]
for any fixed index $N_{0}.$ Finally, letting $N_{0}\rightarrow\infty$
gives 
\[
I(\mu_{\infty})\leq\lim\inf_{N\rightarrow\infty}I_{N}(\mu_{N})\leq\limsup_{N\rightarrow\infty}I_{N}(\mu_{N})\leq\inf_{\mu\in\mathcal{P}(X)}I(\mu)
\]
But then all inequalities above have to be equalities showing that
\[
\lim_{N\rightarrow\infty}\inf_{\mu\in\mathcal{P}(X)}I_{N}=\inf_{\mu\in\mathcal{P}(X)}I(\mu),
\]
 which concludes the proof of the second point. 
\end{proof}

\section{\label{sec:beta ensembles-on-Riemann}$\beta-$ensembles on Riemann
surfaces}

In this section we start by specializing the general setup of determinantal
point processes on a polarized complex manifold $X$ introduced in
\cite{berm2,berm1} to the case when $X$ is a compact Riemann surface.
From place to place we will use the following regularity assumptions
on a given singular metric $\phi$ on $L,$ where $S$ denotes the
support of the equilibrium measure $\mu_{\phi}$ and $dV$ is a fixed
volume form on $X:$ 
\begin{itemize}
\item \textbf{(A0)} \emph{$\phi$ is continuous on the complement of a closed
polar subset. }
\item \textbf{(A1)}\emph{ $A_{0}$ holds and there exists a constant $C$
such that $dd^{c}\phi\leq CdV$ on a neighborhood of $S$ }
\end{itemize}

\subsection{\label{subsec:Determinantal-point-process}Determinantal point process
on polarized Riemann surfaces and $\beta-$ensembles}

Let $L\rightarrow X$ be a positive line bundle over compact Riemann
surface $X.$ The geometric data $(dV,\phi)$ of a continuous volume
form $dV$ on $X$ and an admissible metric $\phi$ on $L$ induces
a determinantal random point process on $X$ with $N:=N_{L}$ particles,
defined by the following symmetric probability measure on the $N-$fold
product $X^{N}:$ 
\[
d\mathbb{P}_{N,L}:=\frac{|\Psi(x_{1},...,x_{N})|^{2}e^{-\left(\phi(x_{1})+...+\phi(x_{N})\right)}}{Z_{N}[dV,\phi]}dV{}^{\otimes N},
\]
 where $\Psi(x_{1},...,x_{N})$ is a generator of the top exterior
power $\Lambda^{N}H^{0}(X,L),$ viewed as a one-dimensional subspace
of $H^{0}(X^{N},L{}^{\boxtimes N})$ under the usual isomorphism between
$H^{0}(X^{N},L{}^{\boxtimes N})$ and the $N$ fold tensor product
of $H^{0}(X,L).$ The number $Z_{N}[dV,\phi],$ called the \emph{partition
function}, is the normalizing constant ensuring that $d\mathbb{P}_{N,L}$
is a probability measure:
\[
Z_{N}:=Z_{N}[dV,\phi]:=\int_{X^{N}}|\Psi|_{\phi}^{2}dV^{\otimes N},\,\,\,|\Psi|_{\phi}^{2}:=|\Psi(x_{1},...,x_{N})|^{2}e^{-\left(\phi(x_{1})+...+\phi(x_{N})\right)}
\]
Since $\Psi(x_{1},...,x_{N})$ is determined up to multiplication
by a non-zero complex number the probability measure $d\mathbb{P}_{N,L}$
is canonically attached to the geometric data $(dV,\phi).$ Concretely,
fixing a basis $\Psi_{1},...,\Psi_{N}$ in $H^{0}(X,L)$ we can (and
will) take $\Psi(x_{1},...,x_{N})$ to be 
\begin{equation}
\Psi(x_{1},...,x_{N}):=\det(\Psi_{i}(x_{j})):=\sum_{\sigma\in S_{N}}(-1)^{\mbox{sign\ensuremath{(\sigma)}}}\Psi_{1}(x_{\sigma(1)})\cdots\Psi_{N}(x_{\sigma(N)})\label{eq:def of det S}
\end{equation}
which we will call the \emph{Slater determinant}, following the terminology
used in the physics literature. 
\begin{rem}
If $\widetilde{\Psi}_{i}$ denotes another basis in in $H^{0}(X,L)$
then 
\end{rem}

\begin{equation}
\widetilde{\Psi}(x_{1},...,x_{N})=C\Psi(x_{1},...,x_{N})\,\,\,C\in\C^{*}\label{eq:scaling relation for slater}
\end{equation}
 where $C$ is the determinant of the corresponding change of basis
matrix. 

In the adjoint setting (Section \ref{subsec:The-adjoint-setting})
the probability measure $d\mathbb{P}_{N,L}$ only depends on $\phi$
and can be expressed as 
\[
d\mathbb{P}_{N,L}=(\frac{i}{2})^{N}\frac{\Psi(x_{1},...,x_{N})\wedge\overline{\Psi(x_{1},...,x_{N})}e^{-\left(\phi(x_{1})+...+\phi(x_{N})\right)}}{Z_{N}},
\]
 by identifying $\Psi$ with a holomorphic top form on $X^{N}$ with
values in $L^{\boxtimes N}$

\subsubsection{$\beta-$ensembles and Quantum Hall states on $X$}

Given a pair $(dV,\phi)$ as above and $\beta>0$ and the corresponding\emph{
$\beta-$ensemble }is the random point process with $N_{L}$ particles
on $X$ corresponding to the following symmetric probability measure
$\mu_{\beta}^{(N)}$ on $X^{N}:$ 
\begin{equation}
d\mathbb{P}_{N,L\beta}:=\frac{|\Psi|_{\phi}^{2\beta}}{Z_{N,\beta}[dV,\phi]}dV{}^{\otimes N},\,\,\,Z_{N,\beta}[dV,\phi]:=\int_{X^{N}}|\Psi|_{\phi}^{2\beta}dV^{\otimes N}\label{eq:Gibbs measure for beta ens text}
\end{equation}
 As recalled in Section \ref{subsec:Comp QH} for $\beta=1$ this
point processes represents an integer Quantum Hall state, while $\beta=m>0$
represents a fractional Quantum Hall state with ``filling fraction''
$1/m.$ Moreover, the curvature form $dd^{c}\phi$ coincides with
the magnetic two-form corresponding to an exterior magnetic field. 

\subsubsection{\label{subsec:The-asymptotic-setting of seq beta ens}The asymptotic
setting of a sequence of $\beta-$ensembles on $X$}

Assume given $(\phi,dV,\phi_{F}),$ where $\phi$ is an admissible
metric on the positive line bundle $L,$ $dV$ is a continuous volume
form on $X$ and $\phi_{F}$ is a continuous metric on a line bundle
$F.$ To this data we attach, for any given $\beta>0,$ a sequence
of $\beta-$ensembles obtained by replacing $(L,\phi)$ in the previous
section with $(L_{k},\phi_{k}):=(kL+F,k\phi+\phi_{F}).$ 

We then take $\Psi^{(k)}(x_{1},...,x_{N_{k}})$ to be the Slater determinant
\ref{eq:def of det S} defined wrt the reference bases $\Psi_{1}^{(k)},...\Psi_{N_{k}}^{(k)}$
in $H^{0}(X,kL+F).$ 
\begin{lem}
\label{lem:beta-ens as Gibbs measure}The probability measure $d\mathbb{P}_{N_{k},\beta}$
may be expressed as the following Gibbs measure: 
\[
d\mathbb{P}_{N_{k},\beta}=\frac{e^{-\beta H_{\phi}^{(N_{k})}}dV^{\otimes N_{k}}}{\int_{X^{N_{k}}}e^{-\beta H_{\phi}^{(N_{k})}}dV^{\otimes N_{k}}},\,\,\,H_{\phi}^{(N)}=H_{\psi_{0}}^{(N)}+k\sum_{i=1}^{N_{k}}(\phi-\psi_{0})(x_{i}),
\]
 where $H_{\psi_{0}}^{(N)}(x_{1},...,x_{N_{k}})=-\log|\Psi^{(k)}(x_{1},...,x_{N_{k}})|^{2}e^{-k(\psi_{0}+\phi_{F})}+\log N_{k}!$
(using the reference $\phi_{F_{0}}=\phi_{F}$ on $F$ in the definition
of $\Psi^{(k)}).$
\end{lem}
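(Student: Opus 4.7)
The claim is a direct algebraic reformulation of the definition \eqref{eq:Gibbs measure for beta ens text} of the $\beta$-ensemble, with $L$ replaced by $L_k:=kL+F$ and $\phi$ by $\phi_k:=k\phi+\phi_F$, rewritten in exponential (Gibbs) form. The plan is to take minus the logarithm of the density of $d\mathbb{P}_{N_k,\beta}$ with respect to $dV^{\otimes N_k}$ and then split the resulting function on $X^{N_k}$ into a piece depending only on the reference data $(\psi_0,\phi_F,\Psi^{(k)})$ and a piece linear in $\phi-\psi_0$.

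Concretely, one first observes that although the Slater determinant $\Psi^{(k)}$ is only a section of $L_k^{\boxtimes N_k}$ (defined up to the scalar factor \eqref{eq:scaling relation for slater} once a basis is fixed), the pointwise squared norm $|\Psi^{(k)}|_{\phi_k}^2$ is a globally well-defined non-negative function on $X^{N_k}$; in any local trivialization it equals
\[
|\Psi^{(k)}(x_1,\ldots,x_{N_k})|^{2}\exp\!\left(-\sum_{i=1}^{N_k}(k\phi+\phi_F)(x_i)\right).
\]
Writing this as $\exp(-H_\phi^{(N_k)})$ and inserting the harmless additive constant $\log N_k!$ (which cancels between numerator and denominator in the Gibbs quotient), one obtains a Hamiltonian $H_\phi^{(N_k)}$. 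Decomposing $k\phi=k\psi_0+k(\phi-\psi_0)$ inside the exponent splits this Hamiltonian as
\[
H_\phi^{(N_k)} = H_{\psi_0}^{(N_k)} + k\sum_{i=1}^{N_k}(\phi-\psi_0)(x_i),
\]
where $H_{\psi_0}^{(N_k)}$ is precisely the function produced by applying the same recipe to the reference metric $\psi_0$ in place of $\phi$. Raising the whole density to the power $\beta$ yields the stated Gibbs form of $d\mathbb{P}_{N_k,\beta}$.

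There is no real obstacle here: the lemma is essentially a change-of-notation, translating the determinantal/holomorphic-section description into the statistical-mechanics language of Section~1. The only mildly subtle point is to verify that the expression inside the logarithm defines a global function on $X^{N_k}$ (so that $H_\phi^{(N_k)}$ makes sense as a real-valued function, finite away from the zero locus of $\Psi^{(k)}$); this is automatic, because under a change of local trivializations the transition functions $t_{ij}$ for $L_k$ contribute factors $|t_{ij}|^2$ that are exactly compensated by the corresponding transformation $\phi_k\mapsto\phi_k-\log|t_{ij}|^2$ of the weight.
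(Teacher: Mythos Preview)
Your proposal is correct and follows essentially the same direct algebraic unpacking as the paper's proof. The only minor difference is that the paper records, as the key observation, the identity $\int_{X^{N}}|\det(\Psi_i(x_j))|^2 e^{-\phi}\,dV^{\otimes N}=N!$ for an orthonormal basis, which explains the particular normalizing constant $\log N_k!$ inserted in $H_{\psi_0}^{(N_k)}$ (rather than merely noting, as you do, that it cancels in the Gibbs quotient); but for the purpose of the lemma itself your argument is complete.
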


\begin{proof}
This follows directly from the basic observation that, in general,
\begin{equation}
\int_{X^{N}}|\det(\Psi_{i}(x_{j}))|^{2}e^{-\phi}dV=N!\label{eq:integral of det is N factorial}
\end{equation}
 if $\Psi_{1},...,\Psi_{N}$ is a bases in $H^{0}(X,L)$ which is
orthonormal wrt the $L^{2}-$norm induced by $(\phi,dV).$ Indeed,
this follows from integrating the sum in formula \ref{eq:def of det S}
and using that all $N!$ terms are equal to one, by permutation symmetry.
\end{proof}
\begin{rem}
In the literature on the Quantum Hall effect the ``twisting bundle''
$F$ is usually taken as $sK_{X}$ for $s$ a given non-negative number
in $\Z/2$ (when $s$ is not an integer the definition of $sK_{X}$
depends on the choice of a spin structure on $X$ i.e. a square root
of $K_{X}$\cite{k-m-m-w,kl}).
\end{rem}

\subsubsection{\label{subsec:Quasi-holes-and-magnetic}Quasi-holes, magnetic impurities
and Hele-Shaw flow}

Given a polarized Riemann surface $(X,L)$ and $(dV,\phi)$ as above
fix the additional data of an effective divisor $Z$ on $X,$ i.e.
$Z$ is the formal weighted sum $\sum_{i=1}^{m}a_{i}P_{i}$ for given
points $P_{i}$ in $X$ and positive integers $a_{i}>0$. Following
\cite[Secion 5.5]{berm1} another determinantal point process is obtained
by replacing the space $H(X,L)$ in section \ref{subsec:Determinantal-point-process}
with its sub-space $H_{Z}$ of sections vanishing to order at least
$a_{i}$ at the points $P_{i}.$ Denote by dimension $N$ the dimension
of $H_{Z}$ and by $\mu_{(L,Z)}^{(N)}$ the corresponding probability
measure on $X^{N}.$ In Laughlin's terminology \cite{Lau} $\mu_{(L,Z)}^{(N)}$
represents a \emph{quasi-hole} excited Quantum Hall state (see also
\cite[Section 9]{l-c-w}).\emph{ }In the asymptotic situation $L$
is replaced by $kL$ (or more generally by $kL+F)$ and $Z$ by $kZ,$
assuming that $\deg(L)>\deg(Z):=\sum_{i}a_{i}$ (which ensures that
$N_{k}\rightarrow\infty$ as $k\rightarrow\infty).$ 

The probability measure $\mu_{(L,Z)}^{(N)}$ can also be described
from an alternative point of view. Denote by $L_{Z}$ the line bundle
on $X$ determined by $Z$ and by $\phi_{Z}$ the corresponding singular
metric on $L$ (which is uniquely determined up to scaling). This
means that there exists $s_{Z}\in H(X,L_{Z})$ vanishing to order
$a_{i}$ at $P_{i}$ and $\phi_{Z}$ is the metric locally represented
by the weight $\log|s_{Z}|^{2}.$ As explained in \cite[Secion 5.5]{berm1}
the probability measure $\mu_{(L,Z)}^{(N)}$ coincides with the probability
measure $\mu_{(L-L_{Z})}^{(N)}$ obtained by replacing the line bundle
$L$ in Section \ref{subsec:Determinantal-point-process} with the
line bundle $L-L_{Z}$ and $\phi$ with the metric $\phi-\phi_{Z}$
on $L-L_{Z}.$ Since the curvature form of $\phi-\phi_{Z}$ is given
by $dd^{c}\phi-\sum a_{i}\delta_{p_{i}}$ the corresponding point
process represents a Quantum Hall state on $X$ subject to an external
magnetic field $dd^{c}\phi$ and a singular magnetic field of opposite
direction created by \emph{magnetic impurities} (or vortices) of charge
$a_{i}$ inserted at the points $P_{i}.$ The analytic advantage of
this point of view is that $\phi-\phi_{Z}$ is a lsc metric which
is continuous on the complement of a closed polar set so that the
all the results proven in the present paper apply. Moreover, it follows
readily from the definitions that the support $S_{Z}$ of the corresponding
equilibrium measure $\mu_{\phi-\phi_{Z}}$ is compactly supported
in the complement of $Z.$ Hence, by Prop \ref{prop:regularity},
the corresponding equilibrium measure is given by
\[
\mu_{\phi-\phi_{Z}}=1_{S_{Z}}dd^{c}\phi
\]
and if $\phi$ is smooth then the assumption A1 holds. 

For a given positive integer $k$ the previous setup applies more
generally when the coefficients $a_{i}$ of $Z$ are taken to be non-negative
real numbers. Then $H_{kZ}$ is defined by sections vanishing at the
points $P_{i}$ to order at least the round down of $ka_{i}.$ It
follows from \cite[Thm 1.10]{r-n} that if the coefficients $a_{i}$
are sufficiently small then $X-S$ is diffeomorphic to a disjoint
union of discs centered at the points $P_{i}.$ However, when $a_{i}$
is increased singularities of $S$ are typically formed and the topology
of $S$ changes. Indeed, fixing a divisor $Z$ the map $t\mapsto X-S_{tZ}$
describes an evolution of increasing domains coinciding with the Hele-Shaw
flow on $X$ with sources at $p_{i}$ and injection rates $a_{i}$
\cite{h-s,r-n0}, which is known to typically become singular in a
finite time. 

\subsection{Upper bounds on partition functions }

The key ingredient in the upper bounds on the partition functions
is the following result from \cite{berm3} :
\begin{prop}
\label{prop:lower bound on L funct in adj} Let $\psi$ be a continuous
metric on $L$ with positive curvature current and $\psi_{0}$ a metric
on $L$ such that $dd^{c}\psi_{0}$ defines a Riemannian metric on
the Riemann surface $X$ with constant scalar curvature. Then 

\[
\frac{1}{N}\log Z_{N,1}[\psi]-\frac{1}{N}\log Z_{N,1}[\psi_{0}]\leq-\mathcal{E}_{L}(\psi)+\mathcal{E}_{L}(\psi_{0})-\delta_{L}\left(\mathcal{E}_{L}(\psi)-\mathcal{E}_{L}(\psi_{0})-\sup_{X}(\psi-\psi_{0})\right)
\]
where $\delta_{L}$ is a non-negative constant such that $\delta=0$
when $X$ is the Riemann sphere and in the case when $X$ has genus
at least one, $\delta_{kL}\leq Ce^{-k/C}$ for an explicit positive
constant $C.$ 
\end{prop}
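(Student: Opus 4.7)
The plan is to deduce the inequality from convexity along K\"ahler geodesics of the functional
\[
\mathcal{G}_{N}(\psi):=-\tfrac{1}{N}\log Z_{N,1}[\psi]-\mathcal{E}_{L}(\psi),
\]
combined with a near-critical-point property of $\psi_{0}$. By monotone decreasing approximation of $\psi$ by smooth elements of $\mathcal{H}(L)$, together with Proposition \ref{prop:cont of beatiful E under monotone} for $\mathcal{E}$ and dominated convergence for $\log Z_{N,1}$, we reduce to $\psi\in\mathcal{H}(L)$. Connect $\psi_{0}$ to $\psi$ by the weak K\"ahler geodesic $t\mapsto\psi_{t}$, $t\in[0,1]$, realized as the envelope of $S^{1}$-invariant positively curved extensions to the strip $(\{0\leq\mathrm{Re}\,\tau\leq 1\}+i\R)\times X$ with the prescribed boundary values; this envelope is locally bounded and $C^{1,1}$ under mild regularity.

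The first main step is convexity of $t\mapsto\mathcal{G}_{N}(\psi_{t})$. Since by definition $(dd^{c}_{\tau,x}\psi_{t})^{2}=0$, the energy $\mathcal{E}(\psi_{t})$ is affine in $t$. Passing to the adjoint setting of Section \ref{subsec:The-adjoint-setting}, one writes $Z_{N,1}[\psi_{t}]=N!\det G(t)$ where $G(t)$ is the Gram matrix of a fixed basis of $H^{0}(X,L+K_{X})$ with respect to the Hermitian form (\ref{eq:herm norm in adjoint setting}) induced by $\psi_{t}$. Berndtsson's positivity theorem \cite{bern} for direct images of adjoint line bundles then applies: the dual metric $G(t)^{-1}$ on the trivial vector bundle over the strip has semi-positive curvature, so $\tau\mapsto\log\det G(\mathrm{Re}\,\tau)$ is plurisubharmonic and $t\mapsto-\tfrac{1}{N}\log Z_{N,1}[\psi_{t}]$ is convex. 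Combined with affinity of $\mathcal{E}(\psi_{t})$, this gives convexity of $\mathcal{G}_{N}(\psi_{t})$.

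The second step exploits the reference property of $\psi_{0}$. Differentiating gives $d\mathcal{G}_{N}|_{\psi_{0}}[v]=\int_{X}v\,(B_{N}(\psi_{0})/N-dd^{c}\psi_{0}/\deg L)$, where $B_{N}(\psi_{0})$ is the Bergman measure. When $X$ is the Riemann sphere, the full biholomorphism group lifts to $L$ and preserves $\psi_{0}$ modulo an additive constant; averaging over the maximal compact subgroup forces $B_{N}(\psi_{0})/N=dd^{c}\psi_{0}/\deg L$, so $d\mathcal{G}_{N}|_{\psi_{0}}=0$, and convexity along the geodesic from $\psi_{0}$ to $\psi$ yields $\mathcal{G}_{N}(\psi)\geq\mathcal{G}_{N}(\psi_{0})$, which is the inequality with $\delta_{L}=0$. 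For $X$ of genus $\geq 1$ and $L$ a sufficiently large tensor power of a positive line bundle, the defect $\varepsilon_{L}:=B_{N}(\psi_{0})/N-dd^{c}\psi_{0}/\deg L$ is bounded in total variation by $\delta_{L}\leq Ce^{-k/C}$ by off-diagonal exponential decay of the Bergman kernel at a positively curved metric. Pairing $\varepsilon_{L}$ with the initial velocity $\dot{\psi}_{0}$ of the K\"ahler geodesic (for which $\mathcal{E}(\psi)-\mathcal{E}(\psi_{0})=\int\dot{\psi}_{0}\,dd^{c}\psi_{0}/\deg L$ by affinity of $\mathcal{E}$), together with the bound $\sup_{X}\dot{\psi}_{0}\leq\sup_{X}(\psi-\psi_{0})$ coming from $t$-convexity of $\psi_{t}(x)$, yields the stated form of the inequality after rearrangement.

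The principal obstacle is verifying the convexity of $-\log Z_{N,1}$ along merely $C^{1,1}$ geodesics, which requires regularizing the geodesic, applying Berndtsson's positivity to the smooth family, and then passing to the limit via bounded monotone convergence for $\log Z_{N,1}$. The exponential estimate $\delta_{L}\leq Ce^{-k/C}$ in positive genus is the other delicate point, resting on sharp off-diagonal Bergman kernel asymptotics at the balanced metric.
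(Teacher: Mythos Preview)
Your approach is essentially the same as the paper's: the paper simply cites \cite[Prop 2.1, Prop 2.2]{berm3}, and the outline in Section~\ref{sec:Outline-of-the} of the present paper describes exactly the convexity-along-geodesics argument you reconstruct (affinity of $\mathcal{E}$, Berndtsson positivity for $-\log\det$ of the Gram matrix in the adjoint realization, and the critical/near-critical point property of the constant-curvature reference). So the strategy is correct and matches the intended one.

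There is, however, a small but genuine slip in the higher-genus step. A total-variation bound on $\varepsilon_{L}:=B_{N}(\psi_{0})/N-dd^{c}\psi_{0}/\deg L$ only gives
\[
\Big|\int_{X}\dot\psi_{0}\,\varepsilon_{L}\Big|\;\leq\;\delta_{L}\,\mathrm{osc}_{X}\dot\psi_{0},
\]
which involves both $\sup\dot\psi_{0}$ and $\inf\dot\psi_{0}$, whereas the stated inequality contains only $\sup_{X}(\psi-\psi_{0})$. What is actually needed (and what holds in \cite{berm3}) is the one-sided pointwise bound
\[
B_{N}(\psi_{0})/N\;\leq\;(1+\delta_{L})\,dd^{c}\psi_{0}/\deg L.
\]
With this, writing $\varepsilon_{L}-\delta_{L}\mu_{0}\leq 0$ (a nonpositive measure of total mass $-\delta_{L}$) and using $\int\dot\psi_{0}\,\mu_{0}=\mathcal{E}(\psi)-\mathcal{E}(\psi_{0})$ by affinity, one gets
\[
\int\dot\psi_{0}\,\varepsilon_{L}\;\geq\;\delta_{L}\big(\mathcal{E}(\psi)-\mathcal{E}(\psi_{0})-\sup_{X}\dot\psi_{0}\big)\;\geq\;\delta_{L}\big(\mathcal{E}(\psi)-\mathcal{E}(\psi_{0})-\sup_{X}(\psi-\psi_{0})\big),
\]
which is exactly the claimed remainder. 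Relatedly, the exponential smallness of $\delta_{L}$ does not come from off-diagonal Bergman kernel decay; it is a diagonal statement: for a constant-curvature metric on a compact Riemann surface the Bergman density is, by homogeneity under the isometry group, a function of geodesic distance alone, and equals the constant $N/\deg L$ up to an $O(e^{-k/C})$ remainder. Adjusting these two points, your argument goes through.
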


\begin{proof}
This follows directly from combining \cite[Prop 2.1, Prop 2.2]{berm3}.
\end{proof}
We also have the following result, which follows from \cite{b-b,berm1}: 
\begin{prop}
\label{prop:-asympt of L func} Fix a continuous volume form $dV$
on $X$ and let $L$ and $F$ be line bundles over $X$ endowed with
continuous metrics $\phi$ and $\phi_{F},$ respectively and assume
that $L$ is positive. Then 
\[
\lim_{k\rightarrow\infty}\frac{1}{kN_{k}}\log Z_{N_{k},\beta}[(dV,k\phi+\phi_{F})]=-\mathcal{E}(P(\phi))=:\mathcal{F}(\phi)
\]
More generally, for any lower semi-continuous $\phi$ such that $\{\phi<\infty\}$
is not polar
\begin{equation}
\limsup_{k\rightarrow\infty}\frac{1}{kN_{k}}\log Z_{N_{k},\beta}(dV,k\phi+\phi_{F})\leq-\mathcal{E}(P(\phi))<\infty\label{eq:limif L in prop}
\end{equation}
\end{prop}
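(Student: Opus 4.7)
The plan is to reduce the proposition to two well-developed ingredients from \cite{b-b, berm1}: the sharp asymptotics of $L^2$ Gram determinants of holomorphic sections for continuous metrics at $\beta=1$ (essentially a transfinite-diameter / Bergman-kernel theorem) and a monotone approximation argument to pass from continuous to lsc metrics.

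I would begin with the case of continuous $\phi$ and $\beta=1$. By formula (\ref{eq:integral of det is N factorial}) applied to an orthonormalization of the reference basis $\Psi^{(k)}_1,\ldots,\Psi^{(k)}_{N_k}$ of $H^0(X,kL+F)$ one has
\[
Z_{N_k,1}[(dV,k\phi+\phi_F)] \;=\; N_k!\,\det G_k(\phi),
\]
where $G_k(\phi)$ denotes the Gram matrix of the reference basis in the scalar product induced by $(dV,k\phi+\phi_F)$. Since Stirling yields $(kN_k)^{-1}\log N_k!\to 0$ and the fixed continuous twisting metric $\phi_F$ contributes only an $O(N_k)$ error, everything reduces to the asymptotic
\[
\frac{1}{kN_k}\log\det G_k(\phi) \;\longrightarrow\; -\mathcal{E}(P\phi),
\]
which is one of the main convergence theorems of \cite{b-b, berm1}. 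The passage to arbitrary $\beta>0$ with $\phi$ still continuous is a bracketing argument: setting $f:=|\Psi^{(k)}|^2_{k\phi+\phi_F}$ and $d\pi:=dV^{\otimes N_k}/V(X)^{N_k}$, one has $Z_{N_k,\beta}=V(X)^{N_k}\int f^\beta d\pi$, and Jensen's inequality for the concave $t\mapsto t^\beta$ when $\beta\leq 1$ (and its convex counterpart when $\beta\geq 1$), combined with the matching Fekete-point lower bound of \cite{b-b, berm1}, sandwich $\log Z_{N_k,\beta}$ between expressions whose leading asymptotics agree with the $\beta=1$ case.

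For the general lsc assertion the approach is monotone approximation from below. Since $\phi$ is lsc there exists a sequence of continuous metrics $\phi_j \nearrow \phi$; from $\phi_j\leq\phi$ one has $|\Psi^{(k)}|^{2\beta}_{k\phi+\phi_F}\leq |\Psi^{(k)}|^{2\beta}_{k\phi_j+\phi_F}$ pointwise on $X^{N_k}$, hence $Z_{N_k,\beta}[(dV,k\phi+\phi_F)] \leq Z_{N_k,\beta}[(dV,k\phi_j+\phi_F)]$. Applying the continuous case already proved and then letting $k\to\infty$ yields
\[
\limsup_{k\to\infty}\frac{1}{kN_k}\log Z_{N_k,\beta}[(dV,k\phi+\phi_F)] \;\leq\; -\mathcal{F}(\phi_j).
\]
The admissibility of $\phi$ (lsc with $\{\phi<\infty\}$ non-polar) together with Lemma \ref{lem:prop of p phi} gives $P\phi_j\nearrow P\phi$ quasi-everywhere with $P\phi\in PSH(L)_b$, and then Proposition \ref{prop:cont of beatiful E under monotone} forces $\mathcal{F}(\phi_j)=\mathcal{E}(P\phi_j)\to\mathcal{E}(P\phi)=\mathcal{F}(\phi)\in\mathbb{R}$, finishing the $\limsup$ upper bound and simultaneously establishing finiteness of $-\mathcal{F}(\phi)$.

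The main obstacle is the sharp asymptotic $(kN_k)^{-1}\log\det G_k(\phi)\to -\mathcal{E}(P\phi)$ for merely continuous $\phi$: the smooth case is classical (Tian--Yau--Zelditch Bergman-kernel expansion), but extending it to continuous metrics requires the variational characterization of the envelope $P\phi$ and the pluripotential-theoretic framework worked out in \cite{b-b, berm1}; everything else reduces to standard Jensen bracketing and the monotonicity of $P$ and $\mathcal{E}$ already recorded in the excerpt.
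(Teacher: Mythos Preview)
Your proposal is correct and follows essentially the same approach as the paper's proof: invoke \cite{b-b,berm1} for the continuous case (reducing general $\beta$ to $\beta=1$), then for lsc $\phi$ approximate from below by continuous $\phi_j\nearrow\phi$, use monotonicity of $Z_{N_k,\beta}$ in $\phi$, and conclude via Lemma~\ref{lem:prop of p phi} and Proposition~\ref{prop:cont of beatiful E under monotone}. You spell out the Jensen/Fekete bracketing for the passage between $\beta$ values more explicitly than the paper does, but the skeleton is identical.
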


\begin{proof}
When $F$ is the trivial line bundle the first convergence in the
proposition is shown in \cite[Equation 4.23]{berm1} in any dimension
(the proof reduces to the case $\beta=1$ which is covered by \cite[Thm A]{b-b}).
The general case is proved in essentially the same way (moreover,
in the present Riemann surface case a different proof can be given;
see Theorem \ref{prop:Lower bd on part func riemann surf}). In the
case when $\phi$ is merely assumed to be lsc we take a sequence $\phi_{j}$
increasing to $\phi.$ Then $\log Z_{N,\beta}[(dV,k\phi+\phi_{F}]\leq\log Z_{N,\beta}[dV,k\phi_{j}+\phi_{F})]$
and hence the left hand side in formula \ref{eq:limif L in prop}
is bounded from above by $-\mathcal{E}(P(\phi_{j}))$ for any fixed
$j.$ The proof is thus concluded by letting letting $j\rightarrow\infty$
and using Lemma \ref{lem:prop of p phi} and Prop \ref{prop:cont of beatiful E under monotone}.
\end{proof}

\subsection{Lower bounds on partition functions and relative entropy}

We will denote by $D_{\mu_{0}}(\mu)$ the \emph{entropy} (aka the
information divergence) of a measure $\mu$ on $X$ relative a measure
$\mu_{0},$ i.e. 
\begin{equation}
D_{\mu_{0}}(\mu):=\int\log\frac{\mu}{\mu_{0}}\mu\label{eq:def of entropy}
\end{equation}
 if $\mu$ is absolutely continuous wrt $\mu_{0}$ and otherwise $D_{\mu_{0}}(\mu):=\infty.$
For the lower bound on the partition function we will use Gibbs variational
principle in the following form (which is a simple consequence of
Jensen's inequality):
\begin{lem}
\label{lem:Gibbs}Let $(X,dV)$ be a measure space and$H^{(N)}$ be
a symmetric function in $L^{1}(X^{N}).$ Then
\[
\frac{1}{\text{\ensuremath{\beta}}}\log\int e^{-\beta H^{(N)}}dV^{\otimes N}\geq-\int H^{(N)}\nu^{\otimes N}-\beta^{-1}D_{dV}(\nu)
\]
for any for any $\nu\in\mathcal{P}(X)$ such that the integrals above
are well-defined and finite. 
\end{lem}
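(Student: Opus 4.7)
The plan is to prove this by the standard Gibbs variational principle, whose only ingredient is Jensen's inequality applied to the concave function $\log$ with the product measure $\nu^{\otimes N}$ as a trial probability measure on $X^N$. If $\nu$ is not absolutely continuous with respect to $dV$, then $D_{dV}(\nu)=+\infty$ and there is nothing to prove, so assume $\nu\ll dV$ and set $f:=d\nu/dV$. Then the product density $F_N(x_1,\ldots,x_N):=\prod_{i=1}^N f(x_i)$ is the Radon--Nikodym derivative of $\nu^{\otimes N}$ with respect to $dV^{\otimes N}$.

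Since $e^{-\beta H^{(N)}}\geq 0$, we have
\[
\int_{X^N} e^{-\beta H^{(N)}}\,dV^{\otimes N}\;\geq\;\int_{\{F_N>0\}}\frac{e^{-\beta H^{(N)}}}{F_N}\,d\nu^{\otimes N},
\]
with equality when $\nu$ is equivalent to $dV$. Because $\nu^{\otimes N}$ is a probability measure, Jensen's inequality applied to $\log$ then gives
\[
\log\int\frac{e^{-\beta H^{(N)}}}{F_N}\,d\nu^{\otimes N}\;\geq\;\int\log\!\left(\frac{e^{-\beta H^{(N)}}}{F_N}\right)d\nu^{\otimes N}\;=\;-\beta\!\int H^{(N)}\,d\nu^{\otimes N}-\int\log F_N\,d\nu^{\otimes N}.
\]
By tensorization of the entropy, $\int\log F_N\,d\nu^{\otimes N}=N\int\log f\,d\nu=N\,D_{dV}(\nu)$, so dividing by $\beta$ yields the asserted bound (under the natural convention that the symbol $D_{dV}(\nu)$ appearing on the right-hand side of the lemma denotes $D_{dV^{\otimes N}}(\nu^{\otimes N})$, which is the relative entropy on the $N$-particle configuration space and is what enters the classical Gibbs principle).

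There is no real obstacle in the argument; the only point requiring care is that all integrals be well-defined, which is guaranteed by the integrability hypothesis $H^{(N)}\in L^1(\nu^{\otimes N})$ and $D_{dV}(\nu)<\infty$ built into the statement. In particular, when either $\int H^{(N)}\,d\nu^{\otimes N}=+\infty$ or $D_{dV}(\nu)=+\infty$, the right-hand side is $-\infty$ and the inequality holds trivially.
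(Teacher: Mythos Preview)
Your proof is correct and is exactly the approach the paper has in mind: the paper gives no explicit argument and simply remarks that the lemma ``is a simple consequence of Jensen's inequality,'' which is precisely what you carry out. Your parenthetical remark that the entropy term should be read as $D_{dV^{\otimes N}}(\nu^{\otimes N})=N\,D_{dV}(\nu)$ is well taken and consistent with how the lemma is applied later in the paper.
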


We will also use the following lemma, which is elementary when $X$
is the Riemann sphere (the proof for general $X$ is given in the
companion paper \cite{berm17}):
\begin{lem}
\label{lem:asy of mean energy}Let $H_{\psi_{0}}^{(N)}$ be the Hamiltonian
defined in Lemma \ref{lem:beta-ens as Gibbs measure}. Then
\begin{equation}
\lim_{N_{k}\rightarrow\infty}k^{-1}N_{k}^{-1}\int_{X^{N}}H_{\psi_{0}}^{(N_{k})}\nu^{\otimes N_{k}}=E_{\psi_{0}}(\nu)\label{eq:exact mean energy on sphere}
\end{equation}
for any volume form $\nu$ in $\mathcal{P}(X).$ Moreover, there exists
a constant $C_{(X,L)}$ only depending on $X$ and $L$ such that
\[
-k^{-1}N_{k}^{-1}\int_{X^{N}}H_{\psi_{0}}^{(N_{k})}\nu^{\otimes N_{k}}\geq-E_{\psi_{0}}(\nu)-C_{(X,L)}\frac{1}{N_{k}}\log N_{k}
\]
where $C_{(X,L)}=1/2$ when $(X,L)=(\P^{1},\mathcal{O}(1)).$ 
\end{lem}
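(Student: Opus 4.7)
The simplification at the heart of the argument is that by Lemma \ref{lem:beta-ens as Gibbs measure},
\[
\int H_{\psi_{0}}^{(N_{k})}\nu^{\otimes N_{k}} = \log N_{k}! - \int \log|\Psi^{(k)}|^{2}_{k\psi_{0}+\phi_{F}}\,\nu^{\otimes N_{k}},
\]
so it suffices to produce a sharp asymptotic expansion of the integrated log-modulus of the Slater determinant against $\nu^{\otimes N_{k}}$. The general Riemann surface case requires a Bergman-kernel / Toeplitz-determinant (Szego-type) analysis; this is carried out in \cite{berm17}. I will outline only the Riemann sphere case, which is elementary because the Slater determinant collapses to a Vandermonde.

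For $(X,L)=(\mathbb{P}^{1},\mathcal{O}(1))$ with the canonical reference $\psi_{0}(z) = \log(1+|z|^{2})$, an elementary Beta-integral computation yields the $L^{2}(\mu_{0},k\psi_{0})$-orthonormal basis of normalised monomials $\Psi_{i}^{(k)}(z) = c_{i}z^{i-1}$, $c_{i}^{2} = (k+1)\binom{k}{i-1}$. Hence $\Psi^{(k)} = (\prod_{i} c_{i})\cdot V(z_{1},\ldots,z_{N_{k}})$ with $V$ the Vandermonde, yielding the \emph{exact} identity
\[
\log|\Psi^{(k)}|^{2}_{k\psi_{0}+\phi_{F}} = \sum_{i<j}\log|z_{i}-z_{j}|^{2} - k\sum_{\ell}\psi_{0}(z_{\ell}) - \sum_{\ell}\phi_{F}(z_{\ell}) + \log\prod_{i}c_{i}^{2}.
\]
Against $\nu^{\otimes N_{k}}$ this integrates termwise since each pair (resp.\ singleton) produces the same two- (resp.\ one-) point integral.

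Next, from the Green-function characterization in Prop \ref{prop:prop of energy when =00005Cphi is cont}, on the sphere one has the explicit form $G(z,w) = -\log|z-w|^{2} + \psi_{0}(z) + \psi_{0}(w) - C_{0}$ with $C_{0} = \int\psi_{0}\,\mu_{0} = 1$ (a one-line Beta-integral). Substituting $\log|z-w|^{2} = -G + \psi_{0}(z) + \psi_{0}(w) - 1$ into the display above and using $\int G\,\nu\otimes\nu = 2E_{\psi_{0}}(\nu)$, which is \eqref{eq:expressions for E psi not}, I obtain
\[
\int\log|\Psi^{(k)}|^{2}_{k\psi_{0}+\phi_{F}}\nu^{\otimes N_{k}} = -N_{k}(N_{k}-1)E_{\psi_{0}}(\nu) + N_{k}(N_{k}-1-k)\!\int\!\psi_{0}\,\nu - \tbinom{N_{k}}{2}C_{0} - N_{k}\!\int\!\phi_{F}\,\nu + \log\!\prod_{i}c_{i}^{2}.
\]
The critical cancellation is $N_{k}-1-k = 0$, which holds because $N_{k} = k+1$, killing the awkward dependence on $\int\psi_{0}\,\nu$ and isolating the term $E_{\psi_{0}}(\nu)$.

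What remains is pure Stirling bookkeeping. Expanding $\log\prod_{i}c_{i}^{2} = N_{k}\log(k+1) + \sum_{j=0}^{k}\log\binom{k}{j}$, and using $\log\binom{k}{j} = kH(j/k) - \tfrac{1}{2}\log\tfrac{2\pi j(k-j)}{k} + O(1)$ with $H(p) = -p\log p - (1-p)\log(1-p)$ and $\int_{0}^{1}H(p)\,dp = \tfrac{1}{2}$, one obtains $\sum_{j=0}^{k}\log\binom{k}{j} = \tfrac{k^{2}}{2} - \tfrac{k\log k}{2} + O(k)$. Combining with Stirling for $\log N_{k}!$, the constant $C_{0}/2 = \tfrac{1}{2}$ coming from $\binom{N_{k}}{2}C_{0}/(kN_{k})$ is exactly cancelled by a $-\tfrac{1}{2}$ from the Stirling asymptotics, while the $\tfrac{k\log k}{2}$ contribution leaves a residual error $\tfrac{1}{2}\cdot\tfrac{\log N_{k}}{N_{k}} + O(1/N_{k})$. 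This yields both the limit \eqref{eq:exact mean energy on sphere} and the one-sided bound with $C_{(\mathbb{P}^{1},\mathcal{O}(1))} = \tfrac{1}{2}$. The main obstacle to extending to general $(X,L)$ is the absence of a Vandermonde on an arbitrary Riemann surface: the pair-interaction structure then holds only asymptotically and must be recovered from off-diagonal decay of the Bergman kernel, which is what \cite{berm17} carries out.
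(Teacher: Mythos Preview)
Your argument is correct and follows the same route as the paper: on $(\mathbb{P}^{1},\mathcal{O}(1))$ both proofs write $H_{\psi_{0}}^{(N)}$ via the Vandermonde identity (the paper's formula \eqref{eq:H in terms of w log in pf}), integrate against $\nu^{\otimes N}$ using Fubini, and convert the pair term to $E_{\psi_{0}}(\nu)$ via the Green-function relation $G = G_{0} - 1$ (the paper packages this as $w - G = 2C_{0}$ with its $C_{0} = 1/2$, which is half your $C_{0}$ but internally consistent). Both defer the higher-genus case to \cite{berm17}.

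The one genuine difference is how the constant is pinned down. After reaching
\[
k^{-1}N_{k}^{-1}\!\int H_{\psi_{0}}^{(N_{k})}\nu^{\otimes N_{k}} = E_{\psi_{0}}(\nu) + C_{1} + \tfrac{1}{2}N_{k}^{-1}\log N_{k} + O(N_{k}^{-1}),
\]
the paper does \emph{not} compute $C_{1}$ directly; instead it invokes the LDP from \cite{bermldpgibbs} together with the exact normalization $\int e^{-H_{\psi_{0}}^{(N)}}\mu_{0}^{\otimes N} = N!$ to force $C_{1} = 0$. You instead carry out the Stirling bookkeeping explicitly (via $\sum_{j}\log\binom{k}{j} = \tfrac{k^{2}}{2} - \tfrac{k\log k}{2} + O(k)$), which is more elementary and self-contained but requires tracking several sub-leading terms. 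One small verbal slip: the cancelling $-\tfrac{1}{2}$ does not come from ``Stirling for $\log N_{k}!$'' but from the leading $\tfrac{k^{2}}{2}$ in $\log\prod c_{i}^{2}$ after division by $kN_{k}$; your computation is nonetheless correct.
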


\begin{proof}
Consider the case when $X$ is the Riemann sphere and $L$ is the
line bundle of degree one and assume for simplicity that $F$ and
$\phi_{F}$ are trivial (a similar proof applies in the general case).
Then we have $N=k+1$ and the corresponding Hamiltonian may be expressed
as follows on $\C\subset X$ (using the identifications in Section
\ref{subsec:Compactification-of-}):
\begin{equation}
H_{\psi_{0}}^{(N)}(z_{1},...,z_{N})=\frac{1}{2}\sum_{i\neq j}w(z_{i},z_{j})+F(N+1)+\log N!\,\,\,\,-w(z,z'):=\log|z-z'|^{2}-\psi_{0}(z)-\psi_{0}(z')\label{eq:H in terms of w log in pf}
\end{equation}
(see Lemma \ref{lem:slater as vandermonde and adjoint}). Fubini's
theorem thus gives
\[
k^{-1}N_{k}^{-1}\int_{X^{N}}H^{(N_{k})}\nu^{\otimes N_{k}}=\frac{1}{2}\int_{X}w\nu^{\otimes2}+\frac{F(N+1)+\log N!}{N(N-1)}=E_{\psi_{0}}(\nu)+\left(C_{0}+\frac{F(N+1)+\log N!}{N(N-1)})\right)
\]
using that $(w-G):=2C_{0}$ is constant in the last equality (see
the proof of Lemma \ref{lem:difference energies plane sphere}). By
the asymptotic expansion in Lemma \ref{lem:def and exp of F} this
means that there exists a constant $C_{1}$ such that
\[
k^{-1}N_{k}^{-1}\int_{X^{N}}H^{(N_{k})}\nu^{\otimes N_{k}}=E_{\psi_{0}}(\nu)+C_{1}+\frac{1}{2}N^{-1}\log N+O(N^{-1})
\]
All that remains is to verify that
\begin{equation}
C_{1}=0\label{eq:constant C zero is}
\end{equation}
 This is essentially well-known and can be checked by explicit calculation,
but here we instead provide the following ``high-brow proof''. By
general results \cite{bermldpgibbs} about large deviation principles
for Hamiltonians $H_{W}^{(N)}$ of the form \ref{eq:Hamilt for W and V intro}
(with $V=0)$ 
\[
-\lim_{N\rightarrow\infty}\frac{1}{N^{2}}\log\int_{X}e^{-H_{W}^{(N)}}\mu_{0}^{\otimes N}=\inf_{\nu\in P(X)}E_{W}(\nu),\,\,\,\,E_{W}(\nu):=\lim_{N\rightarrow\infty}N^{-2}\int_{X^{N}}H_{W}^{(N)}\nu^{\otimes N}
\]
But in the present case $\int_{X}e^{-H_{W}^{(N)}}\mu_{0}^{\otimes N}=N!$
(by the orthonormality condition) and hence the infimum of the corresponding
functional $E_{W}$ vanishes. Since the infimum of $E_{\psi_{0}}$
vanishes this forces the equality \ref{eq:constant C zero is}. The
case when $X$ has genus at least one is shown in the companion paper
\cite{berm17}, using the bosonization formula as a replacement for
formula \ref{eq:H in terms of w log in pf}. 
\end{proof}
\begin{prop}
\label{prop:Lower bd on part func riemann surf}Let $(X,L)$ be a
polarized Riemann surface, $\phi$ a metric on $L$ satisfying $A0$
and $u\in H^{1}(X).$ Then, for any $\beta>0,$ 

\[
\liminf_{N\rightarrow\infty}\frac{1}{\beta kN_{k}}\log Z_{N_{k},\beta}[dV,k(\phi+u)]\geq-\mathcal{F}(\phi+u)
\]
\end{prop}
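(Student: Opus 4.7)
The plan is to combine the Gibbs variational principle (Lemma \ref{lem:Gibbs}) with the Gibbs-measure representation of the $\beta$-ensemble (Lemma \ref{lem:beta-ens as Gibbs measure}) and the asymptotic formula for the one-body average of the reference Hamiltonian (Lemma \ref{lem:asy of mean energy}). Using the decomposition $H^{(N_k)}_{\phi+u} = H^{(N_k)}_{\psi_0} + k\sum_{i}(\phi+u-\psi_0)(x_i)$ and Fubini, for any smooth volume form $\nu \in \mathcal{P}(X)$ (so that $D_{dV}(\nu) < \infty$) Lemma \ref{lem:Gibbs} yields
\[
\frac{1}{\beta k N_k}\log Z_{N_k,\beta}[dV,k(\phi+u)] \;\geq\; -\frac{1}{k N_k}\int_{X^{N_k}} H^{(N_k)}_{\psi_0}\,\nu^{\otimes N_k} - \int_X(\phi+u-\psi_0)\nu - \frac{D_{dV}(\nu)}{\beta k N_k}.
\]

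Passing to the limit $k\to\infty$, the first term on the right converges to $-E_{\psi_0}(\nu)$ by Lemma \ref{lem:asy of mean energy}, the third term vanishes, and the middle term is $k$-independent. This gives
\[
\liminf_{k\to\infty}\frac{1}{\beta k N_k}\log Z_{N_k,\beta}[dV,k(\phi+u)] \;\geq\; -E_{\psi_0}(\nu) - \int_X(\phi+u-\psi_0)\nu \;=:\; -E_{\phi+u}(\nu),
\]
the last equality being the definition \ref{eq:def of E in sing setting} of the energy functional in the admissible/$H^1$ setting.

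To conclude, I would take the supremum of the right hand side over such test measures $\nu$. By the Legendre--Fenchel identity \ref{eq:expression for leg transform in term sof H^1-1}, used as the very definition of $\mathcal{F}_\phi(u)$ in the singular setting, $\mathcal{F}(\phi+u) = \inf_{\mu\in\mathcal{P}(X)} E_{\phi+u}(\mu)$, with infimum attained at the equilibrium measure $\mu_{\phi+u}$. The main obstacle is then the density step: one must exhibit smooth volume forms $\nu_\epsilon$ with finite entropy for which $E_{\phi+u}(\nu_\epsilon)\to E_{\phi+u}(\mu_{\phi+u}) = \mathcal{F}(\phi+u)$. My plan is to regularize $\mu_{\phi+u}$ via the heat semigroup of $\omega_0$ (or by convolution in local charts), obtaining smooth $\nu_\epsilon$ converging weakly and in $H^{-1}$ to $\mu_{\phi+u}$; by Prop \ref{prop:prop of energy when =00005Cphi is cont} this makes $E_{\psi_0}(\nu_\epsilon) \to E_{\psi_0}(\mu_{\phi+u})$, and $\int u\,d\nu_\epsilon \to \int u\,d\mu_{\phi+u}$ for $u\in H^1(X)$. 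The convergence of the $\phi$-pairing is the most delicate point and is where A0 enters: since $\mu_{\phi+u}$ does not charge the polar set $\{\phi=+\infty\}$ and $\phi$ is continuous on its complement, truncation/minorization of $\phi$ by continuous functions together with monotone convergence handles the remaining term.
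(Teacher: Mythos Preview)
Your proposal is correct and follows essentially the same route as the paper: Gibbs variational principle (Lemma~\ref{lem:Gibbs}) applied to the decomposition $H^{(N_k)}_{\phi+u}=H^{(N_k)}_{\psi_0}+k\sum_i(\phi+u-\psi_0)(x_i)$, combined with Lemma~\ref{lem:asy of mean energy} for the mean-energy limit, followed by a density step in which $\mu_{\phi+u}$ is approximated by smooth volume forms $\nu_\epsilon$ via the heat kernel so that $E_{\psi_0}(\nu_\epsilon)\to E_{\psi_0}(\mu_{\phi+u})$ and $\int u\,\nu_\epsilon\to\int u\,\mu_{\phi+u}$. The only cosmetic difference is in the handling of the $\phi$-term under A0: the paper phrases it as restricting to $\mathcal{P}(X\setminus K)$ (using that finite-energy measures do not charge the polar set $K$), whereas you phrase it via truncation and monotone convergence---these amount to the same argument.
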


\begin{proof}
We need to show that 
\begin{equation}
\liminf_{N\rightarrow\infty}\frac{1}{\beta kN_{k}}\log Z_{N_{k},\beta}[\mu_{0},k(\phi+u)]\geq-E_{\phi+u}(\mu_{\phi+u}),\label{eq:lower bound in proof lower bound}
\end{equation}
 To prove \ref{eq:lower bd on part} we decompose $\phi=\psi_{0}+v$.
We can then write and
\begin{equation}
Z_{N}:=Z_{N_{k},\beta}[dV,k(\phi+u)]:=\int e^{-\beta H^{(N)}}\mu_{0}^{\otimes N},\,\,\,H^{(N)}:=H_{\psi_{0}}^{(N)}+\sum_{i=1}^{N}k(v(x_{i})+u(x_{i})),\,\,\,\label{eq:N particle energy sphere in proof ldp sing}
\end{equation}
 where $H_{\psi_{0}}^{(N)}$ is the Hamiltonian defined in Lemma \ref{lem:beta-ens as Gibbs measure}.
For any fixed volume form $\nu\in\mathcal{P}(X)$ combining Lemmas
\ref{lem:Gibbs} and \ref{lem:asy of mean energy} gives

\begin{equation}
\limsup_{N\rightarrow\infty}-\frac{1}{\beta kN}\log Z_{N}\leq\int_{X^{2}}E_{\psi_{0}}(\mu)+\int_{X}v\nu+\int_{X}u\nu=E_{\phi+u}(\nu)\label{eq:lower bd on part}
\end{equation}
 Next, we recall that for any measure $\mu\in\mathcal{P}(X)$ such
that $E_{\psi_{0}}(\mu)<\infty,$ i.e. $\mu\in H^{-1}(X),$ there
exists a family $\nu_{\epsilon}$ of volume forms in $\mathcal{P}(X)$
such that $\nu_{\epsilon}\rightarrow\mu$ in $H^{-1}(X)$ as $\epsilon\rightarrow0$
(this follows, for example, from properties of the heat-kernel, as
in \cite[Lemma 33]{z-z}). In particular, 
\[
E_{\psi_{0}}(\nu_{\epsilon})\rightarrow E_{\psi_{0}}(\mu),\,\,\,\int_{X}u\nu_{\epsilon}\rightarrow\int_{X}u\mu,
\]
 using that $u\in H^{1}(X).$ In case when when $v\in C(X)$ the corresponding
convergence trivially also holds when integrating against $v.$ Taking
the inf over all volume forms $\nu$ in the inequality \ref{eq:lower bd on part}
thus proves the desired lower bound on $Z_{N}$ in the case when $v\in C(X).$

In the case when $v$ is merely continuous on $U:=X-K,$ where $K$
is a compact polar subset of $X,$ the previous argument gives 
\[
\limsup_{N\rightarrow\infty}-\frac{1}{\beta kN}\log Z_{N}\leq\inf_{\nu\in\mathcal{P}(U)\cap H^{-1}}\int_{X^{2}}E_{\psi_{0}}(\mu)+\int_{X}v\nu+\int_{X}u\nu.
\]
 But since $K$ is polar we have that $\mu(K)=0$ for any measure
such that $E_{\psi_{0}}(\mu)<\infty.$ In particular, this is the
case for the the minimizer $\mu_{\phi+u}$ of $E_{\phi+u}$ and we
can thus identify $\mu_{\phi+u}$ with an element in $\mathcal{P}(U)$
showing that the inf in the right hand side in formula \ref{eq:lower bd on part}
may as well be taken over $\mathcal{P}(X)\cap H^{-1}(X).$ 
\end{proof}
\begin{rem}
\label{rem:contr to general lsc phi}Even if the equilibrium measure
$\mu_{\phi}$ is well-defined for a general lsc $\phi$ with the property
that $\{\phi<\infty\}$ is non-polar, the lower bound in the previous
theorem does not hold in this generality, as opposed to the upper
bound \ref{eq:limif L in prop} (a concrete example is provided in
\cite{berm15}). 
\end{rem}

\subsection{\label{subsec:Main-results-for beta ens on X}Main results for $\beta-$ensembles }

In this section we fix the geometric data $(\phi,dV,\phi_{F})$ as
in Section \ref{subsec:The-asymptotic-setting of seq beta ens} and
consider, for a given $\beta\in]0,1]$ the corresponding sequence
of $\beta-$ensembles on $X.$ Given a function $u\in H^{1}(X)$ we
will denote by $U^{N_{k}}$ the corresponding random variable on the
$\beta-$ensemble with $N_{k}$ particles, defined by formula \ref{eq:def of linear stat}.
We will obtain bounds on the corresponding moment generating function
$\E(e^{-\beta kN_{k}U_{N_{k}}}),$ involving the ``error sequence''
defined by 

\begin{equation}
\epsilon_{N_{k},\beta}[\phi,dV,\phi_{F}]:=-\frac{1}{\beta N_{k}k}\log Z_{N_{k},\beta}[dV,k\phi+\phi_{F}]-\mathcal{F}(\phi)\label{eq:def of epsion error in ineq}
\end{equation}
(where the dependence on $dV$ and $\phi_{F}$ has been suppressed
in the left hand side). We recall that in the adjoint setting $\epsilon_{N_{k},\beta}[\phi,dV,\phi_{F}]$
only depends on the metric $\phi$ on $L$ and $\beta=1.$ Accordingly,
the corresponding error sequence will then be denoted by $\epsilon_{N_{k}}[\phi].$
It follows directly from the definition $\epsilon_{N_{k},\beta}[\phi]=0$
if $\phi=\psi_{0},$ i.e. if the curvature form $\omega^{\phi}$ defines
a metric on $X$ with constant scalar curvature. In the statements
of the main results below below we will suppress the explicit dependence
on the geometric data from the notation for the corresponding error
sequences. 

The starting point of the proofs is the basic formula

\[
\E(e^{-\beta kN_{k}U_{N_{k}}})=\frac{Z_{N_{k},\beta}[dV,k(\phi+u)+\phi_{F}]}{Z_{N_{k},\beta}[dV,k\phi+\phi_{F}]}
\]

We start with the adjoint setting, which gives the cleanest inequality:
\begin{thm}
\label{thm:upper bd on part funct riem surf adj}If $\phi$ is an
admissible metric on $L$ and the following inequalities hold in the
adjoint setting (in particular, $\beta=1):$
\[
\frac{1}{kN_{k}}\log\E(e^{-kN_{k}U_{N_{k}}})\leq-\mathcal{F}(\phi+u)+\mathcal{F}(\phi)+\epsilon_{N_{k}}+\delta_{k}(J(u)+C_{0})
\]
where $\delta_{kL}$ is the exponentially small sequence appearing
in Prop \ref{prop:lower bound on L funct in adj} (which only depends
on $(X,kL)$ and vanishes if $X$ is the Riemann sphere) and

\[
C_{0}:=\sup_{X\times X}(-G_{0})
\]
where $G_{0}$ is the Green function of the canonical metric on $X.$
\end{thm}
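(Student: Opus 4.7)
The starting point is the identity
\[
\frac{1}{kN_k}\log\E(e^{-kN_k U_{N_k}}) \;=\; \frac{1}{kN_k}\log Z_{N_k,1}[k(\phi+u)+\phi_F] \;-\; \frac{1}{kN_k}\log Z_{N_k,1}[k\phi+\phi_F],
\]
in which the second term is, by the very definition of $\epsilon_{N_k}$, equal to $-\epsilon_{N_k}[\phi]-\mathcal{F}(\phi)$. So the theorem reduces to the upper bound
\[
\frac{1}{kN_k}\log Z_{N_k,1}[k(\phi+u)+\phi_F] \;\leq\; -\mathcal{F}(\phi+u)+\delta_k(J(u)+C_0),
\]
up to the normalizations $\mathcal{E}(\psi_0)=0$ and $\epsilon_{N_k}[\psi_0]=0$. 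The first step is to replace $\phi+u$ by its psh envelope. Since $P(\phi+u)\leq\phi+u$ quasi-everywhere, the integrand $|\Psi|^2e^{-k(\phi+u)-\phi_F}$ is dominated pointwise by $|\Psi|^2e^{-kP(\phi+u)-\phi_F}$, giving $\log Z_{N_k,1}[k(\phi+u)+\phi_F]\leq\log Z_{N_k,1}[kP(\phi+u)+\phi_F]$; moreover $P(\phi+u)\in PSH(L)_b$ by admissibility of $\phi$ and $\mathcal{F}(\phi+u)=\mathcal{E}(P(\phi+u))$ via $P\circ P=P$.

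Next, I would apply Proposition \ref{prop:lower bound on L funct in adj} with $\psi=P(\phi+u)$ and the canonical reference $\psi_0$. Since the proposition is stated for continuous psh metrics, one first approximates $P(\phi+u)$ by a decreasing sequence of smooth elements of $PSH(L)$ and passes to the limit using Proposition \ref{prop:cont of beatiful E under monotone} together with monotone convergence applied to the partition functions. Modulo the normalizations mentioned, this gives
\[
\frac{1}{kN_k}\log Z_{N_k,1}[kP(\phi+u)+\phi_F] \;\leq\; -\mathcal{F}(\phi+u) \;+\; \delta_k\bigl(\sup_X(P(\phi+u)-\psi_0)-\mathcal{E}(P(\phi+u))\bigr).
\]

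It then remains to establish the key inequality
\[
\sup_X(P(\phi+u)-\psi_0)-\mathcal{E}(P(\phi+u)) \;\leq\; J(u)+C_0.
\]
For this, Lemma \ref{lem:sup estimate} applied with $v=P(\phi+u)-\psi_0\in PSH(X,\omega_0)$, $\mu=\omega_0/\deg L$ and $\varphi=0$ yields $\sup_X v\leq \int v\,\omega_0/\deg L + C_0$. Combining with the integration-by-parts identity extracted from formula \ref{eq:energy functional in terms of Dirichlet}, namely $\int v\,\omega_0/\deg L = \mathcal{E}(\psi_0+v)+J(v)$, gives
\[
\sup_X(P(\phi+u)-\psi_0)-\mathcal{E}(P(\phi+u)) \;\leq\; J(P(\phi+u)-\psi_0)+C_0.
\]
The Dirichlet contraction in Lemma \ref{lem:smaller than dirichlet} furnishes $J(P(\phi+u)-P\phi)\leq J(u)$, which produces the clean bound whenever $P\phi=\psi_0$ (in particular when $\phi=\psi_0$).

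The main obstacle is precisely the step from $J(P(\phi+u)-\psi_0)$ to $J(u)$ when $\phi$ is a general admissible metric. Writing $P(\phi+u)-\psi_0 = (P(\phi+u)-P\phi)+(P\phi-\psi_0)$ one controls the first summand by $J(u)$ via Lemma \ref{lem:smaller than dirichlet}, while the remaining $u$-independent piece $J(P\phi-\psi_0)$ and the cross terms must be re-absorbed into the definition of $\epsilon_{N_k}[\phi]$ (which is itself a $\phi$-dependent quantity) so that the clean form $\delta_k(J(u)+C_0)$ survives. A cleaner variant would be to apply Proposition \ref{prop:lower bound on L funct in adj} with $P\phi$ (rather than $\psi_0$) playing the role of reference — at the price of enlarging $\delta_k$ by a term measuring the deviation of $P\phi$ from having constant scalar curvature — which again ends up being absorbed into the error sequence $\epsilon_{N_k}[\phi]$.
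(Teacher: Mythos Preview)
Your overall architecture matches the paper: rewrite the moment generating function as a difference of log-partition functions, replace $\phi+u$ by $P(\phi+u)$, invoke Proposition~\ref{prop:lower bound on L funct in adj}, and reduce to bounding the remainder
\[
R(\phi,u)\;=\;\sup_X\bigl(P(\phi+u)-\psi_0\bigr)-\mathcal{E}\bigl(P(\phi+u)\bigr).
\]
The gap is exactly where you flag it, and your proposed repairs do not close it. You apply Lemma~\ref{lem:sup estimate} with $\mu=\omega_0$ (i.e.\ $\varphi=0$), which lands you on $J(P(\phi+u)-\psi_0)$; the contraction $J(P(\phi+u)-P\phi)\le J(u)$ from Lemma~\ref{lem:smaller than dirichlet} then leaves behind cross terms and $J(P\phi-\psi_0)$. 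These cannot be ``absorbed into $\epsilon_{N_k}[\phi]$'': that quantity is already fixed by formula~\eqref{eq:def of epsion error in ineq}, and the residual terms are multiplied by $\delta_k$, not by $1$, so they would alter the stated error $\delta_k(J(u)+C_0)$ (e.g.\ to $\delta_k(2J(u)+C_\phi')$ after AM--GM). Likewise, using $P\phi$ as the reference in Proposition~\ref{prop:lower bound on L funct in adj} is not available: that proposition requires the reference to have constant scalar curvature.

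The paper's fix is to apply Lemma~\ref{lem:sup estimate} with $\mu=dd^c(P\phi)$ rather than $\omega_0$. This introduces the oscillation of $P\phi-\psi_0$ into the constant, but after adding and subtracting $\mathcal{E}(P\phi)$ one obtains
\[
R(\phi,u)\;\le\;\mathcal{E}(P\phi)-\mathcal{E}(P(\phi+u))+\int_X\bigl(P(\phi+u)-P\phi\bigr)\,dd^c(P\phi)+C_\phi,
\]
with $C_\phi:=C_0+\inf_X(P\phi-\psi_0)-\mathcal{E}(P\phi)\le C_0$ by the elementary bound $\inf_X(\psi-\psi_0)\le\mathcal{E}(\psi)$. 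Now the orthogonality relation~\eqref{eq:og relation} together with $P(\phi+u)\le\phi+u$ converts the integral to $\int_X u\,dd^c(P\phi)$, and the \emph{first} inequality in Lemma~\ref{lem:smaller than dirichlet} (not the Dirichlet-contraction form you used) reads
\[
\mathcal{E}(P\phi)-\mathcal{E}(P(\phi+u))\;\le\;J(u)-\int_X u\,dd^c(P\phi)/\deg L,
\]
so the $\int u\,dd^c(P\phi)$ terms cancel exactly and one is left with $R(\phi,u)\le J(u)+C_0$. The point you were missing is this pairing: integrate against $dd^c(P\phi)$ in the sup-estimate so that the linear-in-$u$ term produced there is precisely the one subtracted in the first form of Lemma~\ref{lem:smaller than dirichlet}.
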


\begin{proof}
First assume that $\phi$ and $u$ are smooth. Given a metric $\Phi$
on $L$ we we will write 
\[
\mathcal{L}_{k}(\Phi)=-\frac{1}{kN_{k}}\log Z_{N,1}[k\Phi]
\]
 in terms of the partition function appearing in the adjoint setting.

\emph{Step one:} let $\delta_{k}:=\delta(X,kL)$ be as in the statement
of Prop \ref{prop:lower bound on L funct in adj}, then

\[
\mathcal{L}_{k}(\phi+u)-\mathcal{L}_{k}(\phi)\geq\left(-\mathcal{E}(\psi)-\mathcal{E}(\psi_{0})\right)-\left(\mathcal{E}(P\phi)-\mathcal{E}(\psi_{0})\right)-\delta_{k}R(\phi,u),
\]
\[
R(\phi,u):=\mathcal{E}_{L}(\psi_{0})-\mathcal{E}_{L}(P(\phi+u))+\sup_{X}(P(\phi+u)-\psi_{0})
\]

The inequality to be proved may be rewritten as 
\begin{equation}
k^{-1}\mathcal{L}_{kL+K_{X}}(k\phi+u)-k^{-1}\mathcal{L}_{kL+K_{X}}(\phi)\geq-k^{-1}\mathcal{E}_{kL}(P(k(\phi+u)))+k^{-1}\mathcal{E}_{kL}(P(k\phi))-o(1),\label{eq:ineq in adj setting in proof of thm upper bound part}
\end{equation}
 using that $\mathcal{F}_{\phi}(u):=\mathcal{E}(P(\phi+u)$  and $k^{-1}\mathcal{E}_{kL}(k\psi)=\mathcal{E}_{L}(\psi),$
as follows immediately from the definition. Setting $\mathcal{L}_{k}(\psi):=k^{-1}\mathcal{L}_{kL+K_{X}}(k\psi)$
we rewriting the left hand side in the inequality above as 
\[
\mathcal{L}_{k}(\phi+u)-\mathcal{L}_{k}(\phi)=\left(\mathcal{L}_{k}(\phi+u)-\mathcal{L}_{k}(\psi_{0})\right)-\left(\mathcal{L}_{k}(\phi)-\mathcal{L}_{k}(\psi_{0})\right)
\]
Next, we observe that since
\[
\psi:=P(\phi+u)\leq\phi+u
\]
 and $\phi\mapsto\mathcal{L}_{k}(\phi)$ is increasing we get
\[
\mathcal{L}_{k}(\phi+u)-\mathcal{L}_{k}(\phi)\geq\left(\mathcal{L}_{k}(\psi)-\mathcal{L}_{k}(\psi_{0})\right)-\left(\mathcal{L}_{k}(\phi)-\mathcal{L}_{k}(\psi_{0})\right)
\]
Hence, applying the lower and upper bounds in Prop \ref{prop:lower bound on L funct in adj}
and Prop \ref{prop:-asympt of L func} to the first and second term
above, respectively, concludes the proof of Step one. 

Step two: the following inequality holds:
\[
R(\phi,u)\leq J(u)+C
\]
 for a constant $C$ only depending on $\phi.$ 

To prove this first note that it for any $v\in PSH(X,\mu_{0})$ 
\[
\sup_{X}v\leq\int_{X}v(dd^{c}P\phi)+C_{1},\,\,C_{1}:=-\sup_{X}(P\phi-\psi_{0})-\inf_{X}(P\phi-\psi_{0})+C_{0},
\]
as follows from Lemma \ref{lem:sup estimate}. In particular taking
$v:=P(\phi+u)-\psi_{0}$ gives 

\[
\sup_{X}(P(\phi+u)-\psi_{0})\leq\int_{X}P(\phi+u)-\psi_{0})dd^{c}(P\phi)+C_{1}\leq
\]
\[
\int_{X}P(\phi+u)-P\phi)dd^{c}(P\phi)+B_{\phi},\,\,\,C_{2}:=C_{0}+\inf_{X}(P\phi-\psi_{0})
\]
 Hence, 
\[
R(\phi,u)\leq\mathcal{E}_{L}(P\phi)-\mathcal{E}_{L}(P(\phi+u))+\int_{X}P(\phi+u)-P\phi)dd^{c}(P\phi)+C_{\phi},
\]
 where 
\[
C_{\phi}:=C_{0}+\inf_{X}(P\phi-\psi_{0})-\left(\mathcal{E}(P\phi)-\mathcal{E}(\psi_{0})\right)\leq C_{0},
\]
 using that, in general, if $\psi\in PSH(X,L)$ is locally bounded,
then it follows readily from the definition of the functional $\mathcal{E}$
that
\[
\inf_{X}(\psi-\psi_{0})\leq\mathcal{E}(\psi)-\mathcal{E}(\psi_{0})\leq\sup_{X}(\psi-\psi_{0})
\]

Now, using that $P(\phi+u)\leq\phi+u$ and applying the orthogonality
relation \ref{eq:og relation} to replace $P\phi$ in the second term
above with $\phi$ gives 
\[
R(\phi,u)\leq\mathcal{E}_{L}(P\phi)-\mathcal{E}_{L}(P(\phi+u))+\int_{X}udd^{c}(P\phi)+C_{0}
\]
 and applying Lemma \ref{lem:smaller than dirichlet} thus proves
the inequality \ref{eq:ineq in adj setting in proof of thm upper bound part}
in the case when $\phi$ and $u$ are smooth. The general case now
follows by approximation, writing $\phi$ as a decreasing limit of
smooth metrics and $u$ as a strong limit in $H^{1}(X)$ of smooth
functions (and then using Theorem \ref{thm:conv of E wrt phi and u}). 
\end{proof}
Next, we turn to the general setting in $\beta-$ensembles on polarized
Riemann surfaces. This can be viewed as a special case of the previous
setting, but with $L$ replaced by a $\Q-$line bundle $L_{k}$ depending
on $k.$ Accordingly, one can get an inequality as in the previous
theorem with $\phi$ replaced by $\phi_{k},$ coinciding with the
right hand side in the inequality in question, to the leading order.
Here we content ourselves with an asymptotic result which will be
sufficient for the applications in \cite{berm15}.
\begin{thm}
\label{thm:non-gaussian ineq general setting}Assume given an admissible
metrics $\phi$ on $L$ and a sequence $u^{(k)}\in H^{1}(X)$ such
that $u^{(k)}\rightarrow u$ strongly in $H^{1}(X).$ Denote by $U_{N_{k}}^{(k)}$
the corresponding normalized linear statistics. If $\beta\leq1,$
then 

\[
\limsup_{k\rightarrow\infty}\frac{1}{kN_{k}}\log\E(e^{-kN_{k}U_{N_{k}}^{(k)}})\leq-\mathcal{F}(\phi+u)+\mathcal{F}(\phi)
\]
\end{thm}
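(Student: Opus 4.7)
From the opening display of Section~\ref{subsec:Main-results-for beta ens on X} the moment generating function is the partition-function ratio
\[
\E(e^{-\beta kN_k U_{N_k}^{(k)}})\;=\;\frac{Z_{N_k,\beta}[dV,\,k(\phi+u^{(k)})+\phi_F]}{Z_{N_k,\beta}[dV,\,k\phi+\phi_F]}\,,
\]
so after dividing by $\beta kN_k$ inside the logarithm the claim reduces to matching estimates on numerator and denominator. (We read the statement with the $\beta$-normalization consistent with the preceding display; the same argument gives the corresponding bound for the MGF as typeset.) The plan is threefold: (i) reduce $\beta\le 1$ on the numerator to the determinantal case $\beta=1$ by a Jensen inequality; (ii) apply the sharp $\beta=1$ inequality of Theorem~\ref{thm:upper bd on part funct riem surf adj} at each $u^{(k)}$, then pass to the limit via the $H^1$-continuity of the free energy from Theorem~\ref{thm:conv of E wrt phi and u}; (iii) use Proposition~\ref{prop:Lower bd on part func riemann surf} to obtain the lower bound on the denominator.

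\textbf{Reduction to $\beta=1$ and the determinantal bound.} Writing $|\Psi|^{2\beta}_{k\Phi+\phi_F}=(|\Psi|^{2}_{k\Phi+\phi_F})^{\beta}$ and using concavity of $x\mapsto x^\beta$ for $\beta\le 1$ on the probability measure $dV^{\otimes N_k}/V(X)^{N_k}$, Jensen's inequality delivers
\[
Z_{N_k,\beta}[dV,k\Phi+\phi_F]\ \le\ Z_{N_k,1}[dV,k\Phi+\phi_F]^{\beta}\,V(X)^{N_k(1-\beta)}.
\]
Specializing to $\Phi=\phi+u^{(k)}$ and dividing by $\beta kN_k$, the $V(X)$-factor contributes only $O(1/k)$. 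To handle the remaining $\beta=1$ piece I apply Theorem~\ref{thm:upper bd on part funct riem surf adj} (or its analog in the present non-adjoint setting, where the fixed continuous data $(dV,\phi_F)$ only shift the error by a quantity absorbed into a modified sequence $\epsilon'_{N_k}\to 0$) at each $k$ to the test function $u^{(k)}\in H^1(X)$, and combine with the asymptotic $\frac{1}{kN_k}\log Z_{N_k,1}[k\phi+\phi_F]\to -\mathcal{F}(\phi)$ from Proposition~\ref{prop:-asympt of L func}. This yields
\[
\tfrac{1}{kN_k}\log Z_{N_k,1}[k(\phi+u^{(k)})+\phi_F]\ \le\ -\mathcal{F}(\phi+u^{(k)})+o_k(1)+\delta_k\bigl(J(u^{(k)})+C_0\bigr).
\]
Strong $H^1$-convergence $u^{(k)}\to u$ then gives $\mathcal{F}(\phi+u^{(k)})\to\mathcal{F}(\phi+u)$ by Theorem~\ref{thm:conv of E wrt phi and u}, while $J(u^{(k)})\to J(u)<\infty$ remains bounded and $\delta_k$ is exponentially small, so the remainder vanishes in the $\limsup$.

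\textbf{Denominator and main obstacle.} Proposition~\ref{prop:Lower bd on part func riemann surf} applied at $u=0$ yields $\liminf_k\tfrac{1}{\beta kN_k}\log Z_{N_k,\beta}[k\phi+\phi_F]\ge -\mathcal{F}(\phi)$. Adding this to the previous bound gives
\[
\limsup_{k\to\infty}\ \tfrac{1}{\beta kN_k}\log\E(e^{-\beta kN_k U_{N_k}^{(k)}})\ \le\ -\mathcal{F}(\phi+u)+\mathcal{F}(\phi),
\]
which is the asserted inequality. The crux is the simultaneous role of strong $H^1$-convergence: it is needed both for Theorem~\ref{thm:conv of E wrt phi and u} to deliver $\mathcal{F}(\phi+u^{(k)})\to\mathcal{F}(\phi+u)$ and to keep $J(u^{(k)})$ bounded so that the exponentially small defect $\delta_k(J(u^{(k)})+C_0)$ appearing in Theorem~\ref{thm:upper bd on part funct riem surf adj} is neutralized in the limit; a weaker mode of convergence (e.g.\ weak $H^1$ or $L^2$) would in general fail on one of these two fronts. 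A secondary technicality is the transfer of Theorem~\ref{thm:upper bd on part funct riem surf adj} from the adjoint to the non-adjoint setting, which only changes the error sequence by a lower-order correction determined by the fixed continuous data $(dV,\phi_F)$.
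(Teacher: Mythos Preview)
Your proof is correct and follows essentially the same route as the paper: Jensen's inequality to pass from $\beta\le 1$ to $\beta=1$ on the numerator, the sharp adjoint bound (Theorem~\ref{thm:upper bd on part funct riem surf adj}) for the $\beta=1$ upper bound, the $H^1$-continuity of the free energy (Theorem~\ref{thm:conv of E wrt phi and u}) to replace $u^{(k)}$ by $u$, and a lower bound on the denominator partition function. The only point the paper makes more explicit is the transfer from the non-adjoint to the adjoint setting: rather than invoking an ``analog'' of Theorem~\ref{thm:upper bd on part funct riem surf adj}, the paper rewrites $kL+F=kL_k+K_X$ with $L_k=L+k^{-1}(F-K_X)$ and $\phi_k=\phi+k^{-1}\phi_E$, so that one is literally in the adjoint setting for the $\mathbb{Q}$-line bundle $L_k$, and then uses a variant of Theorem~\ref{thm:conv of E wrt phi and u} to pass from $\phi_k$ back to $\phi$; this is exactly the ``lower-order correction determined by $(dV,\phi_F)$'' that you allude to.
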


\begin{proof}
\emph{Step 1: The case $\beta=1$:}

We can rewrite 
\[
kL+F=kL_{k}+K_{X},\,\,\,L_{k}=L+k^{-1}(F-K_{X})
\]
 where $E=F-K_{X}$ (the equality above holds in the usual sense of
$\Q-$line bundles). Hence, we are in the previous adjoint setting
with $L$ replaced by $L_{k}$ and $\phi$ replaced by 
\[
\phi_{k}:=\phi+k^{-1}\phi_{E},
\]
 where $\phi_{E}$ depends on $\phi_{F}$ and the volume form $dV.$
We thus get 
\[
\frac{1}{kN_{k}}\log\frac{Z_{N_{k}}[dV,k(\phi+u)]}{Z_{N_{k}}[dV,k\phi]}\leq-E_{\phi+u}(\mu_{\phi+u})+E_{\phi}(\mu_{\phi})+\epsilon_{k}+\delta_{k}J(u),
\]
 where $\epsilon_{k}\rightarrow0$ and $\delta_{k}$ is exponentially
small (and vanishes when $X$ is the Riemann sphere). Letting $k\rightarrow\infty$
then concludes the proof of Step one (using a simple variant of Theorem
\ref{thm:conv of E wrt phi and u} to replace $\phi_{k}$ with $\phi$). 

\emph{Step 2: The case $\beta\leq1$}

By Jensen's inequality we have, when $\beta\leq1,$ that

\[
\frac{1}{\beta kN_{k}}\log\frac{Z_{N_{k},\beta}[dV,k(\phi+u)]}{Z_{N_{k}}[dV,k\phi]}\leq\frac{1}{kN_{k}}\log Z_{N_{k},1}[\mu_{0},k(\phi+u)]-\frac{1}{\beta kN_{k}}\log Z_{N_{k}}[\mu_{0},k\phi]
\]
Hence, we can use the same upper bound on $Z_{N_{k},1}[\mu_{0},k(\phi+u)]$
as in the previous step and conclude exactly as before.

The next theorem gives a sub-Gaussian bound:
\end{proof}
\begin{thm}
\label{thm:gaussian ineq adj and general}Let $\phi$ be an admissible
metric on $L$ and $u\in H^{1}(X).$ In the adjoint setting with $\beta=1$
the following inequalities hold: 
\begin{equation}
\frac{1}{kN_{k}}\log\E(e^{kN_{k}(u-\bar{u})})\leq\frac{1}{2\deg L}\int_{X}du\wedge d^{c}u(1+\delta_{k})+\delta_{k}C_{0}+\epsilon_{N_{k}}\label{eq:gaussian ineq adjoint}
\end{equation}
For general $\beta-$ensembles the following inequality holds when
$\beta\leq1:$
\[
\frac{1}{\beta kN_{k}}\log\E(e^{\beta kN_{k}(u-\bar{u})})\leq\frac{1}{2\deg L}\int_{X}du\wedge d^{c}u(1+k^{-1}B)+k^{-1}B+\epsilon_{N_{k},\beta},
\]
 for a constant $B$ not depending on $u.$ 
\end{thm}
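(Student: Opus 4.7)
The natural strategy is to exponentiate the non-asymptotic inequality of Theorem \ref{thm:upper bd on part funct riem surf adj} applied with $u$ replaced by $-u$, and then use Lemma \ref{lem:smaller than dirichlet} to trade the resulting free-energy increment for a quadratic Dirichlet term plus the linear shift $\bar{u}$. In the adjoint setting with $\beta=1$ the exponential tilt by $k\sum u(x_i)$ amounts to replacing the weight $\phi$ by $\phi-u$ in the partition function, so
\[
\E\bigl(e^{kN_k U_{N_k}}\bigr) \;=\; \frac{Z_{N_k,1}[dV,\,k(\phi-u)]}{Z_{N_k,1}[dV,\,k\phi]}.
\]
Applying Theorem \ref{thm:upper bd on part funct riem surf adj} to the right-hand side with test function $-u$, and using the quadratic nature of $J$ to write $J(-u)=J(u)$, yields
\[
\frac{1}{kN_k}\log\E\bigl(e^{kN_k U_{N_k}}\bigr) \;\leq\; -\mathcal{F}(\phi-u)+\mathcal{F}(\phi)+\epsilon_{N_k}+\delta_k\bigl(J(u)+C_0\bigr).
\]

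The core step is then to control the free-energy increment $\mathcal{F}(\phi)-\mathcal{F}(\phi-u)=\mathcal{E}(P\phi)-\mathcal{E}(P(\phi-u))$ by a quadratic term plus the linear mean $\bar{u}$. Specializing Lemma \ref{lem:smaller than dirichlet} to the test function $-u$, and using that $\bar{u}=\int u\,\mu_\phi=\int u\,dd^c(P\phi)/\deg L$, gives the clean inequality
\[
\mathcal{E}(P\phi)-\mathcal{E}(P(\phi-u)) \;\leq\; J(u)+\bar{u}.
\]
Inserting this into the previous bound and transferring $\bar{u}$ to the left-hand side absorbs the linear term, yielding exactly the sub-Gaussian statement
\[
\frac{1}{kN_k}\log\E\bigl(e^{kN_k(u-\bar{u})}\bigr) \;\leq\; J(u)(1+\delta_k)+\delta_k C_0+\epsilon_{N_k},
\]
with $J(u)=\frac{1}{2\deg L}\int_X du\wedge d^c u$, as required. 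By the approximation in Theorem \ref{thm:conv of E wrt phi and u} it is enough to argue for smooth $\phi$ and $u$, passing to the limit at the end.

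For a general $\beta$-ensemble with $\beta\leq 1$ the plan is to reduce to the previous adjoint case along the lines of Step 2 in the proof of Theorem \ref{thm:non-gaussian ineq general setting}. Jensen applied to the concave map $t\mapsto t^\beta$ gives
\[
\tfrac{1}{\beta}\log Z_{N_k,\beta}[dV,\,k(\phi-u)+\phi_F] \;\leq\; \log Z_{N_k,1}[dV,\,k(\phi-u)+\phi_F],
\]
so that (after recognising $kL+F=kL_k+K_X$ with $L_k=L+k^{-1}(F-K_X)$ and $\phi_k=\phi+k^{-1}\phi_E$) the previous two paragraphs applied to $(L_k,\phi_k)$ bound the numerator by a quadratic Dirichlet term in $u$, while the denominator $Z_{N_k,\beta}[dV,\,k\phi+\phi_F]$ is handled by the definition of $\epsilon_{N_k,\beta}$. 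The extra $O(k^{-1})$ coming from the shift $\phi\mapsto\phi_k$ produces the factor $(1+k^{-1}B)$ multiplying $J(u)$ and the additive $k^{-1}B$. The main obstacle is ensuring that $B$ is genuinely independent of $u$: the shift contributes a correction of the form $k^{-1}\int|\phi_E-\bar\phi_E|\,dd^c(\text{stuff})$, and one must absorb this uniformly in $u$ using the coercivity estimates from Section \ref{subsec:Potential-theoretic-setup} (in particular Lemma \ref{lem:sup estimate} and the sub-level set estimates in Proposition \ref{prop:prop of energy when =00005Cphi is cont}), so that only a quantity depending on $\phi$ and $\phi_F$ remains.
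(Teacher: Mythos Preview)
Your argument for the adjoint case with $\beta=1$ is correct and is exactly the paper's approach: apply Theorem \ref{thm:upper bd on part funct riem surf adj} to $-u$ and then Lemma \ref{lem:smaller than dirichlet} (again with $-u$) to convert the free-energy increment into $J(u)+\bar u$, which absorbs the linear term.

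For the general $\beta\le 1$ case your reduction via Jensen and the passage to $(L_k,\phi_k)$ also matches the paper. The one point where your proposal is off is the handling of the ``main obstacle'': the correction term is precisely the mean mismatch
\[
\int_X\bigl(dd^cP(\phi_k)-dd^cP(\phi)\bigr)\,u,
\]
and neither Lemma \ref{lem:sup estimate} nor Proposition \ref{prop:prop of energy when =00005Cphi is cont} is the right tool for bounding it uniformly in $u$. The paper instead rewrites $P(\phi_k)=(P(\phi_k)-k^{-1}\phi_E)+k^{-1}\phi_E$ and sets $v_k:=k\bigl((P(\phi_k)-k^{-1}\phi_E)-P(\phi)\bigr)$, so the correction becomes $k^{-1}\bigl(\int dd^cv_k\,u+\int u\,dd^c\phi_E\bigr)$. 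The first piece is controlled by integration by parts and Cauchy--Schwarz:
\[
\Bigl|\int dd^cv_k\,u\Bigr|\le \tfrac12\int dv_k\wedge d^cv_k+\tfrac12\int du\wedge d^cu,
\]
and $\int dv_k\wedge d^cv_k$ is bounded by a constant depending only on $\phi$, since $|v_k|\le C_1$ uniformly in $k$ (as $\phi_E$ is locally bounded). The second piece is handled by $\int u\,dd^c\phi_E\le C_2(\int u^2\,\mu_0)^{1/2}$ together with the Poincar\'e inequality on $(X,\mu_0)$, which turns the $L^2$-norm back into $\int du\wedge d^cu$. This is what produces the shape $(1+k^{-1}B)J(u)+k^{-1}B$ with $B$ independent of $u$.
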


\begin{proof}
The first inequality follows directly from Theorem \ref{thm:upper bd on part funct riem surf adj}
combined with Lemma \ref{lem:smaller than dirichlet}. Next consider
the general setting. Using Jensen's inequality as in the proof of
Theorem \ref{thm:non-gaussian ineq general setting} it is enough
to consider the case when $\beta=1.$ Then we can apply Theorem \ref{thm:upper bd on part funct riem surf adj}
with $L$ replaced by $L_{k}$ (using the notation in the proof of
Theorem \ref{thm:non-gaussian ineq general setting}). This gives
the same right hand side as in the inequality \ref{eq:gaussian ineq adjoint}
up to an extra error term 
\[
\int\left(dd^{c}P(\phi_{k})-dd^{c}P(\phi)\right)u
\]
which comes from computing the mean $\bar{u}$ with respect to $\phi_{k}$
and $\phi,$ respectively. To estimate this term we rewrite $P(\phi_{k})=\left(P(\phi_{k})-k^{-1}\phi_{E}\right)+k^{-1}\phi_{E}$
so that the term in question may be expressed as 
\begin{equation}
k^{-1}\left(\int dd^{c}v_{k}u+\int udd^{c}\phi_{E}\right),\,\,\,v_{k}:=k\left(\left(P(\phi_{k})-k^{-1}\phi_{E}\right)-P(\phi)\right)\label{eq:pf of Gaussian ineq adj}
\end{equation}
Since $v_{k}$ defines a bona fide function on $X$ integrating by
parts and using the Cauchy-Schwartz inequality and the inequality
$ab\leq a^{2}/2+b^{2}/2$ gives
\[
\left|\int dd^{c}v_{k}u\right|\leq\frac{1}{2}\int dv_{k}\wedge d^{c}v_{k}+\frac{1}{2}\int_{X}du\wedge d^{c}u
\]
Now decompose
\[
-\int dv_{k}\wedge d^{c}v_{k}=\int v_{k}dd^{c}v_{k}=\int v_{k}dd^{c}P(\phi_{k})-\int v_{k}dd^{c}P(\phi)-k^{-1}\int v_{k}dd^{c}\phi_{E}
\]
Using that $\phi_{E}$ is locally bounded it is not hard to see that
$|v_{k}|\leq C_{1}$ (depending on $\phi)$ and hence 
\[
\int dv_{k}\wedge d^{c}v_{k}\leq C_{1}2\deg L+k^{-1}C_{1}C_{2}
\]
Finally, to estimate the second term in the left hand side of formula
\ref{eq:pf of Gaussian ineq adj}, we first trivially estimate 
\[
\int udd^{c}\phi_{E}\leq C_{2}\int|u|\mu_{0}\leq C_{2}(\int u^{2}\mu_{0})^{1/2}.
\]
Next, by scaling invariance we may as well assume that $\int u\mu_{0}=0.$
Then 
\[
(\int u^{2}\mu_{0})^{1/2}\leq C_{3}^{1/2}(\int du\wedge d^{c}u)^{1/2}\leq C_{3}/2+\frac{1}{2}\int_{X}du\wedge d^{c}u,
\]
 where $1/C_{3}$ is the smallest strictly positive eigenvalue of
the ``positive Laplacian'' $\Delta:=-dd^{c}/\mu_{0},$ which concludes
the proof. 
\end{proof}
Finally, we show that the inequality in Theorem \ref{thm:non-gaussian ineq general setting}
is, in fact, an asymptotic equality, in the following sense: 
\begin{thm}
\label{thm:asymptot of log E on Riemann }Assume that $\phi$ satisfies
A0. If $u\in H^{1}(X)$ and $\beta\leq1$ then 

\[
\lim_{N\rightarrow\infty}\frac{1}{\beta kN_{k}}\log\E(e^{-kN_{k}U_{N_{k}}})=-\mathcal{F}(\phi+u)+\mathcal{F}(\phi)
\]
If $u\in C(X)$ then the convergence holds for any $\beta>0.$ 
\end{thm}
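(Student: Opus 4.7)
The plan is to reduce the statement to the asymptotics of a ratio of partition functions. By Lemma \ref{lem:beta-ens as Gibbs measure}, perturbing $\phi$ by $u$ in the Gibbs weight gives the identity
\[
\E\bigl(e^{-\beta kN_{k}U_{N_{k}}}\bigr)=\frac{Z_{N_{k},\beta}[dV,k(\phi+u)+\phi_{F}]}{Z_{N_{k},\beta}[dV,k\phi+\phi_{F}]},
\]
so the quantity of interest is just the difference of the normalized log-partition functions at the perturbed and unperturbed weights.

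For the $\beta\leq1$, $u\in H^{1}(X)$ case, the $\limsup\leq$ direction is already provided by Theorem \ref{thm:non-gaussian ineq general setting}, applied to the stationary sequence $u^{(k)}=u$; the new content is the matching $\liminf\geq$ direction, which is precisely where the regularity assumption A0 (stronger than mere admissibility) enters. I would obtain it by applying Prop \ref{prop:Lower bd on part func riemann surf} separately to $\phi+u$ (its hypotheses are met since $\phi$ satisfies A0 and $u\in H^{1}(X)$) and to $\phi$ itself (with $u=0$). The first application yields
\[
\liminf_{k\to\infty}\tfrac{1}{\beta kN_{k}}\log Z_{N_{k},\beta}[k(\phi+u)+\phi_{F}]\geq-\mathcal{F}(\phi+u),
\]
while combining the second with the upper bound of Prop \ref{prop:-asympt of L func} pins down the full limit $\lim \tfrac{1}{\beta kN_{k}}\log Z_{N_{k},\beta}[k\phi+\phi_{F}]=-\mathcal{F}(\phi)$; subtracting these in the identity above produces the desired matching lower bound on $\liminf \tfrac{1}{\beta kN_{k}}\log\E(e^{-\beta kN_{k}U_{N_{k}}})$.

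For the case $u\in C(X)$ with arbitrary $\beta>0$, the restriction $\beta\leq1$ disappears because we no longer need the Jensen-type reduction used in Theorem \ref{thm:non-gaussian ineq general setting}. Continuity of $u$ ensures that $\phi+u$ again satisfies A0 (and in particular is lsc with non-polar finiteness set), so Prop \ref{prop:Lower bd on part func riemann surf} (lower bound) and Prop \ref{prop:-asympt of L func} (upper bound) combine to give the two-sided asymptotic $\lim \tfrac{1}{\beta kN_{k}}\log Z_{N_{k},\beta}[k(\phi+u)+\phi_{F}]=-\mathcal{F}(\phi+u)$ for every $\beta>0$, and similarly for $\phi$ itself, so the desired equality follows at once.

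The main technical obstacle was already overcome in Prop \ref{prop:Lower bd on part func riemann surf}: the Gibbs variational principle (Lemma \ref{lem:Gibbs}) combined with the asymptotic formula for the unperturbed mean energy (Lemma \ref{lem:asy of mean energy}) and with the fact that every finite-energy probability measure can be approximated in $H^{-1}(X)$ by smooth volume forms avoiding the polar singular set of $\phi$, the last step relying on the continuity of the pairing with $H^{1}(X)$-functions recorded in the second point of Prop \ref{prop:prop of energy when =00005Cphi is cont}. Given that input, the present theorem is essentially a bookkeeping step whose only novel observation is that the Jensen-type restriction $\beta\leq1$ is unnecessary when $u$ is continuous.
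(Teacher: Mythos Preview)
Your proof is correct and follows essentially the same route as the paper: the lower bound via Prop \ref{prop:Lower bd on part func riemann surf} (applied with $u=0$ to the metric $\phi+u$, which indeed satisfies A0 when $u\in C(X)$), the upper bound for $u\in H^{1}(X)$, $\beta\leq1$ via Theorem \ref{thm:non-gaussian ineq general setting}, and the upper bound for $u\in C(X)$, any $\beta>0$, via the lsc case of Prop \ref{prop:-asympt of L func}. If anything, you are slightly more explicit than the paper about why the lower bound still applies when $u$ is merely continuous.
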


\begin{proof}
By Prop\ref{prop:Lower bd on part func riemann surf} the lower bound
holds for any $u\in H^{1}(X)$ and $\beta>0.$ Moreover, if if $u\in H^{1}(X)$
then the upper bound also holds, by Theorem \ref{thm:non-gaussian ineq general setting}.
Finally, if $u\in C(X)$ then $\Phi:=\phi+u$ is lsc and hence the
upper bound follows from the inequality \ref{eq:limif L in prop}.
As a consequence, if $u\in C(X)$ then the equality holds above as
a true limit (using the inequality \ref{eq:limif L in prop}) and
if $u\in H^{1}(X)$ this is the case if $\beta\leq1$ 
\end{proof}

\subsection{\label{subsec:Quantative-bounds-on}Quantitative upper bounds on
the error sequence $\epsilon_{N,\beta}$}

If $\phi$ satisfies A0, then the error sequence $\epsilon_{N_{k},\beta}[\phi,\phi_{F},dV]$
(defined by formula \ref{eq:def of epsion error in ineq}) tends to
zero, as $N_{k}\rightarrow\infty.$ Indeed, this follows from Theorem
\ref{thm:asymptot of log E on Riemann } with $(\phi+u,\phi)$ replaced
by $(\phi,\psi_{0}).$  If moreover A1 holds then a quantitative upper
bound on $\epsilon_{N_{k},\beta}[\phi,\phi_{F},dV]$ is provided by
the following
\begin{prop}
\label{prop:error estim on compact Riemann}Let $L$ be a positive
line bundle over a compact Riemann surface $X$ and $\phi$ a lsc
metric on $L$ such that $\mu_{\phi}$ has finite entropy. Then there
exists a constant $C$ only depending on an upper bound on the entropy
$D_{\mu_{0}}(\mu_{\phi})$ and $(\phi_{F},dV)$ and a constant $C_{(X,L)}$
only depending on $X$ and $L$ such that
\[
\frac{1}{\beta N_{k}k}\log Z_{N_{k},\beta}[dV,k\phi+\phi_{F}]\geq-NkE_{\phi}(\mu_{\phi})-C_{(X,L)}N\log N-N\beta^{-1}C
\]
$(C_{(X,L)}=1/2$ when $(X,L)=(\P^{1},\mathcal{O}(1)).$ As a consequence,
\[
\epsilon_{N_{k},\beta}[\phi,\phi_{F},dV]\leq C_{(X,L)}N\log N+N\beta^{-1}C
\]
The assumptions on $\phi$ are, in particular, satisfied if A1 holds. 
\end{prop}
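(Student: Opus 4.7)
The plan is to lower bound $\log Z_{N_{k},\beta}$ by the Gibbs variational principle of Lemma \ref{lem:Gibbs} applied to the Gibbs expression of Lemma \ref{lem:beta-ens as Gibbs measure}, taking as test measure the equilibrium measure $\mu_{\phi}$ itself, and then to identify the resulting mean-energy term with $\mathcal{F}(\phi)$ via Legendre duality.

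Concretely, I decompose $H^{(N_{k})}=H_{\psi_{0}}^{(N_{k})}+k\sum_{i}(\phi-\psi_{0})(x_{i})$ and integrate against $\mu_{\phi}^{\otimes N_{k}}$. By Fubini the linear part contributes exactly $kN_{k}\int(\phi-\psi_{0})\mu_{\phi}$, while the second bound in Lemma \ref{lem:asy of mean energy} yields
\[
\frac{1}{kN_{k}}\int_{X^{N_{k}}}H_{\psi_{0}}^{(N_{k})}\mu_{\phi}^{\otimes N_{k}}\leq E_{\psi_{0}}(\mu_{\phi})+C_{(X,L)}N_{k}^{-1}\log N_{k}.
\]
Adding the two contributions and using the definition \ref{eq:def of E in sing setting} of $E_{\phi}$ bounds $(kN_{k})^{-1}\int H^{(N_{k})}\mu_{\phi}^{\otimes N_{k}}$ by $E_{\phi}(\mu_{\phi})+C_{(X,L)}N_{k}^{-1}\log N_{k}$. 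Because $\mu_{\phi}$ is the unique minimizer of $E_{\phi}$ on $\mathcal{P}(X)$ (Prop.\ \ref{prop:diff of composed energy}), Legendre duality \ref{eq:leg transf relations between energy f} evaluated at $u=0$ gives $E_{\phi}(\mu_{\phi})=\mathcal{F}_{\phi}(0)=\mathcal{F}(\phi)$. Combined with $D_{dV}(\mu_{\phi}^{\otimes N_{k}})=N_{k}D_{dV}(\mu_{\phi})$, Lemma \ref{lem:Gibbs} produces
\[
\frac{1}{\beta kN_{k}}\log Z_{N_{k},\beta}\geq-\mathcal{F}(\phi)-C_{(X,L)}N_{k}^{-1}\log N_{k}-(\beta k)^{-1}D_{dV}(\mu_{\phi}),
\]
which is the announced bound once $D_{dV}(\mu_{\phi})$ is rewritten in terms of $D_{\mu_{0}}(\mu_{\phi})$: the difference equals $\int\log(\mu_{0}/dV)\mu_{\phi}$, bounded purely in terms of the continuous reference volume forms $dV$ and $\mu_{0}$.

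For the last sentence of the proposition, assumption A1 combined with Prop.\ \ref{prop:regularity} identifies $\mu_{\phi}=\mathbf{1}_{S}dd^{c}\phi$, which has $dV$-density at most a constant on $S$ and is supported there; hence $D_{\mu_{0}}(\mu_{\phi})<\infty$ with an explicit bound depending only on the constant $\lambda$ of A1 and on $(dV,\mu_{0})$. The only substantive bookkeeping obstacle is making sure that every ``constant'' absorbed into $C$ is genuinely a function of $(\phi_{F},dV)$ and of an upper bound on $D_{\mu_{0}}(\mu_{\phi})$ alone: this hinges on the fact that the constant $C_{(X,L)}$ in Lemma \ref{lem:asy of mean energy} is universal (independent of $\phi$), so that the $\phi$-dependence of the error is quarantined in the single entropy term.
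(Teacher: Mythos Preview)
Your proposal is correct and matches the paper's own argument essentially line for line: the paper's proof is the single sentence ``This is a direct consequence of Gibbs variational principle (Lemma~\ref{lem:Gibbs}) combined with Lemma~\ref{lem:asy of mean energy},'' and you have simply unpacked that sentence, including the identification $E_{\phi}(\mu_{\phi})=\mathcal{F}(\phi)$ and the bookkeeping for the entropy term. The only cosmetic difference is that the paper leaves implicit the passage from $D_{dV}(\mu_{\phi})$ to $D_{\mu_{0}}(\mu_{\phi})$ and the A1 verification, both of which you spell out.
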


\begin{proof}
This is a direct consequence of Gibbs variational principle (Lemma
\ref{lem:Gibbs}) combined with Lemma \ref{lem:asy of mean energy}. 

In the case when $\beta=1,$ $\phi$ is smooth on $X$ and $dd^{c}\phi>0$
it follows from Bergman kernel asymptotics that $\epsilon_{N_{k}}$
admits a complete asymptotics expansion
\[
\epsilon_{N_{k}}=C_{1}N_{k}^{-1}+C_{2}N_{k}^{-2}+...,
\]
(see \cite{k-m-m-w} and the companion paper \cite{berm17}). Hence,
there exists a constant $A_{\phi}$ such that $\epsilon_{N}\leq A_{\phi}N^{-1}.$
However, it appears to be non-trivial to estimate $A_{\phi}$ explicitly
in terms of $\phi$ (and $A_{\phi}$ appears to depend on more than
two derivatives of $\phi).$ 
\end{proof}

\section{\label{sec:The-Coulomb-gas}The Coulomb gas in the plane }

Consider $\R^{2}$ identified with $\C$ with complex coordinate $z.$
We will use the notation introduced in Section \ref{subsec:The-2D-Coulomb}.
Given a function $V$ on $\C,$ called the exterior potential, the
corresponding $N-$particle Hamiltonian is defined by 

\[
H_{V}^{(N)}(z_{1},...,z_{N}):=-\frac{1}{2}\sum_{i\neq j\leq n}\log|z_{i}-z_{j}|^{2}+\sum_{i\leq N}V(z_{i})
\]
We will denote by $\mu_{\beta,V}^{(N)}$ and $Z_{N,\beta}[V]$ the
corresponding Gibbs measure and partition function at inverse temperature
$\beta:$ 
\begin{equation}
d\mathbb{P}_{N,\beta}:=\frac{e^{-\beta H^{(N)}}d\lambda^{\otimes N}}{Z_{\beta,N}[V]},\,\,\,Z_{N,\beta}[V]:=\int_{\C^{N}}e^{-\beta H_{V}^{(N)}}d\lambda^{\otimes N}=\int_{\C^{N}}|D^{(N)}|^{2}(e^{-V}d\lambda)^{\otimes N},\label{eq:def of Gibbs measure Coul text}
\end{equation}
 where $D^{(N)}$ is the corresponding Vandermonde determinant appearing
in formula \ref{eq:Gibbs measure as det intro}. In order to ensure
that $Z_{N,\beta}[V]$ is finite and non-zero we introduce the following
\begin{defn}
\label{def:adm}A function $\phi(z)$ on $\C$ will be said to be
\emph{strongly admissible} if it is (a) is lower-semi continuous (b)
has the property that $\{\phi<\infty\}$ is non-polar and (c) has
\emph{strictly super logarithmic growth} in the sense that there exist
strictly positive constants $\epsilon$ and $C$ such that 
\[
\phi(z)\geq(1+\epsilon)\log((1+|z|^{2})-C
\]
More generally, $\phi$will be said to be \emph{admissible }if $\epsilon$
is allowed to vanish.

If $N=V\phi$ for a strongly admissible function or $V=(N+p)\phi$
for an admissible function $\phi$ and $p=2/\beta-1,$ then $Z_{N,\beta}[V]$
is non-zero and finite (see Lemma \ref{lem:growth gives lsc metric}). 
\end{defn}

\subsection{\label{subsec:Compactification-of-}Compactification of $\C$ by
the Riemann sphere}

In the following $X$ will denote the Riemann sphere and $L$ the
line bundle over $X$ of degree one. We identify $\C$ with a open
set $U_{0}$ of $X$ in the usual way, so that $X=\C\cup\{\infty\}$
and fix the standard trivialization $e_{U_{0}}$ of $L$ over $\C\subset X.$
For further background we refer to the survey \cite{berm 14 comma 5},
where the general case of $\C^{n}$ and its compactification by the
$n-$dimensional projective space $\P^{n}$ is recalled and where
the role of $L$ is played by the hyperplane line bundle $\mathcal{O}(1)$
over $\P^{n}.$ Concretely, in the present one-dimensional setting
$X$ is covered by two charts $U_{0}$ and $U_{\infty}$ biholomorphic
to $\C.$ Denoting by $z$ the coordinate on $U_{0}$ the coordinate
on $U_{\infty}$ is given by $z^{-1}$ on $U_{0}\cap U_{\infty}.$
The line bundle $L$ is defined by the transition function 
\begin{equation}
t_{\infty0}=z\label{eq:trans function for L over Riemann sp}
\end{equation}
 on $U_{0}\cap U_{\infty}.$ With these identifications the space
$H^{0}(X,kL)$ of all holomorphic sections $\Psi_{k}$ of $L$ gets
identified with the space of all polynomials $p_{k}(z)$ on $\C$
of degree at most $k:$ 
\[
\Psi_{k}=p_{k}e_{U_{0}}
\]
 Moreover, the space of all locally bounded metrics $\phi$ on $L$
gets identified with the space of functions $\phi(z)$ in $L_{loc}^{\infty}(\C)$
on $\C$ with logarithmic growth, i.e. such that 

\begin{equation}
\sup_{\C}\left|\phi(z)-\log(1+|z|^{2})\right|\leq C\label{eq:def of log growth}
\end{equation}
 for some constant $C$ (see \cite[Lemma 3.13]{berm 14 comma 5}).
The squared point-wise norm $\left\Vert \Psi_{k}\right\Vert ^{2}$
of an element $\Psi_{k}\in H^{0}(X,kL)$ may be expressed as follows
over $\C:$ 
\[
\left\Vert \Psi_{k}\right\Vert ^{2}(x)=|p_{k}(z)|^{2}e^{-k\phi(z)},
\]
 where $p_{k}$ is the polynomial $p_{k}(z)$ representing $\Psi_{k}$
and $\phi(z)$ is the function representing the metric $\left\Vert \cdot\right\Vert .$
Note that, since $p_{k}$ is a polynomial of degree $k$ the growth
property \ref{eq:def of log growth} ensures that the point-wise norm
$\left\Vert \Psi_{k}\right\Vert ^{2}$ is globally bounded on $\C,$
as it should (since it, a priori, defines a bounded function on $X).$ 

We will denote by $\psi_{0}$ the Fubini-Study metric, which defines
a canonical reference metric $\left\Vert \cdot\right\Vert _{FS}$
on $L\rightarrow X$ in the sense of Section \ref{subsec:Metrics-on}.
Its curvature form coincides with the usual invariant probability
measure on $X$ that we shall denote by $\mu_{0}$ (as in Section
\ref{subsec:Metrics-on}). In the standard trivialization over $\C$
we have the representation 
\[
\psi_{0}(z):=-\log\left\Vert e_{U_{0}}\right\Vert _{FS}^{2}(z)=\log(1+|z|^{2})
\]
The corresponding probability $\mu_{0}$ may be expressed as
\[
\mu_{0}=dd^{c}\psi_{0}=\frac{1}{\pi}e^{-2\psi_{0}}d\lambda
\]
\begin{lem}
\label{lem:growth gives lsc metric}Under the identifications above
the following holds:

\begin{itemize}
\item A lsc function $\phi(z)$ on $\C$ with the growth property 
\[
\inf_{\C}\left(\phi-\psi_{0}\right)>-\infty
\]
 corresponds to a lsc metric $\phi$ on $L\rightarrow X.$ Moreover,
if $u\in C_{b}(\C),$ then $\phi+u$ defines a lsc metric on $L\rightarrow X.$ 
\item Set $N_{k}:=k+1.$ Under the standard trivializations of $L$ and
$K_{X}$ over $\C\Subset X$ an element $\Psi_{k}$ in $H^{0}(X,(N_{k}+1)L+K_{X})$
corresponds to a polynomial $p_{k}$ of degree at most $k$ and the
squared adjoint $L^{2}-$norm of $\Psi_{k},$ for a metric $\phi$
on $L$ as in the previous point, is given by 
\begin{equation}
\int_{\C}|p_{k}(z)|^{2}e^{-(N_{k}+1)\phi(z)}d\lambda\label{eq:weighted norm with N+1 in lemma}
\end{equation}
In particular, the latter norm is always finite and non-zero if $\phi$
is admissible and $p_{k}$ does not vanish identically.
\item If $\phi(z)$ has strictly super logarithmic growth then $\int_{\C}|p_{k}(z)|^{2\beta}e^{-N\beta\phi(z)}d\lambda$
is finite for any polynomial $p_{k}$ of degree at most $k$ and $\beta>0.$ 
\end{itemize}
\end{lem}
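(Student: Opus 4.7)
The plan is to establish each of the three statements by unpacking the two-chart description of $(X,L)$: $X=U_0\cup U_\infty$ with coordinate $z$ on $U_0=\C$ and $\zeta=1/z$ on $U_\infty$, and $L$ defined by the transition $t_{\infty 0}=z$ (formula \eqref{eq:trans function for L over Riemann sp}), so that a metric on $L$ is a pair of weights $(\phi_{U_0},\phi_{U_\infty})$ satisfying $\phi_{U_0}=\phi_{U_\infty}-\log|z|^2$ on the overlap (cf.\ Section \ref{subsec:Metrics-on}). For the first bullet, I would set $\phi_{U_0}:=\phi$ and define $\phi_{U_\infty}(\zeta):=\phi(1/\zeta)+\log|\zeta|^2$ for $\zeta\neq 0$. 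A direct check would show that the globally defined function $\phi-\psi_0$ on $\C$ agrees with $\phi_{U_\infty}-\psi_{0,U_\infty}$ under $z=1/\zeta$, which is the standard fact that differences of metrics on a line bundle are functions. Since $\psi_{0,U_\infty}(\zeta)=\log(1+|\zeta|^2)$ is smooth at $\zeta=0$, the growth assumption $\inf_\C(\phi-\psi_0)>-\infty$ is precisely what is needed to extend $\phi_{U_\infty}$ lsc to $\zeta=0$ by setting $\phi_{U_\infty}(0):=\liminf_{\zeta\to 0}\phi_{U_\infty}(\zeta)\in\,]-\infty,+\infty]$. The case $\phi+u$ with $u\in C_b(\C)$ is then immediate since $u$ preserves both the lsc property and the growth assumption.

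For the second bullet, I would first use $K_{\P^1}=\mathcal{O}(-2)$ together with $L=\mathcal{O}(1)$ to identify $(N_k+1)L+K_X$ with $\mathcal{O}(k)$ (recalling $N_k=k+1$), whose global sections are polynomials of degree at most $k$; concretely, writing $\Psi_k=p_k(z)\,e_0^{\otimes(N_k+1)}\otimes dz$ on $U_0$ and passing to $U_\infty$ via $e_{U_\infty}=z\,e_{U_0}$ and $dz=-\zeta^{-2}d\zeta$, holomorphicity at $\zeta=0$ forces $\deg p_k\leq k$. The formula \eqref{eq:herm norm in adjoint setting} would then yield
\[
\tfrac{i}{2}\,\Psi_k\wedge\overline{\Psi_k}\,e^{-(N_k+1)\phi}=|p_k(z)|^2\,e^{-(N_k+1)\phi(z)}\,d\lambda
\]
on $\C$, and since the point at infinity is Lebesgue-negligible the global $X$-integral reduces to \eqref{eq:weighted norm with N+1 in lemma}. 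Finiteness is then power counting: admissibility gives $\phi\geq\psi_0-C$, and combined with $|p_k(z)|^2\leq C_{p_k}(1+|z|^2)^k$ this dominates the integrand by a constant multiple of $(1+|z|^2)^{k-(N_k+1)}=(1+|z|^2)^{-2}$, which is Lebesgue-integrable on $\C$.

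For the third bullet, the strictly super logarithmic bound $\phi\geq(1+\epsilon)\log(1+|z|^2)-C$ together with $|p_k|^{2\beta}\leq C'(1+|z|^2)^{\beta k}$ would dominate the integrand by a constant multiple of $(1+|z|^2)^{\beta(k-(1+\epsilon)N)}$, which decays strictly faster than $(1+|z|^2)^{-1}$ in the intended range of $(N,k,\beta)$ and hence is integrable on $\C$. The main delicate point I expect is the non-vanishing claim in the second bullet: the integrand is non-negative and strictly positive on $\{p_k\neq 0\}\cap\{\phi<\infty\}$, the first set being cofinite, but converting the \emph{potential-theoretic} non-polarity of $\{\phi<\infty\}$ into a positive \emph{Lebesgue}-measure statement is the subtle step where one must invoke the regularity implicit in the applications (for instance local boundedness of $\phi$ on some open set); everything else amounts to routine bookkeeping with the two-chart description and elementary growth estimates.
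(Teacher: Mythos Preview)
Your approach is essentially the same as the paper's: both unpack the two-chart description of $(\P^1,\mathcal O(1))$, identify $(N_k+1)L+K_X\cong \mathcal O(k)$ via $K_{\P^1}\cong\mathcal O(-2)$, and reduce the integrability claims to elementary growth estimates of the form $|p_k|^2\le C(1+|z|^2)^k$ against $e^{-m\phi}\le C'(1+|z|^2)^{-m(1+\epsilon)}$. Two small points of comparison are worth recording.

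First, you are right to flag the non-vanishing as the delicate step. The paper disposes of it by asserting that the density $|p_k|^2e^{-(N_k+1)\phi}$ vanishes only on a polar set, hence on a Lebesgue-null set. Strictly speaking this does not follow from admissibility alone: an lsc $\phi$ with $\{\phi<\infty\}$ non-polar can still satisfy $\{\phi=\infty\}$ of positive Lebesgue measure (take $\phi=+\infty$ off a segment). What makes the argument go through in every application in the paper is the standing assumption A0 (continuity off a closed polar set), under which $\{\phi=\infty\}$ is indeed polar; your parenthetical ``local boundedness on some open set'' is exactly the right sufficient condition.

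Second, for the third bullet the paper splits into $\beta\ge 1$ (direct bound $|p_k|^2e^{-N\phi}\le Ae^{-(1+\epsilon)\psi_0}$) and $\beta<1$ (``H\"older reduces to $\beta=1$''), whereas you run a single power-count giving integrand $\le C(1+|z|^2)^{-\beta(1+(k+1)\epsilon)}$. Your hedge ``in the intended range'' is well placed: the integrability condition is $\beta(1+N\epsilon)>1$, and the paper's H\"older step does not actually enlarge this range. In practice this is harmless for the paper's purposes (for each fixed $\beta>0$ it holds for all $N$ large, and the main results live in that asymptotic regime), but your formulation makes the actual constraint visible.
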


\begin{proof}
The first point about $\phi$ is a special case of \cite[Lemma 3.13]{berm 14 comma 5}
and is shown by noting that $v:=\phi-\psi_{0}$ is a lsc function
on the dense subset $U_{0}$ of $X$ which is bounded from below.
Hence, identifying $v$ with its greatest lower semi-continuous extension
to $X$ concludes the proof. If $u\in C_{b}(\C)$ then $\phi+u$ is
lsc and has the same growth as $\phi$ and thus defines a lsc metric
on $L\rightarrow X,$ by the previous argument. The second point follows
from the fact that $K_{X}$ is isomorphic to $-2L$ and hence $H^{0}(X,kL)$
may be identified with the space $H^{0}(X,(k+2)L+K_{X}).$ For readers
lacking background in complex geometry we provide the following proof.
First to see that $K_{X}$ is isomorphic to $-2L$ note that the holomorphic
one-form $dz$ trivializes $K_{X}$ over $U_{0},$ since $z$ is a
holomorphic coordinate on $U_{0}.$ Similarly, since $z^{-1}$ is
a holomorphic coordinate on $U_{\infty}$ $d(z^{-1})$ trivializes
$K_{X}$ over $U_{1}.$ But 
\[
d(z^{-1})=-z^{-2}dz
\]
which proves the isomorphism in question (using that $L$ is defined
by the transition function in formula \ref{eq:trans function for L over Riemann sp}).
Since, as recalled above, $H^{0}(X,kL)$ identifies with the space
of all polynomials $p_{k}$ of degree at most $k$ this means that
$\Psi_{k}\in H^{0}(X,(N_{k}+1)L+K_{X}$ iff there exists $p_{k}$
such that
\[
\Psi_{k}=p_{k}e_{U_{0}}^{\otimes(N_{k}+1)}\otimes dz
\]
over $U_{0}.$ Hence, the squared adjoint $L^{2}-$norm of $\Psi_{k}$
is given by 
\[
\frac{i}{2}\int p_{k}dz\wedge\overline{p_{k}dz}e^{-(N_{k}+1)\phi}
\]
which coincides with the weighted norm in formula \ref{eq:weighted norm with N+1 in lemma}.
The finiteness stated in the lemma then follows directly from the
fact that the adjoint norm defined by a lsc metric is always bounded
(just using that the function $\psi_{0}-\phi$ is bounded from below).
Moreover, since the local density $|p_{k}(z)|^{2}e^{-(N+1)\phi_{N}}$
vanishes on a polar subset (which in particular has vanishing Lebesgue
measure) the norm in question is non-degenerate. To prove the last
point observe that 
\[
|p_{k}(z)|^{2}e^{-N\phi}=\left(|p_{k}(z)|^{2}e^{-k\phi}\right)e^{-\phi}\leq Ae^{-(1+\epsilon)\psi_{0}}
\]
 for a constant $A$ and $\epsilon>0.$ Since the right hand side
above is integrable this shows that $(|p_{k}(z)|^{2}e^{-N\phi})^{\beta}$
is integrable for $\beta\geq1.$ For $\beta<1$ we can instead apply
Hölder's inequality to reduce the integrability to the case $\beta=1.$ 
\end{proof}
\begin{example}
\label{exa:phi as sing metric on X}Consider a subharmonic function
$\phi(z)$ on $\C$ such that $\phi(z)-(1+\epsilon)\log(|z|^{2})$
is bounded for some $\epsilon>0$ when $|z|$ is sufficiently large.
Then the curvature current of the corresponding lsc metric on $L\rightarrow X$
is non-negative on $\C\subset X,$ but has a negative Dirac mass (of
mass $-\epsilon)$ at the point $x_{\infty}$ at infinity in $X.$
Indeed, in the standard trivialization of $L$ centered at $x_{\infty}$
the metric on $L$ is represented by $\phi+\log|w|^{2}$ which grows
like $(1+\epsilon)\log|w|^{-2}+\log|w|^{2}=-\epsilon\log|w|^{2}$
as $w=1/z\rightarrow0.$ 
\end{example}

\begin{lem}
\label{lem:slater as vandermonde and adjoint}Let $\Psi(x_{1},...,x_{N})$
be the Slater determinant induced by the standard multinomial basis
in $H^{0}(X,kL),$ where $N=k+1.$ Then the zero locus of $\Psi$
in $X^{N}$ is given by the set of all $(x_{1},...,x_{N})$ such that
there exists $i$ and $j$ such that $x_{i}=x_{j}.$ Moreover, in
the standard trivialization of $L$ over $\C$ the section $\Psi$
is represented by the Vandermonde determinant 
\[
D^{(N)}(z):=\prod_{i<j\leq N}(z_{i}-z_{j})=\det_{0\leq i,j\leq N-1}(z_{i}^{j})
\]
In particular, given a lsc metric $\phi$ on $L$ the squared point-wise
norm of $\Psi$ may be expressed as follows over $\C^{N}:$

\[
|\Psi(z_{1},...,z_{N})|^{2}e^{-\left(k\phi(z_{1})+...+k\phi(z_{N})\right)}=\prod_{i<j\leq N}|z_{i}-z_{j}|^{2}\prod_{i\leq N}e^{-k\phi(z_{i})}
\]
on $\C^{N}.$ As a consequence, the \emph{$N-$particle Coulomb gas
on $\C$ with exterior potential $(N+1)\phi$ at inverse temperature
$\beta=1$ coincides with the adjoint determinantal point process
on $X$ associated to the lsc metric $\phi$ on $L\rightarrow X.$ }
\end{lem}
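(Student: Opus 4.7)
The statement breaks into three assertions: (i) the Vandermonde representation on $\C^N$ and the squared-norm factorization, (ii) the determination of the zero locus on $X^N$, (iii) the coincidence with the Coulomb gas via the adjoint setting. All three are direct computations in the explicit trivializations of Section~\ref{subsec:Compactification-of-}; no step is conceptually deep, but the identification of weights in (iii) requires care.

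For (i), the standard monomial basis of $H^0(X, kL)$ in the trivialization $e_{U_0}$ is $\Psi_i = z^{i-1} e_{U_0}^{\otimes k}$, $i = 1, \ldots, N = k+1$. By multilinearity, the local representation of $\Psi := \det(\Psi_i(x_j))$ over $\C^N$ is the Vandermonde $\det(z_j^{i-1}) = \prod_{i<j}(z_j - z_i)$, and the squared-norm formula follows at once from $\|e_{U_0}\|_\phi^2 = e^{-\phi}$. For (ii), the Vandermonde vanishes exactly on the big diagonal of $\C^N$; the extension to $X^N$ uses the transition $e_{U_\infty} = z \cdot e_{U_0}$, under which $z^{i-1} e_{U_0}^{\otimes k}$ becomes $w^{k-i+1} e_{U_\infty}^{\otimes k}$ (with $w = 1/z$). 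Sending one coordinate $x_N$ to $\infty$ kills all entries of the corresponding column except the one at $i = N$, and expansion along that column reduces the determinant to a Vandermonde in the remaining $N-1$ variables. Iterating and invoking antisymmetry of $\Psi$ identifies the full zero set with the closure of the big diagonal.

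For (iii), the target is the Coulomb density $\prod_{i<j}|z_i - z_j|^2 \prod_i e^{-(N+1)\phi(z_i)}\, d\lambda^{\otimes N}/Z_N$ from~(\ref{eq:def of Gibbs measure Coul text}) at $\beta = 1$ and $V = (N+1)\phi$. On the Riemann sphere $K_X \cong -2L$ (from $d(1/z) = -z^{-2}\,dz$), yielding a canonical isomorphism $H^0(X, (N+1)L + K_X) \cong H^0(X, kL)$ with $k = N - 1$ that sends $e_{U_0}^{\otimes k} \mapsto e_{U_0}^{\otimes(N+1)} \otimes dz$. Transporting the monomial basis across this isomorphism and forming the Slater determinant in the adjoint sense gives, over $\C^N$,
\[
\Psi^{\mathrm{adj}}(x_1,\ldots,x_N) = \det(z_j^{i-1}) \cdot \bigotimes_j e_{U_0}^{\otimes(N+1)}(z_j) \cdot dz_1 \wedge \cdots \wedge dz_N.
\]
Applying the adjoint pairing of~(\ref{eq:herm norm in adjoint setting}) extended to top holomorphic forms on $X^N$, and using $\frac{i}{2}\,dz \wedge d\bar z = d\lambda$, converts this adjoint density into exactly the target Coulomb density.

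The only genuinely subtle point, and the one I would expect to be the main source of potential error, is that the weight in the adjoint setting is $(N+1)\phi$ rather than $k\phi = (N-1)\phi$: the two extra units come precisely from the $K_X \cong -2L$ contribution to the isomorphism. Keeping signs and $i$-factors straight in the top-form adjoint pairing on $X^N$ is the other bit of bookkeeping that deserves care.
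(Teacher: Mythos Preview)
Your proof is correct and follows essentially the same route as the paper's. The paper likewise identifies the Slater determinant with the Vandermonde via the monomial basis and invokes the identification $H^0(X,(N+1)L+K_X)\cong H^0(X,kL)$ coming from $K_X\cong -2L$ (which the paper records in Lemma~\ref{lem:growth gives lsc metric}) to match the adjoint density with the Coulomb density at weight $(N+1)\phi$.

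The one place where your argument differs slightly is the zero locus at infinity. The paper simply recomputes the Slater determinant in the chart $U_\infty^N$, where the basis becomes $\{w^0,\ldots,w^k\}$ and one again obtains a Vandermonde in $w$; since $U_0^N\cup U_\infty^N$ misses only a codimension-two set in $X^N$, this already pins down the zero divisor. Your column-expansion argument (send $x_N\to\infty$, reduce to a Vandermonde in $N-1$ variables, iterate, and use antisymmetry for configurations with repeated $\infty$'s) is a more hands-on variant that avoids any implicit appeal to codimension and is arguably cleaner for a reader without the algebraic-geometric reflex. Either way the conclusion is the same.
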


\begin{proof}
In the standard trivialization of $L$ over $\C\subset\P^{1}$ the
multinomial base $\Psi_{i}$ is represented by the monomials $z^{i}.$
Hence, the corresponding Slater determinant on $\C^{N}$ may be identified
with the Vandermonde determinant $D^{(N)}(z_{1},...,z_{N}),$ which
satisfies the following classical identity: 
\begin{equation}
D^{(N)}(z_{1},...,z_{N}):=\det_{1\leq i,j\leq N-1}(z_{i}^{N-1})=\prod_{1\leq i<j\leq N}(z_{i}-z_{j})\label{eq:vanderm id}
\end{equation}
The right hand side vanishes precisely when $z_{i}=z_{j}$ for some
pair $(i,j),$ which proves the statement about the zero locus of
$\Psi$ in the subset $\C^{N}\Subset X^{N}.$ Repeating the same argument
in the standard chart $U_{\infty}$ around $x_{\infty}$ identifies
$\Psi$ with $D^{(N)}(w_{1},...,w_{N})$ over $U_{\infty},$ which
shows that vanishing locus of $\Psi$ in $U_{\infty}^{N}$ also has
the desired form. 

It follows from Lemma \ref{lem:growth gives lsc metric}, applied
to each factor of $\C^{N},$ that $Z_{N,1}[(N+1)\phi]$ is finite
if $\phi$ has logarithmic growth. In fact, a similar argument shows
that $Z_{N,1}[(N+\epsilon)\phi]$ is finite for any $\epsilon>0.$ 
\end{proof}
\begin{rem}
\label{rem:diverges}Using the identifications above it is not hard
to see that $Z_{N,1}[N\phi]=\infty$ as soon as $\phi$ has logarithmic
growth, i.e. 1\ref{eq:def of log growth} holds. Indeed, we can rewrite
\begin{equation}
d\P_{N}:=|\Psi(z_{1},...,z_{N})|^{2}e^{-\left((N\phi(z_{1})+...+N\phi(z_{N})\right)}d\lambda^{\otimes N}=|\Psi|_{k\phi}^{2}\nu^{\otimes N}\label{eq:rewritten density in remark diverges}
\end{equation}
 where $\nu=1_{\C}e^{-\phi}\frac{i}{2}dz\wedge d\bar{z}$ defines
a measure on $X$ and 
\[
|\Psi|_{k\phi}^{2}(z_{1},...,z_{N}):=|\Psi(z_{1},...,z_{N})|^{2}e^{-\left(k\phi(z_{1})+...+k\phi(z_{N})\right)}
\]
is the squared point-wise norm of the section $\Psi$ wrt the metric
induced by $\phi.$ In particular, $|\Psi|_{k\phi}^{2}$ is globally
bounded. But, $\nu$ gives infinity volume to any neighborhood of
the point $x_{\infty}$ at infinity in $X$ (indeed, in local coordinates
$w(=1/z)$ centered at $x_{\infty}$ the density of $\nu$ is comparable
to $1/|w|^{2})$. The integral of $d\P_{N}$ will thus be equal to
infinity on the neighborhood of any configuration $(x_{1},x_{2},...,x_{N})\in X^{N}$
such that $x_{1}$ is the point at infinity and $x_{i}\neq x_{j}$
for any $j$ (using that $|\Psi|_{k\phi}^{2}\neq0$ there, by Lemma
\ref{lem:Slater wrt FS vs Vandermonde}). Concretely, making the change
of variables $w=1/z$ on $\C-\{0\}$ the measure $\mu^{(N)}$ in formula
\ref{eq:rewritten density in remark diverges} becomes, for example
when $N=2,$ comparable to 
\[
|w_{1}-w_{2}|^{2}\frac{1}{|w_{1}|^{2}|w_{2}|^{2}}idw_{1}\wedge d\bar{w}_{1}\wedge idw_{2}\wedge d\bar{w}_{2}
\]
on the region of all $(w_{1},w_{2})$ such that $|w_{1}|$ and $|w_{2}-1|$
are sufficiently small (say bounded by $1/4).$ But the integral over
the $w_{1}-$ variable, for any fixed $w_{2}\neq0,$ clearly diverges.
\end{rem}

\subsubsection{\label{subsec:Partition-functions-in}Partition functions in $\C$
vs partition functions on the Riemann sphere $X$}

By Lemma \ref{lem:slater as vandermonde and adjoint} we can write
\[
Z_{N,1}[(N+1)\phi]=(\frac{i}{2})^{N}\int_{X^{N}}\Psi(x_{1},...,x_{N})\wedge\overline{\Psi(x_{1},...,x_{N})}e^{-(N+1)\left(\phi(x_{1})+...+\phi(x_{N})\right)}
\]
where $\Psi$ is the Slater determinant defined wrt the monomial base
in $H^{0}(X,(N+1)L+K_{X}),$ identified with the space of polynomials
on $\C\subset X$ of degree at most $N-1.$ This bases is different
then the fixed reference bases used in Section \ref{subsec:The-asymptotic-setting and the fixed references}.
We will denote by $\Psi_{FS}$ the latter Slater determinant, since
it is defined wrt the Fubini-Study metric. However, by the scaling
relation \ref{eq:scaling relation for slater}, variations wrt $\phi$
of the correponding partition functions are independent of the choice
of bases. Accordingly, differences of the form 
\[
\log Z_{N}[(\phi+u)]-\log Z_{N}[\phi]
\]
 may as well be computed with respect to the reference basis used
in Section \ref{subsec:The-asymptotic-setting and the fixed references}.
Accordingly, abusing notation slightly, we shall occasionally use
the same symbol $Z_{N}$ for the partition function in $\C$ (formula
\ref{eq:def of Gibbs measure Coul text}) and the partition function
on the Riemann sphere $X$ (formula \ref{eq:Gibbs measure for beta ens text}).
The precise constant linking the different Slater determinants is
provided by the following 
\begin{lem}
\label{lem:Slater wrt FS vs Vandermonde}Let $\Psi_{FS}^{(N)}$ be
the Slater determinant for the $N-$dimensional space $H^{0}(X,k\mathcal{O}(1)$
(where $N=k+1)$ induced by the basis $\Psi_{i}^{(N)}$ which is orthonormal
wrt the $L^{2}-$norm induced by the adjoint $L^{2}-$norm corresponding
to the Fubini-Study metric $\psi_{0}.$ Then 
\[
-\log|\Psi_{FS}^{(N)}|^{2}=-\log|D^{(N)}|^{2}+F(N+1),
\]
 where $F(k)$ is defined in Lemma \ref{lem:def and exp of F} below.
Moreover, given any admissible function $\phi$ and measure $dV$
we have, for any $\beta>0,$
\[
\frac{1}{N!}\left(\int\left(|\Psi_{FS}^{(N)}|^{2}e^{-k\phi}\right)^{\beta}dV^{\otimes N}\right)^{1/\beta}=\frac{\left(\int\left(|D^{(N)}|^{2}e^{-k\phi}\right)^{\beta}dV\right)^{1/\beta}}{\int|D^{(N)}|^{2}e^{-(N+1)\psi_{0}}d\lambda}
\]
 
\end{lem}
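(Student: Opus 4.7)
The plan is to express the orthonormal Slater determinant as a scalar multiple of the monomial Vandermonde, compute that scalar by a single application of the identity \ref{eq:integral of det is N factorial}, and then read off both claims.

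First I would observe that, via the identifications of Lemma \ref{lem:growth gives lsc metric}, the space $H^{0}(X,k\mathcal{O}(1))$ is identified with the space of polynomials in $z$ of degree at most $k=N-1$, carrying both the monomial basis $\{z^{j-1}\}_{j=1}^{N}$ and the orthonormal basis $\{\Psi_{i}^{(N)}\}_{i=1}^{N}$, where orthonormality is with respect to the adjoint Fubini-Study inner product
\[
\langle p,q\rangle := \int_{\C} p(z)\overline{q(z)}\, e^{-(N+1)\psi_{0}(z)}\, d\lambda
\]
(the exponent $N+1=k+2$ reflecting the adjoint shift coming from $K_{X}\cong -2\mathcal{O}(1)$, cf. Lemma \ref{lem:growth gives lsc metric}). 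Writing $\Psi_{i}^{(N)}=\sum_{j}A_{ij}z^{j-1}$ for an invertible matrix $A$, multilinearity of the determinant in its rows gives the pointwise identity
\[
\Psi_{FS}^{(N)}(z_{1},\ldots,z_{N})=\det(A)\cdot D^{(N)}(z_{1},\ldots,z_{N}).
\]
Taking moduli squared and logarithms yields the first displayed formula, with $F(N+1):=-\log|\det A|^{2}$.

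Next I would compute $|\det A|^{2}$ by applying the orthonormality identity \ref{eq:integral of det is N factorial} to the basis $\{\Psi_{i}^{(N)}\}$, together with the coordinate expression of the adjoint norm supplied by Lemma \ref{lem:growth gives lsc metric}:
\[
N! \;=\; \int_{\C^{N}} |\Psi_{FS}^{(N)}|^{2}\,e^{-(N+1)\psi_{0}}\,d\lambda^{\otimes N} \;=\; |\det A|^{2}\int_{\C^{N}}|D^{(N)}|^{2}\,e^{-(N+1)\psi_{0}}\,d\lambda^{\otimes N},
\]
with the convention that $e^{-(N+1)\psi_{0}}$ on $\C^{N}$ stands for $\prod_{i}e^{-(N+1)\psi_{0}(z_{i})}$. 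Solving gives an explicit expression for $|\det A|^{2}$, and hence for $F(N+1)$; a direct comparison with the formula for the Hamiltonian $H_{\psi_{0}}^{(N)}$ used in the proof of Lemma \ref{lem:asy of mean energy} confirms that this $F(N+1)$ is the same quantity defined in Lemma \ref{lem:def and exp of F}.

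Finally, for the ratio identity I would simply pull the (pointwise) constant $|\det A|^{2}$ out of the integrals: since $|\Psi_{FS}^{(N)}|^{2}=|\det A|^{2}|D^{(N)}|^{2}$ pointwise,
\[
\left(\int_{\C^{N}}\!\!\bigl(|\Psi_{FS}^{(N)}|^{2}e^{-k\phi}\bigr)^{\beta}\,dV^{\otimes N}\right)^{\!1/\beta}=|\det A|^{2}\left(\int_{\C^{N}}\!\!\bigl(|D^{(N)}|^{2}e^{-k\phi}\bigr)^{\beta}\,dV^{\otimes N}\right)^{\!1/\beta}\!,
\]
after which dividing by $N!$ and substituting the expression for $|\det A|^{2}$ derived above collapses to exactly the right-hand side of the claim. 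There is no substantive obstacle here; the only bookkeeping point is the shift $(N+1)=(k+2)$ in the adjoint weight, which must be tracked carefully to recover the precise exponent $e^{-(N+1)\psi_{0}}$ in the denominator.
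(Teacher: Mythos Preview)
Your proposal is correct and follows essentially the same route as the paper: both use that the orthonormal Slater determinant and the monomial Vandermonde differ by the change-of-basis scalar $|\det A|^{2}$, and both invoke formula \ref{eq:integral of det is N factorial} to tie this scalar to $N!/\int |D^{(N)}|^{2}e^{-(N+1)\psi_{0}}d\lambda^{\otimes N}$. The only cosmetic difference is that the paper identifies $|\det A|^{-2}$ directly as the Gram determinant $\prod_{j=0}^{N-1}c_{j}$ of the monomial basis (the Gram matrix is diagonal since the weight $e^{-(N+1)\psi_{0}}$ is radial), which matches the \emph{definition} of $F$ in Lemma \ref{lem:def and exp of F} immediately; your detour through Lemma \ref{lem:asy of mean energy} is unnecessary and slightly circular, since that lemma's proof already cites the present one.
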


\begin{proof}
Making a change of bases gives that $|\Psi_{FS}^{(N)}|^{2}\det\left\langle \int z^{i}\bar{z}^{j}e^{-(N+1)\psi_{0}}d\lambda\right\rangle =|D^{(N)}|^{2}.$
Since the corresponding matrix is diagonal we have
\[
\det\left\langle \int z^{i}\bar{z}^{j}e^{-(N+1)\psi_{0}}d\lambda\right\rangle =\prod_{j=0}^{N-1}c_{j},\,\,c_{j}:=\int_{\C}|z|^{2j}e^{-(N+1)\psi_{0}}d\lambda,
\]
 which proves the first formula. To prove the second formula it will,
by the scaling relation \ref{eq:scaling relation for slater}, be
enough to show that
\[
N!=\int_{X^{N}}|\Psi_{FS}^{(N)}|^{2}e^{-(N+1)\psi_{0}}d\lambda^{\otimes N}
\]
 But this follows from formula \ref{eq:integral of det is N factorial}.
\end{proof}
\begin{lem}
\label{lem:def and exp of F}Set $c_{j}:=\int_{\C}|z|^{2j}e^{-(k+2)\psi_{0}}d\lambda$
for $j=0,1,...k.$ Then
\[
F(k):=\sum_{j=0}^{k}\log c_{j}=-\frac{1}{2}k^{2}-\frac{k}{2}\log k+O(k)
\]
and hence setting $N:=k+1$
\[
\log\int_{\C^{N}}|D^{(N)}|^{2}e^{-(N+1)\psi_{0}}d\lambda^{\otimes N}=\log N!+F(N+1)=-\frac{1}{2}N^{2}+\frac{N}{2}\log N+O(N)
\]
 
\end{lem}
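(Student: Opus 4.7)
The plan is to reduce $c_{j}$ to a Beta integral, derive a closed form for $F(k)$, apply Stirling carefully, and then relate the Vandermonde integral to $F$ via rotational invariance.

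I would begin by computing $c_{j}$ in closed form. Passing to polar coordinates $z=re^{i\theta}$ in the defining integral and substituting $u=r^{2}$ gives
\begin{equation*}
c_{j} \;=\; \pi \int_{0}^{\infty} u^{j}(1+u)^{-(k+2)}\,du \;=\; \pi\,B(j+1,k+1-j) \;=\; \pi\,\frac{j!\,(k-j)!}{(k+1)!}.
\end{equation*}
Exploiting the symmetry $j \leftrightarrow k-j$ in the resulting log-sum then yields the closed form
\begin{equation*}
F(k) \;=\; (k+1)\log\pi \;+\; 2\sum_{j=1}^{k}\log(j!) \;-\; (k+1)\log\bigl((k+1)!\bigr).
\end{equation*}

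Next I would extract the asymptotics. By Abel summation, $\sum_{j=1}^{k}\log(j!) = (k+1)\log(k!) - \sum_{i=1}^{k} i\log i$. Stirling's formula gives $(k+1)\log(k!) = k^{2}\log k + \tfrac{3k}{2}\log k - k^{2} - k + O(k)$, while Euler--Maclaurin applied to $x\log x$ yields $\sum_{i=1}^{k} i\log i = \tfrac{k^{2}}{2}\log k + \tfrac{k}{2}\log k - \tfrac{k^{2}}{4} + O(\log k)$. Doubling the difference gives $2\sum_{j=1}^{k}\log(j!) = k^{2}\log k + 2k\log k - \tfrac{3k^{2}}{2} + O(k)$, and a direct Stirling expansion produces $(k+1)\log((k+1)!) = k^{2}\log k + \tfrac{5k}{2}\log k - k^{2} - k + O(k)$. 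Substituting these into the closed form, the $k^{2}\log k$ terms cancel, the coefficient of $k\log k$ collapses to $2 - \tfrac{5}{2} = -\tfrac{1}{2}$, and one is left with $F(k) = -\tfrac{1}{2}k^{2} - \tfrac{k}{2}\log k + O(k)$ as claimed.

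For the second formula I would expand the Vandermonde as $D^{(N)}(z_{1},\ldots,z_{N}) = \sum_{\sigma\in S_{N}}(-1)^{\sigma}\prod_{i}z_{i}^{\sigma(i)-1}$, so that $|D^{(N)}|^{2}$ becomes a double sum over $(\sigma,\tau)$. Because the weight $\prod_{i}(1+|z_{i}|^{2})^{-(N+1)}d\lambda(z_{i})$ is rotationally invariant in each variable, the single-variable integrals $\int_{\C}z^{a}\bar{z}^{b}(1+|z|^{2})^{-(N+1)}d\lambda$ vanish unless $a=b$; hence only diagonal pairings $\sigma=\tau$ contribute, each giving the same product $\prod_{j=0}^{N-1}\int_{\C}|z|^{2j}(1+|z|^{2})^{-(N+1)}d\lambda$, i.e.\ the product of the $c_{j}$'s of the first part taken at the parameter value matching the exponent $N+1$. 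Summing over $\sigma$ produces a combinatorial factor of $N!$, so $\log\int$ equals $\log N!$ plus the corresponding value of $F$. Inserting the Stirling expansion $\log N! = N\log N - N + O(\log N)$ together with the asymptotics of $F$ produces $-\tfrac{1}{2}N^{2} + \tfrac{N}{2}\log N + O(N)$.

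The only delicate point is the Stirling bookkeeping in the middle paragraph: the leading $k^{2}\log k$ and $k\log k$ contributions to the two terms $2\sum\log(j!)$ and $(k+1)\log((k+1)!)$ must cancel exactly, so that the residual $-\tfrac{1}{2}k^{2} - \tfrac{k}{2}\log k$ emerges only after those cancellations. Once those are verified with the asymptotic expansion of $\log(k!)$ to at least two orders beyond $k\log k - k$, everything else is a routine substitution.
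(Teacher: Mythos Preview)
Your argument is correct and self-contained. The Beta-integral evaluation of $c_{j}$, the closed form for $F(k)$ via the symmetry $j\leftrightarrow k-j$, and the Stirling/Euler--Maclaurin bookkeeping are all accurate; the cancellations of the $k^{2}\log k$ and $k\log k$ terms go through exactly as you indicate. The derivation of the Vandermonde integral via orthogonality of monomials under the rotationally invariant weight is also correct, and you were right to phrase it as ``the corresponding value of $F$'': the exact index in the displayed identity (whether one should write $F(N-1)$ or $F(N+1)$) depends on how one labels the parameter in $c_{j}$, but the shift is $O(N)$ and does not affect the stated asymptotics.

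By contrast, the paper does not give a proof at all: it simply cites \cite{d-g-i-l} for the asymptotics and \cite{berm16} for an explicit error bound. So your approach is not so much different as it is a complete, elementary execution of what the paper leaves to the literature. The advantage of your write-up is that it makes the lemma independent of those references; the paper's citation, on the other hand, points the reader to a source where sharper constants in the $O(N)$ term are available.
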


\begin{proof}
This follows from the results in \cite[Section 3.2]{d-g-i-l}. In
fact, an explicit estimate for the error term $O(N)$ can be given
\cite{berm16}. 
\end{proof}

\subsubsection{\label{subsec:Energies-in-}Energies in $\C$ vs energies on the
Riemann sphere $X$}

Let $\mu$ be a probability measure on $\C$ and assume first that
$\mu$ has compact support in $\C.$ We then denote by $E_{0}(\mu)$
the usual logarithmic energy in $\C$
\[
E_{0}(\mu):=-\frac{1}{2}\int\log|z-w|^{2}\mu\otimes\mu\in]-\infty,\infty]
\]
More generally, if $\phi$ is an admissible function on $\C$ we define
the weighted logarithmic energy
\[
E_{0,\phi}(\mu):=E_{0}(\mu)+\int\phi\mu\in]-\infty,\infty]
\]
For a general probability measure we define 
\[
E_{0,\phi}(\mu):=-\frac{1}{2}\int\left(\log|z-w|^{2}-\psi_{0}(z)-\psi_{0}(w)\right)\mu\otimes\mu+\int(\phi-\psi_{0})\mu\in]-\infty,\infty]
\]
which is well-defined since both terms in the right hand side above
take values in $]-\infty,\infty].$ 

Next, identifying $\mu$ with a probability measure on the Riemann
sphere $X$ and $\phi$ with a metric on $\mathcal{O}(1)\rightarrow X$
we also get the energy $E_{\phi}(\mu),$ defined in Section \ref{subsec:The-energy-functional E phi as Leg tr}
wrt the reference metric $\psi_{0}$ defined by the Fubini-Study metric
(which has the property that $E_{\psi_{0}}(\mu_{0})=0).$
\begin{lem}
\label{lem:difference energies plane sphere}There exists a constant
$C$ such that, for any admissible function $\phi$ on $\C,$ 
\begin{equation}
E_{\phi}=E_{\phi,0}+C\label{eq:E phi is E not phi plus C}
\end{equation}
 In particular, if $u$ is a bounded continuous function on $\C,$
then 
\begin{equation}
\inf_{\mu\in\mathcal{P}(\C)}E_{0,\phi+u}(\mu)-\inf_{\mu\in\mathcal{P}(\C)}E_{0,\phi}(\mu)=\inf_{\mu\in\mathcal{P}(X)}E_{\phi+u}(\mu)-\inf_{\mu\in\mathcal{P}(X)}E_{\phi}(\mu)\label{eq:diff of energies in the plane}
\end{equation}
where $u$ has been identified with the lsc function on the Riemann
sphere $X$ obtained as the greatest lsc extension of $u$ from $\C.$ 
\end{lem}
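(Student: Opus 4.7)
My plan is to identify the constant $C$ by comparing the integral kernel $w(z,z'):=-\log|z-z'|^2+\psi_0(z)+\psi_0(z')$ appearing in the definition of $E_{0,\phi}$ with the Green function $G(x,y)$ of the Laplacian on the Riemann sphere (formula for the Green function). Viewing $\psi_0$ as $\log(1+|z|^2)$, the cancellation at infinity shows that $w$ extends to a bounded, continuous function on $(X\times X)\setminus\mathrm{diag}$. Using that $dd^c\log|z-z'|^2=\delta_{z'}$ in $\C$ and that $dd^c\psi_0=\mu_0$, we get $-dd^c_z w(z,z')=\delta_{z'}-\mu_0$ as currents on $X$, for each fixed $z'$. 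This is exactly the defining equation of $G(\cdot,z')$, so $w(\cdot,z')-G(\cdot,z')$ is harmonic on the compact Riemann surface $X$, hence constant in $z$. By symmetry of both $w$ and $G$ in their variables, this constant does not depend on $z'$ either, so there is a constant $C_0$ with
\[
w(z,z')=G(z,z')+2C_0\qquad\text{on}\ X\times X.
\]

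Plugging this identity into the definition of $E_{0,\phi}$ and using the representation $E_{\psi_0}(\mu)=\tfrac12\int G\,\mu\otimes\mu$ from Proposition \ref{prop:prop of energy when =00005Cphi is cont}, I get, for any $\mu\in\mathcal{P}(X)$,
\[
E_{0,\phi}(\mu)=\tfrac12\!\int w\,\mu\otimes\mu+\int(\phi-\psi_0)\mu=E_{\psi_0}(\mu)+C_0+\int(\phi-\psi_0)\mu=E_{\phi}(\mu)+C_0,
\]
which proves \eqref{eq:E phi is E not phi plus C} with $C=-C_0$ (both sides interpreted as elements of $]-\infty,\infty]$; admissibility of $\phi$ ensures $\phi-\psi_0$ is bounded from below, so all integrals are well-defined). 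Note that this identity already makes sense for every $\mu\in\mathcal{P}(X)$, not only for measures supported in $\C$, because the integrand defining $E_{0,\phi}$ consists of bounded-below lsc functions once the singularity of $-\log|z-z'|^2$ at infinity has been cancelled against the $\psi_0$ terms.

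For the second equality \eqref{eq:diff of energies in the plane}, applying the first part to $\phi$ and to $\phi+u$ on all of $\mathcal{P}(X)$ gives
\[
\inf_{\mathcal{P}(X)}E_{\phi+u}-\inf_{\mathcal{P}(X)}E_{\phi}=\inf_{\mathcal{P}(X)}E_{0,\phi+u}-\inf_{\mathcal{P}(X)}E_{0,\phi}.
\]
It therefore remains to show that the infima over $\mathcal{P}(X)$ coincide with the infima over $\mathcal{P}(\C)\subset\mathcal{P}(X)$. The key observation is that $G$ has a logarithmic singularity along the diagonal, so any probability measure carrying an atom has $E_{\psi_0}(\mu)=+\infty$; since $u$ and $\phi-\psi_0$ are bounded below, any $\mu$ with $\mu(\{\infty\})>0$ satisfies $E_{\phi+u}(\mu)=+\infty$. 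Hence the minimizers of $E_{\phi+u}$ and $E_{\phi}$ on $\mathcal{P}(X)$ (which exist and are unique by Theorem \ref{thm:conv of E wrt phi and u}) do not charge $\{\infty\}$, i.e.\ they belong to $\mathcal{P}(\C)$, and the $\mathcal{P}(X)$-infima equal the $\mathcal{P}(\C)$-infima. The same reasoning on the $\C$-side shows that $E_{0,\phi}$ and $E_{0,\phi+u}$ achieve their infima in $\mathcal{P}(\C)$, which completes the proof.

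The only delicate point is the extension of the constant-difference identity to the full space $\mathcal{P}(X)$ and the consequent reduction to $\mathcal{P}(\C)$; once the kernel identity $w=G+2C_0$ on $X\times X$ is established, everything else follows by substitution and the polarity of $\{\infty\}$ with respect to finite-energy measures. I expect the main obstacle to be bookkeeping with the $\pm\infty$ conventions for admissible (but possibly unbounded) $\phi$, which is handled by the lower bound $\phi-\psi_0\geq-C$ coming from Definition \ref{def:adm}.
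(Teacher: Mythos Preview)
Your proposal is correct and follows essentially the same route as the paper: you establish the kernel identity $w=G+2C_0$ (the paper writes $G=G_0+2C$ with $G_0=w$ and fixes $C$ via the normalization $\int G(x,\cdot)\mu_0=0$, whereas you deduce constancy of $w-G$ from harmonicity and symmetry), then substitute into the energy formula from Proposition~\ref{prop:prop of energy when =00005Cphi is cont}, and finally reduce the infimum over $\mathcal{P}(X)$ to $\mathcal{P}(\C)$ using that $\{\infty\}$ is polar so any measure charging it has infinite energy.
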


\begin{proof}
First observe that the Green function $G$ on $X,$ defined by formula
\ref{eq:def of Green f}, may be expressed as 
\[
G(z,w)=G_{0}(z,w)+2C,\,\,\,-G_{0}(z,w)=\log|z-w|^{2}-\psi_{0}(z)-\psi_{0}(w)
\]
Indeed, function $G_{0}(z,w)$ satisfies $(i)$ in formula \ref{eq:def of Green f}
and hence $(ii)$ also holds if
\begin{equation}
C:=-\frac{1}{2}\int_{X}G_{0}(x,\cdot)\omega_{0}=-\frac{1}{2}\int_{X\times X}G_{0}\omega_{0}\otimes\omega_{0}\label{eq:def of constant C}
\end{equation}
Thus 
\[
E_{\psi_{0}}(\mu)=-\frac{1}{2}\int\left(\log|z-w|^{2}-\psi_{0}(z)-\psi_{0}(w)\right)\mu\otimes\mu+C=E_{0,\psi_{0}}(\mu)+C
\]
and hence formula \ref{eq:E phi is E not phi plus C} follows. Next,
take $u\in C_{b}(\C).$ Then the functions $\Phi:=\phi+u$ and $\phi$
are admissible on $\C$ and may hence, by Lemma \ref{lem:growth gives lsc metric},
be identified with admissible metrics on $L.$ Now take $\mu\in\mathcal{P}(X).$
If $\mu$ charges the point at infinity in $X$ then it has infinite
energy (since any point defines a polar subset of $X).$ Hence, when
considering the infima in the right hand side of formula \ref{eq:diff of energies in the plane}
we may as well assume that $\mu\in\mathcal{P}(\C).$ But then \ref{eq:diff of energies in the plane}
follows from formula \ref{eq:E phi is E not phi plus C}. 

Note that in Section \ref{subsec:Statement-of-the} the notation $E_{\phi}$
was used for the functional $E_{0,\phi},$ but this abuse of notation
should be harmless as we will mainly consider differences as in the
previous lemma. 
\end{proof}
\begin{rem}
The constant $C$ defined by formula \ref{eq:def of constant C} is,
in fact, given by $-1/2,$ as revealed by an explicit calculation.
Alternatively, by Lemma \ref{lem:def and exp of F}
\[
\frac{1}{2}=\lim_{N\rightarrow\infty}N^{-2}\log\int_{\C^{N}}|D^{(N)}|^{2}e^{-(N+1)\psi_{0}}d\lambda^{\otimes N}=\inf_{\mathcal{P}(\C)}E_{0,\psi_{0}},
\]
 using the large deviation results in \cite{berm1} in the last equality.
But the right hand side above coincides with $-C.$
\end{rem}

Denote by $H^{1}(\C)$ the space of all $u\in L_{loc}^{2}(\C)$ such
that $\nabla u\in L^{2}(\C)$ in the sense of distributions and set
\[
\left\Vert u\right\Vert _{H^{1}(\C)}^{2}:=\int_{\C}du\wedge d^{c}u=\frac{1}{4\pi}\int_{\C}|\nabla u|^{2}d\lambda
\]
 We will say that $u_{j}\rightarrow u$ in the $H^{1}-$topology if
$u_{j}\rightarrow u$ in $L_{loc}^{2}$ and $\left\Vert u_{j}-u\right\Vert _{H^{1}(\C)}\rightarrow0.$
The space $H^{1}(\C)$ may be identified with the corresponding space
$H^{1}(X)$ on the Riemann sphere according to the following basic
\begin{lem}
\label{lem:restr is a homeo}The restriction map $H^{1}(X)\rightarrow H^{1}(\C)$
is a homeomorphism. 
\end{lem}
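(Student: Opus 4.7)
The plan is to show that the restriction map $\rho : H^1(X) \to H^1(\C)$ is a continuous linear bijection with continuous inverse. Continuity of $\rho$ is immediate: the Dirichlet integral $\int du\wedge d^c u$ gives the same value whether computed on $X$ or on $\C$, since $\{\infty\}$ is a Lebesgue null set in every chart, and $L^2_{\mathrm{loc}}(X)$ restricts trivially to $L^2_{\mathrm{loc}}(\C)$. Injectivity is equally easy: if $u\in H^1(X)$ vanishes on $\C$, then since $\{\infty\}$ has measure zero we get $u=0$ in $L^2_{\mathrm{loc}}(X)$.

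The heart of the argument is surjectivity. Given $u \in H^1(\C)$, I would produce an extension $\tilde u \in H^1(X)$ by working in the chart $U_\infty$ around $\infty$ with coordinate $w = 1/z$, setting $\tilde u(w) := u(1/w)$ on $U_\infty \setminus \{\infty\}$. Conformal invariance of the two-dimensional Dirichlet integral gives
\[
\int_{U_\infty \setminus \{\infty\}} d\tilde u \wedge d^c \tilde u \;=\; \int_{|z|>1} du \wedge d^c u \;<\; \infty,
\]
so the only nontrivial point is to show that $\tilde u \in L^2_{\mathrm{loc}}(U_\infty)$, which in the $z$-variable reads $\int_{|z|>R} |u(z)|^2 |z|^{-4}\,d\lambda(z) < \infty$ for some large $R$. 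This is the main obstacle.

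I would handle the tail bound by a dyadic Poincar\'e argument on the annuli $A_k := \{2^k \le |z| < 2^{k+1}\}$: the scaled Poincar\'e inequality gives $\int_{A_k} |u - u_{A_k}|^2\,d\lambda \le C\cdot 4^k \int_{A_k} |\nabla u|^2\,d\lambda$ where $u_{A_k}$ denotes the mean of $u$ on $A_k$, and combined with the elementary bound $\int_{A_k} |z|^{-4}\,d\lambda \le C'\cdot 4^{-k}$ this controls each piece $\int_{A_k}|u|^2|z|^{-4}\,d\lambda$ by a constant multiple of $\int_{A_k}|\nabla u|^2 + |u_{A_k}|^2 \cdot 4^{-k}$. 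The increments $u_{A_{k+1}} - u_{A_k}$ are in turn controlled by $\bigl(\int_{A_k \cup A_{k+1}} |\nabla u|^2\bigr)^{1/2}$ via Poincar\'e on the union, which forces $|u_{A_k}|^2/4^k$ to be summable; thus the full tail is bounded by a constant multiple of $\|u\|_{L^2(B(0,2))}^2 + \|u\|_{H^1(\C)}^2$. Once $\tilde u$ is known to lie in $L^2_{\mathrm{loc}}(U_\infty)$ with $L^2$ gradient on the punctured chart, standard removability at an isolated point finishes the job: testing the distributional derivative on $U_\infty$ against a cutoff $\chi_\epsilon$ vanishing near $\infty$ and using $|\int \tilde u\,\varphi\,\partial_i \chi_\epsilon\,d\lambda| \le C\|\varphi\|_\infty \|\tilde u\|_{L^2(B(\infty, 2\epsilon))} \to 0$ rules out any singular contribution at $\infty$. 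The same estimates yield $\|\tilde u\|_{H^1(X,g)} \le C(\|u\|_{L^2_{\mathrm{loc}}(\C)} + \|u\|_{H^1(\C)})$, which gives continuity of $\rho^{-1}$ directly (alternatively one may invoke the open mapping theorem between Hilbert spaces). The only genuine technical step is the dyadic $L^2$-tail estimate; everything else reduces to conformal invariance, measure-theoretic bookkeeping, or removability at a single point.
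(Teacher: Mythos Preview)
Your proof is correct, but the paper takes a much shorter route for the main step (surjectivity). Instead of building the extension by hand and controlling the $L^2$ tail via dyadic annuli, the paper invokes the Riesz representation theorem on the Hilbert space $H^1(X)/\R$: given $u\in H^1(\C)$, the linear functional $\Lambda[v]:=\int_\C du\wedge d^c v$ is bounded on $H^1(X)/\R$ by Cauchy--Schwarz, hence represented by some $\tilde u\in H^1(X)$; then $d(\tilde u-u)=0$ on $\C$ forces $\tilde u-u$ to be constant. This is a three-line argument that avoids all the Poincar\'e bookkeeping, the removability step, and the explicit tail estimate. Your approach, by contrast, is fully constructive and yields an explicit quantitative bound $\|\tilde u\|_{H^1(X,g)}\le C(\|u\|_{L^2(B(0,2))}+\|\nabla u\|_{L^2(\C)})$, which the Riesz argument does not directly give; this could be useful if one wanted effective constants. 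For continuity of the inverse the paper also argues indirectly, via the Poincar\'e inequality on $X$ applied to the mean-zero parts, rather than via your explicit estimate. One small caveat: your alternative appeal to ``the open mapping theorem between Hilbert spaces'' is not quite accurate as stated, since $H^1(\C)$ with the topology in the paper ($L^2_{\mathrm{loc}}$ convergence plus Dirichlet seminorm) is Fr\'echet rather than Hilbert; but your direct bound already suffices, so this is harmless.
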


\begin{proof}
Injectivity is immediate. To see that surjectivity holds take $u\in H^{1}(\C),$
i.e. $u\in L_{loc}^{2}$ and $\int_{\C}du\wedge d^{c}u<\infty.$ Consider
the linear operator on $H^{1}(X)/\R$ defined by $\Lambda[v]:=\int_{\C}du\wedge d^{c}v.$
By the Cauchy-Schwartz inequality $\Lambda$ is continuous on $H^{1}(X)/\R$
and hence there exists $\tilde{u}\in H^{1}(X)$ such that $\Lambda[v]=\int_{X}d\tilde{u}\wedge d^{c}v.$
It follows that $d(\tilde{u}-u)=0$ in the sense of distributions
on $\C$ and hence $\tilde{u}-u$ is constant, showing surjectivity
of the restriction map. All that remains is to show that it is continuous.
To this end assume that $u_{j}\rightarrow u$ in $H^{1}(\C).$ By
the previous steps we may identify $u_{j}$ and $u$ with elements
in $H^{1}(X).$ Hence, $u_{j}$ and $u\in L^{2}(\mu_{0}),$ where
$\mu_{0}=dV_{g}$ for the standard metric $g$ on $X.$ Set $v_{j}:=u_{j}-\int u_{j}\mu_{0}$
and $v:=v-\int u\mu_{0}.$ By assumption $\int_{X}dv_{j}\wedge d^{c}v\rightarrow0$
and hence it it follows from the Poincaré inequality for $\Delta_{g}$
on $X$ that $v_{j}\rightarrow v$ in $L^{2}(X,g).$ Since both $u_{j}\rightarrow u$
and $v_{j}\rightarrow v$ as distributions on $\C$ it follows that
$\int u_{j}\mu_{0}\rightarrow\int u\mu_{0}$ and hence $u_{j}\rightarrow u$
in $L^{2}(X,dV_{g}),$ as desired. 
\end{proof}
\begin{rem}
It follows from the previous lemma that if $u\in H^{1}(\C),$ then
$u/(1+|z|^{2})\in L^{2}(\C,d\lambda),$ but, in general, $u$ is not
in $L^{2}(\C,d\lambda).$ It also follows from the proof that it is
enough to assume that $u$ is a distribution such that $\nabla u\in L^{2}(\C)$
and that $u_{j}\rightarrow u$ in the sense of distributions on $\C$
in the definition of the $H^{1}(\C)-$topology.
\end{rem}

If $\phi$ is an admissible function on $\C$ and $u\in H^{1}(\C)$
we define the corresponding free energy 
\[
\mathcal{F}(\phi+u):=\inf_{\mu}\left(E_{0,\phi}(\mu)+\int_{\C}u\mu\right)
\]
where the infimum runs over all $\mu\in\mathcal{P}(\C)$ such that
$E_{0,\phi}(\mu)<\infty.$ Combining Lemma \ref{lem:restr is a homeo}
and Theorem \ref{thm:conv of E wrt phi and u} then gives the following
\begin{thm}
\label{thm:conv of free energy in C}Let $\phi$ be an admissible
function in $\C$ and $u\in H^{1}(\C).$ Then
\begin{itemize}
\item $\mathcal{F}(\phi+u)$ is finite 
\item Let $\phi_{N}$ be a sequence of admissible functions in $\C$ increasing
to an admissible function $\phi$ and $u_{N}$ a sequence in $H^{1}(\C)$
converging to $u\in H^{1}(\C)$ in the $H^{1}-$topology. Then $\mathcal{F}(\phi_{N}+u_{N})\rightarrow\mathcal{F}(\phi+u).$ 
\end{itemize}
\end{thm}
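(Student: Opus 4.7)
The plan is to transfer the statement to the Riemann sphere setting and invoke Theorem \ref{thm:conv of E wrt phi and u}. By Lemma \ref{lem:growth gives lsc metric}, an admissible function $\phi$ on $\C$ corresponds to an admissible (lsc) metric on $L=\mathcal{O}(1)\to X$, which I still denote by $\phi$; by Lemma \ref{lem:restr is a homeo}, $H^{1}(\C)\cong H^{1}(X)$ topologically, so I can identify $u\in H^{1}(\C)$ with its extension to $H^{1}(X)$.

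First I would establish a planar analogue of Lemma \ref{lem:difference energies plane sphere} for $u\in H^{1}$ rather than $u\in C_{b}$. For $\mu\in\mathcal{P}(X)$ of finite energy, any atom at the point $x_{\infty}$ at infinity would force $E_{\psi_{0}}(\mu)=+\infty$ since singletons are polar, so such $\mu$ sits in $\mathcal{P}(\C)$. On such measures the computation in Lemma \ref{lem:difference energies plane sphere} gives $E_{\phi}(\mu)=E_{0,\phi}(\mu)+C$, and the pairing $\langle u,\mu\rangle$ (defined intrinsically on $X$ via the $H^{1}(X)$--$H^{-1}(X)$ duality for $\mu-\mu_{0}$) coincides with $\int u\,\mu$ computed in $\C$ plus $\int u\,\mu_{0}$; in particular, modulo an additive constant only depending on $\phi$ and $u$,
\[
\mathcal{F}(\phi+u)=\mathcal{F}_{\phi}^{X}(u)+\text{const},
\]
where $\mathcal{F}_{\phi}^{X}(u):=\inf_{\mu\in\mathcal{P}(X)}(E_{\phi}(\mu)+\langle u,\mu\rangle)$ is the free energy introduced in Section \ref{subsec:The-energy-functional E phi as Leg tr} (formula \ref{eq:expression for leg transform in term sof H^1-1}) and extended to admissible metrics and $H^{1}$ test functions in the definition preceding Theorem \ref{thm:conv of E wrt phi and u}.

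For the first bullet, finiteness of $\mathcal{F}(\phi+u)$ is then immediate from the first part of Theorem \ref{thm:conv of E wrt phi and u}: the functional $E_{\phi+u}$ is lsc, convex, and coercive on $\mathcal{P}(X)\cap H^{-1}(X)$, so it attains its infimum at a measure of finite energy; hence $\mathcal{F}_{\phi}^{X}(u)>-\infty$, while the trivial inequality $\mathcal{F}_{\phi}^{X}(u)\leq E_{\phi+u}(\mu_{0})<\infty$ (using that $\phi$ is bounded below by $\psi_{0}-C$ on the non-polar set where $\phi<\infty$, together with $u\in L^{1}(\mu_{0})$) gives the upper bound.

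For the second bullet, if $\phi_{N}\nearrow\phi$ as admissible functions on $\C$, then by Lemma \ref{lem:growth gives lsc metric} the associated lsc metrics on $L$ also increase monotonically to the lsc metric $\phi$, and if $u_{N}\to u$ in $H^{1}(\C)$ then by Lemma \ref{lem:restr is a homeo} the same convergence holds in $H^{1}(X)$. Applying the second part of Theorem \ref{thm:conv of E wrt phi and u} gives $\mathcal{F}_{\phi_{N}}^{X}(u_{N})\to\mathcal{F}_{\phi}^{X}(u)$, and translating back through the constant identified above yields $\mathcal{F}(\phi_{N}+u_{N})\to\mathcal{F}(\phi+u)$. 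The main technical obstacle is the initial bookkeeping step comparing the planar and spherical energies for general $H^{1}$ test functions: one must check that the point at infinity contributes nothing (using that finite-energy measures give it zero mass) and that the extrinsic pairing $\int u\,\mu$ over $\C$ matches the intrinsic $H^{1}(X)$--$H^{-1}(X)$ pairing; once this identification is in place, everything reduces to the already established compact case.
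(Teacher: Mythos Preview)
Your approach is exactly the paper's: reduce to the compact setting via Lemma~\ref{lem:growth gives lsc metric} and Lemma~\ref{lem:restr is a homeo}, then invoke Theorem~\ref{thm:conv of E wrt phi and u}. One small clarification: the additive constant relating $E_{0,\phi}$ and $E_{\phi}$ in Lemma~\ref{lem:difference energies plane sphere} is the \emph{universal} constant $C$ of formula~\eqref{eq:def of constant C}, not one depending on $\phi$ or $u$; this is what makes your ``translating back'' step in the second bullet legitimate without further argument, since the same constant appears for every $(\phi_{N},u_{N})$.
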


\subsection{The case when $V=(N+p)\phi$}

Given $\beta>0$ we set 
\[
p:=\frac{2}{\beta}-1
\]
In this section we will assume that $\beta\leq1,$ i.e. $p\geq1.$
Given a function $\phi$ in $\C$ with super logarithmic growth we
define the an\emph{ error sequence} \emph{$\epsilon_{N,\beta}[\phi]:$}
\[
\epsilon_{N,\beta}[\phi]=\epsilon_{N,\beta}^{(s)}[\phi]-\epsilon_{N,\beta}^{(u)},
\]
 decomposed in the \emph{specific error $\epsilon_{N}^{(s)}[\phi]$
}and the \emph{universal error $\epsilon_{N}^{(u)}$ }(which independent
of $\phi),$ where 

\begin{equation}
\epsilon_{N,\beta}^{(s)}[\phi]:=-\left(\frac{1}{\beta N(N+p)}\log Z_{N,\beta}[(N+p)\phi]+\mathcal{F}(\phi)\right)\label{eq:def of error sequence for V N plus p text}
\end{equation}
\[
\epsilon_{N,\beta}^{(u)}:=-\frac{N+1}{N+p}\left(\frac{1}{N(N+1)}\log Z_{N,1}[(N+1)\psi_{0}]+\mathcal{F}(\psi_{0})\right)-\frac{\frac{1}{\beta}-1}{N+p}\left(\log\int e^{-2\psi_{0}}d\lambda+2\mathcal{F}(\psi_{0})\right)
\]
Rearranging the terms gives the following alternative expression:
\[
\epsilon_{N,\beta}[\phi]:=-\left(\frac{1}{N(N+p)}\log\frac{\left(Z_{N,\beta}[(N+p)\phi]\right)^{1/\beta}}{Z_{N,1}[(N+1)\psi_{0}]}+\mathcal{F}(\phi)-\mathcal{F}(\psi_{0})\right)+\frac{\frac{1}{\beta}-1}{N+p}\log\pi
\]
\begin{rem}
\label{rem:error is minimal}When $\beta=1$ we have $\epsilon_{N,\beta}[\phi]=\epsilon_{N,\beta}^{(s)}[\phi]-\epsilon_{N,\beta}^{(s)}[\psi_{0}],$
which is minimal and vanishes for $\phi=\psi_{0}$ (by the proof of
Theorem \ref{thm:upper bd on part funct riem surf adj}). More generally,
$\epsilon_{N,\beta}[\phi]=0$ if $dd^{c}\phi=F(dd^{c}\psi_{0})$ for
some Möbius transformation $F$ (i.e. a biholomorphism of the Riemann
sphere). Indeed, Theorem \ref{thm:upper bd on part funct riem surf adj}
only requires that $dd^{c}\phi$ has constant scalar curvature, viewed
as a Kähler metric on the Riemann sphere. 
\end{rem}

\subsubsection{Proof of Theorem \ref{thm:sharp non Gaussian bd intro}}

\emph{Step 1: the case $\beta=1.$ }

Set $N:=k+1$ and identify $H^{0}(X,kL)$ with $H^{0}(X,(N+1)L+K_{X})$
as in Lemma \ref{lem:growth gives lsc metric}. Thanks to the identifications
used in Lemma \ref{lem:growth gives lsc metric} and the discussions
in Sections \ref{subsec:Energies-in-} and \ref{subsec:Partition-functions-in}
we can apply Theorem \ref{thm:upper bd on part funct riem surf adj}
and deduce that 
\[
\frac{1}{N(N+1)}\log\frac{Z_{N}[(N+1)(\phi+u)]}{Z_{N}[(N+1)\phi]}\leq-\mathcal{F}(\phi+u))+\mathcal{F}(\phi)+\epsilon[(N+1)\phi],
\]

\emph{Step 2: the case $\beta\leq1.$ }

Set $\Phi:=\phi+u.$ First observe that
\begin{equation}
\frac{1}{\beta}\log Z_{N,\beta}[(N+p)\Phi]\leq\log Z_{N}[(N+1)\Phi]+(1/\beta-1)N\log\int e^{-2\Phi}\label{eq:ineq in proof Thm V adjoint}
\end{equation}

To this end rewrite the adjoint norm in formula \ref{eq:weighted norm with N+1 in lemma}
as 
\[
\int|p_{k}|^{2}e^{-(N+1)\phi}d\lambda=\int|p_{k}|^{2}e^{-k\phi}dV,\,\,\,\,dV=e^{-2\phi}d\lambda
\]
Since $\phi$ has super logarithmic growth the measure $dV$ has finite
mass. Hence, by Hölder's inequality 
\[
\left(\int\left(|p_{k}|^{2}e^{-(k+2\beta^{-1})\phi}\right)^{\beta}d\lambda\right)^{1/\beta}=\left(\int(|p_{k}|^{2}e^{-k\phi})^{\beta}dV\right)^{1/\beta}\leq\left(\int|p_{k}|^{2}e^{-k\phi}dV\right)(\int dV)^{(1/\beta-1)}
\]
Hence, replacing $p_{k}$ with $D^{(N_{k})}$ and $\phi$ with $\Phi$
and using the previous inequality on each factor of $\C^{N}$ proves
the inequality \ref{eq:ineq in proof Thm V adjoint}. As a consequence,

\[
\frac{1}{\beta N(N+p)}\log Z_{N,\beta}[(N+p)\Phi]\leq\frac{N+1}{N+p}\frac{1}{N(N+1)}\log Z_{N}[(N+1)\Phi]+\frac{1}{N+p}(1/\beta-1)\log\int e^{-2\Phi}
\]
Hence, applying Step one, i.e. the case when $\beta=1,$ for $N$
and $N=1$ (see Remark \ref{rem:upper bound on Z in C}) gives 
\[
\frac{1}{\beta N(N+p)}\log Z_{N,\beta}[(N+p)\Phi]\leq
\]

\[
\leq\frac{N+1}{N+p}(-\mathcal{F}(\Phi))+\frac{N+1}{N+p}\left(\log Z_{N,1}[(N+1)\psi_{0}]+\mathcal{F}(\psi_{0})\right)+\frac{1}{N+p}(1/\beta-1)\left(-2\mathcal{F}(\Phi)+\log Z_{1,1}[2\psi_{0}]+2\mathcal{F}(\psi_{0})\right)
\]
But $\frac{N+1}{N+p}+\frac{2}{N+p}(1/\beta-1)=1$ and hence 
\[
\frac{1}{\beta N(N+p)}\log Z_{N,\beta}[(N+p)\Phi]\leq-\mathcal{F}(\Phi)+\frac{N+1}{N+p}\left(\log Z_{N,1}[(N+1)\psi_{0}]+\mathcal{F}(\psi_{0})\right)+\frac{1}{N+p}(1/\beta-1)\left(\log Z_{1,1}[2\psi_{0}]+2\mathcal{F}(\psi_{0})\right)
\]
By the very definition of the error sequence \ref{eq:def of error sequence for V N plus p text}
this concludes the proof. 
\begin{rem}
\label{rem:upper bound on Z in C}Note that Theorem \ref{thm:sharp non Gaussian bd intro}
is (for $\beta=1)$ equivalent to the upper bound
\[
\frac{1}{N(N+1)}\log\frac{Z_{N}[(N+1)\Phi]}{Z_{N}[(N+1)\psi_{0}]}\leq-\mathcal{F}(\Phi)+\mathcal{F}(\psi_{0})
\]
for any admissible $\Phi.$ Indeed, the error term $\epsilon[(N+1)\phi]$
is defined so that the previous inequality translates into an upper
bound on $\frac{1}{N(N+1)}\log\E(^{-(N+1)\Phi})/\E(^{-(N+1)\phi})$
for $\Phi=\phi+u.$ 
\end{rem}

\subsubsection{Proof of Theorem \ref{thm:sharp Gauss bound on E intro}}

Theorem \ref{thm:sharp Gauss bound on E intro} follows directly from
applying Theorem \ref{thm:sharp non Gaussian bd intro} to $u-\bar{u}$
and using Lemma \ref{lem:smaller than dirichlet}. 

\subsubsection{Proof of Corollary \ref{cor:Gaussian dev ineq}}

The corollary follows from Theorem \ref{thm:sharp non Gaussian bd intro}
using the following standard observation: let $Y$ be a, say continuous,
function on a topological space $\mathcal{X}$ endowed with a (Borel)
measure $\nu$ such that $\int e^{-\lambda tY}\nu\leq Ce^{\lambda Bt^{2}/2}.$
Then $\int_{\{Y\geq s\}}\nu\leq Ce^{-\lambda B^{-1}s^{2}/2}.$ Indeed,
for any fixed $t$ we have $\int_{\{Y\geq s\}}\nu=\int_{\{Y\geq s\}}e^{tY}e^{-tY}\nu\leq e^{sY}\int e^{-tY}\nu\leq Ce^{Bt^{2}/2-st}.$
Hence, taking the inf over all $t$ and completing the square proves
the observation when $\lambda=1$ and the general case when follows
from rescaling $t.$

\subsection{The setting where $V=N\phi$ for $\phi$ with strictly logarithmic
growth}

Given $\beta\in]0,1]$ and $\phi$ with strictly super logarithmic
growth we set 
\[
\phi_{N}=\frac{N}{N+p},\,\,\,p:=2/\beta-1.
\]
 We can then rewrite

\begin{equation}
Z_{N,\beta}[N\phi]=Z_{N,\beta}[N(\phi_{N}+2/\beta-1)]\label{eq:ZN in terms of rescaled phi}
\end{equation}
\begin{rem}
In order to ensure that $\phi_{N}$ has logarithmic growth $N$ has
to be taken sufficiently large. But, in fact, it is not hard to see
that all the arguments below are valid for any $N.$ The point is
that, in the adjoint setting (and hence also the setting where $V=(N+p)\phi)$
one could more generally work with metrics which are locally of the
form $\phi+c_{k}\log|z|^{2}$ for $\phi$ lsc as long as $c_{k}k<1$
(which ensures that $e^{-k\phi}$ is locally integrable). 
\end{rem}

\begin{lem}
\label{lem:estimates on energy for phi stongly ad}Assume that $\phi$
is strongly admissible. Then

\[
-N(N+p)E_{\phi_{N}}(\mu_{\phi_{N}})\leq-N^{2}E_{\phi}(\mu_{\phi})-NpE_{0}(\mu_{\phi_{N}})
\]
and 
\[
N(N+p)E_{\phi_{N}}(\mu_{\phi_{N}})\leq N^{2}E_{\phi}(\mu_{\phi})+NpE_{0}(\mu_{\phi})
\]
As a consequence, $E_{0}(\mu_{\phi_{N}})\leq E_{0}(\mu_{\phi}).$
\end{lem}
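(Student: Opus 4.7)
The plan is to reduce both inequalities to trivial consequences of the defining minimization property of the equilibrium measure. The key observation is an algebraic identity linking $E_{\phi_N}$, $E_\phi$, and $E_0$ which holds pointwise on $\mathcal{P}(\mathbb{C})$.

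Namely, since $\phi_N = \frac{N}{N+p}\phi$, one has, for any $\mu \in \mathcal{P}(\mathbb{C})$ for which the integrals make sense,
\[
(N+p)\,E_{\phi_N}(\mu) \;=\; (N+p)E_0(\mu) + N\!\int\phi\,\mu.
\]
Multiplying by $N$ and subtracting $N^2 E_\phi(\mu) = N^2 E_0(\mu) + N^2\!\int \phi\,\mu$ gives the clean identity
\begin{equation}
N(N+p)\,E_{\phi_N}(\mu) \;=\; N^2 E_\phi(\mu) + Np\,E_0(\mu), \label{eq:key-identity}
\end{equation}
valid for every admissible $\mu$. This is the only computation needed.

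Now the first inequality is obtained by applying \eqref{eq:key-identity} to $\mu = \mu_{\phi_N}$: it becomes equivalent, after cancelling the $Np\,E_0(\mu_{\phi_N})$ term on both sides, to $E_\phi(\mu_\phi) \leq E_\phi(\mu_{\phi_N})$, which is the defining minimization property of $\mu_\phi$ (Prop.~\ref{prop:diff of composed energy} applied to $\phi$, via the identification in Lemma~\ref{lem:difference energies plane sphere}). Symmetrically, the second inequality, via \eqref{eq:key-identity} applied to $\mu = \mu_\phi$, is equivalent to $E_{\phi_N}(\mu_{\phi_N}) \leq E_{\phi_N}(\mu_\phi)$, the defining property of $\mu_{\phi_N}$. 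One needs $\phi_N$ to be admissible in order for $\mu_{\phi_N}$ to be defined; this holds for $N$ sufficiently large because the strict super logarithmic growth of $\phi$ is inherited by $\phi_N = \tfrac{N}{N+p}\phi$ with a strictly positive $\epsilon_N$.

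The consequence $E_0(\mu_{\phi_N}) \leq E_0(\mu_\phi)$ is then obtained by simply summing the two displayed inequalities: the $\pm N(N+p)E_{\phi_N}(\mu_{\phi_N})$ terms cancel, the $\pm N^2 E_\phi(\mu_\phi)$ terms cancel, and what remains is $0 \leq Np\bigl(E_0(\mu_\phi) - E_0(\mu_{\phi_N})\bigr)$. Since $p = 2/\beta - 1 \geq 1 > 0$, dividing through proves the claim. There is essentially no obstacle here beyond checking that all quantities are finite and well-defined; in particular $E_0(\mu_\phi)$ and $E_0(\mu_{\phi_N})$ are finite because both equilibrium measures have compact support (by strict super logarithmic growth of $\phi$ and $\phi_N$) and finite energy.
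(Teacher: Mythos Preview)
Your proof is correct and in fact cleaner than the paper's. Both arguments ultimately rest on the same two minimization inequalities, $E_\phi(\mu_\phi)\le E_\phi(\mu_{\phi_N})$ and $E_{\phi_N}(\mu_{\phi_N})\le E_{\phi_N}(\mu_\phi)$, but the packaging differs. You isolate the pointwise identity $N(N+p)E_{\phi_N}(\mu)=N^2E_\phi(\mu)+NpE_0(\mu)$ once and then both inequalities fall out symmetrically by a single substitution each. The paper instead treats the two inequalities asymmetrically: for the first it carries out the same minimization step but with several intermediate algebraic rewritings, and for the second it invokes the differentiability of $t\mapsto\mathcal{F}(t\phi)$ (Prop.~\ref{prop:beatif F is diff}) together with concavity to get the tangent-line inequality $f(t)\le f(1)+f'(1)(t-1)$. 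Unwinding that tangent-line inequality at $t=N/(N+p)$, however, is exactly the statement $E_{\phi_N}(\mu_{\phi_N})\le E_{\phi_N}(\mu_\phi)$, so the extra machinery is not really needed. Your route is more elementary and makes the symmetry between the two estimates transparent; the paper's route has the minor expository benefit of illustrating the use of Prop.~\ref{prop:beatif F is diff}.
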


\begin{proof}
\emph{Step 1: The first inequality}

Since $\mu_{\phi}$ minimizes $E_{\phi}$ we have $E_{\phi}(\mu_{\phi})\leq E_{\phi}(\mu_{\phi_{N}})$
and hence
\[
-E_{\phi_{N}}(\mu_{\phi_{N}})\leq-E_{\phi}(\mu_{\phi})+\int(\phi-\phi_{N})\mu_{\phi_{N}}
\]
Multiplying both sides with $N^{2}$ thus gives
\[
-N(N+p)E_{\phi_{N}}(\mu_{\phi_{N}})\leq-N^{2}E_{\phi}(\mu_{\phi})-pNE_{\phi_{N}}(\mu_{\phi_{N}})+N\int N(\phi-\phi_{N})\mu_{\phi_{N}}
\]
Since $N(\phi-\phi_{N})=p\phi_{N}$ the last integral above may be
expressed as
\[
\int N(\phi-\phi_{N})\mu_{\phi_{N}}=p\int\phi_{N}\mu_{\phi_{N}}=p(E_{\phi_{N}}(\mu_{\phi_{N}})-E(\mu_{\phi_{N}})),
\]
 which proves the first inequality. 

\emph{Step 2: The second inequality}

By Prop \ref{prop:beatif F is diff} the function $f(t):=\mathcal{F}(t\phi):=E_{t\phi}(\mu_{t\phi})$
is differentiable. Moreover, since it can be realized as the infimum
of affine functionals it is also concave. Hence, $f(t)\leq f(1)+f'(1)(t-1)$
which, applied to $t=N/(N+p)$ translates into 
\[
E_{\phi_{N}}(\mu_{\phi_{N}})\leq E_{\phi}(\mu_{\phi})+(\frac{N}{N+p}-1)\int\mu_{\phi}\phi=E(\mu_{\phi})+\frac{N}{N+p}\int\mu_{\phi}\phi=\frac{N}{N+p}E_{\phi}(\mu_{\phi})+\frac{p}{N+p}E(\mu_{\phi})
\]
\end{proof}

\subsubsection{Proof of Theorem \ref{thm:sharp non-Gauss ine for strong}}

Applying Theorem \ref{thm:sharp non Gaussian bd intro} to $\phi_{N}:=\phi N/(N+p)$
and $u_{N}:=uN/(N+p)$ gives
\begin{equation}
\frac{1}{\beta}\log\frac{Z_{N}[N(\phi+u)]}{Z_{N}[N\phi]}\leq N(N+p)\left(-E_{\Phi_{N}}(\mu_{\Phi_{N}}))+E_{\phi_{N}}(\mu_{\phi_{N}})\right)+N(N+p)\epsilon_{N,\beta}[\phi_{N}]\label{eq:ineq in proof ine for E strong}
\end{equation}
Applying Lemma \ref{lem:estimates on energy for phi stongly ad} thus
gives
\[
\frac{1}{N^{2}}\frac{1}{\beta}\log\frac{Z_{N}[N(\phi+u)]}{Z_{N}[N\phi]}\leq\left(-E_{\Phi}(\mu_{\Phi})+E_{\phi_{N}}(\mu_{\phi_{N}})\right)+pN^{-1}\left(-E_{0}(\mu_{\Phi_{N}})+E_{0}(\mu_{\phi})\right)+(1+N^{-1}p)\epsilon_{N,\beta}[\phi_{N}]
\]

\subsubsection{Proof of Theorem \ref{thm:gaussian ineq for E for strong}}

By definition we can express $\E_{N\phi}(e^{N^{2}\beta tU})=\E_{(N+p)\phi_{N}}(e^{N(N+p)\beta tU_{N}},$
where $U_{N}$ is the linear statistic of $u_{N}:=u(N+p)/N.$ Hence,
applying the inequality in Theorem \ref{thm:sharp Gauss bound on E intro}
to $\phi_{N}$ and $u_{N}$ gives
\[
\beta^{-1}N^{-2}\log\E(e^{N^{2}t\beta\left(U-\bar{u}\right)})\leq\frac{1}{2}\frac{N}{N+p}t^{2}J(u_{N})+|t\int u(\mu_{\phi_{N}}-\mu_{\phi})|+(1+N^{-1}p)\epsilon_{N,\beta}[\phi_{N}]
\]

Now, $\frac{N+p}{N}J(u_{N})=\frac{N}{N+p}J(u)$ and the Cauchy-Schwartz
inequality gives, 
\[
|\int u(\mu_{\phi_{N}}-\mu_{\phi})|\leq\left\Vert u\right\Vert _{H^{1}}\left\Vert P(\phi_{N})-P(\phi)\right\Vert _{H^{1}}\leq\frac{p}{N+p}b_{\phi}\left\Vert u\right\Vert _{H^{1}}
\]
 using Prop \ref{prop:H one conv for H-S derivative} in the last
step. Finally, decomposing $\frac{N}{N+p}J(u_{N})=J-Jp/N+p$ and using
the inequality $ab\leq a^{2}/2+b/2$ with $a=\left\Vert u\right\Vert _{H^{1}}$
and $b=b_{\phi}$ concludes the proof (with $B_{\phi}:=b_{\phi}^{2}/2$).

\subsection{\label{subsec:Estimates-of-the}Estimates of the error terms}

In this section we show how to explicitly estimate the error terms,
in terms of $\phi,$ appearing in the main results. 

\subsubsection{The case when $V=(N+p)\phi$}
\begin{prop}
\label{prop:bound on error for V adj}Assume that $\phi$ has logarithmic
growth. Then 
\[
\frac{1}{N(N+p)}\log\frac{\left(Z_{N,\beta}[(N+p)\phi]\right)^{1/\beta}}{Z_{N,1}[(N+1)\psi_{0}]}+E_{\phi}(\mu_{\phi})\geq\frac{p+1}{N+p}\inf_{S}(\phi-\psi_{0})-N^{-1}\beta^{-1}\log\sup_{S}\frac{dd^{c}\phi}{e^{-2\phi}}
\]
As a consequence, 
\[
\epsilon_{N,\beta}[\phi]\leq\beta^{-1}N^{-1}C_{\phi}+\frac{1}{2}N^{-1}\log N+O(N)
\]
 where $C_{\phi}$ only depends on a lower bound on $\inf_{S}(\phi-\psi_{0})$
and an upper bound on $\sup_{S}\frac{dd^{c}\phi}{e^{-2\phi}}$ (which
in turn only depends on upper bounds on $dd^{c}\phi/\omega_{0}$ and
$\sup(\phi-\psi_{0})$)
\end{prop}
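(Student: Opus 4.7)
The plan is to apply the Gibbs variational principle (Lemma \ref{lem:Gibbs}) to $Z_{N,\beta}[(N+p)\phi]$ with trial measure $\mu_\phi^{\otimes N}$, exploit the algebraic identity $p+1=2/\beta$ baked into the choice $p=2/\beta-1$ to collapse the $\int\phi\,\mu_\phi$ contributions, and then compare against the explicit expansion of $Z_{N,1}[(N+1)\psi_0]$ provided by Lemma \ref{lem:def and exp of F}. The reason $\mu_\phi^{\otimes N}$ is the natural trial measure is that under A1 (and the statement is vacuous otherwise, as $\log\sup_S(dd^c\phi/e^{-2\phi})=+\infty$) Proposition \ref{prop:regularity} identifies $d\mu_\phi/d\lambda=(1_S/(4\pi))\Delta\phi$, so both the mean-energy and entropy integrals become fully explicit.

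For the mean energy, writing $e=E_\phi(\mu_\phi)$, $e_0=E_0(\mu_\phi)$, $m=\int\phi\,\mu_\phi=e-e_0$ and using $H^{(N)}=-\sum_{i<j}\log|z_i-z_j|^2+(N+p)\sum_i\phi(z_i)$, a direct calculation gives $\int H^{(N)}\mu_\phi^{\otimes N}=N(N-1)e_0+N(N+p)m$, and rearranging yields $-\int H^{(N)}\mu_\phi^{\otimes N}+N(N+p)e=N(p+1)e_0$. For the entropy, rewriting $d\lambda=e^{2\phi}\cdot(e^{-2\phi}d\lambda)$ gives the decomposition
\[
D_{d\lambda}(\mu_\phi)=\int_S\log\tfrac{dd^c\phi}{e^{-2\phi}d\lambda}\,\mu_\phi-2m\leq \log Q-2m,\qquad Q:=\sup_S\tfrac{dd^c\phi}{e^{-2\phi}}.
\]
Combining these in the Gibbs bound and using $p+1=2/\beta$ so that the $-N(p+1)m$ from expanding $N(p+1)e_0$ exactly cancels the $+2\beta^{-1}Nm$ from the entropy estimate produces the clean intermediate inequality
\[
\beta^{-1}\log Z_{N,\beta}[(N+p)\phi]+N(N+p)E_\phi(\mu_\phi)\ \geq\ N(p+1)E_\phi(\mu_\phi)-\beta^{-1}N\log Q.
\]

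Dividing by $N(N+p)$ and subtracting $\tfrac{1}{N(N+p)}\log Z_{N,1}[(N+1)\psi_0]$ then leaves a main inequality that differs from the stated one in two places: the coefficient $1/(N+p)$ in front of $\log Q$ has to be replaced by the statement's $1/N$, and $E_\phi(\mu_\phi)$ on the right-hand side has to be replaced by $\inf_S(\phi-\psi_0)$. Both adjustments are absorbed by the slack $E_\phi(\mu_\phi)-\inf_S(\phi-\psi_0)\geq E_{\psi_0}(\mu_\phi)\geq 0$, which follows from the basic identity $E_\phi(\mu)=E_{\psi_0}(\mu)+\int(\phi-\psi_0)\mu$ together with the minimality of $\mu_0$ for $E_{\psi_0}$; simultaneously, this slack also absorbs the leading $-N^2/2$ coming from $\log Z_{N,1}[(N+1)\psi_0]$, leaving only the $(N/2)\log N$ term (which surfaces as the $(1/2)N^{-1}\log N$ in the consequence) and the $O(N)$ remainder of Lemma \ref{lem:def and exp of F} (which, together with $\log Q$ and $\inf_S(\phi-\psi_0)$, builds $C_\phi$). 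The main obstacle lies in making this absorption quantitative for every $N\geq 1$, which requires tracking the $O(N)$ constant in Lemma \ref{lem:def and exp of F} explicitly. The consequence then follows by substituting the main inequality into the definition (\ref{eq:def of error sequence for V N plus p text}) of $\epsilon_{N,\beta}^{(s)}[\phi]$ and subtracting the universal piece $\epsilon_{N,\beta}^{(u)}$.
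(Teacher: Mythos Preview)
Your approach is essentially the same as the paper's: both apply the Gibbs variational principle (Lemma \ref{lem:Gibbs}) with trial measure $\mu_\phi$, exploit the identity $p+1=2/\beta$, bound the relative entropy by $\log\sup_S(dd^c\phi/e^{-2\phi})$, and finish with $E_\phi(\mu_\phi)\geq\inf_S(\phi-\psi_0)$. The only packaging difference is that the paper takes the reference measure in Gibbs to be $dV=e^{-2\phi}d\lambda$ from the outset, which turns the exponent $(N+p)\phi$ into $(N-1)\phi$ and makes Lemma \ref{lem:asy of mean energy} (with $k=N-1$) directly applicable; the ratio $Z^{1/\beta}/Z_{N,1}[(N+1)\psi_0]$ and the spherical energy $E_{\psi_0}$ then appear automatically, with no constants to track. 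Your route---reference measure $d\lambda$, then the decomposition $D_{d\lambda}(\mu_\phi)=D_{dV}(\mu_\phi)-2m$---is exactly the same computation unrolled.

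The one place where your write-up is loose is the sentence ``this slack also absorbs the leading $-N^2/2$ coming from $\log Z_{N,1}[(N+1)\psi_0]$.'' The slack $\tfrac{p+1}{N+p}\bigl(E_\phi(\mu_\phi)-\inf_S(\phi-\psi_0)\bigr)$ is $O(1/N)$ and cannot by itself absorb a term of order $1/2$. What actually happens is that the $-\tfrac12 N^2$ in $\log Z_{N,1}[(N+1)\psi_0]$, after dividing by $N(N+p)$, contributes $+\tfrac{N}{2(N+p)}\approx\tfrac12$ to the right-hand side, and this is exactly offset by the constant relating the planar energy $E_{0,\phi}$ you compute with to the spherical $E_\phi$ appearing in the statement (cf.\ Lemma \ref{lem:difference energies plane sphere} and the remark following it). Equivalently, it is the verification $C_1=0$ in the proof of Lemma \ref{lem:asy of mean energy}. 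Once you make that identification explicit, your argument goes through.
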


\begin{proof}
Applying Lemma \ref{lem:Gibbs} with $\mu_{0}=e^{-2\phi}d\lambda$
and $\nu=\mu_{\phi}$ together with Lemma \ref{lem:asy of mean energy}
with $k=N-1$ gives
\[
\frac{1}{N(N-1)}\log\frac{\left(Z_{N,\beta}[(N+p)\phi]\right)^{1/\beta}}{Z_{N,1}[(N+1)\psi_{0}]}+E_{\phi}(\mu_{\phi})\geq-N^{-1}\beta^{-1}D_{dV}(\mu_{\phi})
\]
 Hence, 
\[
\frac{1}{\beta N(N+p)}\log\frac{\left(Z_{N,\beta}[(N+p)\phi]\right)^{1/\beta}}{Z_{N,1}[(N+1)\psi_{0}]}+E_{\phi}(\mu_{\phi})\geq\frac{p+1}{N+p}E_{\phi}(\mu_{\phi})-N^{-1}\beta^{-1}D_{dV}(\mu_{\phi}))
\]
Note that $E_{\phi}(\mu_{\phi})\geq\inf_{S}(\phi-\psi_{0})$ and $p+1=2/\beta.$
Moreover, $D_{dV}(\mu_{\phi})\leq\log\sup_{S}\frac{dd^{c}\phi}{dV}.$
This proves proves the proposition (also using the expansion in Lemma
\ref{lem:def and exp of F}).
\end{proof}

\subsubsection{The case when $V=N\phi$}

We recall that the error term $\tilde{\epsilon}_{N}$ appearing in
Theorem \ref{thm:sharp non-Gauss ine for strong} is defined in terms
of $t\phi$ for $t=N/N+p.$ Since we are assuming that $\beta\leq1$
we have $t\leq1.$ 

The following lemma shows how to estimate the outer radius $R(t\phi)$
of the support $S_{t\phi}$ in terms of $1-t$ and parameters quantifying
the growth of $\phi:$ 
\begin{lem}
\label{lem:bound on outer radi}Suppose that $\phi$ satisfies $\phi\geq(1+\epsilon)\log^{+}|z|^{2}+C_{1}$
for $\epsilon>0$ and $\phi$ is locally bounded, where $\log^{+}t:=\log\max\{1,t\}.$
Then the outer radius $R$ of $D_{\phi}$ satisfies
\[
2\log R\leq\epsilon^{-1}(C_{2}-C_{1})\,\,\,C_{2}=\sup_{|z|=1}\phi
\]
More generally, if $t=1-\delta$ for $\delta>0$ then the outer radius
$R(\phi t)$ of $D_{t\phi}$ satisfies 
\[
2\log R(\phi t)\leq(1-\delta)\frac{C_{2}-C_{1}}{\epsilon-\delta(1+\epsilon)}\,\,\,C_{2}=\sup_{|z|=1}\phi
\]
\end{lem}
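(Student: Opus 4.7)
The plan is to control the outer radius of $D_\phi = \{P\phi = \phi\}$ via the radial maximum $M(r) := \max_{|z|=r} P\phi$, exploiting that $t \mapsto M(e^t)$ is convex on $\R$ (Hadamard's three-circle theorem, applicable since $P\phi$ is subharmonic on $\C$).

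First I would handle the unperturbed case. The strictly super-logarithmic growth of $\phi$ forces $D_\phi$ to be compact, so there exists $z_* \in D_\phi$ with $|z_*| = R$; assume $R \geq 1$ (the conclusion is trivial otherwise since $C_2 \geq C_1$). At $z_*$ one has $P\phi(z_*) = \phi(z_*) \geq 2(1+\epsilon)\log R + C_1$, which lower-bounds $M(R)$. For the matching upper bound I would use that $P\phi \in PSH(L)_b$ by Lemma~\ref{lem:prop of p phi}; equivalently, $P\phi - \psi_0$ is bounded on $\C$, and since $\psi_0(z) = 2\log|z| + O(|z|^{-2})$ at infinity this yields $M(r) = 2\log r + O(1)$ as $r \to \infty$. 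Consequently $g(t) := M(e^t) - 2t$ is convex and bounded above on $[0,\infty)$. A convex function bounded above on a half-line must be non-increasing there: if $g'(t_0) > 0$ at some $t_0 \geq 0$, convexity would force $g(t) \geq g(t_0) + g'(t_0)(t-t_0) \to \infty$, a contradiction. Hence $g(\log R) \leq g(0)$, i.e.
\[
M(R) \;\leq\; 2\log R + M(1) \;\leq\; 2\log R + C_2,
\]
using $P\phi \leq \phi$ (so $M(1) \leq \sup_{|z|=1}\phi = C_2$). Combining with $M(R) \geq P\phi(z_*) \geq 2(1+\epsilon)\log R + C_1$ gives $2\epsilon\log R \leq C_2 - C_1$, as claimed.

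For the second statement I would apply the first part to $\phi' := t\phi$ with $t = 1-\delta$. One checks immediately that $\phi' \geq (1+\epsilon')\log^+|z|^2 + (1-\delta)C_1$ with $\epsilon' = \epsilon - \delta(1+\epsilon)$, which remains strictly positive provided $\delta < \epsilon/(1+\epsilon)$, and $\sup_{|z|=1}\phi' = (1-\delta)C_2$; the first part then delivers the advertised bound $(1-\delta)(C_2-C_1)/(\epsilon - \delta(1+\epsilon))$. The main point where care is needed is the asymptotic $M(r) = 2\log r + O(1)$: a naive estimate such as $M(r) \leq \max_{|z|\leq r}\phi$ grows like $2(1+\epsilon)\log r$ and is far too weak to drive the contradiction, so the global complex-geometric fact that $P\phi - \psi_0$ extends to a bounded function on $\P^1$ is crucial. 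Once that ingredient and the convex-bounded-implies-non-increasing observation are in hand, the rest of the argument is routine.
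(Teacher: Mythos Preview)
Your argument is correct, and the second part (rescaling to $t\phi$) is identical to the paper's. For the first part, however, you take a genuinely different route to the key estimate $P\phi(z)\leq 2\log|z|+C_2$ for $|z|\geq 1$.

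The paper obtains this bound via the \emph{domination principle}: any candidate $\psi$ for the envelope $P\phi$ lies in $PSH(L)$ and satisfies $\psi\leq\phi\leq C_2$ on $\{|z|=1\}$; since the comparison function $v:=\log^{+}|z|^{2}+C_{2}$ also lies in $PSH(L)_b$ and has $dd^{c}v$ supported on the unit circle, the domination principle yields $\psi\leq v$ everywhere, hence $P\phi\leq v$. Your approach instead invokes Hadamard's three-circle theorem for the subharmonic function $P\phi$, combined with the qualitative fact (Lemma~\ref{lem:prop of p phi}) that $P\phi-\psi_0$ is bounded, to conclude that the convex function $g(t)=M(e^t)-2t$ is bounded above on $[0,\infty)$ and therefore non-increasing; this gives $M(R)\leq 2\log R + M(1)\leq 2\log R + C_2$.

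Both routes deliver exactly the same quantitative bound. The paper's argument is slightly more direct and is the standard device in weighted potential theory. Your argument trades the domination principle for a more elementary convexity observation, at the cost of importing the boundedness of $P\phi-\psi_0$ as a black box; this is perfectly legitimate here since that boundedness is already recorded in the paper and follows from admissibility of $\phi$.
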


\begin{proof}
First observe that $P\phi\leq\log^{+}|z|^{2}+C_{2}.$ Indeed, let
$\psi$ be a candidate for the sup defining $P\phi.$ Then $\psi\leq\log^{+}|z|^{2}+C_{2}$
almost everywhere wrt $\Delta(\log^{+}|z|^{2}).$ Hence, by the domination
principle, $\psi\leq\log^{+}|z|^{2}+C_{2}$ on all of $\C,$ as desired.
This means that on $D_{\phi}$ we have 
\[
(1+\epsilon)\log^{+}|z|^{2}+C_{1}\leq\phi\leq P\phi\leq\log^{+}|z|^{2}+C_{2}
\]
 Hence, $\epsilon\log^{+}R^{2}\leq C_{2}-C_{1},$ which concludes
the proof of the first inequality in the lemma. The second one then
follows from the observation that $C_{i}(t)=tC_{i}$ and $\epsilon(t)=1-\delta(1+\epsilon).$

Next, we establish an explicit upper bound on the term $B_{\phi}$
appearing in Theorem \ref{thm:gaussian ineq for E for strong} (applied
to $t=N/(N+p)).$
\end{proof}
\begin{prop}
\label{prop:H one conv for H-S derivative}Suppose that $\phi$ has
strictly logarithmic growth and satisfies A0. Then
\[
J(v_{t})\leq(1-t)\left|\sup_{S_{t\phi}}\phi-\inf_{S_{t\phi}}\phi\right|\leq2(1-t)\left|\sup_{D_{R(\phi t)}}\phi-\inf_{D_{R(\phi t)}}\phi\right|\int_{D_{R(\phi t)}}dd^{c}\phi,
\]
 where $D_{R(t\phi)}$ is the disc centered at $0$ of radius $R(t\phi)$
appearing in Lemma \ref{lem:bound on outer radi}. 
\end{prop}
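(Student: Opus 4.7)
The plan is to interpret $v_{t}:=P(t\phi)-P(\phi)$ as a bounded global function on the Riemann sphere $X$ (both $\psi:=P\phi$ and $\psi_{t}:=P(t\phi)$ are locally bounded psh metrics on the degree-one line bundle $L$ by Lemma \ref{lem:prop of p phi}). Stokes on the closed manifold $X$ combined with $dd^{c}v_{t}=\mu_{t\phi}-\mu_{\phi}$ gives
\[
2J(v_{t})\;=\;-\int_{X}v_{t}\,dd^{c}v_{t}\;=\;\int_{X}v_{t}\,(\mu_{\phi}-\mu_{t\phi}),
\]
so the task is to bound this pairing. Normalizing $\phi\geq 0$ by an additive constant (which leaves both sides invariant), monotonicity of $P$ gives $\psi_{t}\leq\psi$; and $t\psi$ is a psh candidate for $P(t\phi)$ (as $t\psi\leq t\phi$), whence $t\psi\leq\psi_{t}\leq\psi$, together with the trivial upper bound $\psi_{t}\leq t\phi$.

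On the support $S$ of $\mu_{\phi}$, where $\psi=\phi$, the sandwich $t\psi\leq\psi_{t}\leq t\phi$ forces $\psi_{t}=t\phi$, so $v_{t}=-(1-t)\phi$ on $S$, and in particular $S\subseteq D_{t}:=\{\psi_{t}=t\phi\}$. On the support $S_{t\phi}$ of $\mu_{t\phi}$, where $\psi_{t}=t\phi$, one has $v_{t}=t\phi-\psi$ with $t\phi\leq\psi\leq\phi$, hence $v_{t}\geq-(1-t)\phi$. Integrating against the respective measures and subtracting,
\[
2J(v_{t})\;\leq\;(1-t)\Bigl(\int\phi\,\mu_{t\phi}-\int\phi\,\mu_{\phi}\Bigr)\;\leq\;(1-t)\bigl(\sup_{S_{t\phi}}\phi-\inf_{S}\phi\bigr),
\]
using $\int\phi\,\mu_{t\phi}\leq\sup_{S_{t\phi}}\phi$ and $\int\phi\,\mu_{\phi}\geq\inf_{S}\phi$. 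The inclusion $S\subseteq D_{t}$ (which, under A0, coincides with $\overline{S_{t\phi}}$ off a polar set) upgrades $\inf_{S}\phi\geq\inf_{S_{t\phi}}\phi$, yielding the first inequality with a spare factor of two. The second is then immediate: $S_{t\phi}\subseteq D_{R(\phi t)}$ bounds the oscillation upward, and the inequality $\int_{D_{R(\phi t)}}dd^{c}\phi\geq 1/t\geq 1/2$ follows from the relation $\mu_{t\phi}\leq t\,dd^{c}\phi$ on $S_{t\phi}$ (a Prop \ref{prop:regularity}-type identification), so the extra factor $2\int dd^{c}\phi$ absorbs the weakening.

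The main obstacle is the passage from $S\subseteq D_{t}$ to $\inf_{S}\phi\geq\inf_{S_{t\phi}}\phi$ when only A0 is available: coincidence sets and supports may a priori differ on polar sets, and $dd^{c}\phi$ need not equal $dd^{c}\psi_{t}/t$ pointwise on $S_{t\phi}$ without further regularity. I would handle this by approximating $\phi$ from above by a decreasing sequence of continuous potentials $\phi_{j}$, invoking Lemma \ref{lem:prop of p phi} for convergence of the envelopes and Theorem \ref{thm:conv of E wrt phi and u} for the equilibrium measures and free energies, and then using lower semicontinuity of $\phi$ to control the infima in the limit.
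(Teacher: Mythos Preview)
Your approach is essentially the paper's: integration by parts expresses $J(v_t)$ (your $2J(v_t)$; the paper silently drops this factor) as $\int(\mu_{t\phi}-\mu_\phi)(\psi_1-\psi_t)$, and after normalizing $\phi\geq 0$ one bounds $\psi_1-\psi_t$. The paper applies the domination principle once to obtain the single global estimate $0\leq\psi_1-\psi_t\leq(1-t)\sup_{S_{t\phi}}\phi$, whereas your sandwich $t\psi\leq\psi_t\leq\min(\psi,t\phi)$ extracts sharper pointwise information on the two supports (and the spare factor of two). These are minor variations of the same argument and land at the same intermediate bound.

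The obstacle you flag is genuine, and the paper's two-line proof does not address it either: read literally, the domination argument only yields $(1-t)(\sup_{S_{t\phi}}\phi-\inf_{\C}\phi)$, not the oscillation over $S_{t\phi}$. However, your proposed approximation fix does not close this gap. The difficulty is not the regularity of $\phi$ but the comparison of $S_\phi$ with $S_{t\phi}$ (equivalently, of the coincidence set $D_t$ with the support $S_{t\phi}$), and replacing $\phi$ by continuous $\phi_j$ does nothing for that. The clean way out is either to invoke Hele--Shaw monotonicity $S_\phi\subseteq S_{t\phi}$ --- which does follow from your observation $\psi_t=t\phi$ on $S_\phi$ once one has the identification $\mu_{t\phi}=1_{S_{t\phi}}t\,dd^c\phi$ from Prop~\ref{prop:regularity}, i.e.\ under A1 rather than merely A0 --- or simply to record the slightly weaker bound with $\inf_{\C}\phi$ (or $\inf_{S_\phi}\phi$), which is all that the application in the proof of Theorem~\ref{thm:gaussian ineq for E for strong} actually uses. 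The same remark applies to your second inequality: the step $\int_{D_R}dd^c\phi\geq 1/t$ likewise requires A1-type regularity, not just A0.
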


\begin{proof}
Set $\psi_{t}:=P(t\phi)$ for $t>0.$ Integration by parts gives 
\[
J(v_{t})=\int dd^{c}(\psi_{t}-\psi_{1})(\psi_{1}-\psi_{t})=\int(\mu_{t\phi}-\mu_{\phi})(\psi_{1}-\psi_{t}).
\]
 Next, we may, after perhaps replacing $\phi$ with $\phi-\inf\phi,$
assume that $\phi\geq0.$ Then $\psi_{t}$ is increasing in $t$ and,
in particular, $\psi_{1}-\psi_{t}\geq0.$ Moreover, rewriting $\phi=t\phi+(1-t)\phi$
and using the domination principle for the Laplacian gives
\[
P(\phi)\leq P(t\phi)+(1-t)\sup_{S_{t\phi}}\phi
\]
which concludes the proof.
\end{proof}
The following bound on the error term $a(\phi,u)$ appearing in Theorem
\ref{thm:sharp non-Gauss ine for strong} is far from optimal, but
will be adequate for the applications to moderate deviation principles
in \cite[Lemma 3.13]{berm15}: 
\begin{prop}
Assume that $u$ is bounded on $\C$ and $\beta\leq1.$ Then 
\[
-E_{0}(\mu_{\phi_{N}+u_{N}})+E_{0}(\mu_{\phi})\leq2\left\Vert u\right\Vert _{L^{\infty}}
\]
\end{prop}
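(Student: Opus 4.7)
Set $\nu = \mu_{\phi_N + u_N}$, $t = N/(N+p) \in (0,1)$ (so $1-t = p/(N+p)$), and write $\Phi = \phi + u$ so that $\phi_N + u_N = t\Phi$. The plan is to exploit only the two variational characterizations of the minimisers, namely
\[
E_{0,\phi}(\mu_{\phi}) \le E_{0,\phi}(\nu),\qquad E_{0,t\Phi}(\nu) \le E_{0,t\Phi}(\mu_{\phi}),
\]
together with boundedness of $u$; no regularity of the two equilibrium measures or of their supports is needed.

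First I would rewrite these as
\[
E_{0}(\mu_{\phi}) - E_{0}(\nu) \le \int \phi\, d(\nu-\mu_{\phi}),\qquad
E_{0}(\mu_{\phi}) - E_{0}(\nu) \ge t\!\int \Phi\, d(\nu-\mu_{\phi}),
\]
and then combine them with weights $(1,-t)$ so that the unbounded $\int\phi$--terms cancel (this is the same algebra as in Lemma \ref{lem:estimates on energy for phi stongly ad}). The outer inequality of the resulting sandwich collapses to
\[
(1-t)\!\int \phi\, d(\nu-\mu_{\phi}) \ge t\!\int u\, d(\nu-\mu_{\phi}).
\]
Substituting this back into the lower bound and simplifying leaves
\[
E_{0}(\mu_{\phi}) - E_{0}(\nu) \;\ge\; \frac{t}{1-t}\int u\, d(\nu-\mu_{\phi}),
\]
and a parallel manipulation starting from the pair of minimality inequalities for $\mu_{\phi_N}$ (instead of $\mu_\phi$) and $\nu$, together with the monotonicity relation $\int u_N\, d\mu_{\phi_N+u_N}\le\int u_N\, d\mu_{\phi_N}$ that automatically drops out when one sums the two inequalities, will produce the matching upper bound.

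The main obstacle is controlling the only term in the chain that contains the unbounded potential $\phi_N$, namely $\int\phi_N\, d(\nu-\mu_{\phi_N})$. The cleanest way I see to do this is the Frostman/Euler--Lagrange identity for equilibrium measures: on $\operatorname{supp}\mu_{\phi_N}$ one has $2U^{\mu_{\phi_N}}+\phi_N = F_{\phi_N}$ and similarly for $\nu$, so integrating against the respective equilibrium measures gives
\[
\int\phi_N\, d\mu_{\phi_N}=F_{\phi_N}-2E_{0}(\mu_{\phi_N}),\qquad
\int(\phi_N+u_N)\, d\nu=F_{\phi_N+u_N}-2E_{0}(\nu).
\]
Subtracting these identities expresses $\int\phi_N\, d(\nu-\mu_{\phi_N})$ purely in terms of $E_{0}(\mu_{\phi_N})-E_{0}(\nu)$, the free energy difference $\mathcal F(\phi_N+u_N)-\mathcal F(\phi_N)$, and $\int u_N\, d\nu$. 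Since $\mathcal F$ is $1$--Lipschitz in the $\sup$--norm of the potential by concavity and $|\int u_N\, d\nu|\le \|u_N\|_\infty\le\|u\|_\infty$, each of these last two terms is bounded by $\|u\|_\infty$. Plugging this back and using the sandwich eliminates the $\int\phi_N$--contribution and yields $|E_0(\mu_{\phi_N})-E_0(\nu)|\le 2\|u\|_\infty$; Lemma \ref{lem:estimates on energy for phi stongly ad} then provides $E_0(\mu_\phi)\ge E_0(\mu_{\phi_N})$, and combining the two gives the stated bound (with the baseline difference $E_0(\mu_\phi)-E_0(\mu_{\phi_N})$ being absorbed as a lower-order contribution, consistent with the paper's warning that the estimate is \emph{far from optimal}).
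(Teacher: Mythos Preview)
Your argument has two genuine gaps. First, the Frostman step is circular. The identity you obtain,
\[
\int\phi_N\,d(\nu-\mu_{\phi_N})
=\bigl[\mathcal F(\phi_N+u_N)-\mathcal F(\phi_N)\bigr]
+\bigl[E_0(\mu_{\phi_N})-E_0(\nu)\bigr]
-\int u_N\,d\nu,
\]
when substituted back into either side of the variational sandwich
$\int(\phi_N+u_N)\,d(\nu-\mu_{\phi_N})\le E_0(\mu_{\phi_N})-E_0(\nu)\le\int\phi_N\,d(\nu-\mu_{\phi_N})$,
causes the term $E_0(\mu_{\phi_N})-E_0(\nu)$ to cancel from both sides; what remains is exactly the concavity inequality $\int u_N\,d\nu\le\mathcal F(\phi_N+u_N)-\mathcal F(\phi_N)\le\int u_N\,d\mu_{\phi_N}$, and no bound on $E_0(\mu_{\phi_N})-E_0(\nu)$ is produced. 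Second, even granting $|E_0(\mu_{\phi_N})-E_0(\nu)|\le 2\|u\|_\infty$, the passage to $\mu_\phi$ fails: Lemma~\ref{lem:estimates on energy for phi stongly ad} gives $E_0(\mu_{\phi_N})\le E_0(\mu_\phi)$, so the ``baseline difference'' $E_0(\mu_\phi)-E_0(\mu_{\phi_N})$ is \emph{non-negative}, $u$-independent, and cannot be absorbed into the constant $2\|u\|_\infty$.

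The paper avoids both problems by a different mechanism: it uses concavity of $-E_0$ at $\mu_{\phi_N}$, whose gradient is (up to an additive constant) $P\phi_N$, to get $-E_0(\nu)+E_0(\mu_{\phi_N})\le\int P\phi_N\,dd^c\!\bigl(P(\phi_N+u_N)-P\phi_N\bigr)$. It then integrates by parts on the Riemann sphere (picking up a Dirac mass at infinity, since $P\phi_N$ has logarithmic growth) and uses the elementary but decisive bound $\bigl|P(\phi_N+u_N)-P\phi_N\bigr|\le\|u_N\|_\infty\le\|u\|_\infty$, which follows immediately from monotonicity of $P$ and $P(\psi+C)=P\psi+C$. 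This is the missing ingredient in your approach: you never exploit that the \emph{projected} potentials differ by at most $\|u\|_\infty$ uniformly, which is what converts the unbounded $\int\phi_N$-type term into a controlled one.
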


\begin{proof}
First by the last claim in Lemma \ref{lem:estimates on energy for phi stongly ad}
we have $E_{0}(\mu_{\phi})\leq E_{0}(\mu_{\phi_{N}}).$ Since $-E$
is a concave functional it follows that 
\[
-E_{0}(\mu_{\phi_{N}+u_{N}})+E_{0}(\mu_{\phi})\leq\int P\phi_{N}(\mu_{\phi_{N}+u_{N}}-\mu_{\phi_{N}})=\int P\phi_{N}dd^{c}\left(P(\phi_{N}+u_{N})-P(\phi_{N})\right)
\]
Now, $\psi:=P\phi_{N}$ can be viewed as a lsc function on the Riemann
sphere $X$ which is equal to $+\infty$ at the point at the point
$x_{\infty}$ at infinity. Since $\psi-\log|z|^{2}$ is bounded close
to $x_{\infty}$ we get $dd^{c}\psi=1_{\C}\mu_{\phi_{N}}-\delta_{x_{\infty}}.$
Hence, integrating by parts, 
\[
\int P\phi_{N}dd^{c}\left(P(\phi_{N}+u_{N})-P(\phi_{N})\right)=\int\left(P(\phi_{N}+u_{N})-P(\phi_{N})\right)\mu_{\phi_{N}}-\left(P(\phi_{N}+u_{N})-P(\phi_{N})\right)(x_{\infty}).
\]
(this is a special case of f \cite[formula 1.2]{b-t}{]}. All that
remains is to verify that $\left|P(\phi_{N}+u_{N})-P(\phi_{N})\right|\leq\left\Vert u_{N}\right\Vert _{L^{\infty}}.$
But this follows readily from the fact that $P$ is increasing and
$P(\phi+C)=P(\phi)+C$ for any $C\in\R.$
\end{proof}
\begin{prop}
Assume that $\phi$ has strictly logarithmic growth, Then 

\[
\tilde{\epsilon}_{N}\leq\beta^{-1}N^{-1}C_{\phi}+\frac{1}{2\beta}N^{-1}\log N+O(N).
\]
where $C_{\phi}$ only depends on upper bounds on $\phi$ and $\Delta\phi$
and a lower bound on $\phi$ on the disc $D_{R(t\phi)}$ centered
at $0$ of radius $R(t\phi)$ as in Lemma \ref{lem:bound on outer radi}
with $t=N/N+p.$
\end{prop}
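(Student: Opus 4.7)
The plan is to apply the estimate of Proposition \ref{prop:bound on error for V adj} to the rescaled potential $\phi_{N}=t\phi$ with $t=N/(N+p)$, and then multiply through by the prefactor $(1+N^{-1}p)$ from the very definition
\[
\tilde{\epsilon}_{N,\beta}[\phi]=(1+N^{-1}p)\,\epsilon_{N,\beta}[\phi_{N}].
\]
Since $\phi$ has strictly super-logarithmic growth and $t<1$, the potential $\phi_{N}$ has (ordinary) logarithmic growth, so Proposition \ref{prop:bound on error for V adj} applies and yields
\[
\epsilon_{N,\beta}[\phi_{N}]\leq\beta^{-1}N^{-1}C_{\phi_{N}}+\tfrac{1}{2}N^{-1}\log N+O(N^{-1}),
\]
where $C_{\phi_{N}}$ depends on a lower bound on $\inf_{S_{\phi_{N}}}(\phi_{N}-\psi_{0})$ and on an upper bound on $\sup_{S_{\phi_{N}}}dd^{c}\phi_{N}/e^{-2\phi_{N}}$.

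The main step is then to show that these two quantities can be controlled purely in terms of upper and lower bounds on $\phi$ (and on $\Delta\phi$) on the disc $D_{R(t\phi)}$ provided by Lemma \ref{lem:bound on outer radi}. First, by Proposition \ref{prop:og relation etc} the support $S_{\phi_{N}}$ of $\mu_{\phi_{N}}$ is contained in the coincidence set $\{P(t\phi)=t\phi\}$; combined with the proof of Lemma \ref{lem:bound on outer radi} (applied with the rescaled growth parameters) this shows $S_{\phi_{N}}\subset D_{R(t\phi)}$. On this disc, $\psi_{0}\leq\log(1+R(t\phi)^{2})$ is bounded above by an explicit constant, so a lower bound on $\phi_{N}-\psi_{0}=t\phi-\psi_{0}$ reduces to a lower bound on $\phi$ on $D_{R(t\phi)}$. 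For the other quantity, we have $dd^{c}\phi_{N}=t\,dd^{c}\phi\leq dd^{c}\phi$ and $e^{-2\phi_{N}}=e^{-2t\phi}\geq e^{-2t\sup_{D_{R(t\phi)}}\phi}$, hence
\[
\sup_{S_{\phi_{N}}}\frac{dd^{c}\phi_{N}}{e^{-2\phi_{N}}}\leq \bigl(\sup_{D_{R(t\phi)}}dd^{c}\phi/d\lambda\bigr)\,e^{\,2t\sup_{D_{R(t\phi)}}\phi},
\]
which is controlled by upper bounds on $\phi$ and $\Delta\phi$ on the disc in question.

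Combining these estimates gives a constant $C_{\phi}$ of the desired type such that $\epsilon_{N,\beta}[\phi_{N}]\leq\beta^{-1}N^{-1}C_{\phi}+\tfrac{1}{2}N^{-1}\log N+O(N^{-1})$. Multiplying by $(1+N^{-1}p)=1+O(N^{-1})$ produces the factor $1/\beta$ in front of the $\tfrac{1}{2}N^{-1}\log N$ term (after absorbing $p$ via $p+1=2/\beta$) and preserves the order of the remaining error, yielding
\[
\tilde{\epsilon}_{N}\leq\beta^{-1}N^{-1}C_{\phi}+\tfrac{1}{2\beta}N^{-1}\log N+O(N^{-1}).
\]
The only mild obstacle is checking that the constant $C_{\phi}$ can be taken independent of $N$ (and not merely of $t\in(0,1)$): this is automatic since $R(t\phi)\leq R(\phi)\cdot$ (uniform constant in $t$ close to $1$) by the second statement of Lemma \ref{lem:bound on outer radi}, so the disc $D_{R(t\phi)}$ stays inside a fixed compact set as $N\to\infty$.
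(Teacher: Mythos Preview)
Your approach is exactly the paper's: its proof is the single sentence ``This follows from combining Proposition \ref{prop:bound on error for V adj} with Lemma \ref{lem:bound on outer radi},'' and you have correctly unpacked this by applying Proposition \ref{prop:bound on error for V adj} to $\phi_{N}=t\phi$ and using Lemma \ref{lem:bound on outer radi} to confine $S_{\phi_{N}}$ to a fixed disc.

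One small slip: your claim that multiplying by $(1+N^{-1}p)$ ``produces the factor $1/\beta$ in front of the $\tfrac{1}{2}N^{-1}\log N$ term (after absorbing $p$ via $p+1=2/\beta$)'' is not right. Since $(1+N^{-1}p)=1+O(N^{-1})$, the product $(1+N^{-1}p)\cdot\tfrac{1}{2}N^{-1}\log N$ is still $\tfrac{1}{2}N^{-1}\log N+O(N^{-2}\log N)$; you cannot manufacture the constant $1/\beta$ this way. (You seem to have momentarily conflated $1+N^{-1}p$ with $1+p=2/\beta$.) This is harmless for the stated inequality, because $\beta\le 1$ implies $\tfrac{1}{2}\le\tfrac{1}{2\beta}$, so the bound with coefficient $\tfrac{1}{2\beta}$ is simply weaker and follows a fortiori; but the argument you gave for it is wrong.
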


\begin{proof}
This follows from combining Proposition \ref{prop:bound on error for V adj}
with Lemma \ref{lem:bound on outer radi}.
\end{proof}

\section{\label{sec:Sharpness}Sharpness}

In this section we consider the $N-$particle Coulomb gas in the plane
with $V=N\phi$ for $\phi$ a function with strictly super logarithmic
growth satisfying the following regularity assumption: there exists
a neighborhood $U$ of $S_{\phi}$ such that

\begin{equation}
\phi\in C^{5}(U),\,\,\,S_{\phi}\,\text{is a domain with\,\ensuremath{C^{1}-\text{boundary,\,\,\,\ensuremath{\Delta\phi>0\text{ on\ensuremath{\,S_{\phi}}}}}}}\label{eq:regul as strong}
\end{equation}
Then it follows from \cite[Cor 1.5]{l-s} and \cite[Thm 1.1]{b-b-n-y2}
that the following expansion holds 

\begin{equation}
-\frac{1}{N^{2}\beta}\log Z_{N}[N\phi]=\mathcal{F}(\phi)-\frac{N}{2}\log N+N\left((\frac{1}{\beta}-\frac{1}{2})D_{d\lambda}(\mu_{\phi})-\xi_{\beta}\right)+o(N),\label{eq:asympt exp of serf etc}
\end{equation}
 where $\xi_{\beta}$ denotes a constant only depending on $\beta$
(by \cite[Thm 1.1]{b-b-n-y2} the expansion above holds if the boundary
of $S_{\phi}$ is only assumed to be a finite union of $C^{1}-$curves).

In particular, under the regularity assumption \ref{eq:regul as strong}
the following bound on the error $\tilde{\epsilon}_{N}(\phi)$ (formula
\ref{eq:def of error tilde in intro}) holds for $N$ large:
\[
\tilde{\epsilon}_{N}(\phi)\leq C_{\phi}/N
\]
 for a constant $C_{\phi}$ only depending on $\phi.$ Hence, Theorem
\ref{thm:sharp non-Gauss ine for strong} then implies that
\begin{equation}
\frac{1}{\beta N^{2}}\log\E(e^{-N^{2}\beta U})\leq-\mathcal{F}(\phi+u)+\mathcal{F}(\phi)+O(N^{-1})\label{eq:sharp non-gauss ineq in sect sharpness}
\end{equation}
 where the term $O(N^{-1})$ denotes a sequence of functionals $g_{N}$
of $u\in C_{c}(\C)$ (for $\phi$ and $\beta$ fixed) such that 
\[
\limsup_{N\rightarrow\infty}N^{-1}g_{N}(u)<\infty
\]
In fact, we have 
\[
\limsup_{N\rightarrow\infty}N^{-1}g_{N}(u)=-pE(\mu_{\phi+u})+C_{\phi},\,\,\,p:=\frac{2}{\beta}-1
\]

\subsection{Sharpness of the error term in the inequality in Theorem \ref{thm:sharp non-Gauss ine for strong}}

The following result says that the order $O(N^{-1})$ in the inequality
\ref{eq:sharp non-gauss ineq in sect sharpness} can not be improved,
in general:
\begin{prop}
\label{prop:sharp order of error}Let $\phi$ be a function satisfying
the regularity assumption \ref{eq:regul as strong} and assume $\beta\leq1.$
Let $g_{N}(u)$ be a functional defined for any $u\in C(\C)$ such
that $\phi+u$ has super logarithmic growth. If,

\[
\frac{1}{N^{2}}\log\E(e^{-N^{2}U})\leq-\mathcal{F}(\phi+u)+\mathcal{F}(\phi)+g_{N}(u),
\]
 then $\limsup_{N\rightarrow\infty}N^{-1}g_{N}(u)>0.$ 
\end{prop}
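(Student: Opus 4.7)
The plan is to feed the Leblé--Serfaty / Bauerschmidt--Bourgade--Nikula--Yau expansion \ref{eq:asympt exp of serf etc} into both numerator and denominator of the ratio $\E(e^{-\beta N^{2}U})=Z_{N}[N(\phi+u)]/Z_{N}[N\phi]$, which pins down the $N^{-1}$ correction in $(\beta N^{2})^{-1}\log\E(e^{-\beta N^{2}U})$ as an explicit difference of relative entropies, and then to exhibit a perturbation $u$ for which this correction is strictly positive. The use of $U$ versus $\beta U$ only shifts constants, so I will work with the natural scaling $\E(e^{-\beta N^{2}U})$.

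First I would pick $u\in C_{c}^{\infty}(\C)$ supported in the interior of $S_{\phi}$ and small enough that $\phi+u$ again satisfies the regularity assumption \ref{eq:regul as strong}; in particular $S_{\phi+u}=S_{\phi}$ and $\mu_{\phi+u}=(4\pi)^{-1}\Delta(\phi+u)\,d\lambda$ is smooth and strictly positive on $S_{\phi}$. The expansion \ref{eq:asympt exp of serf etc} then applies to both $\phi$ and $\phi+u$; the $\tfrac{N}{2}\log N$ and $\xi_{\beta}$ contributions cancel when one takes the log ratio, leaving
\[
\frac{1}{\beta N^{2}}\log\E(e^{-\beta N^{2}U})
=-\mathcal{F}(\phi+u)+\mathcal{F}(\phi)
+\frac{1}{N}\!\left(\tfrac{1}{\beta}-\tfrac{1}{2}\right)\!\bigl(D_{d\lambda}(\mu_{\phi})-D_{d\lambda}(\mu_{\phi+u})\bigr)+o(N^{-1}).
\]
Combining with the hypothesised upper bound and multiplying through by $N$ gives
\[
\limsup_{N\to\infty} N\,g_{N}(u)\;\geq\;\bigl(\tfrac{1}{\beta}-\tfrac{1}{2}\bigr)\bigl(D_{d\lambda}(\mu_{\phi})-D_{d\lambda}(\mu_{\phi+u})\bigr).
\]
Since $\beta\leq1$ forces $\tfrac{1}{\beta}-\tfrac{1}{2}\geq\tfrac{1}{2}>0$, it suffices to produce a $u$ (of the above form) with $D_{d\lambda}(\mu_{\phi+u})<D_{d\lambda}(\mu_{\phi})$.

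To produce such a $u$, I would compute the first variation of $\epsilon\mapsto D_{d\lambda}(\mu_{\phi+\epsilon u})$. Since $u$ has compact support in the interior of $S_{\phi}$ and $\mu_{\phi+\epsilon u}$ has density $(4\pi)^{-1}(\Delta\phi+\epsilon\Delta u)$ there, an integration by parts yields
\[
\frac{d}{d\epsilon}\bigg|_{\epsilon=0} D_{d\lambda}(\mu_{\phi+\epsilon u})
=\frac{1}{4\pi}\int u\,\Delta\log\Delta\phi\,d\lambda.
\]
Whenever $\log\Delta\phi$ is not harmonic on the interior of $S_{\phi}$ one may choose $u$ (and its sign) so that this derivative is strictly negative, and then $D_{d\lambda}(\mu_{\phi+\epsilon u})<D_{d\lambda}(\mu_{\phi})$ for all sufficiently small $\epsilon>0$.

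The main obstacle is the degenerate case in which $\Delta\phi$ is constant on $S_{\phi}$ (as for the Ginibre potential $\phi=|z|^{2}$), where the first variation above vanishes identically and $\mu_{\phi}$ is uniform on $S_{\phi}$. In that situation I would instead choose $u$ slightly negative in a thin collar outside $\partial S_{\phi}$ so as to enlarge the droplet, $S_{\phi+u}\supsetneq S_{\phi}$; by the uniform-minimises-entropy-on-a-fixed-domain principle, the equilibrium measure is then spread over a strictly larger set and its value of $D_{d\lambda}$ is strictly smaller. The technical point that must be checked is that such a collar perturbation can be arranged to keep $\phi+u$ within the regularity class \ref{eq:regul as strong} (so that the expansion \ref{eq:asympt exp of serf etc} remains applicable to $\phi+u$); this is a straightforward verification once $u$ is written down explicitly.
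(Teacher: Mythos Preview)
Your overall strategy coincides with the paper's: both feed the expansion \ref{eq:asympt exp of serf etc} into the log-ratio of partition functions to reduce the question to producing an admissible $u$ with $D_{d\lambda}(\mu_{\phi+u})<D_{d\lambda}(\mu_{\phi})$. Your first-variation computation is correct, and it does settle the generic case where $\log\Delta\phi$ is not harmonic on $\mathrm{int}\,S_\phi$.

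The degenerate case is where your argument and the paper's genuinely diverge, and your treatment has a gap. First, the degenerate case is precisely ``$\log\Delta\phi$ harmonic on $\mathrm{int}\,S_\phi$'', which you correctly identify as the vanishing locus of your first variation, but you then only address the strictly smaller sub-case ``$\Delta\phi$ constant on $S_\phi$'' (harmonic on a bounded domain does not force constant). Second, even in the constant sub-case your collar argument does not work as stated. If $u$ vanishes on $S_\phi$ and is supported in a collar just outside, then on $S_\phi$ the density $\tfrac{1}{4\pi}\Delta(\phi+u)=\tfrac{1}{4\pi}\Delta\phi$ is unchanged and already integrates to $1$; the support cannot simply enlarge while keeping the old density on $S_\phi$. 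And in any event the new density on the collar is $\tfrac{1}{4\pi}(\Delta\phi+\Delta u)$, which is not uniform, so the ``uniform minimises entropy on a fixed domain'' principle is not applicable to $\mu_{\phi+u}$.

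The paper handles the degenerate case by a different perturbation. It argues by contradiction (so that $D_{d\lambda}(\mu_{\phi+u})\geq D_{d\lambda}(\mu_\phi)$ for \emph{all} admissible $u$), uses the interior perturbations to force $\mu_\phi=a\,1_{S}\,d\lambda$, and then takes $v\in C_c^5(\C)$ with $v=\phi$ on a neighbourhood of $S_\phi$. The point of this choice is that $\phi+tv=(1+t)\phi$ near $S_\phi$, so the equilibrium density is multiplied by the constant factor $(1+t)$ and remains uniform; this yields the explicit formula $D_{d\lambda}(\mu_{\phi+tv})=\log a+\log(1+t)$, which is strictly smaller than $D_{d\lambda}(\mu_\phi)$ for $t<0$. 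This ``global rescaling'' perturbation is what you are missing; it replaces your collar construction by one for which the entropy is computable in closed form. If you want to keep the direct (non-contradiction) structure of your argument, the fix is to use this scaling perturbation in place of the collar one.
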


\begin{proof}
Assume to get a contradiction that 
\begin{equation}
\liminf_{N\rightarrow\infty}N^{-1}g_{N}(u)=0.\label{eq:contr as}
\end{equation}

\emph{Step 1: }There exists a positive constant $a$ such that 
\begin{equation}
\mu_{\phi}=a1_{S}d\lambda\label{eq:mu is contant}
\end{equation}
To prove this fix $v\in C_{c}^{\infty}(\text{int}S)$ and set $u=tv.$
Then $\phi_{t}:=\phi+u$ satisfies the regularity assumption \ref{eq:regul as strong}
for $t$ sufficiently small. Indeed, $P(\phi_{t})=P(\phi)$ for $t$
sufficiently small and hence $S_{\phi_{t}}=S_{\phi}.$ Accordingly,
the asymptotic expansion \ref{eq:asympt exp of serf etc} yields 
\begin{equation}
\frac{1}{N^{2}}\log\E(e^{-N^{2}U})+\mathcal{F}(\phi+u)+\mathcal{F}(\phi)=-N^{-1}\frac{p}{2}\left(\left(D_{d\lambda}(\mu_{\phi+u})-D_{d\lambda}(\mu_{\phi})\right)\right)+o(N^{-1})\label{eq:expansion of Laplac in terms of entropy in pf}
\end{equation}
But then the assumption \ref{eq:contr as} forces the entropy inequality
\begin{equation}
\left(D_{d\lambda}(\mu_{\phi+tv})-D_{d\lambda}(\mu_{\phi})\right)\geq0\label{eq:entropi ineq}
\end{equation}
for any $t$ sufficiently small. Differentiating wrt $t$ we conclude
that $\log(\mu_{\phi}/d\lambda)$ has to be constant, as desired.

\emph{Step 2: }The contradiction

Next, take $v$ to be a function in $C_{c}^{5}(\C)$ such that $v=\phi$
on a neighborhood of $S_{\phi}.$ Then it follows from the main result
in \cite{s-s} that $\phi+tv$ satisfies the regularity assumption
\ref{eq:regul as strong} for $t$ sufficiently small and that $S_{t}$
is contained in a fixed neighborhood of $S_{0}$ for $|t|$ sufficiently
small. Hence, as before we conclude that the entropy inequality \ref{eq:entropi ineq}
holds. But, by \ref{eq:mu is contant}, 
\[
D_{d\lambda}(\mu_{\phi+tv})=\log(a(1+t))=\log a+\log(1+t)
\]
 and hence the entropy inequality \ref{eq:entropi ineq} is violated
when then $t<0,$ which gives the desired contradiction. 
\end{proof}
In fact, we have the following more precise result showing that Theorem
\ref{thm:sharp non-Gauss ine for strong} is, in terms of energy terms,
sharp up to an error of the order $o(N^{-1}):$ 
\begin{prop}
Let $\phi$ be a function satisfying the regularity assumption \ref{eq:regul as strong}
and assume $\beta<2.$ If

\[
\frac{1}{N^{2}}\log\E(e^{-N^{2}U})\leq-\mathcal{F}(\phi+u)+\mathcal{F}(\phi)+N^{-1}\left(-A_{\phi}E_{0}(\mu_{\phi+u})+C_{\phi}\right)
\]
 then $A_{\phi}\leq p:=2/\beta-1.$ 
\end{prop}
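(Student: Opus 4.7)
The strategy parallels that of Proposition \ref{prop:sharp order of error}, but uses the sharp expansion \ref{eq:asympt exp of serf etc} to read off the \emph{coefficient} of the $N^{-1}$ term, not merely its order. Applying \ref{eq:asympt exp of serf etc} to both $\phi$ and $\phi+u$ -- where $u$ is chosen so that $\phi+u$ still satisfies the regularity assumption \ref{eq:regul as strong} -- and subtracting gives the exact asymptotic
\[
\frac{1}{\beta N^{2}}\log\frac{Z_{N}[N(\phi+u)]}{Z_{N}[N\phi]} = -\bigl(\mathcal{F}(\phi+u)-\mathcal{F}(\phi)\bigr) - \frac{1}{N}\frac{p}{2}\bigl(D_{d\lambda}(\mu_{\phi+u})-D_{d\lambda}(\mu_{\phi})\bigr) + o(N^{-1}).
\]
Combining this equality with the assumed upper bound and multiplying through by $N$ before passing to the limit $N\to\infty$ reduces matters to showing that any functional inequality of the form
\[
A_{\phi}\, E_{0}(\mu_{\phi+u}) - \tfrac{p}{2}\, D_{d\lambda}(\mu_{\phi+u}) \leq C'_{\phi},
\]
valid for a sufficiently rich family of admissible $u$, forces $A_{\phi}\leq p$.

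The second step is to test this functional inequality along a one-parameter family of ``squeezed'' perturbations. Fix a point $z_{0}$ in the interior of $S_{\phi}$. For each small $r>0$, I would construct $u_{r}$ so that $\phi+u_{r}$ has strictly super logarithmic growth, still satisfies \ref{eq:regul as strong}, and $\mu_{\phi+u_{r}}$ is the uniform probability measure on the disc $B(z_{0},r)$. This can be achieved by inverting the obstacle problem: take $u_{r}$ to equal $\frac{1}{r^{2}}|z-z_{0}|^{2}+b_{r}-\phi(z)$ on $B(z_{0},r)$ (so that $\Delta(\phi+u_{r})=4/r^{2}$ there, giving uniform density $1/(\pi r^{2})$), and glue it smoothly to a function of strictly super logarithmic growth outside, steep enough that the coincidence set $\{P(\phi+u_{r})=\phi+u_{r}\}$ is exactly $\overline{B(z_{0},r)}$.

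For this family, a direct calculation of the logarithmic energy and of the relative entropy of the uniform measure on a disc of radius $r$ yields
\[
E_{0}(\mu_{\phi+u_{r}}) = -\log r + O(1), \qquad D_{d\lambda}(\mu_{\phi+u_{r}}) = -2\log r + O(1) \quad \text{as } r\to 0^{+}.
\]
Substituting into the functional inequality above gives $(p-A_{\phi})(-\log r) \leq C''_{\phi}$. If $A_{\phi}>p$, the coefficient $p-A_{\phi}$ is negative and the left-hand side tends to $+\infty$ as $r\to 0^{+}$, contradicting the bound. Hence $A_{\phi}\leq p$.

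The main obstacle is the construction in the second step: one must verify that the glued potential $\phi+u_{r}$ genuinely satisfies \ref{eq:regul as strong} (in particular $C^{5}$-smoothness across the gluing region and strict positivity of the Laplacian on the droplet), and that the coincidence set is exactly the prescribed disc rather than something larger. Both points can be handled by standard obstacle-problem stability, together with the $C^{1}$-boundary result of \cite{s-s} already invoked in the proof of Proposition \ref{prop:sharp order of error}, provided $u_{r}$ is taken to grow sufficiently rapidly outside $B(z_{0},r)$.
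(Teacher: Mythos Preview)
Your approach is correct and follows the same strategy as the paper: reduce via the expansion \ref{eq:asympt exp of serf etc} to a functional inequality relating $E_{0}$ and $D_{d\lambda}$, then violate it by concentrating the equilibrium measure toward a point so that $E_{0}\sim -\log r$ and $D_{d\lambda}\sim -2\log r$. The paper implements this concentration more directly: it takes the dilation family $\Phi_{t}(z):=\phi(e^{-t}z)$, so that $\mu_{\Phi_{t}}=(F_{t})_{*}\mu_{\phi}$ with $F_{t}(z)=e^{t}z$. This automatically inherits all of the regularity assumption \ref{eq:regul as strong} from $\phi$ (no gluing, no obstacle-problem stability needed), and a one-line change of variables gives $E_{0}(\mu_{\Phi_{t}})=-t+E_{0}(\mu_{\phi})$ and $D_{d\lambda}(\mu_{\Phi_{t}})=-2t+D_{d\lambda}(\mu_{\phi})$; sending $t\to -\infty$ is exactly your $r\to 0^{+}$. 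Your shrinking-disc construction works too (indeed the cleanest version is simply to take $\phi+u_{r}=|z-z_{0}|^{2}/r^{2}$ globally, avoiding any gluing), but the dilation trick sidesteps the ``main obstacle'' you identify entirely.

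One slip to fix: after substituting the asymptotics you should get $(A_{\phi}-p)(-\log r)\le C''_{\phi}$, not $(p-A_{\phi})(-\log r)\le C''_{\phi}$. With the correct sign, $A_{\phi}>p$ makes the left-hand side tend to $+\infty$, which is the contradiction you want; as written, your stated coefficient is negative and the product would go to $-\infty$.
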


\begin{proof}
Write $A_{\phi}=p(1+\delta)$ and assume to get a contradiction that
$\delta>0.$ Assume that $\Phi:=\phi+u$ satisfies the regularity
assumption \ref{eq:regul as strong}. Then the expansion \ref{eq:asympt exp of serf etc}
implies that
\[
-pE_{0}(\mu_{\Phi})+\frac{p}{2}D(\mu_{\Phi})\geq\delta pE_{0}(\mu_{\Phi})-C
\]
where $C'$ is a constant independent of $u.$ Since $\beta<2$ this
equivalently means that 
\begin{equation}
-E_{0}(\mu_{\Phi})+\frac{1}{2}D(\mu_{\Phi})\geq\delta E_{0}(\mu_{\Phi})-C\label{eq:energy entropy ineq in proof}
\end{equation}
It is well-known that such an inequality cannot hold for all probability
measure $\mu$ of finite energy, as can shown by letting $\mu$ converge
towards a Dirac mass, in an appropriate way. Here we will exhibit
a particular family $\mu_{\Phi_{t}}$ violating the inequality \ref{eq:energy entropy ineq in proof}
and such that $\Phi_{t}$ satisfies the regularity assumption \ref{eq:regul as strong}.
Fix $t\in\R$ and set
\[
F_{t}(z):=e^{t}z,\,\,\,\Phi_{t}:=(F_{-t})^{*}\phi
\]
 so that $\phi=\Phi_{0}.$ Since $F_{t}$ is a holomorphic automorphism
of $\C$ of degree one it follows, by symmetry, that
\[
\mu_{\Phi_{t}}=(F_{t})_{*}\mu_{\phi}
\]
Hence,
\[
E_{0}(\mu_{\Phi_{t}})=-\int\log|F_{t}(z)-F_{t}(w)|\mu_{\phi}(z)\mu_{\phi}(w)=-t+E_{0}(\mu_{\phi}),
\]
 
\[
D_{d\lambda}(\mu_{\Phi_{t}})=\int\log\frac{\mu_{\phi}}{(F_{t})^{*}d\lambda}\mu_{\phi}=-2t+D_{d\lambda}(\mu_{\Phi_{0}})
\]
But this means that the inequality \ref{eq:energy entropy ineq in proof}
is violated for $\mu_{\Phi_{t}}$ when $t\rightarrow-\infty$ and
that gives the desired contradiction
\end{proof}

\subsection{\label{subsec:The-inequality-in fails drast}The inequality in Theorem
\ref{thm:sharp non-Gauss ine for strong} fails drastically for $\beta>2$}
\begin{prop}
\label{prop:drastic failure}Let $\phi(z)=\lambda|z|^{2}$ for $\lambda>0.$
\ref{eq:regul as strong}. For $\beta>2$ there exists no strictly
positive continuous function $f(t).$ 
\[
\frac{1}{\beta N^{2}}\log\E(e^{-N^{2}\beta U})\leq-\mathcal{F}(\phi+u)+\mathcal{F}(\phi)+N^{-1}f(E(\mu_{\phi+u})),\,\,\,N>>1
\]
\end{prop}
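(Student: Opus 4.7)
The strategy is to leverage the known asymptotic expansion \eqref{eq:asympt exp of serf etc} of the partition function and then exhibit a family of perturbations of $\phi$ whose equilibrium measures have bounded logarithmic energy but arbitrarily large Boltzmann entropy. Assuming $\phi+u$ also satisfies \eqref{eq:regul as strong}, applying \eqref{eq:asympt exp of serf etc} to both $Z_N[N\phi]$ and $Z_N[N(\phi+u)]$ and subtracting (the universal $-(N/2)\log N$ and $\xi_{\beta}$ contributions cancel) yields
\[
\frac{1}{\beta N^{2}}\log\mathbb{E}\bigl(e^{-N^{2}\beta U}\bigr)=-\mathcal{F}(\phi+u)+\mathcal{F}(\phi)-\frac{1}{N}\Bigl(\frac{1}{\beta}-\frac{1}{2}\Bigr)\bigl(D_{d\lambda}(\mu_{\phi+u})-D_{d\lambda}(\mu_{\phi})\bigr)+o(N^{-1}).
\]
If the proposed bound held, then after multiplying by $N$ and letting $N\to\infty$ one would have
\[
\Bigl(\frac{1}{2}-\frac{1}{\beta}\Bigr)\bigl(D_{d\lambda}(\mu_{\phi+u})-D_{d\lambda}(\mu_{\phi})\bigr)\leq f\bigl(E(\mu_{\phi+u})\bigr),
\]
and since $\beta>2$ the prefactor $(1/2-1/\beta)$ is strictly positive. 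The task therefore reduces to producing admissible perturbations $u$ along which $D_{d\lambda}(\mu_{\phi+u})\to+\infty$ while $E(\mu_{\phi+u})$ remains in a bounded interval.

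Fix $z_{0}$ in the interior of the Ginibre droplet $S_{\phi}=\{|z|\le 1/\sqrt{\lambda}\}$. For small $r>0$ set $p_{r}:=1/\sqrt{|\log r|}$, and prescribe the target measure
\[
\mu_{r}:=(1-p_{r})\mu_{\phi}+p_{r}\nu_{r},
\]
with $\nu_{r}$ a smooth probability density approximating $\frac{1}{\pi r^{2}}\mathbf{1}_{B_{r}(z_{0})}$ and supported in $B_{2r}(z_{0})\Subset S_{\phi}$. Construct a smooth $u_{r}$ by solving the Poisson equation $\Delta u_{r}=4\pi\rho_{r}-\Delta\phi$ on $\mathbb{C}$ (whose right-hand side has vanishing integral by the normalization of $\mu_{r}$), adjusted outside a slight enlargement of $S_{\phi}$ so that $\phi+u_{r}$ satisfies \eqref{eq:regul as strong} with droplet exactly $S_{\phi}$ (this last point uses tangency at $\partial S_{\phi}$; cf.\ \cite{s-s}). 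Direct computations give
\[
D_{d\lambda}(\mu_{r})=2p_{r}|\log r|\,(1+o(1))=2\sqrt{|\log r|}\,(1+o(1))\longrightarrow+\infty,
\]
while $E_{0}(\nu_{r})=-\log r+O(1)$ together with $p_{r}^{2}|\log r|=1$ yields $E_{0}(\mu_{r})=p_{r}^{2}E_{0}(\nu_{r})+O(1)=1+O(1)$, so $E(\mu_{r})$ stays in a fixed bounded interval.

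Given any strictly positive continuous $f$, the values $\{f(E(\mu_{r})):r\le r_{0}\}$ are bounded above by some constant $C_{f}$. Choose $r$ small enough that $(1/2-1/\beta)(D_{d\lambda}(\mu_{r})-D_{d\lambda}(\mu_{\phi}))>C_{f}+1$, then $N$ large enough that the $o(N^{-1})$ remainder in the expansion above is bounded by $N^{-1}$; for $u=u_{r}$ and this $N$, the proposed inequality fails. The main obstacle is the construction in Step 2: one must verify that $u_{r}$ can be arranged so that $\phi+u_{r}$ lies in the regularity class \eqref{eq:regul as strong} for which the expansion \eqref{eq:asympt exp of serf etc} of \cite{l-s,b-b-n-y2} is known, in particular ensuring positivity of $\Delta(\phi+u_{r})$ on $S_{\phi}$, $C^{1}$ smoothness of the boundary, and coincidence of the droplet of $\phi+u_{r}$ with $S_{\phi}$; these all hold for $r$ small and with $\nu_{r}$ a mollification on a scale $\ll r$.
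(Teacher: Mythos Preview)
Your reduction to the entropy inequality is exactly the paper's first step: apply the expansion \eqref{eq:asympt exp of serf etc} to both $\phi$ and $\phi+u$, subtract, and conclude that when $\beta>2$ the putative bound forces
\[
D_{d\lambda}(\mu_{\phi+u})\le C+\tilde f(E(\mu_{\phi+u}))
\]
for all $u$ with $\phi+u$ satisfying \eqref{eq:regul as strong}. The divergence is in how you violate this inequality.

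The paper uses the explicit family $u_j=-(1-1/j)\log|z|^2$. By radial symmetry (and Prop.~\ref{prop:regularity}) the droplet of $\phi+u_j$ is the annulus $\{1-1/j\le|z|\le 1\}$ (after normalising $\lambda=1$), which is smooth, bounded away from the singularity at $0$, and carries $\Delta(\phi+u_j)=4>0$. So \eqref{eq:regul as strong} is immediate. The measures converge weakly to the uniform measure on the unit circle, whose entropy is $+\infty$ and whose logarithmic energy is finite; lower semicontinuity of $D_{d\lambda}$ gives the contradiction.

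Your family instead inserts a concentrated bump $p_r\nu_r$ with $p_r=|\log r|^{-1/2}$. The entropy/energy balance you compute is correct. The gap is in producing a $u_r$ for which $\phi+u_r$ actually satisfies \eqref{eq:regul as strong} with $\mu_{\phi+u_r}=\mu_r$. Your Poisson equation $\Delta u_r=4\pi\rho_r-\Delta\phi$ on $\mathbb{C}$ does \emph{not} have a right-hand side of vanishing integral (since $\Delta\phi=4\lambda$ globally while $\rho_r$ is supported on $S_\phi$), so the claimed solution does not exist as stated. If instead you solve on $S_\phi$ and try to force the droplet to equal $S_\phi$, you run into the fact that the potential $\psi_r$ of $\mu_r$ is only $C^{1,1}$ across $\partial S_\phi$, so $\phi+u_r$ cannot simultaneously equal $\psi_r$ on $S_\phi$ and be $C^5$ across the boundary. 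The appeal to \cite{s-s} does not resolve this: that result controls the free boundary under small smooth perturbations, whereas your $u_r$ is not small in $C^2$ (its Laplacian blows up like $p_r/r^2$ near $z_0$) and you need the droplet to be \emph{exactly} $S_\phi$ rather than a nearby set.

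The bump construction can presumably be repaired (e.g.\ by taking a smooth $u_r$ compactly supported well inside $S_\phi$ with $\int\Delta u_r=0$, accepting that $\mu_{\phi+u_r}$ is only close to $\mu_r$, and checking that entropy still diverges while energy stays bounded), but this is real work you have not done. The paper's annulus family sidesteps all of it.
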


\begin{proof}
Assume that $\Phi:=\phi+u$ satisfies the regularity assumption \ref{eq:regul as strong}.
Then the expansion \ref{eq:asympt exp of serf etc} implies that
\[
-(\frac{1}{\beta}-\frac{1}{2})(D_{d\lambda}(\mu_{\Phi})\leq C+f(E(\mu_{\Phi}))
\]
for a constant $C$ only depending on $\phi$ (but not on $u).$ When
$\beta>2$ this means that, after perhaps changing $f,$ 
\[
(D_{d\lambda}(\mu_{\Phi})\leq f(E(\mu_{\Phi}))
\]
 Note that if $\mu_{\Phi}$ is replaced by, for example, the uniform
measure $\mu_{S^{1}}$ on a circle then the inequality above fails
since the right hand side is finite while the left hand side is equal
to infinity. Thus, since $D$ is lower semi-continuous in the weak
topology it will be enough to show that there exists a sequence $u_{j}$
such that $\Phi_{j}:=\phi+u_{j}$ satisfies the regularity assumption
\ref{eq:regul as strong}, $\mu_{\Phi_{j}}\rightarrow\mu_{S^{1}}$
weakly and $E(\mu_{\Phi_{j}})\rightarrow E(\mu_{S^{1}}).$ To this
end we may by a simple scaling argument assume that $\lambda=1.$
We then set $u_{j}=-(1-1/j)\log|z|^{2}.$ Then it follows from Prop
\ref{prop:regularity} and the radial symmetry of $\Phi_{j}$ that
$\mu_{\Phi_{j}}$ is the uniform probability measure supported on
the annulus with outer and inner radius $1$ and $1-j^{-1},$ respectively.
Hence, $\mu_{\Phi_{j}}$ has the desired properties. 
\end{proof}
Most likely the previous argument applies to all $\phi$ satisfying
the regularity assumption \ref{eq:regul as strong}. But the question
what happens for $\beta\in]1,2],$ even for $\phi(z)=|z|^{2}$ appears
to be rather intriguing and is left for the future.

\section{\label{sec:Outlook-on-relations}Outlook on relations to Kähler geometry }

In this last section we briefly discuss some relations to Kähler geometry
that will be elaborated on in the companion paper \cite{berm17}.
Let $(X,L)$ be a polarized compact Riemann surface. Fix a metric
$\phi\in\mathcal{H}(L),$ i.e. a smooth metric on $L$ with strictly
positive curvature form $\omega_{\phi}$ and consider the corresponding
setting of adjoint determinantal point processes on $X.$ It follows
from well-known Bergman kernel expansions that the corresponding error
sequence $\epsilon_{N_{k},1}[\phi]$ satisfies $\epsilon_{N_{k},1}[\phi]=O(N^{-1}).$
Hence, by Theorem \ref{thm:upper bd on part funct riem surf adj},
for any $u\in H^{1}(X)$ 
\begin{equation}
\frac{1}{kN_{k}}\log\E(e^{-kN_{k}U_{N_{k}}})\leq-\mathcal{F}(\phi+u)+\mathcal{F}(\phi)+O(N^{-1}),\label{eq:ineq Laplac transf in outlook}
\end{equation}
 where the error term only depends on $\phi.$ In general, the order
of the error term can not be improved to $o(N^{-1}).$ In fact, as
will be detailed in \cite{berm17} we have the following
\begin{thm}
The order of the error in the inequality \ref{eq:ineq Laplac transf in outlook}
can be improved to $o(N^{-1})$ iff the Riemannian metric defined
by $\omega_{\phi}$ has constant scalar curvature. 
\end{thm}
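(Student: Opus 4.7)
The plan divides according to the two directions of the equivalence.

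\textbf{Forward direction (CSC $\Rightarrow$ exponentially small error).} Observe that Proposition~\ref{prop:lower bound on L funct in adj} requires only that the \emph{reference} metric induce a constant scalar curvature Kähler metric. By construction $\psi_0$ satisfies this, and by hypothesis so does $\phi$. Applying the proposition twice, once with reference $\psi_0$ and test metric $\phi$, once with the roles exchanged (using the cocycle relation \ref{eq:cocycl relation for primitive} to relate $\mathcal{E}_{\psi_0}$ and $\mathcal{E}_\phi$), yields matching upper and lower bounds on $\tfrac{1}{N_k}\log Z_{N_k,1}[k\phi+\phi_F]-\tfrac{1}{N_k}\log Z_{N_k,1}[k\psi_0+\phi_F]$. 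The combined inequality gives
\[
\Bigl|\tfrac{1}{N_k}\log Z_{N_k,1}[k\phi+\phi_F]+\mathcal{E}(\phi)-\tfrac{1}{N_k}\log Z_{N_k,1}[k\psi_0+\phi_F]-\mathcal{E}(\psi_0)\Bigr|\leq 2\delta_{kL}\,R_\phi,
\]
where $\delta_{kL}=O(e^{-k/C})$ and $R_\phi$ is a $k$-independent constant. Since $\epsilon_{N_k,1}[\psi_0]=0$ by the normalization, we obtain $\epsilon_{N_k,1}[\phi]=O(e^{-k/C})$, which is in particular $o(N_k^{-1})$.

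\textbf{Converse ($o(N^{-1})$ $\Rightarrow$ CSC).} Here the main input is the Tian--Yau--Zelditch--Catlin expansion of the Bergman kernel in the adjoint setting: for $\phi\in\mathcal{H}(L)$, writing $\rho_{N_k}^\phi$ for the density on $X$ associated to the orthonormal basis of $H^0(X,kL+F)$ with respect to the $L^2$-norm induced by $(k\phi+\phi_F,dV)$, one has uniformly on $X$
\[
\rho_{N_k}^\phi(x)=N_k\,\omega_\phi(x)/\deg L+\tfrac{1}{2}S(\omega_\phi)(x)\,\omega_\phi(x)/\deg L+O(k^{-1}),
\]
where $S(\omega_\phi)$ denotes the scalar curvature. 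Integrating the variational formula $\partial_t\log Z_{N_k,1}[k\phi_t+\phi_F]=-k\int\dot\phi_t\,\rho_{N_k}^{\phi_t}$ along a path $\phi_t\in\mathcal{H}(L)$ from $\psi_0$ to $\phi$, substituting the TYZ expansion, and collecting terms by order in $k$, produces the asymptotic expansion
\[
\epsilon_{N_k,1}[\phi]=\frac{\mathcal{M}(\phi)-\mathcal{M}(\psi_0)}{N_k}+O(N_k^{-2}),
\]
where $\mathcal{M}$ is identified, after integration by parts in $t$, with the classical Mabuchi K-energy on $(X,[\omega_0])$. The first variation of $\mathcal{M}$ is $-(S(\omega_\phi)-\bar S)\omega_\phi$, so its critical points are precisely the CSC representatives; convexity of $\mathcal{M}$ along weak Kähler geodesics together with the existence of $\omega_0\in[\omega_0]$ with CSC then gives $\mathcal{M}(\phi)-\mathcal{M}(\psi_0)\geq 0$, with equality iff $\omega_\phi$ has CSC (a standard fact in one complex dimension, equivalent under uniformization to Jensen's inequality applied to $\log(\omega_\phi/\omega_0)$). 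The hypothesis $\epsilon_{N_k,1}[\phi]=o(N_k^{-1})$ forces the coefficient of $1/N_k$ to vanish, hence $\omega_\phi$ has CSC.

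\textbf{Main obstacle.} The technical heart of the argument is the identification of the subleading coefficient of $\epsilon_{N_k,1}[\phi]$ with the Mabuchi K-energy difference, carried out uniformly along a path of positive metrics. This requires: (i) the uniform validity of the TYZ expansion on $X$ along the path $\phi_t$, which forces one to work with a smooth path inside $\mathcal{H}(L)$ rather than a weak geodesic; (ii) careful bookkeeping of the contribution of the twisting bundle $F$ and the fixed volume form $dV$ to the $1/N_k$ coefficient; and (iii) verifying that the boundary contributions in the variational computation telescope cleanly into $\mathcal{M}(\phi)-\mathcal{M}(\psi_0)$, with no residual $\phi$-dependent constant. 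Both the forward and converse directions should be robust enough to upgrade to $\epsilon_{N_k,1}[\phi]$ being exponentially small in $k$, in line with the last clause of the theorem.
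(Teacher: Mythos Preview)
Your forward direction is fine and is essentially what the paper invokes (citing \cite[Thm 1.1]{berm3}): once $\phi$ itself has constant scalar curvature, Proposition~\ref{prop:lower bound on L funct in adj} can be applied with $\phi$ in the role of the reference, and together with the original application this pins $\epsilon_{N_k,1}[\phi]$ between two exponentially small quantities.

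The converse, however, contains a genuine gap. The hypothesis of the theorem is that the \emph{inequality} \eqref{eq:ineq Laplac transf in outlook} holds with some error of order $o(N^{-1})$ for all $u$; it is \emph{not} the statement that the particular sequence $\epsilon_{N_k,1}[\phi]$ is $o(N^{-1})$. These are different: $\epsilon_{N_k,1}[\phi]$ is built from $Z_{N_k}[k\phi]$ alone, whereas the inequality concerns the ratio $Z_{N_k}[k(\phi+u)]/Z_{N_k}[k\phi]$, and nothing prevents the inequality from admitting a sharper bound than the one supplied by Theorem~\ref{thm:upper bd on part funct riem surf adj}. So from the hypothesis you cannot directly read off $\epsilon_{N_k,1}[\phi]=o(N^{-1})$, and your global-minimizer argument for $\mathcal{M}$ at $\psi_0$ never gets off the ground.

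The paper's route bypasses this. It uses the asymptotic \emph{equality} \eqref{eq:expansion in terms of Mab in outlook} for the left-hand side (valid whenever $\phi+u\in\mathcal{H}(L)$), and then the assumed inequality with $o(N^{-1})$ error forces $\mathcal{M}(\phi+u)\geq\mathcal{M}(\phi)$ for every such $u$. Since $\mathcal{H}(L)$ is open in the space of smooth metrics, $\phi$ is then a critical point of $\mathcal{M}$, hence CSC. Note that this uses only the characterization of critical points of $\mathcal{M}$, not the deeper fact that $\psi_0$ is a global minimizer. Your TYZ computation already delivers the expansion \eqref{eq:expansion in terms of Mab in outlook} (apply your formula for $\epsilon_{N_k,1}$ at $\phi$ and at $\phi+u$ and subtract), so the repair is to replace your final step by this local-minimizer argument.
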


The ``if direction'' is the content of \cite[Thm 1.1]{berm3} and
the ``only if direction'' is shown as follows. First, if $\phi+u\in\mathcal{H}(L),$
then 
\begin{equation}
\frac{1}{kN_{k}}\log\E(e^{-kN_{k}U_{N_{k}}})=-\mathcal{F}(\phi+u)+\mathcal{F}(\phi)-N_{k}^{-1}\left(\mathcal{M}(\phi+u)-\mathcal{M}(\phi)\right)+o(N_{k}^{-1}),\label{eq:expansion in terms of Mab in outlook}
\end{equation}
 where $\mathcal{M}$ is Mabuchi's K-energy functional on $\mathcal{H}(L),$
which has the property that $\phi$ is a critical points of $\mathcal{M}$
iff $\omega_{\phi}$ has constant scalar curvature. In particular,
if the inequality \ref{eq:ineq Laplac transf in outlook} holds with
a smaller error term of the order $o(N^{-1}),$ then the expansion
\ref{eq:expansion in terms of Mab in outlook} forces $\mathcal{M}(\phi+u)\geq\mathcal{M}(\phi)$
for any $u$ such that $\phi+u\in\mathcal{H}(L).$ Since $\mathcal{H}(L)$
is an open subset of the space of all smooth metrics on $L$ it follows
that $\phi$ is a critical point of $\mathcal{M}$ and hence $\omega_{\phi}$
defines a metric with constant scalar curvature. 

The previous argument is analogous to the proof of Step one in Prop
\ref{prop:sharp order of error}, where the role of the expansion
\ref{eq:expansion in terms of Mab in outlook} is played by the expansion
\ref{eq:expansion of Laplac in terms of entropy in pf}. Accordingly,
one is led to make the following 
\begin{conjecture}
Let $(X,L)$ be a polarized compact Riemann surface endowed. Given
a metric $\phi$ on $L$ and a function $u$ on $X$ the weak form
of the conjecture says that 
\[
\lim_{k\rightarrow\infty}\left(\frac{1}{k}\log\E(e^{-kN_{k}U_{N_{k}}})+N_{k}\left(\mathcal{F}(\phi+u)-\mathcal{F}(\phi)\right)\right)=-\mathcal{M}(P(\phi+u))+\mathcal{M}(P\phi)
\]
if $\phi$ and $u$ are smooth and $S_{\phi}$ and $S_{\phi+u}$ are
domains with smooth boundaries with $\mu_{\phi+u}$ and $\mu_{\phi}$
strictly positive on their supports. The stronger form says that the
convergence holds under the assumption that $\mu_{\phi+u}$ and $\mu_{\phi}$
have $L^{\infty}-$densities and the strongest form of the conjecture
says that the convergence holds when $\mu_{\phi+u}$ and $\mu_{\phi}$
have finite entropy. 
\end{conjecture}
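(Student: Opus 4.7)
The plan is to reduce the conjecture to the strict positivity case, where the expansion (\ref{eq:expansion in terms of Mab in outlook}) is available, via a regularization argument in which both the Mabuchi energy and the next-to-leading correction to $\log Z_{N_k}$ behave well under passage from $\phi$ to $P\phi$. In the baseline situation $\phi, \phi+u \in \mathcal{H}(L)$, so that $P\phi = \phi$ and $P(\phi+u) = \phi+u$, the conjectural equality reduces precisely to (\ref{eq:expansion in terms of Mab in outlook}), which in turn follows from the second-order Tian--Yau--Zelditch asymptotics of the Bergman kernel together with the standard representation of $\mathcal{M}$ as the sum of an entropy term $\int_X \log(\omega_\phi/\omega_0)\omega_\phi/\deg L$ and a primitive of $\psi \mapsto \bar R\, \omega_\psi - \operatorname{Ric}(\omega_0)$.

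Next, under the hypotheses of the weak form, approximate $P\phi$ from above by a family $\phi_\epsilon \in \mathcal{H}(L)$ designed so that $\omega_{\phi_\epsilon} \to dd^c(P\phi) = 1_{S_\phi}\,dd^c\phi$ in $L^\infty_{\mathrm{loc}}$ away from $\partial S_\phi$, while on a shrinking $\epsilon^{1/2}$-collar of $\partial S_\phi$ it interpolates smoothly to $\epsilon\,\omega_0$ outside; build a parallel family $(\phi+u)_\epsilon$. Applying Step~1 to these regularizations yields the conjectured identity with the right hand side replaced by $-\mathcal{M}((\phi+u)_\epsilon) + \mathcal{M}(\phi_\epsilon)$. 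To pass to $\epsilon \to 0$ one invokes two continuity ingredients: first, under the smooth droplet hypothesis, $\mathcal{E}$ is continuous along monotone sequences by Prop \ref{prop:cont of beatiful E under monotone}, and the entropy part of $\mathcal{M}$ converges because $\omega_{\phi_\epsilon} \to 1_{S_\phi}\,dd^c\phi$ strongly on $S_\phi$; second, the left hand side is stable under the approximation provided the partition function asymptotic expansion is itself stable up to $o(1)$ beyond the $N_k$-order $\mathcal{F}$-term. The extension from the weak form to the $L^\infty$-density and finite-entropy forms is carried out by a further outer approximation of $\phi$ by smooth metrics with smooth droplets, exploiting the lower semi-continuity of $\mathcal{M}$ on measures with finite entropy in the sense of Berman--Boucksom--Eyssidieux--Guedj--Zeriahi.

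The main obstacle is the edge analysis on the $\epsilon^{1/2}$-collar of $\partial S_\phi$. The Bergman kernel of $k\phi_\epsilon$ develops an edge profile on the length scale $N_k^{-1/2}$, and its contribution to $\log Z_{N_k}[\phi_\epsilon]$ is a priori of order $N_k^{1/2}$, which dwarfs the $O(1)$ correction we are trying to extract. Proving that this edge contribution is \emph{universal} --- depending only on $(X,L)$ and on the geometry of $\partial S_\phi$ but not on the bulk of $\phi$ --- so that it cancels in the difference $\log Z_{N_k}[\phi_\epsilon] - \log Z_{N_k}[(\phi+u)_\epsilon]$ once the two droplets coincide near their boundaries, and otherwise enters only through an intrinsic boundary functional of $\partial S_\phi$ that is already absorbed into $\mathcal{M}(P\phi)$, is the essential new analytic input. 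Partial edge expansions of this type are known for rotationally symmetric 2D Coulomb gases (Zabrodin--Wiegmann, Ameur--Hedenmalm--Makarov), but the general case appears genuinely open. For the strongest form of the conjecture, where $\partial S_\phi$ can be arbitrarily irregular, one must moreover bypass the smooth droplet approximation and work directly with a pluripotential-theoretic definition of $\mathcal{M}(P\phi)$.
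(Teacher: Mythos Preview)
The statement you are attempting to prove is a \emph{conjecture}, not a theorem: the paper does not supply a proof, and indeed explicitly presents it as open. What the paper does prove is the special case recorded in Prop~\ref{prop:mab on sphere}, namely Conjecture~\ref{conj:sphere} on the Riemann sphere under strong regularity assumptions on $S_\phi$. That proof proceeds quite differently from your outline: it does not regularize $P\phi$ into $\mathcal{H}(L)$ at all, but instead invokes the Lebl\'e--Serfaty/Bauerschmidt--Bourgade--Nikula--Yau expansion \eqref{eq:asympt exp of serf etc} for $Z_{N,\beta}[N\phi]$ as a black box, and then shows via a one-parameter deformation argument (formula~\eqref{eq:pf mab on sphere}) that shifting the exterior potential by the fixed function $p\phi$ only perturbs $\log Z_N$ at order $o(N)$. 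The Mabuchi functional then appears by rewriting the entropy and energy terms in \eqref{eq:asympt exp of serf etc} via \eqref{eq:Mab in terms of E and D}. In other words, the paper outsources the hard edge analysis to \cite{l-s,b-b-n-y2} rather than attempting a direct Bergman-kernel regularization.

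Your proposal is honest about its central gap: the $\epsilon^{1/2}$-collar contribution to $\log Z_{N_k}[\phi_\epsilon]$ is indeed of order $N_k^{1/2}$ (or worse), and the claimed universality/cancellation of this edge term is precisely the open problem. You correctly identify this, but it means what you have written is a research programme, not a proof. Two further issues: first, for your Step~2 cancellation to work you would need $S_{\phi+u} = S_\phi$ near the boundary, which is not assumed and generically fails; second, the stability of the $o(1)$ remainder in \eqref{eq:expansion in terms of Mab in outlook} under $\phi_\epsilon \to P\phi$ is not a consequence of Prop~\ref{prop:cont of beatiful E under monotone} alone, since the Bergman-kernel expansion degenerates non-uniformly as the curvature of $\phi_\epsilon$ tends to zero on $X \setminus S_\phi$. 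The conjecture remains open in the generality stated.
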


The original definition of $\mathcal{M}(\psi)$ requires that $\psi$
be in $\mathcal{H}(L),$ but, as observed in \cite{berm2 komma 5}
$\mathcal{M}(\psi)$ is well-defined as soon as $dd^{c}\psi$ has
finite entropy. As will be explained in \cite{berm17} the previous
conjecture naturally extends to the higher dimensional case when $(X,L)$
is a polarized compact complex manifold, as well as to certain $\beta-$ensembles.
In the case when $X$ is the Riemann sphere $\P^{1}$ the general
conjecture for $\beta-$ensembles case is equivalent to the following
conjecture, in its strongest form:
\begin{conjecture}
\label{conj:sphere}Let $\phi$ be a function on $\C$ with logarithmic
growth such that $D_{\mu_{0}}(\mu_{\phi})<\infty,$ where $\mu_{0}:=e^{-2\psi_{0}}/\pi$
corresponds to the standard probability measure on the Riemann sphere
$\P^{1}.$ Then, setting $p:=2/\beta-1,$ 
\begin{equation}
\frac{1}{N(N+p)\beta}\log\int_{\C^{N}}\left(|D^{(N)}|^{2}e^{-(N+p)\phi}\right)^{\beta}d\lambda^{\otimes N}=-\mathcal{F}(\phi)+\frac{\log N}{2}N^{-1}-N^{-1}\left((\frac{1}{\beta}-\frac{1}{2})\mathcal{M}(P\phi)-\xi_{\beta}\right)+o(1)\label{eq:expansion in conj sphere}
\end{equation}
where $\mathcal{M}$ is Mabuchi's K-energy functional for $\mathcal{O}(1)\rightarrow\P^{1}.$ 
\end{conjecture}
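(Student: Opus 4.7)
My plan would proceed in three stages: reduction to the adjoint determinantal case on the Riemann sphere, establishing the expansion for smooth strictly-positively-curved metrics via Bergman kernel asymptotics, and then removing both regularity assumptions by approximation.

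\textbf{Stage 1 (Reduction to $\beta=1$ on $\P^1$).} Using Lemma \ref{lem:slater as vandermonde and adjoint} and the identifications of Section \ref{subsec:Compactification-of-}, the integral on the left is identified, up to the explicit factors $N!$ and $F(N+1)$ of Lemma \ref{lem:def and exp of F}, with the partition function $Z_{N,\beta}[(N+p)\phi]$ on $(X,L)=(\P^1,\mathcal{O}(1))$. The subleading term $(\log N)/2N$ then comes directly from the expansion in Lemma \ref{lem:def and exp of F}. To reduce from general $\beta\le 1$ to $\beta=1$, I would upgrade the H\"older-type argument in the proof of Theorem \ref{thm:sharp non Gaussian bd intro} to a two-sided comparison by combining it with a reverse bound coming from Gibbs's variational principle (Lemma \ref{lem:Gibbs}) using the trial measure $\mu_\phi$, as in Prop \ref{prop:error estim on compact Riemann}. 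The coefficient $(1/\beta - 1/2)$ in front of $\mathcal{M}(P\phi)$ should be forced by matching the $L^\beta$ and $L^1$ versions of the Slater density, with the entropy-like correction terms that arise pairing up with the K-energy through its standard decomposition into an energy term and an entropy term.

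\textbf{Stage 2 (Smooth positive curvature case).} For $\phi$ smooth with $dd^c\phi>0$ everywhere on $X=\P^1$ (so that $S=X$ and $P\phi=\phi$), $Z_{N,1}[(N+1)\phi]$ is a Gram determinant on $H^0(X,(N+1)L+K_X)$. Its logarithm can be expanded by differentiating along the path $t\mapsto t\phi+(1-t)\psi_0$: the derivative is the integral of the Bergman density $\rho_N^{(t)}$ against $\phi-\psi_0$. The TYZ-type expansion $\rho_N^{(t)} = N + \frac{1}{2}S(\omega_{\phi_t}) + O(N^{-1})$ (available in the adjoint setting, cf.\ \cite{k-m-m-w} and Prop \ref{prop:error estim on compact Riemann}) integrates along $t$: the leading $N$ yields $\mathcal{F}(\phi)-\mathcal{F}(\psi_0)$, while the scalar-curvature term integrates, by the very definition of Mabuchi's functional as the primitive of the scalar curvature, to $\mathcal{M}(\phi)-\mathcal{M}(\psi_0)$. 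This is exactly the $\beta=1$ case of the conjecture modulo the constant $\xi_1$, which is then computed from the Selberg--Mehta closed form of the spherical partition function at $\phi=\psi_0$.

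\textbf{Stage 3 (Approximation).} For general admissible $\phi$ with $D_{\mu_0}(\mu_\phi)<\infty$, I would run a double approximation: first replace $\phi$ by $P\phi$, which is legitimate since $\mu_\phi=0$ on $\{P\phi<\phi\}$ by the orthogonality relation (Prop \ref{prop:og relation etc}) and since the sharp inequality of Theorem \ref{thm:upper bd on part funct riem surf adj} controls the error uniformly in $N$; second, approximate $P\phi$ by a sequence $\psi_j\in\mathcal{H}(L)$ decreasing to it. Continuity of $\mathcal{F}$ along such sequences is Prop \ref{prop:cont of beatiful E under monotone}; continuity of $\mathcal{M}$ restricted to measures of uniformly bounded entropy, available from its extension to the finite-entropy class in the style of \cite{berm2 komma 5}, closes the argument.

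\textbf{The main obstacle.} The hardest step is controlling the boundary contribution at the precise order $N^{-1}$ in Stage 3. Even for smooth $\phi$, $P\phi$ is only $C^{1,1}$ (Prop \ref{prop:regularity}), so the Bergman kernel transitions from a bulk profile on $S$ to an edge profile on a shrinking neighborhood of $\partial S$ of mesoscopic size $N^{-1/2}$. The results of \cite{l-s,b-b-n-y2} show the boundary contribution is $O(N^{-1})$ for smooth $\partial S$, but to survive the approximation one needs to identify this contribution \emph{intrinsically} with the K-energy of $P\phi$---so that it depends only on $\mu_\phi$ and not on the regularity of $\partial S$ (which can be extremely irregular even for smooth $\phi$). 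This matching of edge asymptotics to an entropy-plus-energy expression of $\mathcal{M}(P\phi)$ is exactly what is missing to turn the conjecture into a theorem.
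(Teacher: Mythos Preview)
The statement you are attempting to prove is labelled a \emph{Conjecture} in the paper, and the paper does not prove it in the stated generality. What the paper does establish is a partial case, Prop~\ref{prop:mab on sphere}, under strong regularity assumptions ($\phi$ strictly super-logarithmic, $S_\phi$ a domain with $C^{2,1}$ boundary, $\mu_\phi/d\lambda\in C^{2,\alpha}(S_\phi)$ and strictly positive there). The paper's route to that partial result is quite different from yours: rather than reducing to $\beta=1$ and running Bergman kernel asymptotics, it invokes the full $\beta$-dependent expansion \eqref{eq:asympt exp of serf etc} for $Z_{N,\beta}[N\phi]$ from \cite{l-s,b-b-n-y2} as a black box, and then compares $Z_{N,\beta}[(N+p)\phi]$ to $Z_{N,\beta}[N\phi]$ by differentiating $t\mapsto N^{-1}\log Z_{N,\beta}[N\phi+tu]$ for $u=p\phi$, showing via convergence of one-point functions that this derivative tends to $\langle\mu_\phi,u\rangle$ uniformly in $t$. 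The K-energy then appears through the algebraic identity \eqref{eq:Mab in terms of E and D}, not through Bergman kernel coefficients.

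Your proposal contains two genuine gaps. First, Stage~1: the H\"older argument of Theorem~\ref{thm:sharp non Gaussian bd intro} and the Gibbs lower bound of Lemma~\ref{lem:Gibbs} each lose a term of order $N^{-1}$ in the wrong direction, so combining them gives only an $O(N^{-1})$ window around the $\beta=1$ value, not the precise coefficient $(1/\beta-1/2)\mathcal{M}(P\phi)$. The paper's method avoids this by never reducing in $\beta$. Second, your Stage~3 obstacle is exactly the obstruction that keeps the statement a conjecture: the edge contribution to the Bergman kernel (or to the expansion in \cite{l-s,b-b-n-y2}) is currently only controlled at order $N^{-1}$ under boundary regularity of $S_\phi$, and no monotone approximation of $P\phi$ by smooth strictly-psh metrics is known to preserve this order uniformly. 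Your honest identification of this as ``the hardest step'' is accurate; it is the step the paper explicitly leaves open.
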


In general, $\mathcal{M}(\psi)$ is only defined up to an additive
constant, but in the case when $(X,L)=(\P^{1},\mathcal{O}(1))$ the
constant can be fixed by defining 
\begin{equation}
\mathcal{M}(\psi):=F(dd^{c}\psi),\,\,\,\,F(\mu)=-2E_{0}(\mu)+D_{d\lambda}(\mu)\label{eq:Mab in terms of E and D}
\end{equation}
\begin{prop}
\label{prop:mab on sphere}Let $\phi$ be a function on $\C$ which
has strictly super logarithmic growth. Assuming that $S_{\phi}$ is
a domain with $C^{2,1}-$boundary, $\mu_{\phi}/d\lambda\in C^{2,\alpha}(S_{\phi})$
for some $\alpha\in]0,1]$ and $\mu_{\phi}/d\lambda>0$ on $S_{\phi}$,
the expansion in the previous conjecture does hold.
\end{prop}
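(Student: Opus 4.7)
The plan is to reduce the expansion \ref{eq:expansion in conj sphere} to the Leblé-Serfaty / Bauerschmidt-Bourgade-Nikula-Yau expansion \ref{eq:asympt exp of serf etc}, which governs the partition function in the normalization $V=N\phi$ rather than $V=(N+p)\phi$. The bridge between the two normalizations is the tautology
\[
Z_{N,\beta}[(N+p)\phi] \;=\; Z_{N,\beta}\bigl[N\tilde\phi_N\bigr], \qquad \tilde\phi_N := \tfrac{N+p}{N}\phi,
\]
so the problem becomes a matter of expanding the known asymptotics with $\phi$ replaced by the slowly varying $\tilde\phi_N$.

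First I would verify that \ref{eq:asympt exp of serf etc} applies to $\tilde\phi_N$, \emph{uniformly in $N$}. Strict super-logarithmic growth is clearly preserved since $\tilde\phi_N\geq(1+p/N)(1+\epsilon)\log(1+|z|^2)-C'$. The regularity assumption \ref{eq:regul as strong} is preserved for $N$ large by the Serfaty--Serra result already invoked in the proof of Prop \ref{prop:sharp order of error}: the equilibrium measure of $\tilde\phi_N$ has $C^{2,\alpha}$ density, its support converges in $C^{2,1}$ to $S_\phi$, and the density stays bounded below. One then has to check that the expansion in \cite{l-s,b-b-n-y2} is stable under the $C^{2,\alpha}$-small deformation $\phi\mapsto\tilde\phi_N$, which amounts to tracking the dependence of each error term on the relevant norms of the potential --- this is what I expect to be the main technical obstacle, though the proofs in \cite{l-s,b-b-n-y2} are effective enough that this can be extracted.

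Granting this, I divide the resulting expansion by $\beta N(N+p)$ and Taylor expand. Using $N/(N+p)=1-p/N+O(N^{-2})$ and the Gateaux differentiability of $\mathcal{F}$ from Prop \ref{prop:beatif F is diff}, which gives
\[
\mathcal{F}(\tilde\phi_N)=\mathcal{F}(\phi)+\tfrac{p}{N}\langle\phi,\mu_\phi\rangle+O(N^{-2}),
\]
the leading block contributes
\[
-\tfrac{N}{N+p}\mathcal{F}(\tilde\phi_N) = -\mathcal{F}(\phi)+\tfrac{p}{N}\bigl(\mathcal{F}(\phi)-\langle\phi,\mu_\phi\rangle\bigr)+o(N^{-1}) = -\mathcal{F}(\phi)+\tfrac{p}{N}E_0(\mu_\phi)+o(N^{-1}),
\]
using $\mathcal{F}(\phi)=E_0(\mu_\phi)+\langle\phi,\mu_\phi\rangle$. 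The $\log N$ piece contributes $\frac{\log N}{2N}$ up to $o(N^{-1})$. Finally, the continuity of the entropy $D_{d\lambda}(\mu_{\tilde\phi_N})\to D_{d\lambda}(\mu_\phi)$ (immediate from the $C^{2,\alpha}$-convergence of the densities on converging domains, with densities bounded away from zero) gives
\[
-\tfrac{1}{N}\tfrac{N}{N+p}\bigl((\tfrac{1}{\beta}-\tfrac12)D_{d\lambda}(\mu_{\tilde\phi_N})-\xi_\beta\bigr) = -\tfrac{1}{N}\bigl((\tfrac{1}{\beta}-\tfrac12)D_{d\lambda}(\mu_\phi)-\xi_\beta\bigr)+o(N^{-1}).
\]

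Combining and using $p=2(\tfrac1\beta-\tfrac12)$ together with $\mathcal{M}(P\phi)=-2E_0(\mu_\phi)+D_{d\lambda}(\mu_\phi)$ (formula \ref{eq:Mab in terms of E and D}), the $1/N$ correction becomes
\[
\tfrac{1}{N}\Bigl(pE_0(\mu_\phi)-(\tfrac1\beta-\tfrac12)D_{d\lambda}(\mu_\phi)+\xi_\beta\Bigr) = -\tfrac{1}{N}\bigl((\tfrac1\beta-\tfrac12)\mathcal{M}(P\phi)-\xi_\beta\bigr),
\]
which is precisely the coefficient appearing in \ref{eq:expansion in conj sphere}. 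This completes the reduction to \cite{l-s,b-b-n-y2}, modulo the stability question flagged above.
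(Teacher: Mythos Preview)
Your approach is correct and the algebraic reduction to \ref{eq:asympt exp of serf etc} together with the identification via \ref{eq:Mab in terms of E and D} is clean. However, the paper takes a genuinely different route that avoids your main technical obstacle.

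Instead of rescaling $\phi\mapsto\tilde\phi_N=\tfrac{N+p}{N}\phi$ and invoking \ref{eq:asympt exp of serf etc} for an $N$-dependent potential (which, as you note, requires uniformity of the $o(N)$ remainder in \cite{l-s,b-b-n-y2} under $C^{2,\alpha}$-small perturbations), the paper applies \ref{eq:asympt exp of serf etc} only once, to the fixed potential $\phi$. The passage from $Z_{N,\beta}[N\phi]$ to $Z_{N,\beta}[(N+p)\phi]=Z_{N,\beta}[N\phi+u]$ with $u:=p\phi$ is then handled by showing directly that
\[
N^{-1}\log\frac{Z_{N,\beta}[N\phi+u-\bar u]}{Z_{N,\beta}[N\phi]}\to 0,\qquad \bar u:=\int u\,\mu_\phi,
\]
via the derivative formula $\tfrac{d}{dt}\bigl(N^{-1}\log Z_{N,\beta}[N\phi+tu]\bigr)=\langle\E_t(\delta_N),u\rangle$, convergence of the one-point function to $\mu_\phi$, and a uniform bound on $\langle\E_t(\delta_N),u\rangle$ (Bergman kernel estimates when $\beta=1$, with the general $\beta$ deferred to \cite{berm17}). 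After that, the same $E_0/D_{d\lambda}$ bookkeeping you carried out yields \ref{eq:expansion in conj sphere}. The advantage of the paper's route is that it treats \cite{l-s,b-b-n-y2} as a black box; your route is more direct but buys this at the cost of opening that box to extract uniformity.
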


\begin{proof}
By \cite[Cor 1.5]{l-s} the expansion \ref{eq:asympt exp of serf etc}
for $Z_{N,\beta}[N\phi]$ holds. Setting $u:=p\phi$ it will be enough
to show that 
\begin{equation}
\lim_{N\rightarrow\infty}N^{-1}\log\frac{Z_{N,\beta}[N\phi+u-\bar{u}]}{Z_{N,\beta}[N\phi]}=0,\,\,\,\bar{u}:=\int u\mu_{\phi}\label{eq:pf mab on sphere}
\end{equation}
 Indeed, accepting this for the moment we then get 
\begin{equation}
\log Z_{N,\beta}[N\phi+u]=\log Z_{N,\beta}[N\phi]-N\bar{u}+o(N)\label{eq:of mab on sphere two}
\end{equation}
 and hence the expansion \ref{eq:asympt exp of serf etc} for $Z_{N,\beta}[N\phi]$
gives 
\[
\beta^{-1}\log Z_{N,\beta}[N\phi+u]=-N^{2}\mathcal{F}(\phi)+\frac{N}{2}\log N+N\left(-\frac{p}{2}D_{d\lambda}(\mu_{\phi})-p\int\phi\mu_{\phi}+\xi_{\beta}\right)+o(N)
\]
Finally, the expansion \ref{eq:expansion in conj sphere} follows
by rewriting
\[
-N^{2}\mathcal{F}(\phi)=-N(N+p)\mathcal{F}(\phi)+Np\mathcal{F}(\phi)=-N(N+p)\mathcal{F}(\phi)+Np(E_{0}(\mu_{\phi})+\int\phi\mu_{\phi})
\]
All that remains is to prove formula \ref{eq:pf mab on sphere}. To
this end first observe that after replacing $u$ with $u-\bar{u}$
we may as well assume that $\bar{u}=0.$ Set
\[
f_{N}(t):=N^{-1}\log\frac{Z_{N,\beta}[N\phi+tu]}{Z_{N,\beta}[N\phi]},\,\,\,t\in[0,1]
\]
Then $f_{N}(0)=0$ and
\begin{equation}
\frac{df_{N}(t)}{dt}=\left\langle \E_{t}(\delta_{N}),u\right\rangle ,\label{eq:deriv of fN in pf}
\end{equation}
 where $\E_{t}$ denotes expectations of the $N-$particle Coulomb
gas with exterior potential $V_{t}:=N\phi+tu.$ It follows from essentially
well-known results that, for any fixed $t,$
\[
\lim_{N\rightarrow\infty}\left\langle \E_{t}(\delta_{N}),u\right\rangle =\left\langle \mu_{\phi},u\right\rangle =0
\]
 (see for example \cite[Cor 1.2]{berm1}; which applies even though
$u$ is not bounded using that $\phi+tu$ has strictly super logarithmic
growth). By the dominated convergence theorem all that remains is
to show that the right hand side in formula \ref{eq:deriv of fN in pf}
is uniformly bounded for $t\in[0,1].$ When $\beta=1$ this follows
form the Bergman kernel estimates in \cite{berm 1 komma 1} and the
proof in the case of a general $\beta$ is given in \cite{berm17}. 
\end{proof}

\section{\label{sec:Main-results-in}Main results in Part II and companion
papers}

Applications of the results in the present paper are deferred to the
sequel \cite{berm15} and various elaborations are given in the companion
papers \cite{berm15b,berm16,berm17}. Here we will only state the
main results in \cite{berm15} in the case of the Coulomb gas in the
plane (a discussion about relations to previous results is given in
\cite{berm15}).

We first introduce the following assumption which is stronger than
the assumption A1 in Section \ref{sec:Main-results-in}:
\begin{itemize}
\item \textbf{(A2)}\emph{ $A0$ holds and there exists a constant $C$ such
that $C^{-1}\leq\Delta\phi\leq C$ on $S$ and $\phi\in C^{2}(S)$ }
\end{itemize}
We will first show that the main inequalities can be extended to $\beta>1$
if additional error terms are included
\begin{thm}
Assume that $\phi$ and are Lipschitz continuous in a neighborhood
of $S$ and that $u$ is bounded in $\C.$ Then, for any $\beta>1$
the inequalities in Theorem \ref{thm:sharp non Gaussian bd intro}
and Theorem \ref{thm:sharp non-Gauss ine for strong} are still valid
if the following term is added to the right hand side: 
\[
N^{-1}\left\Vert du\right\Vert _{L^{\infty}(\C)}+N^{-1}C+2N^{-1}\log N,
\]
 where the constant $C$ only depends on $\phi.$ 
\end{thm}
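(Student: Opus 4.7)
The plan is to extend the proofs for $\beta\le 1$ to $\beta>1$ by replacing the Jensen/H\"older reduction to the determinantal case $\beta=1$ (which reverses direction for $\beta>1$) with a pointwise sup-norm comparison. Writing
\[
\rho(z_1,\dots,z_N):=|D^{(N)}(z_1,\dots,z_N)|^{2}\prod_{i=1}^N e^{-(N+p)(\phi+u)(z_i)},
\]
so that $Z_{N,\beta}[(N+p)(\phi+u)]=\int\rho^\beta\,d\lambda^{\otimes N}$, one has the trivial pointwise bound
\[
\int\rho^\beta\,d\lambda^{\otimes N}\le \|\rho\|_{L^\infty(\C^N)}^{\beta-1}\int\rho\,d\lambda^{\otimes N}=\|\rho\|_{L^\infty}^{\beta-1}\, Z_{N,1}[(N+p)(\phi+u)].
\]
This reduces the numerator to the already-treated $\beta=1$ adjoint case (Theorem~\ref{thm:upper bd on part funct riem surf adj}, after a harmless rescaling from $(N+1)$ to $(N+p)$), at the cost of an additive sup-norm correction $\tfrac{\beta-1}{\beta N(N+p)}\log\|\rho\|_{L^\infty}$.

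The main technical step is the quantitative $L^\infty$ bound
\[
\tfrac{1}{N(N+p)}\log\|\rho\|_{L^\infty(\C^N)}\le -\mathcal F(\phi+u)+N^{-1}\|du\|_{L^\infty(\C)}+N^{-1}C_\phi+2N^{-1}\log N,
\]
which I would derive from off-diagonal Bergman-kernel estimates for the weighted polynomial space $\mathcal P_{N-1}(\C)$ with the Lipschitz metric $(N+p)(\phi+u)$. Here the $\|du\|_{L^\infty}$ contribution tracks the oscillation of $\phi+u$ over the natural Bergman concentration scale, the $2N^{-1}\log N$ factor is the universal on-diagonal Bergman-kernel size $\sim N$, and $C_\phi$ absorbs the Lipschitz norm of $\phi$ on a neighbourhood of the droplet $S$. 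For the denominator, the very definition of the error sequence gives the matching lower bound
\[
\tfrac{1}{\beta N(N+p)}\log Z_{N,\beta}[(N+p)\phi]\ge -\mathcal F(\phi)-\epsilon_{N,\beta}[\phi].
\]

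Combining these three ingredients, and using that $(\beta-1)/\beta<1$ for $\beta>1$ to absorb the prefactor coming from $\|\rho\|_{L^\infty}^{\beta-1}$, yields Theorem~\ref{thm:sharp non Gaussian bd intro} for $\beta>1$ with the claimed additive error $N^{-1}\|du\|_{L^\infty}+N^{-1}C+2N^{-1}\log N$. Theorem~\ref{thm:sharp non-Gauss ine for strong} then follows by applying the above to the rescaled metric $\phi_N=N\phi/(N+p)$ exactly as in the original proof for $\beta\le 1$; Lipschitz continuity of $\phi$ near $S$ is stable under this rescaling up to $O(N^{-1})$ corrections absorbed into $N^{-1}C_\phi$.

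The hardest part will be the pointwise Bergman-kernel bound with explicit Lipschitz dependence on $\phi+u$: standard Tian/Zelditch peak-section expansions assume $C^2$ (or smoother) weights, so I would either regularize $\phi+u$ on an optimally-chosen scale $r_N\to 0$ and track the resulting oscillation error, or directly adapt a Berndtsson-type peak-section construction to the Lipschitz regime in order to get the $\|du\|_{L^\infty}$ dependence sharp.
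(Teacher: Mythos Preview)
The paper does not actually prove this theorem: it is stated in Section~\ref{sec:Main-results-in} as one of the results deferred to the sequel \cite{berm15}. So there is no proof in the present paper to compare your proposal against; I can only comment on the soundness of your strategy.

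Your reduction $\int\rho^\beta\le\|\rho\|_\infty^{\beta-1}\int\rho$ is the natural replacement for the reversed H\"older step, and rewriting $\|\rho\|_\infty=e^{-\min H^{(N)}}$ turns the required sup-norm estimate into a quantitative lower bound on the Fekete energy. This is a standard and correct route. Two points deserve more care than you indicate. First, the ``harmless rescaling from $(N+1)$ to $(N+p)$'' is not entirely innocent when $\beta>1$: then $p<1$ (and $p<0$ for $\beta>2$), so $\tfrac{N+p}{N+1}(\phi+u)$ need not have super-logarithmic growth and Remark~\ref{rem:upper bound on Z in C} does not apply verbatim. You must either invoke the extension in Remark~5.4 to metrics with mild log-poles at infinity, or handle the $(1-p)$ discrepancy directly by writing $e^{-(N+p)\Phi}=e^{-(N+1)\Phi}e^{(1-p)\Phi}$ and bounding the extra factor on and off the droplet separately (this contributes only to the $N^{-1}C_\phi$ term). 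Second, the on-diagonal Bergman bound $K_Ne^{-(N+p)(\phi+u)}\le CN$ with $C$ depending at worst polynomially on $\|d(\phi+u)\|_{L^\infty}$ is exactly what is needed, and regularising at scale $r_N\sim N^{-1}$ gives $\sup\Delta(\phi+u)_{r_N}\lesssim N\|d(\phi+u)\|_\infty$, which produces the extra factor of $N$ in $K_N$ and hence the claimed $2N^{-1}\log N$. But you should verify that the constant in the Bergman bound for the smoothed weight really depends only polynomially on $\sup\Delta$; this is where \cite{berm 1 komma 1} is relevant.

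An alternative to the Bergman-kernel route, perhaps closer to what \cite{berm15} does given its emphasis on Coulomb-gas techniques, is the electrostatic splitting formula: writing $H^{(N)}_{(N+p)\Phi}=N(N+p)\mathcal F(\Phi)+2(N+p)\sum_i\zeta(z_i)+W_N$ with $\zeta=\tfrac12(\Phi-P\Phi)\ge0$ and $W_N$ the jellium energy of the points against the background $\mu_\Phi$, one has a direct lower bound $W_N\ge-\tfrac{N}{2}\log N-C_{\mu_\Phi}N$ (Sandier--Serfaty/Rougerie--Serfaty). The dependence of $C_{\mu_\Phi}$ on $u$ then enters through $\mu_{\phi+u}=1_S dd^c(\phi+u)$, and under Lipschitz regularity one again has to regularise. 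Both routes face the same analytic issue you identified, and your plan for it is reasonable.
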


\subsection{Large deviations for singular potentials and linear statitistics}

The following result extends the well-known LDPs for continuous exterior
potentials and linear statistics to a singular setting:
\begin{thm}
\label{thm:LDP intro}Consider the $N-$particle Coulomb gas in $\C$
at inverse temperature $\beta\leq1$ with exterior potential $V.$
Assume that $\phi$ satisfies the assumption A0 and that $u$ is a
function in $H^{1}(\C).$

\begin{itemize}
\item If $V=(N+p)(\phi+u)$ (or $V=N(\phi+u)$ if $\phi$ has strictly super
logarithmic growth) the laws of the empirical measures $\delta_{N}$
satisfy a LDP with speed $N^{2}\beta$ and rate functional $E_{\phi+u}(\mu)-\mathcal{F}(\phi+u)$
on $\mathcal{P}(\C)$
\item If $V=(N+p)\phi$ (or $V=N\phi$ if $\phi$ has strictly super logarithmic
growth) the laws of the linear statistics $U_{N}:=\left\langle u,\delta_{N}\right\rangle $
satisfy a LDP speed $N^{2}\beta$ and rate functional $I(s)-\inf_{\R}I,$
where 
\[
I(s)=\inf_{\mu\in\mathcal{P}(X)\cap H^{-1}(\C)}\left\{ E_{\phi}(\mu):\,\left\langle u,\mu\right\rangle =s\right\} ,
\]
If $u\in C_{b}(\C)$ the LDPs above hold for any $\beta>0.$ 
\end{itemize}
\end{thm}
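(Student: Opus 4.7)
The strategy is to derive both LDPs from the asymptotic identity
\[
\Lambda(f):=\lim_{N\to\infty}\frac{1}{\beta N^{2}}\log\mathbb{E}\bigl(e^{-\beta N^{2}\langle f,\delta_{N}\rangle}\bigr)=-\mathcal{F}(\phi+u+f)+\mathcal{F}(\phi+u)
\]
valid for $f\in H^{1}(\C)$ when $\beta\le1$ and for $f\in C_{b}(\C)$ when $\beta>0$. In part (1) this is obtained by compactifying $\C$ to the Riemann sphere $X$ and transferring the Coulomb-gas moment-generating-function asymptotics to $X$ as done in Section \ref{sec:The-Coulomb-gas}; the upper bound on $\Lambda$ comes from Theorem \ref{thm:non-gaussian ineq general setting} (applied with $\phi$ replaced by $\phi+u$, which still satisfies A0 for $u\in C_{b}$ and is handled by the $H^{1}$-approximation Theorem \ref{thm:conv of free energy in C} for general $u\in H^{1}$), and the matching lower bound comes from Prop \ref{prop:Lower bd on part func riemann surf}. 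In the $V=N\phi$ version for strictly super-logarithmic $\phi$ one first reduces to the adjoint setting via the rescaling $\phi\leadsto\phi_{N}=N\phi/(N+p)$ of Theorem \ref{thm:sharp non-Gauss ine for strong}; the extra $O(N^{-1})$ terms vanish under the limit.

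Given the formula for $\Lambda$, the LDP for $\delta_{N}$ is obtained by Gärtner--Ellis / Bryc's inverse Varadhan: since $f\mapsto\mathcal{F}(\phi+u+f)$ is Gateaux-differentiable on $C(X)$ (Prop \ref{prop:beatif F is diff}) with differential $-\mu_{\phi+u+f}$, the Legendre transform of $-\Lambda$ equals $E_{\phi+u}(\cdot)-\mathcal{F}(\phi+u)$ by Prop \ref{prop:diff of composed energy}. First I would establish a weak LDP on $\mathcal{P}(X)$ with this rate function using test functions in $C(X)$, and then deduce the full LDP on $\mathcal{P}(\C)$ from exponential tightness: the super-logarithmic growth of $\phi$ (combined with the lower bound on $Z_{N}$ from Prop \ref{prop:Lower bd on part func riemann surf}) gives an exponential bound on the probability that a definite fraction of particles lies outside a large disc, and the point at infinity in $X$ is polar so measures charging it have $E_{\phi+u}=+\infty$; this forces the rate function to be $+\infty$ off $\mathcal{P}(\C)$ and upgrades the weak LDP to a full one.

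For the linear-statistics LDP in part (2) I would apply the contraction principle to the map $T_{u}:\mu\mapsto\langle u,\mu\rangle$. When $u\in C_{b}(\C)$ the map $T_{u}$ is continuous for the weak topology on $\mathcal{P}(X)$, so contraction is immediate and yields the rate $I(s)=\inf\{E_{\phi}(\mu):T_{u}(\mu)=s\}$, and restricting the infimum to $\mathcal{P}(\C)\cap H^{-1}(\C)$ is automatic because $E_{\phi}(\mu)=+\infty$ otherwise (Prop \ref{prop:prop of energy when =00005Cphi is cont}). For a general $u\in H^{1}(\C)$ the map $T_{u}$ is not weakly continuous, but by the second point of Prop \ref{prop:prop of energy when =00005Cphi is cont} it is continuous on each sublevel $\{E_{\psi_{0}}\le C\}$ with its inherited weak topology; combined with the goodness of the rate function (its sublevel sets are weakly compact because $E_{\phi}$ is coercive on $H^{-1}$, as shown inside the proof of Theorem \ref{thm:conv of E wrt phi and u}), this yields the contraction principle in the form needed, and $I(s)$ is proper and convex by convexity and strict convexity of $E_{\phi}$ (Theorem \ref{thm:conv of E wrt phi and u}).

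The main obstacle is the last point: the rate function $E_{\phi+u}$ has two natural topologies --- the weak topology on $\mathcal{P}(X)$ for which tightness and Bryc-type arguments work cleanly, and the $H^{-1}$ topology with respect to which $\langle u,\cdot\rangle$ for $u\in H^{1}$ is continuous --- and one has to bridge them. The bridge is precisely the uniform coercivity estimate \eqref{eq:unif coerc} used in Theorem \ref{thm:conv of E wrt phi and u}, which forces any sequence of measures with bounded $E_{\phi+u}$-values to be relatively compact in $H^{-1}$, so that the two topologies agree on the relevant sublevel sets. A secondary obstacle is that when only A0 is assumed, $\phi+u$ need not be continuous, so one cannot invoke the classical Ben Arous--Guionnet/Hiai LDPs off the shelf; instead one uses the monotone approximation $\phi_{j}\nearrow\phi$ by continuous admissible weights, applies the continuous case, and passes to the limit using Prop \ref{prop:cont of beatiful E under monotone} and Lemma \ref{lem:prop of p phi} to control $\mathcal{F}(\phi_{j})\to\mathcal{F}(\phi)$.
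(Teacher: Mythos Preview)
The paper does not contain a proof of this theorem. Theorem \ref{thm:LDP intro} appears in Section \ref{sec:Main-results-in}, which explicitly announces that it only \emph{states} the main results of the sequel \cite{berm15}; the proofs are deferred there. Consequently there is no argument in the present paper to compare your proposal against.

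That said, your outline is a plausible route and is built exactly on the ingredients the paper makes available: the asymptotics of the log moment generating function (upper bound from Theorem \ref{thm:non-gaussian ineq general setting} or inequality \eqref{eq:limif L in prop}, lower bound from Prop \ref{prop:Lower bd on part func riemann surf}), the Gateaux differentiability of $\mathcal{F}$ (Prop \ref{prop:beatif F is diff}), the Legendre duality of Prop \ref{prop:diff of composed energy}, and the $H^{1}$-approximation/coercivity from Theorem \ref{thm:conv of E wrt phi and u}. You have also correctly identified the genuine technical point --- that for $u\in H^{1}$ the pairing $\mu\mapsto\langle u,\mu\rangle$ is only continuous on the sublevel sets of $E_{\psi_{0}}$, so the contraction principle must be applied in its ``good rate function'' form --- and the right tool for it (the second item of Prop \ref{prop:prop of energy when =00005Cphi is cont}). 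One caveat: an infinite-dimensional G\"artner--Ellis argument on $\mathcal{P}(X)$ is not entirely routine; it is safer to proceed, as you in fact indicate, via a weak LDP from test functions in $C(X)$ plus exponential tightness, rather than invoking G\"artner--Ellis as a black box. Whether \cite{berm15} follows exactly this path or a variant cannot be determined from the present paper.
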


In particular, by the first point in the previous theorem, if $\phi$
has strictly super logarithmic growth and $u\in H^{1}(\C),$ then
\begin{equation}
\lim_{N\rightarrow\infty}\frac{1}{N^{2}}\log Z_{N,\beta}[(N(\phi+u)]=-\mathcal{F}(\phi+u).\label{eq:as of Z after theorem LDP intro}
\end{equation}
for $\beta\leq1.$ 

\subsection{Moderate deviations}

The LDP in the previous section yields, in particular, a LDP for the
random variable $Y_{N}$ defined by the deviation $U_{N}-\bar{u}.$
However, the corresponding rate functional is, in general, not quadratic.
We next establish a Moderate Deviation Principle (MDP) for $Y_{N}$
with a quadratic rate functional. Recall that the general notion of
a Moderate Deviation Principle (MDP) interpolates between the notion
of a Large Deviation Principle and a Central Limit Theorem (CLT) by
introducing a sequence of positive numbers $s_{N}$ tending to zero,
that we will refer to as the \emph{deviation scale} (since it measures
the order of the deviations). Briefly, in our setting the MDP is said
to hold at the deviation scale $s_{N}$ if the random variable defined
by the \emph{scaled deviations }
\[
Y_{s_{N}}:=s_{N}^{-1}(U_{N}-\bar{u}):=s_{N}^{-1}\left\langle \delta_{N}-\mu_{\phi},u\right\rangle 
\]
 satisfies a LDP. In what follows we will use the standard notation
$a_{N}\gg b_{N}$ if $a_{N}$ and $b_{N}$ are sequences of positive
numbers such that $a_{N}/b_{N}\rightarrow\infty$ as $N\rightarrow\infty.$ 
\begin{thm}
\label{thm:MDP intro}Consider the Coulomb gas in $\C$ at inverse
temperature $\beta>0.$ Assume that A2 holds and that the support
$S$ is a topological domain (i.e. $S$ coincides with the closure
of its interior) and the complement $S^{c}$ of $S$ is regular for
the Dirichlet problem (i.e. the Dirichlet problem for the Laplacian
on $S^{c}$ preserves continuity). Given $u$ such that $\Delta u$
is bounded in a neighborhood of $S$ we have
\begin{equation}
\lim_{N\rightarrow\infty}\frac{1}{N^{2}s_{N}^{2}}\log\E(e^{-tN^{2}s_{N}^{2}Y_{s_{N}}})=\frac{\beta}{2}\sigma^{2}t^{2}/2\label{eq:as of E in thm mod intro}
\end{equation}
where $\sigma^{2}$ is the Dirichlet $H^{1}-$norm of the bounded
harmonic extension $u^{S}$ of $u$ from $S:$
\[
\sigma^{2}=\frac{1}{4\pi}\int_{X}|\nabla u^{S}|^{2},
\]
As a consequence, the random variable $Y_{S_{N}}$ satisfies a MDP
at speed $N^{2}s_{N}^{2}$ for a quadratic rate functional with variance
$\sigma^{2}.$ The same result also holds when the deviation $Y_{s_{N}}$
is replaced by the ``fluctuation''
\begin{equation}
\widetilde{Y}_{s_{N}}:=s_{N}^{-1}\left\langle \delta_{N}-\E(\delta_{N}),u\right\rangle \label{eq:def of scaled fluct intro}
\end{equation}
\end{thm}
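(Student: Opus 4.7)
By the G\"artner--Ellis theorem the MDP at speed $N^{2}s_{N}^{2}$ with quadratic rate functional of variance $\sigma^{2}$ follows from the moment generating function asymptotic \eqref{eq:as of E in thm mod intro}, so the task reduces to establishing that limit for every $t\in\R$. Writing $e^{-tN^{2}s_{N}^{2}Y_{s_{N}}}=e^{-tN^{2}s_{N}(U_{N}-\bar{u})}$, I would obtain the upper bound from the sub-Gaussian inequalities of this paper applied to a tilt of order $s_{N}$, and the matching lower bound from the asymptotic equality (the planar form of Theorem~\ref{thm:asymptot of log E on Riemann }, sharpened to $o(N^{2}s_{N}^{2})$ error via the subleading expansion \eqref{eq:asympt exp of serf etc}), combined with a second-order Taylor expansion of the free energy $\mathcal{F}$ at $\phi$ identifying its Hessian with the Dirichlet form of the harmonic extension.

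\textbf{Upper bound via harmonic replacement.} Applying Theorem~\ref{thm:sharp Gauss bound on E intro} or~\ref{thm:gaussian ineq for E for strong} directly to $u$ would only produce the variance $\Vert u\Vert_{H^{1}}^{2}$, not the sharp $\sigma^{2}=\Vert u^{S}\Vert_{H^{1}}^{2}$. The plan is to first replace $u$ by any bounded $\tilde{u}\in H^{1}(\C)$ agreeing with $u$ on a fixed neighbourhood of $S$; since $\mu_{\phi}$ is carried by $S$ one has $\overline{u-\tilde{u}}=0$, and Corollary~\ref{cor:Gaussian dev ineq} applied to smooth cutoffs vanishing on $S$ will control the probability that particles stray far from $S$ sharply enough that $\langle u-\tilde{u},\delta_{N}\rangle$ contributes $o(1)$ to the log-MGF at speed $N^{2}s_{N}^{2}$ (using the standing MDP hypothesis $s_{N}\gg(\log N/N)^{1/2}$). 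Optimising the resulting bound over $\tilde{u}$ selects $\tilde{u}=u^{S}$, the unique Dirichlet-minimising extension, yielding exactly $\sigma^{2}$. The error term $\epsilon_{N,\beta}$ is $O(N^{-1}\log N)$ under A2 by Proposition~\ref{prop:error estim on compact Riemann}, hence negligible at our speed.

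\textbf{Lower bound via expansion of $\mathcal{F}$.} For the reverse inequality I would apply the planar asymptotic equality to the perturbation $u_{N}:=-ts_{N}u^{S}/\beta$, giving
\[
\frac{1}{\beta N^{2}}\log\E\bigl(e^{-\beta N^{2}U_{N}(u_{N})}\bigr)=-\mathcal{F}(\phi+u_{N})+\mathcal{F}(\phi)+o(s_{N}^{2}),
\]
and then invoke the second-order Taylor expansion
\[
\mathcal{F}(\phi+\epsilon v)-\mathcal{F}(\phi)=\epsilon\int v\,d\mu_{\phi}\;-\;\frac{\epsilon^{2}}{2}\int_{\C}dv^{S}\wedge d^{c}v^{S}\;+\;o(\epsilon^{2}),
\]
valid for smooth $v$ with $\Delta v$ bounded near $S$. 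The first-order term is Proposition~\ref{prop:beatif F is diff}; substituting $v=u^{S}$ and $\epsilon=-ts_{N}/\beta$ then produces the quadratic contribution $-t^{2}s_{N}^{2}\sigma^{2}/(2\beta^{2})$, matching the upper bound after rescaling. Passing from $Y_{s_{N}}$ to the centred fluctuation $\widetilde{Y}_{s_{N}}$ is then immediate since $\bar{u}-\E\langle u,\delta_{N}\rangle=O(N^{-1})$ under A2 by Bergman kernel asymptotics, which is $o(s_{N})$.

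\textbf{Main obstacle.} The delicate step will be the second-order expansion of $\mathcal{F}$, which requires differentiating $\epsilon\mapsto P(\phi+\epsilon v)$ at $\epsilon=0$ and identifying the derivative with $v^{S}$. This rests on the regularity theory of the obstacle problem: under A2 and the hypotheses on $S$ and $S^{c}$, the strict positivity of $\Delta\phi$ on $S$ renders the coincidence set $\{P\phi=\phi\}$ stable in $\text{int}\,S$ under small perturbations, and on $S^{c}$ the perturbed envelope is the bounded harmonic extension of its boundary values on $\partial S$, so $P(\phi+\epsilon v)-P\phi=\epsilon v^{S}+o(\epsilon)$ in a sense strong enough to pass to the limit in the energy $\mathcal{E}$. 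Making this quantitative uniformly in $\epsilon=O(s_{N})$ — together with upgrading the asymptotic equality to $o(N^{2}s_{N}^{2})$ error, which is precisely what the sharp Leblé--Serfaty--type expansion \eqref{eq:asympt exp of serf etc} provides — is the principal technical hurdle, and is exactly where the regularity of $\partial S$ (guaranteeing solvability of the Dirichlet problem on $S^{c}$) and of $\phi$ enter essentially.
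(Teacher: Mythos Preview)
The paper does not actually prove this theorem: Theorem~\ref{thm:MDP intro} is stated in Section~\ref{sec:Main-results-in} as a result whose proof is deferred to the sequel~\cite{berm15}, so there is no in-paper argument to compare against. Your outline is the natural strategy (G\"artner--Ellis reduction, upper bound from the sub-Gaussian inequalities of this paper, lower bound from a two-term expansion of the free energy with Hessian identified via the obstacle-problem linearisation $P(\phi+\epsilon v)-P\phi\approx\epsilon v^{S}$), and your identification of the second-order expansion of $\mathcal{F}$ as the crux is on target.

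Two concrete gaps you should be aware of. First, the theorem is stated for all $\beta>0$, but Theorems~\ref{thm:sharp Gauss bound on E intro} and~\ref{thm:gaussian ineq for E for strong} only cover $\beta\leq1$; for $\beta>1$ you must instead use the extension announced at the top of Section~\ref{sec:Main-results-in}, which adds an $N^{-1}\Vert du\Vert_{L^{\infty}}+2N^{-1}\log N$ error term (this is still $o(s_{N}^{2})$ on the usual MDP window, but it forces a Lipschitz hypothesis on $u$ that you have not invoked). Second, your lower bound leans on the Lebl\'e--Serfaty expansion~\eqref{eq:asympt exp of serf etc} to upgrade the error to $o(N^{2}s_{N}^{2})$, but that expansion is quoted in the paper under the regularity assumption~\eqref{eq:regul as strong} ($\phi\in C^{5}$, $S$ with $C^{1}$ boundary), which is strictly stronger than A2 together with ``$S$ a topological domain with Dirichlet-regular complement''. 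So either you need a separate argument for the $o(N^{2}s_{N}^{2})$ error under the weaker hypotheses of the theorem (and this is likely where the bulk of the work in~\cite{berm15} goes), or you are implicitly strengthening the assumptions. You also silently impose $s_{N}\gg(N^{-1}\log N)^{1/2}$; the theorem as stated leaves the admissible range of $s_{N}$ unspecified, so you should flag what window your argument actually covers.
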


CLTs have previously been established with the same variance $\sigma^{2}$
and are equivalent to the asymptotics \ref{eq:as of E in thm mod intro}
for $s_{N}=N,$ when $Y_{s_{N}}$ has been replaced by the fluctuation
$\widetilde{Y}_{s_{N}}$ \cite{ahm2,l-s-2,b-b-n-y2}. However, it
should be stressed that, in contrast to the MDP in the previous theorem,
the CLT does \emph{not} hold, in general, when $S$ has several components.
On the other hand the CLT is known to always hold when $u$ is supported
in the ``bulk'', i.e. in the interior of $S.$ We will prove the
MDP in the bulk, down to the CLT-scale, both globally and at mesoscopic
scales (further discussed in the next section): 
\begin{thm}
\label{thm:MDP bulk intro}Fix a deviation scale $s_{N}$ such that
$s_{N}\geq N^{-1}.$ Assume that $\beta=1,$ $\phi$ is $C^{2}-$smooth
on a neighborhood of $S$ and that $u$ is a $C^{2}-$smooth function
compactly supported in the interior of $S.$ Consider the corresponding
fluctuation $\widetilde{Y}_{s_{N}}.$ Then 
\begin{equation}
\lim_{N\rightarrow\infty}\frac{1}{N^{2}s_{N}^{2}}\log\E(e^{-tN^{2}s_{N}^{2}\widetilde{Y}_{s_{N}}})=\sigma^{2}t^{2}/2,\,\,\,\,\sigma^{2}:=\frac{1}{4\pi}\int_{S}|\nabla u|^{2}d\lambda\label{eq:asympt of log E in thm bulk MDP meso-1}
\end{equation}
As a consequence, when $s_{N}\gg N^{-1}$ the random variable $\widetilde{Y}_{s_{N}}$
satisfies a MDP at the scale $s_{N}$ and with rate functional $\sigma^{-2}t^{2}/2$
and $\widetilde{Y}_{s_{N}}$ satisfies when $s_{N}=N^{-1}$ a CLT
with variance $\sigma^{2}.$
\begin{itemize}
\item If instead $\phi$ is assumed to be $C^{4}-$smooth the same results
hold for $Y_{s_{N}}$ 
\item The same results hold in the mesoscopic setting where $u$ is replaced
by $u(z_{0}+l_{N}^{-1}(z-z_{0}))$ for any sequence $l_{N}\rightarrow0$
such that $l_{N}^{2}\gg s_{N}.$ In case when $s_{N}=N^{-1}$ the
regularity assumption on $u$ can be weakened to $u$ Lipschitz continuous.
\end{itemize}
\end{thm}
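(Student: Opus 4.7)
The plan is to establish the Laplace asymptotic \ref{eq:asympt of log E in thm bulk MDP meso-1} directly and deduce the MDP by the Gärtner-Ellis theorem. First I would rewrite the moment generating function as
\[
\log\E\bigl(e^{-tN^{2}s_{N}^{2}\widetilde{Y}_{s_{N}}}\bigr) = \log\frac{Z_{N}[N(\phi+ts_{N}u)]}{Z_{N}[N\phi]} + tN^{2}s_{N}\E(U_{N}),
\]
and observe that since $u$ is supported in the interior of $S:=S_{\phi}$ while $\Delta\phi$ is bounded and positive on $S$, for all $N$ large and bounded $|t|$ the perturbed data $\phi+ts_{N}u$ still satisfies \ref{eq:regul as strong}, the support $S_{\phi+ts_{N}u}$ remains equal to $S$, and $\mu_{\phi+ts_{N}u}=\mu_{\phi}+\tfrac{ts_{N}}{4\pi}\Delta u\,d\lambda$. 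The upper bound $\sigma^{2}t^{2}/2$ follows directly from Theorem \ref{thm:gaussian ineq for E for strong} applied to $ts_{N}u$, using the crucial identity $\|u\|_{H^{1}(\C)}^{2}=\sigma^{2}$, valid precisely because $u$ is supported in $S$.

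For the matching lower bound I would invoke the sharp partition function expansion \ref{eq:asympt exp of serf etc} of Leblé-Serfaty / Bauerschmidt-Bourgade-Nikula-Yau for both $\phi$ and $\phi+ts_{N}u$. Taking the difference and performing a second-order Taylor expansion using Proposition \ref{prop:beatif F is diff} yields
\[
\mathcal{F}(\phi+ts_{N}u)-\mathcal{F}(\phi) = ts_{N}\bar{u} - \tfrac{t^{2}s_{N}^{2}}{2}\sigma^{2} + O(s_{N}^{3}),
\]
where the Hessian contribution $-\sigma^{2}=-\|u\|_{H^{1}}^{2}$ arises from $\tfrac{d\mu_{\phi+sv}}{ds}\big|_{0}=\tfrac{1}{4\pi}\Delta v\,1_{S}d\lambda$ and integration by parts (permitted since $v=ts_{N}u$ is compactly supported in the interior of $S$). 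The first-order term $-N^{2}ts_{N}\bar{u}$ is cancelled by $tN^{2}s_{N}\E(U_{N})$ modulo a discrepancy of order $Ns_{N}$, which is itself absorbed by the subleading entropy correction $-\tfrac{N}{2}(D_{d\lambda}(\mu_{\phi+ts_{N}u})-D_{d\lambda}(\mu_{\phi}))=O(Ns_{N})$ in \ref{eq:asympt exp of serf etc} together with the one-point function asymptotic $\E(U_{N})=\bar{u}+O(N^{-1})$ (obtained by differentiating the same expansion in the direction $u$). Dividing by $N^{2}s_{N}^{2}$ yields the desired limit for $s_{N}^{2}\gg N^{-1}$.

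The main obstacle is covering the near-CLT range $s_{N}\asymp N^{-1}$, where the $O(N^{-1})$ remainder in \ref{eq:asympt exp of serf etc} is comparable to $N^{2}s_{N}^{2}\asymp 1$. Here I would turn instead to the cumulant expansion
\[
\log\E(e^{-tN^{2}s_{N}^{2}\widetilde{Y}_{s_{N}}}) = \sum_{k\ge 2}\tfrac{(-tNs_{N})^{k}}{k!}\kappa_{k}(S_{N}),\qquad S_{N}:=\textstyle\sum u(x_{i})-\E(\sum u(x_{i})),
\]
and combine it with the classical bulk CLT of Ameur-Hedenmalm-Makarov at $\beta=1$, which yields $\kappa_{2}(S_{N})\to\sigma^{2}$ and $\kappa_{k}(S_{N})\to 0$ for $k\ge 3$, together with polynomial decay estimates on the higher cumulants ensuring $(Ns_{N})^{k-2}\kappa_{k}(S_{N})\to 0$. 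This covers in particular the Lipschitz CLT case $s_{N}=N^{-1}$. For the mesoscopic rescaling $u_{N}(z):=u(z_{0}+l_{N}^{-1}(z-z_{0}))$, the conformal invariance of the 2D Dirichlet norm preserves $\|u_{N}\|_{H^{1}}^{2}=\|u\|_{H^{1}}^{2}=\sigma^{2}$, and the hypothesis $l_{N}^{2}\gg s_{N}$ ensures that the effective perturbation parameter remains small so that the Taylor expansion applies at the mesoscopic scale.
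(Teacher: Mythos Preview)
This theorem is only \emph{stated} in the present paper: Section \ref{sec:Main-results-in} announces results whose proofs are deferred to the sequel \cite{berm15}. There is therefore no proof here to compare your proposal against. That said, two concrete issues with your sketch are worth flagging.

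First, the lower bound via the expansion \ref{eq:asympt exp of serf etc} requires the regularity package \ref{eq:regul as strong}: $\phi\in C^{5}$ near $S$, $\Delta\phi>0$ on $S$, and $\partial S$ piecewise $C^{1}$. The theorem assumes only $\phi\in C^{2}$ near $S$ and says nothing about $\partial S$ or strict positivity of $\Delta\phi$, so you cannot invoke \ref{eq:asympt exp of serf etc} as stated. Similarly, the upper bound from Theorem \ref{thm:gaussian ineq for E for strong} contributes an error $\tilde\epsilon_{N,\beta}/s_{N}^{2}$ after normalisation, which is of order $(Ns_{N}^{2})^{-1}$ at best; this vanishes only for $s_{N}\gg N^{-1/2}$, leaving the whole range $N^{-1}\le s_{N}\lesssim N^{-1/2}$ uncovered by your first argument, not just $s_{N}\asymp N^{-1}$.

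Second, your treatment of the remaining range asserts ``polynomial decay estimates on the higher cumulants ensuring $(Ns_{N})^{k-2}\kappa_{k}(S_{N})\to 0$'' without justification. The Ameur--Hedenmalm--Makarov CLT gives $\kappa_{k}(S_{N})\to 0$ for each fixed $k\ge 3$, but pointwise convergence does not control the tail of the cumulant series when $Ns_{N}$ can be arbitrarily large. Since $\beta=1$ the process is determinantal and cumulants are explicit traces built from the Bergman kernel; obtaining the required uniform-in-$k$ bounds from Bergman kernel asymptotics (under only $C^{2}$ data, and at mesoscopic scales) is presumably the actual content of the proof in \cite{berm15}, and is not a step one can simply cite.
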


\subsection{Local laws at mesoscopic scales}

Given a positive sequence $l_{N}\rightarrow0$ of ``length scales''
and a fixed point $z_{0}$ in $\C,$ consider the corresponding ``blow-up
map'' $F_{N}$ on $\C$ defined by
\[
z\mapsto F_{N}(z)=z_{0}+l_{N}^{-1}(z-z_{0})
\]
The case when $l_{N}=N^{-b}$ for $b=1/2$ is usually called the \emph{microscopic
scale} (since it corresponds to the typical particle distances) while
the cases when $b\in]1/2,1[$ are called the \emph{mesoscopic scales.}
The induced map on measures 
\[
\mu\mapsto\mu_{z_{0}}:=l_{N}^{-2}(F_{N})_{*}\mu
\]
probes the measure $\mu$ closed to $z_{0}$ at the length scale $l_{N}.$ 
\begin{thm}
\label{thm:local law intro}Consider the Coulomb gas in $\C$ at inverse
temperature $\beta\leq1$ and the length scale $l_{N}=N^{-b}$ for
a fixed $b\in]0,1/4[.$ If A1 holds, then 

\begin{equation}
l_{N}^{-2}(F_{N})_{*}(\delta_{N}-\mu_{\phi})\rightarrow0\label{eq:conv in prob in local law intro}
\end{equation}
 exponentially in probability in the following sense: for any $u\in H^{1}(\C),$
not identically zero, there exists a positive constant $C$ such that
\begin{equation}
\P|\left(\left\langle l_{N}^{-2}(F_{N})_{*}(\delta_{N}-\mu_{\phi}),u\right\rangle |\geq\delta\right)\leq e^{-CN},\label{eq:exp conc in thm local intro}
\end{equation}
 where $C$ depends on $u,\phi$ and $\delta.$ As a consequence,
if $p_{0}$ is point in $S$ which is a Lebesgue point for $\mu_{\phi}$,
then 
\[
l_{N}^{-2}(F_{N})_{*}(\delta_{N})\rightarrow\rho(z_{0})d\lambda
\]
exponentially in probability. In particular, this is the case if $\phi$
is assumed to be $C^{2}-$smooth close to $p_{0}$ and $p_{0}$ is
in the interior of $S$ and then $\rho(z_{0})=\frac{1}{4\pi}\Delta\phi(z_{0}).$
If moreover $\Delta\phi>0$ close to $p_{0}$ then the following dichotomy
holds for points $p_{0}$ in the boundary of the complement $S^{c}:$
\begin{itemize}
\item If $\partial S^{c}$ is $C^{1}-$smooth close to $p_{0},$ then
\[
l_{N}^{-2}(F_{N})_{*}(\delta_{N})\rightarrow1_{T}\frac{1}{4\pi}\Delta\phi(z_{0})d\lambda(>0)
\]
exponentially in probability, where $T$ is a halfplane defined by
the tangent line of $\partial S$ passing through $p_{0}.$ 
\item If $\partial S^{c}$ is not $C^{1}-$smooth close to $p_{0},$ then
\[
l_{N}^{-2}(F_{N})_{*}(\delta_{N})\rightarrow0
\]
exponentially in probability. 
\end{itemize}
\end{thm}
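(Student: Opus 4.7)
The plan is to reduce Theorem~\ref{thm:local law intro} to the sub-Gaussian inequality of Theorem~\ref{thm:gaussian ineq for E for strong}, applied to a suitably rescaled test function, and then identify the deterministic limit of $l_N^{-2}(F_N)_*\mu_\phi$ by direct analysis. Given $u\in H^{1}(\C)$, introduce the rescaled test function
\[
v_N(z):=l_N^{-2}\,u\bigl(z_0+l_N^{-1}(z-z_0)\bigr),
\]
so that, unwinding the definition of the pushforward,
\[
\bigl\langle l_N^{-2}(F_N)_*(\delta_N-\mu_\phi),\,u\bigr\rangle \;=\; \tfrac{1}{N}\sum_{i=1}^N v_N(z_i)-\int v_N\,\mu_\phi.
\]
A one-line change of variables $w=F_N(z)$ yields the scaling identity $\|v_N\|_{H^1}^2=l_N^{-4}\|u\|_{H^1}^2$: the gradient contributes the factor $l_N^{-6}$ and the Jacobian of $F_N^{-1}$ supplies $l_N^{2}$. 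This is the only geometric input needed for the probabilistic step.

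Feeding $v_N$ into the sub-Gaussian bound of Theorem~\ref{thm:gaussian ineq for E for strong} and optimizing in the free parameter $t$ exactly as in the proof of Corollary~\ref{cor:Gaussian dev ineq}, I obtain, for every $\delta>0$,
\[
\P\bigl(\bigl|\langle l_N^{-2}(F_N)_*(\delta_N-\mu_\phi),u\rangle\bigr|\geq\delta\bigr)\;\leq\;2\exp\!\left(-\beta N^2\!\left(\frac{l_N^{4}\delta^{2}}{2\|u\|_{H^{1}}^{2}}-\frac{pB}{N}-\tilde\epsilon_{N,\beta}\right)\right),
\]
where $B$ depends only on $\phi$ and, under A1, Proposition~\ref{prop:error estim on compact Riemann} together with Section~\ref{subsec:Estimates-of-the} gives $\tilde\epsilon_{N,\beta}=O(N^{-1}\log N)$. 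With $l_N=N^{-b}$ the leading term in the exponent is of order $N^{2-4b}$, and the assumption $b\in\,]0,1/4[$ is exactly what forces $N^{2-4b}\gg N\log N$; hence the right-hand side is at most $e^{-CN}$ for $N$ large, with $C$ depending on $u,\phi,\delta,\beta$. This proves \eqref{eq:exp conc in thm local intro}, which is the precise meaning of \eqref{eq:conv in prob in local law intro}; vague convergence of $l_N^{-2}(F_N)_*\delta_N$ against general $C_c(\C)$ test functions then follows by uniform approximation by $C_c^\infty(\C)\subset H^{1}(\C)$ ones, together with tightness of the total mass on compact sets.

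To complete the consequences it remains to compute the deterministic limit of $l_N^{-2}(F_N)_*\mu_\phi$. By A1 we have $\mu_\phi=1_S\,\Delta\phi/(4\pi)\,d\lambda$, so $l_N^{-2}(F_N)_*\mu_\phi$ has density $w\mapsto 1_{F_N(S)}(w)\,\Delta\phi\bigl(z_0+l_N(w-z_0)\bigr)/(4\pi)$. At a Lebesgue point $p_0$ of $\mu_\phi$ this converges weakly to $\rho(p_0)\,d\lambda$; at interior points where $\phi\in C^2$ continuity of $\Delta\phi$ identifies $\rho(p_0)=\Delta\phi(p_0)/(4\pi)$; at $C^1$-smooth boundary points the rescaled sets $F_N(S)$ converge locally to a tangent half-plane $T$, yielding the half-plane limit $1_T\,\Delta\phi(p_0)/(4\pi)\,d\lambda$; at irregular boundary points the null limit is read off from the failure of $1_S$ to have a positive Lebesgue density at $p_0$. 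The main obstacle is not the probabilistic step, in which the sub-Gaussian inequality does all the work and $b<1/4$ is precisely the threshold needed to absorb the $O(N^{-1}\log N)$ error, but rather the geometric analysis of the (possibly very singular) free boundary $\partial S$ in the last case, for which one invokes the regularity machinery cited around Proposition~\ref{prop:regularity}.
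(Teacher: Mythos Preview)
Your approach is correct and is precisely the one the paper has in mind. Note first that Theorem~\ref{thm:local law intro} is not proved in this paper at all: it appears in Section~\ref{sec:Main-results-in} as one of the announced results of the sequel \cite{berm15}, listed there explicitly as an application of the sub-Gaussian inequalities established here. The reduction you carry out---rescaling the test function, computing $\|v_N\|_{H^1}^2 = l_N^{-4}\|u\|_{H^1}^2$, invoking Theorem~\ref{thm:gaussian ineq for E for strong} together with the error estimate $\tilde\epsilon_{N,\beta}=O(N^{-1}\log N)$ under A1 from Section~\ref{subsec:Estimates-of-the}, then optimizing as in Corollary~\ref{cor:Gaussian dev ineq}---is exactly the intended derivation, and your identification of $b<1/4$ as the condition $N^{2-4b}\gg N\log N$ is the correct threshold analysis.

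Two minor comments. First, the bound you actually produce is $e^{-cN^{2-4b}}$, which for $b<1/4$ is strictly stronger than the stated $e^{-CN}$; this is harmless. Second, you correctly locate the only genuinely deep step in the free-boundary dichotomy: the claim that at non-$C^1$ points of $\partial S^c$ the blow-up of $S$ has Lebesgue measure zero is Caffarelli's dichotomy theorem for the obstacle problem (the paper says so explicitly in the paragraph following Theorem~\ref{thm:local law intro}, and cites \cite{ca,fi,bl}). Your phrasing ``failure of $1_S$ to have a positive Lebesgue density at $p_0$'' is right in content---at a singular free-boundary point the coincidence set, and hence $S$, has Lebesgue density zero---though the input from regularity theory is more precise than a density statement: the rescalings $F_N(S)$ converge to a line, which is what forces the weak limit of $1_{F_N(S)}$ against any $C_c$ test function to vanish.
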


It should be stressed that in the previous theorem no assumptions
are made on the structure of the set $S$ of as opposed to Theorem
\ref{thm:MDP intro}. The set $S$ may be extremely irregular even
if A1 holds, i.e. $\Delta\phi$ is strictly positive close to $S.$
The dichotomy above provides a probabilistic interpretation of Cafarelli's
dichotomy theorem for free boundaries: the singular points $p_{0}$
in the free boundary are precisely those points where no particles
are seen - with overwhelming probability - when zooming in at mesoscopic
scales.

In the ``bulk case'' we show that the local laws hold down to the
optimal length-scale: 
\begin{thm}
\label{thm:bulk local law intro}Assume that $\beta\leq1,$ that $\phi$
is $C^{2}-$smooth close to $p_{0}$ that $u$ is compactly supported
in the interior of $S$ and $\Delta u\in L^{\infty}.$ Suppose that
$l_{N}\gg N^{-1/2}.$ Then there exists a constant $C$ (only depending
on an upper bound on $\left\Vert \Delta u\right\Vert _{L^{\infty}})$
such that 
\[
\P\left(\left|\left\langle \left(\delta_{N}-\E(\delta_{N})\right),u_{z_{0}}\right\rangle \right|\geq\delta\right)\leq e^{-\beta(l_{N}^{2}N)^{2}\delta^{2}/C}
\]
If it is moreover assumed that $\phi$ is $C^{4}-$smooth close to
$p_{0},$ then $l_{N}^{-2}(F_{N})_{*}(\delta_{N}-\mu_{\phi})\rightarrow0$
exponentially in probability. As a consequence, 
\begin{equation}
l_{N}^{-2}(F_{N})_{*}\delta_{N}\rightarrow\frac{1}{4\pi}\Delta\phi(z_{0})d\lambda\label{eq:conv in thm bulk local law intro}
\end{equation}
exponentially in probability. 
\end{thm}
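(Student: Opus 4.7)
The plan is to combine the sub-Gaussian inequality of Theorem~\ref{thm:gaussian ineq for E for strong} with the scale-invariance of the Dirichlet $H^1$-norm in dimension two. Setting $u_{z_0}(z):=u(z_0+l_N^{-1}(z-z_0))$, a direct change of variables gives $\|u_{z_0}\|_{H^1(\C)}^2=\|u\|_{H^1(\C)}^2$, a quantity independent of $N$ and bounded by $\|\Delta u\|_{L^\infty}$ alone (by a single integration by parts against a cutoff, using the fixed compact support of $u$). Moreover, for $N$ large, $\mathrm{supp}(u_{z_0})\subset B(z_0,c l_N)$ lies in the interior of $S$, so $u_{z_0}$ is a valid ``bulk'' test function to which the results of Section~\ref{sec:The-Coulomb-gas} apply.

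For the first (concentration) claim I would apply Theorem~\ref{thm:gaussian ineq for E for strong} to $\pm u_{z_0}$, obtaining the Gaussian moment generating function bound for $U_N-\bar u_{z_0}$. The deterministic shift $|\bar u_{z_0}-\E U_N|$ is controlled by Jensen's inequality applied to that MGF bound; absorbing it into the sub-Gaussian constant and then invoking Markov yields
\[
\P\bigl(|\langle\delta_N-\E\delta_N,u_{z_0}\rangle|\geq\delta\bigr)\leq 2\exp\!\bigl(-\beta N^{2}\delta^{2}/C\bigr),
\]
with $C$ depending only on $\|\Delta u\|_{L^\infty}$. Since $N^{2}\geq (Nl_N^{2})^{2}$ (as $l_N\leq 1$ eventually), this implies the stated bound at the ``mesoscopic speed'' $(Nl_N^{2})^{2}$.

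For the second claim (under $\phi\in C^{4}$) I decompose
\[
l_N^{-2}\langle\delta_N-\mu_\phi,u_{z_0}\rangle
=l_N^{-2}\langle\delta_N-\E\delta_N,u_{z_0}\rangle
+l_N^{-2}\langle\E\delta_N-\mu_\phi,u_{z_0}\rangle.
\]
The first piece vanishes exponentially at speed $(Nl_N^{2})^{2}$ by applying the first claim with deviation $l_N^{2}\delta$, using $Nl_N^{2}\to\infty$ (which is precisely the assumption $l_N\gg N^{-1/2}$). The second piece is deterministic and is controlled by a bulk Bergman-kernel asymptotic: when $\phi\in C^{4}$ near a bulk point $z_0$, the density $\rho_N$ of $\E\delta_N$ satisfies $\rho_N=\tfrac{1}{4\pi}\Delta\phi+O(1/N)$ pointwise in a neighborhood of $z_0$, so the bias is $O(l_N^{2}/N)$ and its rescaling by $l_N^{-2}$ tends to zero. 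The third claim then follows from continuity of $\Delta\phi$ at $z_0$ plus a change of variables: using $\mu_\phi=\tfrac{1}{4\pi}\Delta\phi\,d\lambda$ on $\mathrm{supp}(u_{z_0})$ (Proposition~\ref{prop:regularity}),
\[
l_N^{-2}\int u_{z_0}\,\mu_\phi
=\int u(w)\,\tfrac{\Delta\phi(z_0+l_N(w-z_0))}{4\pi}\,d\lambda(w)
\;\longrightarrow\;\tfrac{\Delta\phi(z_0)}{4\pi}\int u\,d\lambda.
\]

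The hardest part will be the quantitative bulk Bergman kernel bound $|\langle\E\delta_N-\mu_\phi,u_{z_0}\rangle|=O(l_N^{2}/N)$ for all $\beta\in(0,1]$. For $\beta=1$ it follows from standard on-diagonal expansions of weighted Bergman kernels in the bulk (which use $\phi\in C^{4}$). For $\beta<1$ no direct determinantal structure is available, and the required estimate must be extracted either from a bulk refinement of the sub-Gaussian inequality (where the error $\tilde\epsilon_{N,\beta}$ from Section~\ref{subsec:Estimates-of-the} is replaced by a strictly smaller bulk error because $u_{z_0}$ is supported away from $\partial S$), or by combining the concentration of Claim~1 with the asymptotic analysis of $\E U_N$ underlying the bulk moderate deviation principle of Theorem~\ref{thm:MDP bulk intro}.
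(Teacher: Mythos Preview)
This theorem is only \emph{announced} in the present paper: it appears in Section~\ref{sec:Main-results-in}, whose opening paragraph explicitly defers all proofs to the sequel \cite{berm15}. There is therefore no proof in this paper to compare your proposal against.

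That said, your outline is consistent with the tools the paper does develop. The scale-invariance $\|u_{z_0}\|_{H^1}=\|u\|_{H^1}$ combined with the sub-Gaussian bound of Theorem~\ref{thm:gaussian ineq for E for strong} is indeed the natural mechanism, and your decomposition into a fluctuation part and a deterministic bias $\langle\E\delta_N-\mu_\phi,u_{z_0}\rangle$ is the right structure for the second and third assertions. Your identification of the bias control for $\beta<1$ as the genuinely hard step is accurate: there is no determinantal Bergman kernel expansion available, and nothing in the present paper supplies the required $O(l_N^2/N)$ estimate.

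One point deserves more care. The sub-Gaussian inequality you invoke carries the additive error $\tilde\epsilon_{N,\beta}[\phi]+pB/N$ in the exponent, which under A1 is of order $N^{-1}\log N$. Passing through Chebyshev then yields
\[
\P\bigl(|U_N-\bar u_{z_0}|\geq\delta\bigr)\leq 2\exp\Bigl(-\beta N^{2}\bigl(\tfrac{\delta^{2}}{2\|u\|_{H^1}^{2}}-\tilde\epsilon_{N,\beta}-\tfrac{pB}{N}\bigr)\Bigr),
\]
which is only useful once $\delta^{2}$ dominates $\tilde\epsilon_{N,\beta}$. Your passage from this to the stated bound $e^{-\beta(l_N^{2}N)^{2}\delta^{2}/C}$ valid for \emph{all} $\delta>0$ glosses over this: the inequality $N^{2}\geq(l_N^{2}N)^{2}$ alone does not absorb the additive $N^{2}\tilde\epsilon_{N,\beta}\sim N\log N$ term uniformly in $\delta$. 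You will either need the ``bulk refinement'' you allude to (exploiting that $u_{z_0}$ is supported in the interior of $S$ to kill the boundary contribution to $\tilde\epsilon_{N,\beta}$), or an argument that the stated mesoscopic speed $(l_N^{2}N)^{2}$ leaves enough room to swallow the error when $\delta$ is small. Similarly, your Jensen argument for the recentering from $\bar u_{z_0}$ to $\E U_N$ only gives $|\E U_N-\bar u_{z_0}|\lesssim\sqrt{\tilde\epsilon_{N,\beta}}\sim\sqrt{\log N/N}$, which must be shown to be negligible at the relevant scale. These are not fatal, but they are where the real work in \cite{berm15} presumably lies, and your proposal stops just short of confronting them.
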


\subsection{Bergman kernels and orthogonal polynomials }

Denote by $\mathcal{H}_{N}(\C)$ the $N-$dimensional space of all
polynomials in $\C$ of degree at most $N-1.$ Given a function $\phi$
of strictly super logarithmic growth such that A0 holds we endow $\mathcal{H}_{N}(\C)$
with the Hilbert space structure defined by the weighted norm
\[
\left\Vert f\right\Vert _{N\phi}^{2}:=\int_{\C}|f|^{2}e^{-N\phi}d\lambda
\]
Denote by $\B_{N}$ the corresponding normalized Bergman measure:
\[
\B_{N}:=\frac{1}{N}\left(\sum_{i=1}^{N}|\Psi_{i}^{(N)}|^{2}\right)e^{-N\phi}d\lambda,
\]
 where $(\Psi_{i}^{(N)})_{i=1}^{N}$ is a fixed orthonormal bases
in the Hilbert space $\mathcal{H}_{N}(\C).$ In classical terms, the
density of $N\B_{N}$ is the restriction to the Christofel-Darboux
kernel for the corresponding space of weighted orthogonal polynomials.
The connection to Coulomb gases stems from the following well-known
formula:
\[
\B_{N}=\E(\delta_{N}),\,\,\,\beta=1\,\,\,\,V=N\phi
\]
As a consequence, $\B_{N}$ converges, as $N\rightarrow\infty,$ towards
the equilibrium measure $\mu_{\phi},$ in the weak topology of measures.
The following result provides a quantitative rate of convergence:
\begin{thm}
\emph{Assume that $\beta\leq1$ and that $\phi$ satisfies A1 (or
more generally, that $\mu_{\phi}$ has finite entropy). Then there
exists an explicit constant $C$ such that}
\begin{equation}
\left\Vert \E(\delta_{N})-\mu_{\phi}\right\Vert _{H^{-1}(\C)}\leq C\sqrt{\frac{\log N}{N}}\label{eq:bound on fluct in lemma-1-1-1}
\end{equation}
Moreover, if $\phi$ satisfies the regularity assumption \ref{eq:regul as strong},
then the factor $\log N$ above can be removed (if the constant $C$
is also changed).
\end{thm}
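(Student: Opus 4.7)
The plan is to deduce the $H^{-1}$ bound from the sub-Gaussian inequality of Theorem~\ref{thm:gaussian ineq for E for strong}, combined with the $H^{1}$--$H^{-1}$ duality and the quantitative bounds on $\tilde\epsilon_{N,\beta}$ from Section~\ref{subsec:Estimates-of-the}.

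Fix any $u\in H^{1}(\C)$. Theorem~\ref{thm:gaussian ineq for E for strong} provides the sub-Gaussian bound
\[
\log\E(e^{\beta N^{2}t(U_{N}-\bar u)})\le \beta N^{2}\!\left(\tfrac{t^{2}}{2}\|u\|_{H^{1}}^{2}+\tfrac{p}{N}B+\tilde\epsilon_{N,\beta}\right)
\]
for all $t\in\R$. Jensen's inequality bounds the left-hand side from below by $\beta N^{2}t\,\E(U_{N}-\bar u)$; dividing by $\beta N^{2}t>0$, minimizing in $t$, and applying the same argument to $-u$ yields the standard ``sub-Gaussian $\Rightarrow$ mean'' estimate
\[
|\E(U_{N})-\bar u|\le \|u\|_{H^{1}(\C)}\sqrt{2\!\left(\tfrac{p}{N}B+\tilde\epsilon_{N,\beta}\right)}.
\]

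The left-hand side equals $\bigl|\int u\,d(\E(\delta_{N})-\mu_{\phi})\bigr|$, and $\E(\delta_{N})-\mu_{\phi}$ is a mean-zero signed measure. Taking the supremum over $u$ with $\|u\|_{H^{1}}\le1$ and invoking the $H^{1}$--$H^{-1}$ duality (formula~\ref{eq:def of dual norm}, together with the identification $H^{1}(\C)\simeq H^{1}(X)/\R$ of Lemma~\ref{lem:restr is a homeo}) then delivers
\[
\|\E(\delta_{N})-\mu_{\phi}\|_{H^{-1}(\C)}\le\sqrt{2\!\left(\tfrac{p}{N}B+\tilde\epsilon_{N,\beta}\right)}.
\]
Inserting the explicit bound $\tilde\epsilon_{N,\beta}\le C\log N/N$ --- valid under A1 (and, more generally, whenever $\mu_{\phi}$ has finite entropy) by Gibbs' variational principle as in Prop~\ref{prop:error estim on compact Riemann} --- gives the first claimed inequality. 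Under hypothesis~\ref{eq:regul as strong}, the expansion~\ref{eq:asympt exp of serf etc} of~\cite{l-s,b-b-n-y2} upgrades this to $\tilde\epsilon_{N,\beta}=O(N^{-1})$, and the logarithmic factor disappears.

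The only real subtlety is checking that the dual norm defined via pairing with $H^{1}(\C)$ gives the claimed $H^{-1}(\C)$ bound on the mean-zero measure $\E(\delta_{N})-\mu_{\phi}$; once Lemma~\ref{lem:restr is a homeo} is applied this reduces to the standard duality on the Riemann sphere. All the genuine work is done upstream, in Theorem~\ref{thm:gaussian ineq for E for strong} and the error estimates of Section~\ref{subsec:Estimates-of-the}; the present argument simply repackages them as a convergence rate for the one-point intensity.
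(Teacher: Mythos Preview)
The paper does not actually prove this theorem; it appears in Section~\ref{sec:Main-results-in}, which only \emph{announces} results whose proofs are deferred to the sequel \cite{berm15}. So there is no in-paper proof to compare against.

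That said, your argument is correct and is precisely the deduction one expects from the tools developed here. The sub-Gaussian bound of Theorem~\ref{thm:gaussian ineq for E for strong} combined with Jensen's inequality gives, for every $u\in H^{1}(\C)$, a bound $|\langle \E(\delta_{N})-\mu_{\phi},u\rangle|\le \|u\|_{H^{1}}\sqrt{2(pB/N+\tilde\epsilon_{N,\beta})}$; duality (via formula~\ref{eq:def of dual norm} and Lemma~\ref{lem:restr is a homeo}) then yields the $H^{-1}$ estimate, and the quantitative error bounds of Section~\ref{subsec:Estimates-of-the} (respectively the expansion~\ref{eq:asympt exp of serf etc}) supply the $\sqrt{\log N/N}$ (respectively $1/\sqrt{N}$) rate. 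One minor point worth making explicit: the optimization in $t$ requires $pB/N+\tilde\epsilon_{N,\beta}\ge 0$, which follows immediately by evaluating the sub-Gaussian inequality at $t=0$.
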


For $\phi$ with $H^{1}-$singularities we obtain the following qualitative
convergence result:
\begin{thm}
Assume that $\beta\leq1$ and that there exists a function $\phi_{0}$
satisfying A0 such that $\phi-\phi_{0}\in H^{1}(\C).$ Then 
\[
\lim_{N\rightarrow\infty}\left\Vert \E(\delta_{N})-\mu_{\phi}\right\Vert _{H^{-1}(\C)}=0.
\]
\end{thm}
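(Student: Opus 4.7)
The plan is to deduce the $H^{-1}$-convergence from the sub-Gaussian inequality of Theorem \ref{thm:gaussian ineq for E for strong} by a standard duality argument, and then to show that the error sequence appearing in that inequality vanishes asymptotically under the singular hypothesis on $\phi$.

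First I would fix a test function $v \in H^{1}(\C)$ and set $V_{N} := \langle \delta_{N}, v\rangle$, $\bar v := \langle \mu_{\phi}, v\rangle$. Theorem \ref{thm:gaussian ineq for E for strong} yields, for every $t \in \R$,
\[
\E\bigl(e^{\beta N^{2} t(V_{N} - \bar v)}\bigr) \leq \exp\Bigl(\beta N^{2}\bigl(\tfrac{t^{2}}{2}\|v\|_{H^{1}}^{2} + \eta_{N}\bigr)\Bigr),
\]
with $\eta_{N} := pB_{\phi}/N + \tilde\epsilon_{N,\beta}$. Jensen's inequality $\E(X) \leq \log\E(e^{X})$ applied to $X = \beta N^{2}t(V_{N}-\bar v)$ gives, for $t > 0$, the scalar estimate $\E V_{N} - \bar v \leq \tfrac{t}{2}\|v\|_{H^{1}}^{2} + \eta_{N}/t$. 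Applying the same bound to $-v$ and optimizing at $t = \sqrt{2\eta_{N}}/\|v\|_{H^{1}}$ produces $|\E V_{N} - \bar v| \leq \|v\|_{H^{1}}\sqrt{2\eta_{N}}$. Because $\E(\delta_{N}) - \mu_{\phi}$ is a signed measure of zero total mass, one may assume $v$ has mean zero against $\mu_{0}$; taking the supremum over such $v$ with $\|v\|_{H^{1}} \leq 1$ and using the identification of $H^{-1}(\C)$ with the topological dual of $H^{1}(\C)/\R$ (Section \ref{subsec:The-functionals- in terms of intr} together with Lemma \ref{lem:restr is a homeo}) yields
\[
\|\E(\delta_{N}) - \mu_{\phi}\|_{H^{-1}(\C)} \leq \sqrt{2\eta_{N}}.
\]

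It remains to show $\eta_{N} \to 0$. Since $pB_{\phi}/N = O(1/N)$, the task is $\tilde\epsilon_{N,\beta} \to 0$ for the rescaled metric $\phi_{N} = N\phi/(N+p)$ appearing in the definition \ref{eq:def of error tilde in intro}. Writing $\phi = \phi_{0} + u$ and $\phi_{N} = (\phi_{0})_{N} + u_{N}$ with $u_{N} = Nu/(N+p)$, I would split
\[
\epsilon_{N,\beta}[\phi_{N}] = \epsilon_{N,\beta}[(\phi_{0})_{N}] + \bigl(\epsilon_{N,\beta}[\phi_{N}] - \epsilon_{N,\beta}[(\phi_{0})_{N}]\bigr).
\]
The first term vanishes because $(\phi_{0})_{N}$ inherits A0 and the A0 bullet after Theorem \ref{thm:sharp non Gaussian bd intro}, together with Proposition \ref{prop:error estim on compact Riemann}, applies uniformly in $N$. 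Unwinding the definition of $\epsilon_{N,\beta}$, the difference equals
\[
-\tfrac{1}{\beta N(N+p)}\log\E_{(\phi_{0})_{N}}\bigl(e^{-\beta N(N+p)\langle\delta_{N},u_{N}\rangle}\bigr) \; - \; \bigl(\mathcal{F}(\phi_{N}) - \mathcal{F}((\phi_{0})_{N})\bigr),
\]
and since $u_{N} \to u$ strongly in $H^{1}(\C)$ and $\mathcal{F}((\phi_{0})_{N}) \to \mathcal{F}(\phi_{0})$ by Theorem \ref{thm:conv of free energy in C}, the asymptotic equality of Theorem \ref{thm:asymptot of log E on Riemann } (transferred from the Riemann sphere to $\C$ via Section \ref{subsec:Compactification-of-}), applied with admissible reference $(\phi_{0})_{N}$ and $H^{1}(\C)$ test function $u_{N}$, forces this difference to zero.

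The main obstacle is the last step. It requires the full two-sided asymptotic of Theorem \ref{thm:asymptot of log E on Riemann }, not merely the upper bound of Theorem \ref{thm:non-gaussian ineq general setting}. The matching lower bound is supplied by Proposition \ref{prop:Lower bd on part func riemann surf}, whose proof combines Gibbs' variational principle with the density of volume forms in the finite-energy subset of $\mathcal{P}(X) \cap H^{-1}(X)$; this is precisely the point at which both the restriction $\beta \leq 1$ and the $H^{1}$-regularity of the perturbation $u$ enter crucially. Once $\eta_{N} \to 0$ is established, the uniform estimate $\sqrt{2\eta_{N}}$ controls the $H^{-1}$-convergence in a single line.
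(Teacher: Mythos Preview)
This theorem is merely announced in Section~\ref{sec:Main-results-in}; its proof is deferred to the sequel \cite{berm15}, so there is no proof in the present paper to compare against. Your overall strategy---Jensen plus optimization in $t$ on the sub-Gaussian bound to obtain $\|\E(\delta_N)-\mu_\phi\|_{H^{-1}}\le\sqrt{2\eta_N}$, then show $\eta_N\to 0$ by reducing to the asymptotic equality of Theorem~\ref{thm:asymptot of log E on Riemann } for the A0 reference $\phi_0$---is the natural one and is almost certainly what the companion paper does.

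There is, however, a genuine gap in your first step. You invoke Theorem~\ref{thm:gaussian ineq for E for strong} with its error $pB_\phi/N+\tilde\epsilon_{N,\beta}$, taking for granted that $B_\phi<\infty$. But in the proof of that theorem (Section~5.3.2) the constant $B_\phi$ is produced via Proposition~\ref{prop:H one conv for H-S derivative}, which \emph{explicitly assumes A0 on $\phi$}: the bound there involves $\sup_{S_{t\phi}}\phi-\inf_{S_{t\phi}}\phi$, which need not be finite for $\phi=\phi_0+u$ with $u$ merely in $H^1(\C)$. So Theorem~\ref{thm:gaussian ineq for E for strong} is not known to apply to your $\phi$, and $\eta_N$ as you defined it may be infinite.

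The repair is short. Bypass Theorem~\ref{thm:gaussian ineq for E for strong} and apply Theorem~\ref{thm:sharp Gauss bound on E intro} directly to $\phi_N$ (this only needs admissibility, not A0). Jensen and optimization then give $|\langle\E(\delta_N)-\mu_{\phi_N},v\rangle|\le\|v\|_{H^1}\sqrt{2\epsilon_{N,\beta}[\phi_N]}$, centred at $\mu_{\phi_N}$ rather than $\mu_\phi$. The triangle inequality reduces the claim to two pieces: (i) $\epsilon_{N,\beta}[\phi_N]\to 0$, which is exactly your second step and is correct (the decomposition $\phi_N=(\phi_0)_N+u_N$ together with Theorem~\ref{thm:asymptot of log E on Riemann } and Theorem~\ref{thm:conv of free energy in C} handles it, and this is precisely where A0 on $\phi_0$ and $\beta\le 1$ enter); and (ii) $\|\mu_{\phi_N}-\mu_\phi\|_{H^{-1}}\to 0$, which you did not address but which follows from the stability of the minimizer under the convergence $\phi_N\to\phi$ furnished by Theorem~\ref{thm:conv of E wrt phi and u}.
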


Note that, in general, $\mu_{\phi}$ does not have better regularity
than $H^{-1}(\C)$ and hence the previous theorem appears to be rather
optimal.

\subsection{Concentration of measure, Monte-Carlo integration and the Gaussian
Free Field}

Let $X$ be the Riemann sphere, viewed as the one-point compactification
of $\C.$ We endow $X$ with the standard invariant metric $g$ and
denote by $\Delta_{g}$ the corresponding Laplace operator, with the
sign convention which makes it positive, viewed as a symmetric densely
defined operator on $L^{2}(X,dV_{g})$ and normalized so that on $\C\subset X$
\[
\Delta_{g}dV_{g}=-\frac{1}{4\pi}\Delta d\lambda,\,\,\,\Delta:=\partial_{x}^{2}+\partial_{y}^{2}
\]
 We will use the following notion for Sobolev spaces of fractional
order:
\begin{itemize}
\item $H^{s}(X)$ is the Sobolev space of all distributions $u$ on $X$
such that$\Delta^{s/2}u\in L^{2}(X,dV_{g}).$ The scalar product defined
by 
\[
\left\langle u,u\right\rangle _{s}:=\int_{X}\Delta^{s/2}u\Delta^{s/2}udV_{g}
\]
 induces a Hilbert space structure on $H^{s}(X)/\R.$ 
\item The dual of $H^{s}(X)/\R$ is denoted by $H_{0}^{-s}(X)$ and is endowed
with the dual Hilbert structure 
\[
\left\langle \nu,\nu\right\rangle _{-s}:=\int_{X}\Delta^{-s/2}\nu\Delta^{-s/2}\nu dV_{g}=\sup_{u\in C^{\infty}(X)}\frac{\left\langle \nu,u\right\rangle }{\left\langle u,u\right\rangle _{s}}
\]
\end{itemize}
By the Sobolev embedding theorem, for $s>1$ the space $H_{0}^{-s}(X)$
contains all Dirac masses. As a consequence, we have a map 
\[
\delta_{N}-\mu_{\phi}:\,\,\C^{N}\rightarrow H_{0}^{-s}(X).
\]
Accordingly, we can view $\delta_{N}-\mu_{\phi}$ as a $H_{0}^{-s}(X)-$valued
random variable on the $N-$particle Coulomb gas ensemble corresponding
to a given function $\phi$ and inverse temperature $\beta.$ 

Theorem \ref{thm:sharp Gauss bound on E intro} implies the following
concentration of measure inequality for the law of $\delta_{N}-\mu_{\phi}$
wrt the $H^{-s}-$norm, for $s>2$ \cite{berm15b}:
\begin{thm}
\label{thm:conc wrt Sob norm}Consider the $N-$particle Coulomb gas
in $\C$ at inverse temperature $\beta\leq1$ with exterior potential
$V:=(N+p)\phi$ for a given function $\phi$ on $\C$ of super logarithmic
growth. Given $s>2$ there exists an explicit positive constant $C$
(independent of $\phi)$ such that for any $\beta\in]0,1]$
\[
\P_{N,\beta}\left(\left\Vert \delta_{N}-\mu_{\phi}\right\Vert _{H^{-s}}^{2}>\delta\right)\leq e^{-\beta N(N+1)\left(\frac{1}{2C}\delta^{2}+\epsilon_{N,\beta}\right)+\frac{C}{(s-2)}+C},
\]
 The error sequence $\epsilon_{N,\beta}$ vanishes for the spherical
ensemble, i.e. when $\beta=1$ and $V=(N+1)\log(1+|z|^{2}).$ 
\end{thm}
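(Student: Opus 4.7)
The strategy is to upgrade the universal sub-Gaussian inequality of Theorem~\ref{thm:sharp Gauss bound on E intro} from a single test function $u\in H^{1}$ to a control of the full Hilbertian $H^{-s}$-norm of $\nu:=\delta_{N}-\mu_{\phi}$ by a Parseval/Gaussian-decoupling argument. Set $M:=\beta N(N+p)$; since $p=2/\beta-1\ge 1$ for $\beta\le 1$ one has $M\ge\beta N(N+1)$, so any bound at rate $M$ implies one at rate $\beta N(N+1)$. Rewriting Theorem~\ref{thm:sharp Gauss bound on E intro} in Chernoff form gives
\[
\mathbb{E}\bigl[e^{\tau\langle\nu,u\rangle}\bigr]\le\exp\!\Bigl(\frac{\tau^{2}}{2M}\|u\|_{H^{1}}^{2}+M\,\epsilon_{N,\beta}\Bigr),\qquad\tau\in\R,\ u\in H^{1}(X).
\]
Compactifying $\C$ to $X=\mathbb{P}^{1}$, let $(\phi_{j})_{j\ge 1}$ be an $L^{2}(X,\mu_{0})$-orthonormal basis of mean-zero eigenfunctions of the positive Laplacian $\Delta=-dd^{c}/\mu_{0}$, with eigenvalues $0<\lambda_{1}\le\lambda_{2}\le\cdots$; then $\|\phi_{j}\|_{H^{1}}^{2}=\lambda_{j}$ and, since $\langle\nu,1\rangle=0$, Parseval yields the clean representation
\[
\|\nu\|_{H^{-s}}^{2}=\sum_{j\ge 1}\lambda_{j}^{-s}\langle\nu,\phi_{j}\rangle^{2}.
\]

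\textbf{Gaussian decoupling.} Introduce i.i.d.\ standard Gaussians $(g_{j})$ independent of the Coulomb gas, and for $\alpha>0$ define the auxiliary Gaussian field
\[
G_{\alpha}:=\sqrt{2\alpha}\sum_{j\ge 1}\lambda_{j}^{-s/2}g_{j}\phi_{j}.
\]
Computing a Gaussian moment generating function gives the identity $\mathbb{E}_{g}[e^{\langle\nu,G_{\alpha}\rangle}]=e^{\alpha\|\nu\|_{H^{-s}}^{2}}$, which, combined with Fubini and the above sub-Gaussian bound applied conditionally on $G_{\alpha}$, yields
\[
\mathbb{E}\bigl[e^{\alpha\|\nu\|_{H^{-s}}^{2}}\bigr]\le e^{M\epsilon_{N,\beta}}\,\mathbb{E}_{g}\bigl[e^{\|G_{\alpha}\|_{H^{1}}^{2}/(2M)}\bigr].
\]
Using $\|G_{\alpha}\|_{H^{1}}^{2}=2\alpha\sum_{j}\lambda_{j}^{1-s}g_{j}^{2}$, the inner Gaussian integral factors as $\prod_{j}(1-2\alpha\lambda_{j}^{1-s}/M)^{-1/2}$ as long as $\alpha\le c M$ for an explicit $c=c(\lambda_{1},s)>0$; the elementary bound $-\log(1-x)\le 2x$ on $[0,1/2]$ together with the Weyl-type estimate $A:=\sum_{j}\lambda_{j}^{1-s}\le C_{0}/(s-2)$ on $\mathbb{P}^{1}$ (eigenvalues $\ell(\ell+1)$ with multiplicity $2\ell+1$) shows this product is bounded by $\exp(2\alpha A/M)$. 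Chebyshev with $\alpha=cM$ then gives
\[
\mathbb{P}\bigl(\|\nu\|_{H^{-s}}^{2}>\delta^{2}\bigr)\le\exp\!\Bigl(-c M\delta^{2}+M\,\epsilon_{N,\beta}+\frac{2cC_{0}}{s-2}\Bigr),
\]
which after absorbing absolute constants into a single $C$ is the advertised inequality.

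\textbf{Main obstacle.} The only subtle ingredient is the integrability of the Gaussian field $G_{\alpha}$ in $H^{1}(X)$: it requires $\sum_{j}\lambda_{j}^{1-s}<\infty$, which by Weyl's law on the Riemann sphere holds iff $s>2$, and which produces the characteristic $C/(s-2)$ blow-up in the final constant. The universality of $C$ with respect to $\phi$ is automatic in this setup, because the eigendata $(\phi_{j},\lambda_{j})$ depend only on the canonical metric $\psi_{0}$ on $\mathbb{P}^{1}$; consequently all dependence on $\phi$ is absorbed into the error term $\epsilon_{N,\beta}$ from Theorem~\ref{thm:sharp Gauss bound on E intro}, which in turn vanishes precisely for $(\beta,\phi)=(1,\psi_{0})$, yielding the stated sharpness for the spherical ensemble.
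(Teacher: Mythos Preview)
Your argument is correct and is essentially the concrete implementation of the approach the paper sketches (the full proof being deferred to the companion paper \cite{berm15b}). The paper recasts Theorem~\ref{thm:sharp Gauss bound on E intro} as the sub-Gaussian domination \eqref{eq:sub-gauss wrt gff}, namely $\E(e^{\langle Y_{N},u\rangle})\le e^{M\epsilon_{N,\beta}}\,\E(e^{\langle G,u\rangle})$ where $G$ is the Laplacian of the Gaussian free field on $X=\mathbb{P}^{1}$ (equivalently, the Gaussian measure on $H_{0}^{-s}(X)$ with Cameron--Martin space $H_{0}^{-1}(X)$), and then invokes ``some Gaussian measure theory'' to pass from this pointwise domination to concentration of the $H^{-s}$-norm. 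Your Gaussian decoupling via the auxiliary field $G_{\alpha}=\sqrt{2\alpha}\sum_{j}\lambda_{j}^{-s/2}g_{j}\phi_{j}$ and the explicit chi-squared computation $\prod_{j}(1-2\alpha\lambda_{j}^{1-s}/M)^{-1/2}$ is precisely the hands-on version of that step, and your Weyl estimate on the sphere is what makes the constant $C/(s-2)$ explicit and independent of $\phi$.
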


As will be next explained the previous result, specialized to the
spherical ensemble, fits naturally into the setup of (Quasi-)Monte-Carlo
integration on the two-sphere $X$ endowed with its invariant measure
$dV_{g}.$ Following \cite{b-s-s-w}, given a configuration $\boldsymbol{x}_{N}\in X^{N}$
of $N$ points on $X$ the \emph{worst-case error} \emph{for the integration
rule on $X$ with node set $\boldsymbol{x}_{N}$ wrt the smoothness
parameter} $s\in]1,\infty[$ is defined by

\[
\text{wce }(\boldsymbol{x}_{N};s):=\sup_{u:\,\left\Vert u\right\Vert _{H^{s}(X)}\leq1}\left\langle u,\left(\delta_{N}(\boldsymbol{x}_{N})-dV_{g}\right)\right\rangle 
\]
(also called the \emph{generalized discrepancy }\cite{c-f}\emph{
}because of the similarity with the Koksma-Hlawka inequality for numerical
integration on Euclidean cubes). In other words, 
\[
\text{wce }(\boldsymbol{x}_{N};s)=\left\Vert \delta_{N}(\boldsymbol{x}_{N})-dV_{g}\right\Vert _{H^{-s}}
\]
We will say that a sequence $\boldsymbol{x}_{N}\in X^{N}$ is of\emph{
convergence order $\mathcal{O}(N^{-\kappa})$ wrt the smoothness parameter
$s$} if 
\[
\text{wce }(\boldsymbol{x}_{N};s)\leq\frac{C_{\kappa,s}}{N^{\kappa}}
\]
By the previous theorem a random sequence $\boldsymbol{x}_{N}$ in
the spherical ensemble is, with high probability, of convergence order
$\mathcal{O}(N^{-1})$ wrt any smoothness parameter $s>2.$

As shown in \cite{b-s-s-w} the optimal convergence order wrt a smoothness
parameter $s\in]1,\infty[$ is $s/2,$ i.e. there exists a constant
$A(s)$ such that for any sequence $\boldsymbol{x}_{N}\in X^{N}$
\begin{equation}
\text{wce }(\boldsymbol{x}_{N};s)\geq A(s)/N^{s/2},\text{\,}\label{eq:wce for speci}
\end{equation}
and the bound is saturated for a sequence of \emph{spherical designs}
\cite{b-s-s-w}. Thus, by the previous corollary, a random sequence
$\boldsymbol{x}_{N}$ in the spherical ensemble has - with high probability
- nearly optimal convergence order for smoothness parameters $s$
close to $2.$ Note that the lower bound \ref{eq:wce for speci} implies
that Theorem \ref{thm:conc wrt Sob norm} does not hold when $s\in]1,2[,$
in the case of the spherical ensemble. On the other hand, combining
Theorem \ref{thm:conc wrt Sob norm} with \cite[Lemma 26]{b-s-s-w},
which says that, if $s_{0}<s_{1}$ and $\text{wce }(\boldsymbol{x}_{N};s_{1})<1$
then
\[
\text{wce }(\boldsymbol{x}_{N};s_{0})\leq C_{s_{0},s_{1}}\text{(wce }(\boldsymbol{x}_{N};s_{1})^{s_{0}/s_{1}},
\]
reveals that the concentration inequality in Theorem \ref{thm:conc wrt Sob norm}
generalizes to $s\in]1,2[$ if $\delta^{2}$ is replaced by $\delta^{q(s)}$
for an exponent $q(s)$ which can be taken arbitrarily close to $s.$
More precisely, exploiting the explicit form of the constants in Theorem
\ref{thm:conc wrt Sob norm} gives the following
\begin{cor}
Consider the spherical ensemble with $N$ particles and fix $s\in]1,2].$
Then there exists a constant $C$ such that for any $R$ in the inverval
$C^{1/2}\leq R\leq N/(\log N)^{1/2}$ 
\[
\P\left(\text{wce }(\boldsymbol{x}_{N};H^{s}(X))\leq R^{s/2}\frac{(\log N)^{s/4}}{N^{s/2}}\right)\geq1-\frac{1}{N^{R^{2}/C}}
\]
\end{cor}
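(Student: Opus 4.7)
The plan is to combine the concentration inequality in Theorem \ref{thm:conc wrt Sob norm}, specialized to the spherical ensemble (where $\epsilon_{N,\beta}=0$ and $\beta=1$), with the Sobolev interpolation lemma for worst-case errors cited immediately before the Corollary, namely \cite[Lemma 26]{b-s-s-w}. The point is that Theorem \ref{thm:conc wrt Sob norm} only gives a concentration bound of the desired quadratic shape for $s>2$, and the blow-up $C/(s-2)$ as $s\downarrow 2$ must be absorbed by a careful choice of auxiliary smoothness parameter $s_{1}=2+\epsilon_{N}$ that is sent to $2$ at a rate coupled to $N$.

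First, I would apply Theorem \ref{thm:conc wrt Sob norm} with $\phi=\psi_{0}$, $\beta=1$, and exponent $s_{1}>2$, yielding
\[
\P\!\left(\|\delta_{N}-dV_{g}\|_{H^{-s_{1}}}^{2}>\eta\right)\leq \exp\!\left(-N(N+1)\eta^{2}/(2C)+C/(s_{1}-2)+C\right).
\]
By \cite[Lemma 26]{b-s-s-w}, whenever $\mathrm{wce}(\boldsymbol{x}_{N};s_{1})<1$ we have $\mathrm{wce}(\boldsymbol{x}_{N};s)\leq C_{s,s_{1}}\mathrm{wce}(\boldsymbol{x}_{N};s_{1})^{s/s_{1}}$, so for the target threshold $t:=R^{s/2}(\log N)^{s/4}/N^{s/2}$, the event $\{\mathrm{wce}(\boldsymbol{x}_{N};s)>t\}$ is contained, on $\{\mathrm{wce}(\boldsymbol{x}_{N};s_{1})<1\}$, in the event $\{\|\delta_{N}-dV_{g}\|_{H^{-s_{1}}}^{2}>\eta\}$ with $\eta:=(t/C_{s,s_{1}})^{2s_{1}/s}$. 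The complementary event $\{\mathrm{wce}(\boldsymbol{x}_{N};s_{1})\geq 1\}$ is handled by applying the same concentration inequality once more at the fixed threshold $\eta=1$, which produces an exponentially small contribution that is negligible compared to the target bound.

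I would then choose $s_{1}:=2+1/\log N$, so that $s_{1}-2=1/\log N$ and $N^{2-2s_{1}}=N^{-2/\log N}=e^{-2}$. Substituting in the exponent, the leading term becomes of the form
\[
-N(N+1)\eta^{2}/(2C)=-c(s)\,R^{2s_{1}}(\log N)^{s_{1}},
\]
for an explicit positive constant $c(s)$ depending only on $s$ and $C_{s,s_{1}}$. Meanwhile, the correction $C/(s_{1}-2)=C\log N$ is of strictly lower order. For $R$ in the admissible range $C^{1/2}\leq R\leq N/(\log N)^{1/2}$, the upper bound on $R$ ensures $t\leq 1$ (so the interpolation hypothesis is compatible with the scale of deviations considered) and, after enlarging $C$, the total exponent is at most $-R^{2}\log N/C$, which translates into the probability bound $N^{-R^{2}/C}$ claimed in the corollary.

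The main technical point will be verifying that the constant $C_{s,s_{1}}$ in \cite[Lemma 26]{b-s-s-w} remains uniformly bounded as $s_{1}\downarrow 2$ with $s\in ]1,2]$ fixed, so that every $s_{1}$-dependence can be absorbed into a single universal constant $C$; this should follow from the explicit form of the interpolation inequality (essentially a Hölder interpolation between Sobolev norms on $X$), but it is the one step where bookkeeping of constants requires care. A secondary technicality is to optimize the admissible range of $R$ by matching the polynomial prefactor $R^{2s_{1}}(\log N)^{s_{1}}$ against the additive $C\log N$ correction, which dictates precisely the lower endpoint $R\geq C^{1/2}$ in the statement.
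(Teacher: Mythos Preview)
Your proposal is correct and follows exactly the approach the paper indicates: the paper states just before the Corollary that the result is obtained by combining Theorem~\ref{thm:conc wrt Sob norm} (specialized to the spherical ensemble, so $\epsilon_{N,\beta}=0$) with \cite[Lemma~26]{b-s-s-w} and ``exploiting the explicit form of the constants,'' and your choice $s_{1}=2+1/\log N$ is precisely the mechanism by which the blow-up $C/(s_{1}-2)$ is converted into a manageable $C\log N$ term. The paper defers the detailed bookkeeping to the companion paper \cite{berm15b}, so your sketch in fact supplies more detail than is given here; your flagged concern about the uniform boundedness of $C_{s,s_{1}}$ as $s_{1}\downarrow 2$ is the right thing to check and is indeed handled by the explicit H\"older-type form of the interpolation in \cite{b-s-s-w}.
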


As a consequence, a random sequence $\boldsymbol{x}_{N}$ in the spherical
ensemble has, with very high probability, nearly optimal convergence
order for any smoothness parameter $s\in]1,2].$ 

\subsubsection{The Laplacian of the Gaussian free field}

The concentration inequality in Theorem \ref{thm:conc wrt Sob norm}
is deduced from Theorem \ref{thm:sharp Gauss bound on E intro}, using
some Gaussian measure theory. Briefly, the point is that for any $s>2$
there exists a unique Gaussian measure $\gamma$ on $H_{0}^{-s}(X)$
whose Cameron-Martin Hilbert space is given by the intrinsic Hilbert
space $H_{0}^{-1}(X),$ viewed as a dense subspace of $H_{0}^{-s}(X).$
In other words, $\gamma$ is the unique Gaussian measure on $H_{0}^{-s}(X)$
with the following property: if $u$ and $v$ are in $C^{\infty}(X),$
the covariance $C(u,v)$ of the corresponding linear functionals on
$H^{-s}(X)$ is given by
\[
C(u,v):=\int_{H^{-s}(X)}\left\langle u,\nu\right\rangle \left\langle v,\nu\right\rangle \gamma(\nu)=\left\langle u,v\right\rangle _{H^{1}}
\]
The existence of $\gamma$ is essentially well-known. Indeed, denoting
by $G$ the Gaussian random variable defined by a random element in
$(H_{0}^{-s}(X),\gamma)$ we have 
\[
G=\frac{1}{4\pi}\Delta u
\]
 where $u$ is the Gaussian Free Field on the Riemann sphere $X,$
viewed as a random element in $H^{-(s-2)}(X)/\R$ (see \cite{she}). 

Now, Theorem \ref{thm:sharp Gauss bound on E intro} implies the following
inequality for the moment generating function of the $H_{0}^{-s}(X)-$valued
random variable 
\[
Y_{N}:=N(\delta_{N}-dV_{g})
\]
 defined wrt the corresponding Coulomb gas ensemble with $N-$particles:

\begin{equation}
\E(e^{\left\langle Y_{N},\cdot\right\rangle })\leq e^{\beta N(N+1)\epsilon_{N,\beta}}\E(e^{\left\langle G,\cdot\right\rangle })\label{eq:sub-gauss wrt gff}
\end{equation}
viewed as functions on $H^{s}(X)/\R,$ identified with the topological
dual of $H_{0}^{-s}(X).$ Using some Gaussian measure theory this
implies Theorem \ref{thm:conc wrt Sob norm}.
\begin{rem}
For the spherical ensemble, which satisfies $\epsilon_{N,\beta}=0,$
the inequality \ref{eq:sub-gauss wrt gff} says that the random variable
$Y_{N}$ taking values in the space $(H_{0}^{-s}(X),\gamma)$ is \emph{sub-Gaussian. }
\end{rem}

\subsection{Sharp deviation inequality for the 2D Coulomb restricted to $\R$}

The 2D Coulomb gas restricted to $\R,$ subject to the exterior potential
$V,$ is defined by i.e. the ensemble $(\R^{N},d\P_{N,\beta})$ obtained
by replacing the Lebesgue measure $d\lambda$ on $\C$ in formula\ref{eq:Gibbs measure as det intro}
with the Lebesgue measure $dx$ on $\R.$ In the case when $V=Nx^{2}/2$
and $\beta=1$ this ensemble is known as the\emph{ Gaussian Unitary
Ensemble (GUE)} since it coincides with the eigenvalue law of a random
Hermitian rank $N$ matrix with independent Gaussian entries of variance
$1/N$ \cite{fo}.

Now fix a lsc function $\phi$ on $\R$ of super logarithmic growth
which is continuous on the complement of a closed polar set. The corresponding
free energy functional $\mathcal{F}_{\R}(\phi)$ is defined as in
formula \ref{eq:def of free energ intro} with $\mathcal{P}(\C)$
replaced by $\mathcal{P}(\R).$ Set

\[
p:=1/\beta-1
\]
(which differs from the previous notation used in the setting of $\C)$
and define the \emph{error sequence}
\[
\epsilon_{N,\beta}[\phi]:=-\frac{1}{\beta}\frac{1}{N(N+p)}\log Z_{N,\beta}[(N+p)\phi]_{\R}+\mathcal{F}_{\R}(\phi)+\log N!
\]
It follows from essentially well-known results that $\epsilon_{N,\beta}[\phi]=o(1)$
as $N\rightarrow\infty.$

In this one-dimensional setting the following analogs of Theorem \ref{thm:sharp non Gaussian bd intro}
and Theorem \ref{thm:sharp Gauss bound on E intro} are established
in \cite{berm16}:
\begin{thm}
\emph{Consider the $N-$particle 2D Coulomb gas restricted to $\R.$
For any $\beta\leq1$ the following inequalities hold: 
\[
\frac{1}{\beta}\frac{1}{N(N+p)}\log\E(e^{-\beta(N+p)NU_{N}})\leq-\mathcal{F}_{\R}(\phi+u)+\mathcal{F}_{\R}(\phi)+\epsilon_{N,\beta}[\phi]
\]
and 
\[
\frac{1}{\beta}\frac{1}{N(N+p)}\log\E(e^{-\beta(N+p)N(U_{N}-\bar{u})})\leq\frac{1}{8\pi}\int_{\C}|\nabla u^{\R}|^{2}d\lambda+\epsilon_{N,\beta}[\phi],
\]
 where $u^{\R}$ denotes the bounded harmonic extension of $u$ to
$\C.$ }
\end{thm}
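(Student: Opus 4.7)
The plan is to emulate the proof of Theorem~\ref{thm:sharp non Gaussian bd intro} and Theorem~\ref{thm:sharp Gauss bound on E intro}, replacing the compactification of $\C$ by the Riemann sphere with the embedding $\R \subset \C$, and replacing the Perron envelope on the Riemann sphere by the bounded harmonic extension from $\R$ to $\C$. First, I would reduce $\beta \leq 1$ to the determinantal case $\beta = 1$ by a Jensen/H\"older argument identical to Step~2 of the proof of Theorem~\ref{thm:sharp non Gaussian bd intro}: writing the integrand as $(|D^{(N)}|^{2} e^{-N\phi})^{\beta} \cdot e^{-\beta p \phi}$ on $\R^N$ and applying H\"older with exponents $(1/\beta, 1/(1-\beta))$ on each factor reduces the general bound to the $\beta = 1$ bound plus an integrability cost controlled by $\int_\R e^{-\phi} dx$, which is finite by super-logarithmic growth.

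For $\beta = 1$, $p = 0$, and the partition function becomes the Heine--Gram determinant $Z_{N,1}[N\phi]_\R = N! \det(\int_\R x^{i+j} e^{-N\phi(x)} dx)_{0 \leq i,j < N}$. The core step is the real analog of Proposition~\ref{prop:lower bound on L funct in adj}: I would show that the functional $\phi \mapsto -\frac{1}{N^2} \log Z_{N,1}[N\phi]_\R - \mathcal{F}_\R(\phi)$ on the space of admissible real weights is minimized (up to an $o(1)$ error) at the canonical Hermite/GUE weight $\phi_0(x) = x^2/2$. The natural route is to reinterpret the Heine determinant as a Bergman-type determinant for the space of real holomorphic polynomials equipped with the \emph{real} $L^2$ inner product induced by $e^{-N\phi(x)}dx$, and then invoke the positivity-of-direct-images machinery of~\cite{bern}; the restriction to weights that are preserved by the complex conjugation $z \mapsto \bar z$ yields the relevant convexity of $\log Z_{N,1}[N\phi_t]_\R$ along weak K\"ahler geodesics $\phi_t$ that are symmetric about $\R$.

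The sub-Gaussian bound then follows from the non-Gaussian bound applied to $u - \bar u$, together with the real analog of Lemma~\ref{lem:smaller than dirichlet}. The key geometric identity is
\[
\mathcal{F}_\R(\phi) - \mathcal{F}_\R(\phi + u) \;\leq\; \tfrac{1}{8\pi}\!\int_\C |\nabla u^\R|^2 \, d\lambda \;-\; \int_\R u \, d\mu_\phi ,
\]
which encodes two facts: that the equilibrium measure on $\R$ is realized as the balayage onto $\R$ of the Laplacian of the Perron envelope, and that the intrinsic Dirichlet seminorm on $\R$ is precisely the $H^{1/2}(\R)$-seminorm, related to the Dirichlet energy on $\C$ by $\|u\|_{H^{1/2}(\R)}^2 = \frac{1}{4\pi}\int_\C |\nabla u^\R|^2 d\lambda$. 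The factor $\tfrac{1}{8\pi}$ rather than $\tfrac{1}{4\pi}$ reflects the doubling inherent in extending symmetrically across $\R$ (i.e., the harmonic extension agrees with its reflection $\overline{u^\R(\bar z)}$, so the Dirichlet integral over $\C$ double-counts contributions from the upper half plane).

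The main obstacle is the real version of the convexity step from \cite{berm3}: Berndtsson's theorem applies naturally to two-real-dimensional families of metrics, whereas the weights here are constrained to be reflection-symmetric about $\R$. I expect this can be resolved either by observing that the positivity of the Bergman-type curvature is inherited by the real subspace (since the relevant inner product is a restriction of an Hermitian form), or, failing that, by a direct Christoffel--Darboux / Hermite-kernel argument in the spirit of~\cite{d-g-i-l}. Once this convexity is available, all remaining steps --- the identification of the minimizing weight, the explicit form of $\epsilon_{N,\beta}[\phi]$ as the deficit at $\phi$ relative to this minimizer, and the passage from non-Gaussian to sub-Gaussian via Corollary~\ref{cor:Gaussian dev ineq} --- are direct transcriptions of the arguments in Sections~\ref{sec:beta ensembles-on-Riemann} and~\ref{sec:The-Coulomb-gas}.
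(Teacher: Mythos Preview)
The paper does not actually contain a proof of this theorem: it appears in Section~\ref{sec:Main-results-in} (``Main results in Part~II and companion papers''), where the author merely announces the result and attributes its proof to the companion paper~\cite{berm16}, which is listed as ``in preparation''. So there is no proof in the present paper against which your proposal can be compared.

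That said, your outline is a plausible transcription of the strategy used for the planar case, and the paper's phrasing (``analogs of Theorem~\ref{thm:sharp non Gaussian bd intro} and Theorem~\ref{thm:sharp Gauss bound on E intro}'') suggests the author has something of this shape in mind. Two points deserve comment. First, you correctly flag the main obstacle: the convexity input from~\cite{bern,berm3} is a statement about plurisubharmonic variations on a complex manifold, and restricting to reflection-symmetric weights on $\C$ is not obviously compatible with the weak-geodesic construction (the Perron envelope over the strip need not preserve the symmetry without further argument). Your two proposed workarounds are both heuristic at this stage; neither is a proof, and this is exactly the step the paper defers to~\cite{berm16}. Second, your H\"older reduction to $\beta=1$ does account for the shift from $p=2/\beta-1$ in $\C$ to $p=1/\beta-1$ on $\R$ (the one-dimensional integrability cost is $\int_\R e^{-\phi}\,dx$ rather than $\int_\C e^{-2\phi}\,d\lambda$), which is consistent with the paper's remark that the entropy prefactor $(\tfrac{1}{\beta}-\tfrac{1}{2})$ becomes $(\tfrac{1}{\beta}-1)$ in one dimension.
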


Moreover, it is shown in \cite{berm16} that the corresponding inequalities
drastically fail when $\beta>1$ (in the sense of Prop \ref{prop:drastic failure}).
This stems from the fact that in the one-dimensional setting the constant
$(\frac{1}{\beta}-\frac{1}{2})$ in front of the entropy term in formula
\ref{eq:asympt exp of serf etc} is replaced by $(\frac{1}{\beta}-1)$
(see \cite[Cor 1.5]{l-s} ).

\end{document}